\documentclass[11pt,letterpaper]{article}

\usepackage[utf8]{inputenc}
\usepackage[T1]{fontenc}
\usepackage{microtype}

\usepackage[margin=1in]{geometry}

\usepackage[small]{caption}
\usepackage{subcaption}

\usepackage{amsmath}
\usepackage{amsfonts}
\usepackage{amssymb}
\usepackage{amsthm}
\usepackage{amstext}
\usepackage{thmtools}

\usepackage{csquotes}
\usepackage{xcolor}
\usepackage{graphicx}

\usepackage{mathtools}
\usepackage{xspace}
\usepackage{enumerate}
\usepackage[ruled]{algorithm2e}
\usepackage{paralist}
\usepackage{bm}
\usepackage{dsfont}
\usepackage{thm-restate}
\usepackage{enumerate}
\usepackage{paralist}
\usepackage{comment}
\usepackage{soul}

\definecolor{job1}{HTML}{6D9DC5}
\definecolor{job2}{HTML}{E0CC74}
\definecolor{job3}{HTML}{649884}
\definecolor{job4}{HTML}{C9809C}
\definecolor{job5}{HTML}{F39B6D}
\definecolor{job6}{HTML}{547781}

\usepackage{tikz}
\usetikzlibrary{arrows}
\usetikzlibrary{fit}
\usetikzlibrary{calc,shapes.geometric, backgrounds}
\usetikzlibrary{decorations.pathreplacing,calligraphy}
\usetikzlibrary{patterns,snakes}
\tikzset{vertex/.style={circle, draw, fill=black, black, inner sep=0pt, minimum width=7pt},
}

\usepackage{url}
\usepackage[colorlinks]{hyperref}
\hypersetup{citecolor=green!60!black,linkcolor=blue!80!black,urlcolor=blue!80!black}
\usepackage[capitalize,noabbrev]{cleveref}

\usepackage[
 backend=biber, 
 style=alphabetic, 
 citetracker, 
 hyperref=auto,
 maxcitenames=5, %
 sortcites,      %
 sorting=nyt,
 maxbibnames=12, %
 date=year,      %
 isbn=false,     %
 url=false,      %
 doi=false,      %
 eprint=false,   %
]{biblatex}
\newbibmacro{string+doiurlisbn}[1]{%
  \iffieldundef{doi}{%
    \iffieldundef{url}{%
      #1
    }{%
      \href{\thefield{url}}{#1}%
    }%
  }{%
    \href{http://dx.doi.org/\thefield{doi}}{#1}%
  }%
}

\DeclareFieldFormat%
[article,inbook,incollection,inproceedings,patent,thesis,unpublished]
  {title}{\usebibmacro{string+doiurlisbn}{#1}}

\theoremstyle{plain}
\newtheorem{theorem}{Theorem}[section]
\newtheorem{proposition}[theorem]{Proposition}
\newtheorem{observation}[theorem]{Observation}
\newtheorem{lemma}[theorem]{Lemma}
\newtheorem{corollary}[theorem]{Corollary}
\theoremstyle{definition}
\newtheorem{definition}[theorem]{Definition}

\theoremstyle{remark}

\DeclarePairedDelimiter\floor{\lfloor}{\rfloor}

\newcommand{\cO}{\mathcal{O}}

\newcommand{\cS}{\ensuremath{\mathcal{S}}\xspace}

\newcommand{\EX}{\mathbb{E}}
\newcommand{\VAR}{\mathbb{V}}
\newcommand{\ind}{\ensuremath{\mathds{1}}}

\newcommand{\borrow}{\leftarrowtail}

\newcommand{\bP}{\bm\bar{P}}

\newcommand{\hy}{\bm\hat{y}}
\newcommand{\bL}{\bm\bar{L}}
\newcommand{\bO}{\bm\bar{O}}
\newcommand{\hO}{\bm\hat{O}}
\newcommand{\bS}{\bm\bar{S}}
\newcommand{\by}{\bm\bar{y}}

\newcommand{\pr}{\mathrm{Pr}}

\newcommand\job{} %
\def\job[#1](#2)(#3){%
  \fill[#1,draw=black,line width=0.6] (#2) rectangle ++(#3);
}

\newcommand\jobT{} %
\def\jobT[#1](#2,#3)(#4,#5)(#6){%
  \fill[#1,draw=black,line width=0.6] (#2,#3) rectangle ++(#4,#5);
  \node[text=black] at (#2+#4/2.0,#3+#5/2.0) {#6};
}

\usepackage[colorinlistoftodos,prependcaption,textsize=scriptsize]{todonotes}

\newcommand{\red}[1]{{\color{red}#1}}

\title{Delayed-Clairvoyant Flow Time Scheduling \\ via a Borrow Graph Analysis\thanks{\red{This is an independent and precedent work to our joint paper with Anupam Gupta, Haim Kaplan and Sorrachai Yingchareonthawornchai~\cite{GuptaKLSY25}. In this paper, we study the same problem and the same algorithm as in \cite{GuptaKLSY25}, but with a completely different analysis, which gives a worse bound on the competitive ratio compared to \cite{GuptaKLSY25}. We uploaded this version to arXiv for the sake of completeness. This draft has been written in early 2025, and only covers related work up to that point in time.}}}

\author{Alexander Lindermayr\thanks{Institut für Mathematik, Technische Universität Berlin, Germany. \texttt{alexander.lindermayr@tu-berlin.de}. This paper was written while the author was affiliated with the University of Bremen, Germany.} \and Jens Schlöter\thanks{Centrum Wiskunde \& Informatica, The Netherlands. \texttt{jens.schloter@cwi.nl}}}

\date{January 2025}

\begin{document}

\maketitle

\begin{abstract}
  We study the problem of preemptively scheduling jobs online over time on a single machine to minimize the total flow time.
  In the traditional \emph{clairvoyant} scheduling model, the scheduler learns about the processing time of a job at its arrival, and scheduling at any time the job with the shortest remaining processing time (SRPT) is optimal. In contrast, the practically relevant \emph{non-clairvoyant} model assumes that the processing time of a job is unknown at its arrival, and is only revealed when it completes. Non-clairvoyant flow time minimization does not admit algorithms with a constant competitive ratio. Consequently, the problem has been studied under speed augmentation (JACM'00) or with predicted processing times (STOC'21, SODA'22) to attain constant guarantees.

  In this paper, we consider \emph{$\alpha$-clairvoyant} scheduling, where the scheduler learns the processing time of a job once it completes an $\alpha$-fraction of its processing time. This naturally interpolates between clairvoyant scheduling ($\alpha=0$) and  non-clairvoyant scheduling ($\alpha=1$). By elegantly fusing two traditional algorithms, we propose a scheduling rule with a competitive ratio of $\mathcal{O}(\frac{1}{1-\alpha})$ whenever $0 \leq \alpha < 1$. As $\alpha$ increases, our competitive guarantee transitions nicely (up to constants) between the previously established bounds for clairvoyant and non-clairvoyant flow time minimization. We complement this positive result with a tight randomized lower bound.
\end{abstract}

\thispagestyle{empty}

\newpage

\tableofcontents

\newpage

\section{Introduction}

We study the following fundamental online scheduling problem: Jobs arrive online over time without prior knowledge, and an algorithm can at any time process a job on a single machine, but cannot change its past decisions. Preemption is allowed, that is, an algorithm can stop processing a job and resume it at a later point in time.
The goal is to minimize the sum of flow times, where the \emph{flow time} of a job is the time between its release time and completion time (also known as response time or pending time).
In the \emph{clairvoyant} setting, where the processing time of a job becomes known at its arrival, a classic result of \textcite{Schrage68} says that scheduling at any time the job with the least amount of remaining work is optimal (called Shortest-Remaining-Processing-Time rule, in short SRPT). In particular, SRPT is a 1-competitive online algorithm for this problem; we say that an algorithm is $\alpha$-\emph{competitive} if its schedule has a total flow time of at most $\alpha$ times the total flow time of an offline optimum solution. The \emph{competitive ratio} of an algorithm is the smallest~$\alpha$ such that it is $\alpha$-competitive.

However, practically relevant in many applications, such as scheduling tasks in computing systems, is the \emph{non-clairvoyant} scenario, where an algorithm has no knowledge about a job's processing time until it completes.
In contrast to the well manageable clairvoyant variant, \textcite{MotwaniPT94} showed a strong lower bound of $\Omega(n^{1/3})$ on the competitive ratio of every deterministic non-clairvoyant algorithm, where $n$ is the total number of jobs.

Yet, due to its practical relevance, several approaches have been considered to overcome this strong negative result:
\begin{description}
  \itemsep0pt
  \item[Randomization:] If the algorithm is allowed to use randomization, a $\cO(\log n)$-competitive algorithm is known~\cite{BecchettiL04,KalyanasundaramP03}, which is best-possible~\cite{MotwaniPT94}.
  \item[Speed Augmentation:] If the algorithm's machine runs at speed $(1+\varepsilon)$ times the speed of the optimum's machine, then a $\cO(\frac{1}{\varepsilon})$-competitive algorithm is known, for any $\varepsilon > 0$~\cite{KalyanasundaramP00}.
  \item[Predictions:] If the algorithm receives at a job's release a prediction on its processing time such that the ratio between the predicted and actual processing time (in any direction) is bounded by some constant $\mu \geq 1$, then a $\cO(\mu^2 \log \mu)$-competitive algorithm is known~\cite{AzarLT21,AzarLT22}.
  Specifically, there are $\cO(1)$-competitive algorithms for the semi-clairvoyant model, where the power-of-2 class of a job's processing time is known~\cite{BecchettiLMP04}.
\end{description}

We follow the spirit of these approaches and consider a minimal relaxation of the non-clairvoyant model for which we give a $\cO(1)$-competitive algorithm.

The model is based on common techniques used in practice, e.g., operating systems or cloud computing: there, job lengths are commonly estimated by \emph{profiling} techniques~\cite{hilman2018task,weerasiri2017taxonomy,dinda2002online,amiri2017survey}. In particular, it is expected that this information gets more accurate the longer the job has been processed~\cite{chtepen2012online,nadeem2009predicting,hilman2018task,xie2021two}.
We embed these practical approaches in the following theoretical model: For any $\alpha \in [0,1]$, every job \emph{emits a signal} after an $\alpha$-fraction of it has been processed. Thus, an algorithm essentially learns about a job's remaining processing time when it emits.
We call this model $\alpha$-clairvoyant. Observe that it naturally interpolates between the clairvoyant setting ($\alpha=0$) and the non-clairvoyant setting ($\alpha=1$).

The $\alpha$-clairvoyant model has previously been mentioned in an extended abstract by \textcite{YingchareonthawornchaiT17}, where they also presented an $\cO(1)$-competitive algorithm if $0 \leq \alpha < \frac{1}{4}$. However, they also showed that their approach is \emph{not} $\cO(1)$-competitive for $\alpha \ge 0.5$, and it remained open whether a $\cO(1)$-competitive algorithm for every constant $\alpha \in [0,1)$ is possible.

While in general $\alpha$-clairvoyance is incomparable
to speed augmentation or predictions, 
there are arguments indicating that it is not stronger than these previously studied models: Unlike in the setting with predictions, an $\alpha$-clairvoyant algorithm has no additional knowledge at the time when a job arrives, and thus, must take potentially bad decisions in order to gain additional knowledge. Given $(1+\varepsilon)$-speed augmentation, an algorithm essentially can learn about a job's processing time when it completes the job and an optimum has only processed a $\Theta(1-\varepsilon)$-fraction of that job. This can roughly be translated to the $\alpha$-clairvoyant setting, with the difference 
here the algorithm must also schedule the remaining $\Theta(\varepsilon)$-fraction to complete it.
Another reason showing that speed augmentation is stronger comes from the more general objective of total \emph{weighted} flow time: Here, even a clairvoyant algorithm cannot have a constant competitive ratio~\cite{BansalC09}, but $(1+\varepsilon)$-speed augmentation suffices to achieve that~\cite{BansalD07,KimC03a}.

Despite these intuitive arguments that the $\alpha$-clairvoyant model is not as strong as the previously considered models, we are still able to give a $\cO(1)$-competitive algorithm for $\alpha$-clairvoyant total flow time scheduling whenever $\alpha \in [0,1)$.

\subsection{Our Results}

We prove the following main theorem:

\begin{theorem}\label{thm:main}
For every $\alpha \in [0,1)$, there exists a $\cO(\frac{1}{1-\alpha})$-competitive $\alpha$-clairvoyant online algorithm for minimizing the total flow time on a single machine.
\end{theorem}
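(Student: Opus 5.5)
The algorithm fuses Round Robin (RR) with SRPT. At any time, let $U(t)$ be the set of alive jobs that have not yet emitted and $K(t)$ the set of alive jobs that have emitted, so their remaining processing times are known. If $K(t)\neq\emptyset$, the machine splits its rate: a fixed fraction runs the emitted job with the shortest remaining processing time, and the rest is shared equally among the jobs in $U(t)$ in round-robin fashion. If one class is empty, the other gets full speed. An emitted job $j$ is "known" only during its final $(1-\alpha)p_j$ units. Intuitively, RR on the unknown prefixes behaves like a non-clairvoyant algorithm with speed advantage, and the $1-\alpha$ fraction of clairvoyant work gives this advantage. So I would look for a charging argument in which every unit of RR processing on an unemitted job is paid for by $\frac{1}{1-\alpha}$ times the optimum's work on clairvoyant suffixes.

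\textbf{Step 1 (lower bounds on OPT).} I would use two standard lower bounds. First, $\mathrm{OPT}\ge\sum_j p_j$. Second, $\mathrm{OPT}=\int |\mathrm{SRPT}(t)|\,dt$, together with the fact that SRPT's alive-job counts dominate in the usual sense: for every $t$ and $x$, the number of jobs with remaining time at most $x$ compares appropriately against any schedule.

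\textbf{Step 2 (bounding known jobs).} The cost of the algorithm is $\int (|U(t)|+|K(t)|)\,dt$. For emitted jobs, I would treat the SRPT part as SRPT run on a modified instance in which job $j$ is released at its emission time with size $(1-\alpha)p_j$, at reduced speed. Then I would bound $\int|K(t)|\,dt$ against $\int|U(t)|\,dt+\sum_j p_j$, using the fact that jobs enter $K$ only after receiving $\alpha p_j$ processing.

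\textbf{Step 3 (main obstacle: bounding unknown jobs).} The core task is bounding $\int|U(t)|\,dt$ by $O(\frac1{1-\alpha})\mathrm{OPT}$. I would use a borrow-graph style charging. When RR delays job $j$ because it processes another job $i$ ($i$ "borrows" from $j$), charge that delay to $i$. For each job, the total it can borrow from each other job is at most $\min(p_i,p_j)$, by the RR symmetry on prefixes. Comparing with OPT is harder: RR alone is $\Omega(n^{1/3})$-bad. Rescue must come from the fact that each job $j$ needs $(1-\alpha)p_j$ more work after emitting, and this work is done by SRPT. I would therefore show that each borrow edge of weight $w$ either corresponds to a pair that OPT also interleaves, yielding a charge to OPT's flow, or can be amortized against the $(1-\alpha)$-suffix, whose cost is already accounted for in Step 2. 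This amortization yields the $\frac1{1-\alpha}$ factor.

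I would try to formalize this with a potential function $\Phi(t)$, in the style of Kalyanasundaram–Pruhs speed-augmentation analyses, over the lag between the algorithm and OPT. I would treat the clairvoyant suffix work as playing the role of the extra $\varepsilon$ speed, with $\varepsilon\approx 1-\alpha$. That analysis gives $O(1/\varepsilon)$, which matches the claimed bound.
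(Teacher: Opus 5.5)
\paragraph{The algorithm itself fails.} Splitting the machine in a \emph{fixed} proportion between SRPT on emitted jobs and round robin on unemitted jobs is not even $\cO(1)$-competitive, for any constant fraction $c$ and any $0<\alpha<1$.

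\emph{Case $c<1-\alpha$.} Release at time $0$ a job $L$ of size $2T/\alpha$, and a unit job at every integer time $0,\dots,T-1$. $L$ cannot emit before time $T$, so $U(t)\neq\emptyset$ throughout and the emitted jobs jointly receive rate at most $c$. Every unit job needs $1-\alpha$ units of work after emitting, so at most $ct/(1-\alpha)$ of them are finished by time $t$. Your total flow is therefore $\Omega((1-\frac{c}{1-\alpha})T^2)$, whereas running each unit job on arrival and $L$ last costs $\cO(T/\alpha)$.

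\emph{Case $c>1-\alpha$.} Take the mirror image: let a job $J$ of size $2T/(1-\alpha)$ run alone until it emits, and only then start the unit stream. Now $J\in K(t)$ during the whole stream, so unemitted jobs get rate at most $1-c<\alpha$. After $t$ time units at most $(1-c)t/\alpha$ unit jobs have even emitted.

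\emph{Case $c=1-\alpha$.} This knife-edge choice is what your heuristic $\varepsilon\approx 1-\alpha$ suggests, and it also fails once a huge emitted $J$ and a huge unemitted $L$ are both present. The unemitted class then gets exactly rate $\alpha$, which equals the unit jobs' pre-emission demand. Round robin, however, diverts a positive share to $L$, so pending unit jobs pile up without bound (like $\sqrt{t}$), while the optimum keeps $\cO(1)$ of them alive.

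These instances show concretely why your Step~3 analogy is wrong. The $(1-\alpha)$-suffix is not spare capacity, as the extra speed is in speed augmentation. It must be executed on the same machine and competes with the prefixes, so any static split starves one class. Steps~2 and~3 as written are assertions rather than arguments. Step~2 relies on SRPT at reduced speed on a modified instance, which is itself not competitive. Step~3 amortizes borrow edges against suffixes without a mechanism, and the potential function is never specified.

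\paragraph{What the proof actually needs.} It needs an \emph{adaptive} rule and an argument tailored to it. The paper runs the emitted job with least remaining time only if $\min_{j\in C(t)}p_j(t)\le\frac{1-\alpha}{\alpha}\min_{j\in N(t)}y_j(t)$, and otherwise runs SETF (not round robin) on $N(t)$. This threshold certifies that at every moment the algorithm acts as SRPT or as SETF on \emph{all} alive jobs (\Cref{obs:srpt-like,obs:sept-like}). Hence it never spends the machine on a long job while a certifiably better candidate waits, which is exactly what goes wrong above.

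The paper then proves local competitiveness, $|A(t)\setminus O(t)|\le(3+\frac{2}{1-\alpha})|O(t)|$ (\Cref{thm:main:local}). It uses a flow in the borrow graph that routes the remaining work of $A(t)\setminus O(t)$ into the progress of $O(t)$. For jobs in $C(t)$ a short argument suffices. For jobs in $N(t)$, the Kalyanasundaram--Pruhs comparison $y_j(t)\ge y_i(t)$ can fail arbitrarily badly once borrowing passes through a clairvoyant job (\Cref{fig:example-delta-edge}). The bulk of the paper (segments, layers, the unique first bottleneck $b(S)$, and \Cref{lem:segment:bottleneck:capacity}) exists to show that such borrowing is capped by $p_{j^*(S)}(t)$ per segment. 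Your outline has no counterpart to this bottleneck phenomenon, and that is where the $\frac{1}{1-\alpha}$ bound is actually earned.
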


A consequence of this result is the following insight: for any non-constant lower bound for the non-clairvoyant setting, an adversary needs to be able to only fix a job's remaining processing time once its value is \emph{non-constant}, i.e., in $o(1)$, even if the total processing time of the job is in $\Omega(1)$. This is exactly the property on which the $\Omega(n^{1/3})$ lower bound relies on, where the remaining processing times of all available jobs are fixed to $\Theta(n^{-1/3})$~\cite{MotwaniPT94}. This is not possible in the $\alpha$-clairvoyant setting for constant values of $\alpha$. %
Our result shows that such a construction is indeed necessary for proving such strong lower bounds.
In other words, it implies that the classic non-clairvoyant model is exactly at the border for which no constant competitive algorithms are possible.

We complement this observation with a randomized lower bound, which essentially bridges the gap towards the previous lower bounds: the later we receive information about a job's remaining processing time, the larger is the competitive ratio. Specifically, the dependency on $\alpha$ of our analysis is asymptotically best-possible.

\begin{theorem}\label{thm:main-lb}
  For every constant $\alpha \in [0,1)$, every randomized $\alpha$-clairvoyant algorithm has a competitive ratio of at least $\Omega(\frac{1}{1-\alpha})$ for minimizing the total flow time on a single machine.
\end{theorem}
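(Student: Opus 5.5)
We prove \Cref{thm:main-lb} via Yao's minimax principle: we construct a distribution over instances such that every deterministic $\alpha$-clairvoyant algorithm has expected total flow time $\Omega(\frac{1}{1-\alpha})$ times the expected optimum. Set $k = \lceil \frac{1}{1-\alpha} \rceil$, so that the unrevealed fraction of a job satisfies $1-\alpha \approx 1/k$. The idea is to reproduce, at scale $k$, the classical randomized non-clairvoyant lower bound of Motwani, Phillips and Torng, which gives $\Omega(\log n)$ or $\Omega(n)$-type bounds. The emission signal only helps the algorithm on the last $(1-\alpha)$-fraction of a job, so during the first $\alpha$-fraction the algorithm is effectively non-clairvoyant.

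\textbf{Construction.} At time $0$, release $k$ jobs. All but one have processing time $1$. One uniformly random job is \emph{long*}: its processing time $p$ is chosen so that $\alpha p = 1 - \tfrac{1}{k}$ or similar. In other words, while its processed amount is below $\alpha$ it looks identical to the short jobs, and it only emits once the algorithm has sunk almost a full unit into it.

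More robustly, I would use the following symmetric scheme. Every job $j$ has processing time $p_j \in \{1, L\}$, where $L$ is chosen so that $\alpha L \ge 1$. Then no job can be distinguished before it has received $\alpha \cdot 1$ units of work. A short job emits at $\alpha$ and a long job emits at $\alpha L \ge 1$. Hence any job that has received $\alpha$ units without emitting is certified long, so the algorithm gains information only by investing $\alpha$ units per probe.

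To force $\Omega(k)$, release $n$ jobs and make each job long independently with a suitable probability. Then append a long stream of unit jobs, one per time unit, arriving from some point on. Under this stream every unfinished job at that moment pays flow time for the whole stream length $T$. This is the standard trick that converts "remaining volume or count at time $t$" into flow time.

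\textbf{Counting.} At the stream start $t_0$, OPT (clairvoyant) has finished all the short jobs and holds only the long ones. The algorithm, having spent its time budget, must have probed jobs. A probe that costs $\alpha$ and reveals "short" still needs $1-\alpha \approx 1/k$ more work. The key inequality to establish is that, in expectation, the algorithm ends with $\Omega(k)$ times more unfinished jobs than OPT, or with remaining work that it cannot clear without delaying the stream.

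The cleanest variant I would try is the following. Release $k$ jobs at time $0$, exactly one of which is short with length $1$, while the others have length $L \gg 1$ with $\alpha L > 1$; the short job is placed uniformly at random. Then start the unit-job stream at time $t_0 = 1$. OPT finishes the short job by $t_0$ and keeps $k-1$ jobs alive. The algorithm must find the short job, which takes $\Omega(k)$ probes of $\alpha$ each in expectation, i.e.\ time $\Omega(k\alpha)$. Meanwhile stream jobs accumulate, unless we set parameters so that the cost of delay is $\Omega(k)$ per unit time against OPT's $O(1)$.

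Balancing the parameters (with $k$ counting jobs and $1-\alpha$ controlling the probe cost) so that the ratio comes out as $\Omega(\frac{1}{1-\alpha})$, rather than $\Omega(k\alpha)$ against a comparable OPT term, is the main obstacle. I would first try scaling so that the probe cost is $\alpha$ while the reward per found job is only $1-\alpha$. The algorithm then wastes a $\frac{\alpha}{1-\alpha}$ factor of work relative to useful completions, and the stream converts this wasted work into a proportional flow-time ratio.
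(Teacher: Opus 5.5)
There is a genuine gap: the proposal never settles on a construction, and none of the ones you sketch can give $\Omega(\frac{1}{1-\alpha})$.

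\textbf{The one-short-job variant.} With one short job among $k$ long ones and the stream starting at $t_0=1$, the optimum itself still holds the $k-1$ long jobs. At $t_0$ it therefore has $k-1$ alive jobs, against at most $k$ for the algorithm. The cost of searching for the short job is only an additive delay, and the stream certifies a ratio of just about $\frac{k+1}{k}$. Your hope that the optimum pays $O(1)$ per unit time during the stream is contradicted by the instance itself.

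\textbf{The two-point scheme.} With $p_j\in\{1,L\}$ and $\alpha L\ge 1$, every job is completely identified after $\alpha$ units. An algorithm that probes each job for $\alpha$ units and then runs SRPT loses at most $\alpha$ time units per long job. Each such long job is also unfinished in the optimum, so the alive-job ratio stays around $1+\alpha\le 2$. Your ``waste factor $\frac{\alpha}{1-\alpha}$'' heuristic is where this goes wrong. The first $\alpha$ units of a job that turns out short are not wasted, because every schedule, the optimum included, must process them.

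\textbf{The scale.} $k=\lceil\frac{1}{1-\alpha}\rceil$ is too small. The randomized non-clairvoyant bound of Motwani, Phillips and Torng is $\Omega(\log n)$, not $\Omega(n)$. Emulating it with $k$ jobs therefore yields at best $\Omega(\log\frac{1}{1-\alpha})$.

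\textbf{What the paper does instead.} This is the paper's route for $\alpha\in(\frac12,1)$; for smaller constant $\alpha$ the claim is trivial.
\begin{enumerate}
\item Take $k=\lfloor 2^{1/(2-2\alpha)}\rfloor$ jobs at time $0$, so that $\log k=\Theta(\frac{1}{1-\alpha})$. Set $p_j=1+y_j$ with $y_j$ geometric of mean $2$. This distribution spans $\Theta(\log k)$ scales and is memoryless.
\item Condition on the event that all $p_j\le\frac{1}{1-\alpha}$, which fails with probability only $O(\frac1k)$. Then $\alpha p_j\ge p_j-1$, so the signal arrives only when one unit of work remains, and it can w.l.o.g.\ be given after exactly $y_j$ units. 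The $\alpha$-clairvoyance is thereby neutralized rather than exploited.
\item By memorylessness, each unit spent on a non-clairvoyant job triggers an emission with probability $\frac12$, regardless of which job the algorithm picks. So by time $t=\lfloor 3(k-k^{3/4})\rfloor$ the algorithm has completed only about $t/3$ jobs. In expectation it holds $\Omega(k^{3/4})$ jobs with remaining time at least $1$.
\item The optimum uses two facts: the total volume is at most $3k+k^{3/4}$ with high probability, and at least $k^{3/4}$ jobs have length at least $\frac{\log k}{4}$. It postpones only such long jobs and leaves $O(k^{3/4}/\log k)$ unfinished.
\end{enumerate}
This gives an $\Omega(\log k)=\Omega(\frac{1}{1-\alpha})$ ratio of alive jobs at time $t$. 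Your stream of unit jobs is the right last step: the paper's denial-of-service argument, together with Yao's principle, converts this ratio into the theorem.
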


While this lower bounds is mainly interesting when $\alpha$ approaches $1$, we also investigate its behavior when $\alpha$ approaches $0$. Recall that if $\alpha=0$, SRPT produces an optimal schedule.
Naturally, one would therefore expect a ``PTAS-like'' behavior where the competitive ratio approaches $1$ when~$\alpha$ goes to $0$. Surprisingly, we show that this is not the case: for every $\alpha  > 0$, the competitive ratio of every randomized or deterministic $\alpha$-clairvoyant algorithm is at least $\frac{3}{2}$ or $2$, respectively.

\subsection{Further Related Work}

\textcite{MotwaniPT94} showed that every deterministic algorithm has a competitive ratio of at least $P$, where $P$ is the largest to smallest processing time ratio, and that the algorithm which always runs any job to completion achieves this bound.

For minimizing the total \emph{weighted} flow time, \textcite{BansalC09} showed that even a clairvoyant algorithm has a competitive ratio of at least $\Omega((\log W/\log \log W)^\frac{1}{2})$, where $W$ is the largest to smallest weight ratio, thus ruling out $\cO(1)$-competitive algorithms. Currently, the best-known upper bound for clairvoyant algorithms is $\cO(\min\{\log W, \log P, \log D\})$~\cite{AzarT18}, where $D$ denotes the largest to smallest ratio of densities $\frac{w_j}{p_j}$, and builds up on earlier works~\cite{BansalD07,ChekuriKZ01}.
In the offline setting, unlike the unweighted problem, it is NP-hard~\cite{Lenstra_1977}, and a PTAS has been established very recently~\cite{ArmbrusterRW23}.

Under $(1+\varepsilon)$-speed augmentation, simple greedy clairvoyant strategies such as scheduling at any time the job of highest density are $\cO(1)$-competitive for total weighted flow time, even for parallel identical machines~\cite{BecchettiLMP06}. For unrelated machines, where every job has a potentially different processing time on every machine, there also are $\cO(1)$-competitive algorithms for both the clairvoyant~\cite{AnandGK12} and non-clairvoyant setting~\cite{ImKMP14}. These results hold partially true for even more general scheduling environments that can be described by polyhedral constraints~\cite{ImKM18,CIP25}.

For scheduling on $m$ parallel identical machines, SRPT is $\cO(\log(\min\{\frac{n}{m} , P\}))$-competitive, which is best-possible for online algorithms~\cite{LeonardiR97,LeonardiR07}. The best-known upper bound on the non-clairvoyant competitive ratio is $\cO(\log(n) \cdot \log(\min\{\frac{n}{m}, P\}))$~\cite{BecchettiL04}. 
There are also non-clairvoyant results with processing time predictions~\cite{AzarPT22,ZhaoLZ22} and with speed augmentation~\cite{ChekuriGKK04,ImKMP14}.

\subsection{Organization}

In \Cref{sec:algo}, we introduce notation and describe our algorithm. Then, in \Cref{sec:analysis}, we give a guided tour through our analysis. Several proofs and subresults for that are deferred to the appendix. Our lower bounds and the proof of \Cref{thm:main-lb} are deferred to \Cref{app:lowerbounds}.

\section{The Algorithm}\label{sec:algo}

We use standard notation for flow time scheduling. For a job $j$, we denote by $p_j$ its processing time and by $r_j$ its release date. 
For a time $t$, we denote by $y_j(t)$ the total amount of processing that $j$ has received until time $t$, and we use $p_j(t) \coloneq p_j - y_j(t)$ to denote its remaining processing time. Moreover, for a time interval $I=[t_1,t_2]$, we write $q_j(I) \coloneq y_j(t_2) - y_j(t_1)$ to denote the amount of processing that $j$ receives during that interval. We denote by $C_j$ the completion time of $j$.
For a time $t$, we use $A(t) = \{j \mid r_j \leq t < C_j \}$ to refer to the set of available jobs at time $t$.

Our algorithm is a combination of the following two classic algorithms for flow time scheduling from the literature:
\begin{itemize}
  \itemsep0pt
  \item \textbf{Shortest-Remaining-Processing-Time-first (SRPT):} At any time $t$, schedule the job in $A(t)$ that has the smallest remaining processing time at $t$.
  \item \textbf{Shortest-Elapsed-Time-first (SETF):} At any time $t$, process all jobs $j \in A(t)$ with smallest $y_j(t)$ at the same rate, that is, evenly distribute the processing power of $1$ among these jobs.
\end{itemize}
 
SRPT is 1-competitive for clairvoyant online scheduling~\cite{Schrage68}, and SETF is $\cO(\frac{1}{\varepsilon})$-competitive for non-clairvoyant online scheduling with $(1+\varepsilon)$-speed augmentation, for every $\varepsilon > 0$~\cite{KalyanasundaramP00}.

For the $\alpha$-clairvoyant setting, we introduce additional notation.
For a job $j$, we denote by $s_j$ the time at which job $j$ \emph{emitted its signal}, that is, the earliest point in time $t$ at which $y_j(t) = \alpha p_j$.
Moreover, at any time $t$, we partition $A(t)$ into the set $N(t) \coloneq \{j \in A(t) \mid y_j(t) \leq \alpha p_j \}$ of \emph{non-clairvoyant} jobs and the set $C(t)  \coloneq \{j \in A(t) \mid y_j(t) > \alpha p_j \}$ of \emph{clairvoyant} jobs. Finally, we use $D(t)$ for the set of jobs that have been completed by time $t$.

\paragraph{Intuition}

On a high-level, at any time $t$, 
our algorithm either executes SETF on the set of non-clairvoyant jobs $N(t)$,
or it executes SRPT on the set of clairvoyant job $C(t)$.
Note that the latter is possible, because the algorithm can derive the remaining processing time of jobs in $C(t)$.
The crucial decision is which of both rules the algorithm should execute at time $t$. Intuitively, we design this decision such that whenever the algorithm executes SRPT, its behavior is equivalent to executing SRPT on \emph{all} available jobs $A(t)$, and similarly, whenever the algorithm executes SETF, its behavior is equivalent to executing SETF on \emph{all} available jobs $A(t)$. 
In other words, whenever the algorithm knows that a certain job $j$ has the least remaining processing time among all available jobs, it will work on $j$.

\paragraph{Algorithm Description}
To achieve this, we define the algorithm as follows for every time $t$:

\begin{enumerate}
  \item If $\min_{j \in C(t)} p_j(t) \leq \frac{1-\alpha}{\alpha} \cdot \min_{j \in N(t)} y_j(t)$, schedule the job in $C(t)$ with the smallest remaining processing time at time $t$.  If there are multiple such jobs, schedule the job in $C(t)$ that emitted last, i.e., a job $j \in C(t)$ of maximum $s_j$. Break all remaining ties arbitrarily.
  \item Otherwise, that is, $\min_{j \in C(t)} p_j(t) > \frac{1-\alpha}{\alpha} \cdot \min_{j \in N(t)} y_j(t)$, process all jobs in $N(t)$ that have received the smallest amount of processing so far at the same rate.
\end{enumerate}

\paragraph{Properties}

We now verify that this algorithm ensures the properties mentioned above. First, note that if $\min_{j \in C(t)} p_j(t) \leq \frac{1-\alpha}{\alpha} \cdot \min_{j \in N(t)} y_j(t)$, we schedule a job $k$ that satisfies $p_{k}(t) = \min_{j' \in C(t)} p_j(t) \leq  \frac{1-\alpha}{\alpha} \cdot \min_{j' \in N(t)} y_{j'}(t) \leq \frac{1-\alpha}{\alpha} \cdot y_j(t) \leq \frac{1-\alpha}{\alpha} \cdot \alpha p_j \leq (1-\alpha)p_j \leq p_j(t)$ for every job $j \in N(t)$, and $p_{k}(t) \leq p_j(t)$ for every job $j \in C(t)$. Here, $y_j(t) \le \alpha p_j$ and $p_j(t) \ge (1-\alpha)p_j$ hold by definition of $N(t)$ and since $j \in N(t)$. Thus, we arrive at the following observation.

\begin{observation}\label{obs:srpt-like}
    At any time $t$, if $\min_{j \in C(t)} p_j(t) \leq \frac{1-\alpha}{\alpha} \cdot \min_{j \in N(t)} y_j(t)$, then the algorithm processes a job $k \in C(t)$ with shortest remaining processing time, i.e., $p_{k}(t) \leq p_j(t)$ for all $j \in A(t)$.
\end{observation}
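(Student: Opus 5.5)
The plan is to verify the claim directly from the selection rule in step 1 of the algorithm. The argument compares the chosen job $k$ separately with the jobs in $C(t)$ and with the jobs in $N(t)$, since $A(t) = C(t) \cup N(t)$ is a partition. Throughout, assume the hypothesis $\min_{j \in C(t)} p_j(t) \leq \frac{1-\alpha}{\alpha} \min_{j \in N(t)} y_j(t)$ holds. This implicitly requires $C(t) \neq \emptyset$ and $\alpha > 0$. Use the conventions that the minimum over an empty $N(t)$ is $+\infty$, and that for $\alpha = 0$ the set $N(t)$ is empty, since $y_j(t) \le 0$ would force no processing yet, and such jobs then emit immediately.

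\emph{First step.} By step 1 of the algorithm, the job $k$ that is processed is chosen from $C(t)$ with $p_k(t) = \min_{j \in C(t)} p_j(t)$. This immediately gives $p_k(t) \le p_j(t)$ for every $j \in C(t)$.

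\emph{Second step.} Fix $j \in N(t)$ and chain the inequalities
\[
p_k(t) \le \tfrac{1-\alpha}{\alpha}\min_{j' \in N(t)} y_{j'}(t) \le \tfrac{1-\alpha}{\alpha}\, y_j(t) \le \tfrac{1-\alpha}{\alpha}\,\alpha p_j = (1-\alpha)p_j \le p_j - y_j(t) = p_j(t).
\]
The third and fifth inequalities both use $y_j(t) \le \alpha p_j$, which is the defining property of $N(t)$. The last step is $p_j - y_j(t) \ge p_j - \alpha p_j$.

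\emph{Combining the cases.} Together the two steps give $p_k(t) \le p_j(t)$ for all $j \in A(t)$. The only point needing care is the degenerate cases: $N(t)=\emptyset$, and the boundary value $\alpha = 0$, where the factor $\frac{1-\alpha}{\alpha}$ is undefined. Both are handled by the conventions above. Beyond that, there is no real obstacle; the argument is a direct unwinding of the definitions.
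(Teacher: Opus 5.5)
Your proof is correct and matches the paper's argument: the selection rule makes $p_k(t)$ minimal over $C(t)$, and for $j \in N(t)$ you use the same chain $p_k(t) \le \frac{1-\alpha}{\alpha} y_j(t) \le (1-\alpha)p_j \le p_j(t)$, both times via $y_j(t) \le \alpha p_j$. One small correction to your aside: under the paper's definition $N(t)$ is not empty when $\alpha=0$, since unprocessed jobs satisfy $y_j(t) \le 0 = \alpha p_j$. This does not affect your argument, because the rule's threshold $\frac{1-\alpha}{\alpha}$ is undefined at $\alpha=0$ and the hypothesis already presumes $\alpha>0$.
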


Second, if $\min_{j \in C(t)} p_j(t) > \frac{1-\alpha}{\alpha} \cdot \min_{j \in N(t)} y_j(t)$, we schedule jobs $k$ that satisfy $y_{k}(t) \leq y_j(t)$ for every $j \in N(t)$, and $y_{k}(t) = \min_{j' \in N(t)}y_{j'}(t) < \frac{\alpha}{1-\alpha} \cdot \min_{j' \in C(t)} p_{j'}(t) \leq \frac{\alpha}{1-\alpha} p_{j}(t) \leq \alpha p_{j} \leq y_{j}(t)$ for every $j \in C(t)$. This gives us the following observation.

\begin{observation}\label{obs:sept-like}
    At any time $t$, if $\min_{j \in C(t)} p_j(t) > \frac{1-\alpha}{\alpha} \cdot \min_{j \in N(t)} y_j(t)$, then the algorithm processes jobs $S \subseteq N(t)$ with minimal progress, i.e., $y_{j'}(t) \leq y_j(t)$ for all $j' \in S$ and $j \in A(t)$.
\end{observation}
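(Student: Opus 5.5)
The plan is to mirror, for the SETF branch, the chain of inequalities used before \Cref{obs:srpt-like}, now bounding elapsed times instead of remaining times. Assume the condition $\min_{j \in C(t)} p_j(t) > \frac{1-\alpha}{\alpha} \cdot \min_{j \in N(t)} y_j(t)$ holds at time $t$. Let $S \subseteq N(t)$ be the set of jobs the algorithm processes. By the second rule of the algorithm, every $k \in S$ satisfies $y_k(t) = \min_{j' \in N(t)} y_{j'}(t)$. Since $A(t) = N(t) \cup C(t)$, it suffices to compare $y_k(t)$ with $y_j(t)$ for $j \in N(t)$ and for $j \in C(t)$ separately.

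For $j \in N(t)$, the bound $y_k(t) \le y_j(t)$ is immediate from the choice of $S$ as minimizers over $N(t)$. For $j \in C(t)$, we rearrange the case condition (valid for $\alpha > 0$, since $\frac{1-\alpha}{\alpha} > 0$) into
\[
y_k(t) = \min_{j' \in N(t)} y_{j'}(t) < \tfrac{\alpha}{1-\alpha} \min_{j' \in C(t)} p_{j'}(t) \le \tfrac{\alpha}{1-\alpha}\, p_j(t).
\]
Because $j \in C(t)$ means $y_j(t) > \alpha p_j$, we get $p_j(t) = p_j - y_j(t) < (1-\alpha) p_j$. Hence $\frac{\alpha}{1-\alpha} p_j(t) \le \alpha p_j < y_j(t)$, and chaining the two displays gives $y_k(t) < y_j(t)$.

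The only step needing care is degenerate cases rather than any real obstacle. We handle them as follows:
\begin{itemize}
\item If $C(t) = \emptyset$, the minimum over $C(t)$ is $+\infty$ by convention and only the $N(t)$ comparison matters.
\item If $N(t) = \emptyset$, the minimum is $+\infty$, the condition cannot hold (the algorithm runs SRPT), and the statement is vacuous.
\item The case $\alpha = 0$ is excluded: then $N(t)$ contains only jobs with $y_j(t) \le 0$, so the ratio $\frac{1-\alpha}{\alpha}$ should be read as $+\infty$. The condition then holds exactly when some unstarted job exists, and such a job trivially has minimal elapsed time $0$.
\end{itemize}
With these conventions fixed, the chain of inequalities completes the proof.
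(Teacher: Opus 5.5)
Your proof is correct and is essentially the paper's argument: the comparison within $N(t)$ follows directly from the SETF rule, and for $j \in C(t)$ you use the same chain $y_k(t) < \frac{\alpha}{1-\alpha}\min_{j' \in C(t)} p_{j'}(t) \le \frac{\alpha}{1-\alpha} p_j(t) \le \alpha p_j < y_j(t)$. Your handling of the degenerate cases (empty $C(t)$ or $N(t)$, and $\alpha = 0$) is extra care that the paper omits, but it does not change the argument.
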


Finally, the definition of our algorithm implies the following property, which will be important in the analysis by describing the structure of the algorithm's schedule.

  \begin{observation}
      \label{obs:clairvoyant-jobs-block-earlier-jobs}
      Consider a job $k$ that is executed at time $t'$ with $k \in C(t')$. Then, for every job $j$ with $r_j < t' < C_j$ it holds that $j$ is not executed during $[t', C_k]$ and $C_k \leq C_j$. In particular, this applies to a job $k$ that emits at time $t'$, that is, $t' = s_k$.
  \end{observation}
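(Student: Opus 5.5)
The plan is to show that once $k$ runs at time $t'$ while clairvoyant, the algorithm stays in rule~1 and keeps running $k$ until $C_k$. Every earlier-released job $j$ is then blocked on $[t',C_k]$, and since $j$ cannot finish in that window, $C_k \le C_j$.

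\textbf{Step 1 (the condition persists).} At time $t'$ the job $k$ is selected by rule~1, so $p_k(t') = \min_{i\in C(t')} p_i(t') \le \frac{1-\alpha}{\alpha}\min_{i\in N(t')} y_i(t')$. While $k$ is processed, $p_k$ decreases and the values $y_i$ of jobs in $N$ do not change. A job may newly arrive at some time $t''\in(t',C_k)$. Such a job has $y=0$, and if $\alpha>0$ the condition would then read $\min_{C} p \le 0$, which fails; the algorithm would switch to SETF on the new job. So the claim cannot hold for all jobs. This is exactly why the statement only quantifies over jobs $j$ with $r_j<t'$.

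\textbf{Step 2 (weaken the invariant).} Because of Step~1, I would not try to prove that $k$ runs without interruption. Instead I would prove the invariant: \emph{during $[t',C_k)$, no job $j$ with $r_j<t'$, $j\neq k$, is processed.} Suppose for contradiction that $t^*$ is the first time in $[t',C_k)$ at which some such $j$ is processed.
\begin{itemize}
\item If $j\in C(t^*)$ via rule~1, then $p_j(t^*)\le p_k(t^*)$. Up to $t^*$, $j$ has not been processed since $t'$, while $k$ has possibly been. Hence $p_j(t') = p_j(t^*) \le p_k(t^*) \le p_k(t')$. If $j\in C(t')$, then at $t'$ the tie-breaking rule chose $k$, which requires $s_k \ge s_j$. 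Then either the inequality is strict, contradicting the minimality of $p_k(t')$, or the tie persists and the rule still prefers $k$, since emission times do not change. If instead $j\in N(t')$, then $j$ cannot have emitted in the meantime because it was not processed, so $j\notin C(t^*)$.
\item If $j\in N(t^*)$ is processed by rule~2, then $y_j(t^*)=y_j(t')$ and rule~2 is active. Since $p_k(t^*)\le p_k(t')\le \frac{1-\alpha}{\alpha}y_j(t')$, the minimum over $N(t^*)$ must be attained by a job with strictly smaller $y$. Such a job has $y$ below $y_j(t')$, so it cannot be an old job, because every old job in $N$ has $y\ge \min_{N(t')}y$. It must therefore be newly arrived, and then $j$, not being of minimal progress, is not processed, a contradiction. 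Ties at equal $y$ would need care; I would handle them by noting that a new job's $y$ grows only up to the level of the old jobs, at which point SETF shares the processor with old $N$-jobs. This makes the old job $j$ processed, so I must verify that $p_k(t^*) \le \frac{1-\alpha}{\alpha}\, y$ still triggers rule~1 first, using the inequality above with $y=y_j(t')$.
\end{itemize}

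\textbf{Step 3 (conclude).} Old jobs other than $k$ are not executed on $[t',C_k)$ and are unfinished at $t'$, since $t'<C_j$. They therefore remain unfinished at $C_k$, giving $C_k\le C_j$. The case $t'=s_k$ is included because $k$ is executed at $s_k$ and belongs to $C$ immediately afterward; treating $C$ at $t'$ by right-continuity of the state handles this.

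The main obstacle is the tie case in Step~2 when rule~2 is active because of new arrivals whose progress catches up to an old $N$-job. The inequality $p_k(t^*)\le\frac{1-\alpha}{\alpha}y_j(t')$ must show that rule~1 fires exactly at that moment.
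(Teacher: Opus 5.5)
Your argument is correct and is essentially the paper's. At $t'$ the job $k$ beats every earlier job $j$ in the algorithm's comparison ($p_k(t')\le p_j(t')$ or $p_k(t')\le\frac{1-\alpha}{\alpha}y_j(t')$), and this persists because $p_k$ only decreases while the quantities of an unprocessed $j$ stay frozen; your first-violation time $t^*$ and your remark that newly released jobs may interrupt $k$ just make this more careful.

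Two small fixes are needed. First, in the rule-1 tie case with $s_j=s_k$ the tie-breaking is arbitrary, so appealing to it does not work. Use instead that only $k$ runs on some $[t',t'+\varepsilon)$, so $t^*\ge t'+\varepsilon$ and $p_k(t^*)<p_k(t')$; this is exactly the strictness the paper invokes, and it turns your chain into an outright contradiction. Second, your rule-2 ``main obstacle'' does not actually arise. The strict rule-2 condition together with your chain already gives $\min_{N(t^*)}y<\frac{\alpha}{1-\alpha}p_k(t^*)\le y_j(t^*)$, so $j$ is never of minimal progress, and neither the tie analysis at equal $y$ nor the detour through new arrivals is needed.
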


  \begin{proof}
      Consider a job $j$ with $r_j < t' < C_j$.
      Since $k$ is executed at time $t'$, we have by the definition of the algorithm that $p_k(t') \leq \frac{1-\alpha}{\alpha} y_j(t')$ if $j \in N(t')$ and $p_k(t') \leq p_j(t')$ if $j \in C(t')$. In any case, since $p_k(t'^+) < p_k(t')$, the algorithm would never process $j$ after time $t'$ if $k$ is also available. %
  \end{proof}

\section{Overview of the Analysis}\label{sec:analysis}

In this section, we give an extensive overview of our analysis. We will prove our main theorem (\Cref{thm:main}) for the algorithm presented in \Cref{sec:algo} via \emph{local competitiveness}~\cite{Schrage68}. To this end, we fix an instance and an optimal solution, and we denote by $O(t)$ the set of available jobs at time $t$ in that solution. 
We assume w.l.o.g.\ that $\frac{1}{1-\alpha}$ is integer, which does not change the bound in \Cref{thm:main}.

We prove the following theorem.

\begin{theorem}\label{thm:main:local}
  At any time $t$, it holds that $|A(t) \setminus O(t)| \leq \big( 3+ \frac{2}{1-\alpha} \big) \cdot |O(t)|$.
\end{theorem}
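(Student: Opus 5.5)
Fix a time $t$ and set $\beta \coloneq \frac{1}{1-\alpha}$. The plan is to charge every job in $A(t)\setminus O(t)$ to a job in $O(t)$ so that each optimal job receives at most $3+2\beta$ charges. The title suggests doing this with a \emph{borrow graph}: job $j \in A(t)\setminus O(t)$ has been finished by the optimum, so the optimum spent $p_j(t)$ units of work on $j$ before $t$ that our algorithm has not yet spent. Since both schedules are busy whenever work is pending, the algorithm must have used this missing work on other jobs. We say $j$ \emph{borrows} from the job(s) the algorithm processed instead, written $j \borrow k$. Following borrow edges from $j$ along the chronology should eventually reach a job that is pending in the optimum at $t$, or at least a time interval on which the optimum has pending jobs. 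We then bound in-degrees and the lengths of the paths.

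\textbf{Step 1 (standard accounting).} For the last time $t_0 \le t$ at which the algorithm is idle or has only jobs released before $t_0$, compare the total remaining work. The work of $A(t)$ in the algorithm equals the work of $O(t)$ in the optimum. The lagging jobs $A(t)\setminus O(t)$ therefore have total remaining work $\sum_{j\in A(t)\setminus O(t)} p_j(t)$, which is matched by work that the optimum still owes on $O(t)$ beyond what the algorithm owes. The difficulty is converting this \emph{volume} comparison into a \emph{count*} comparison.

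\textbf{Step 2 (split by type).} Treat $C(t)$ and $N(t)$ separately.
\begin{itemize}
  \item \emph{Clairvoyant jobs.} By \Cref{obs:srpt-like} and \Cref{obs:clairvoyant-jobs-block-earlier-jobs}, jobs in $C(t)$ form a nested, stack-like (SRPT) structure: when $k$ runs as a clairvoyant job, all older pending jobs wait until $C_k$. The classical SRPT exchange argument then shows that the $i$-th smallest remaining processing time in $C(t)$ is at most the $i$-th smallest among what the optimum holds, up to an additive shift. This should give $|C(t)\setminus O(t)| \le |O(t)| + $ (number of non-clairvoyant optimal jobs they can be charged to), and plausibly contributes the constant $3$.
  \item \emph{Non-clairvoyant jobs.} By \Cref{obs:sept-like}, the jobs in $N(t)$ are processed SETF-like, so all $j\in N(t)$ sit at elapsed level at least $\ell \coloneq \min_{j\in N(t)} y_j(t)$. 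Each such $j$ has $p_j \ge y_j(t)/\alpha$ and remaining time $\ge (1-\alpha)p_j$. Any $j\in N(t)$ completed by the optimum forces the optimum to have done at least $(1-\alpha)p_j$ extra work on $j$. This work had to come from the algorithm's processing of other jobs up to level $\ell$, or from clairvoyant jobs with remaining time below $\beta\cdot\ell$ scale.
\end{itemize}

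\textbf{Step 3 (the speed-augmentation analogy).} Mimic the SETF analysis of \cite{KalyanasundaramP00}, with $\alpha$ playing the role of $1/(1+\varepsilon)$. Group the jobs of $N(t)$ by elapsed level. Show that the work the optimum has invested in the jobs of $N(t)$ it completed, namely at least $(1-\alpha)\sum p_j$, can only be recovered if the optimum holds unfinished jobs whose total unfinished work exceeds this amount. Each optimal job can absorb at most about $\beta$ such borrowings, because each borrowed piece has size at least a $(1-\alpha)$-fraction of the borrower's processing time, and ordering by level prevents a single optimal job from being charged at several scales. This yields the $\frac{2}{1-\alpha}|O(t)|$ term.

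\textbf{Main obstacle.} The hard part is the interaction between the two phases. Borrow chains alternate between SETF and SRPT periods, and clairvoyant jobs that emitted late may "steal" work that the charging of a non-clairvoyant job relies on. I would handle this by induction over maximal intervals delimited by emission times, using \Cref{obs:clairvoyant-jobs-block-earlier-jobs} to argue that a clairvoyant run nested inside an interval is self-contained: it completes before older jobs resume. Its work can then be charged within the nested block without crossing levels. Ensuring that each optimal job is charged at most $2\beta$ times across all nested blocks, rather than once per block, is where I expect the real difficulty to lie.
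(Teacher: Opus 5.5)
Your split into $C(t)$ and $N(t)$ matches the paper's, but neither half is actually proved. The argument you sketch for $N(t)$ also rests on exactly the property that fails for this algorithm.

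To cap the number of lagging jobs charged to an optimal job $i$ by $\frac{1}{1-\alpha}$, every borrower $j$ of $i$ must have remaining work comparable to what $i$ can supply, i.e.\ $p_j(t)\ge(1-\alpha)y_i(t)$. For pure SETF this follows from catch-up, $y_j(t)\ge y_i(t)$, which is your ``ordering by level''. Here it is false. Suppose $i$ and $k$ run together until $k$ emits, and $j$ is released while $k$ runs clairvoyantly. Then $j$ borrows from $k$ and $k$ borrows from $i$, so $i$ is reachable from $j$. Yet $y_j(t)$ can be arbitrarily small relative to $y_i(t)$, because the algorithm returns to $k$ as soon as $\frac{1-\alpha}{\alpha}y_j$ reaches $k$'s remaining time, which may be tiny. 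On volume grounds, arbitrarily many such $j$ can then be charged to $i$.

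The same example refutes your plan of treating a nested clairvoyant run as a self-contained block: the borrow chain leaves the block through $k$. For $C(t)$, the classical SRPT exchange argument is unavailable because the algorithm does not run SRPT on $A(t)$ during SETF phases. Your ``$i$-th smallest remaining time up to an additive shift'' claim therefore has no support.

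The missing idea is to charge against the capacity of a \emph{bottleneck} rather than against $y_i(t)$. This first requires a per-pair decomposition that respects reachability, which the volume identity in your Step 1 does not provide. The paper builds a flow network with supplies $p_j(t)$ for $j\in A(t)\setminus O(t)$ and demands $y_i(t)$ for $i\in O(t)$. It proves that a flow saturating all supplies exists, and extracts values $\beta(j,i)$ that vanish unless $i\in R_j$ and satisfy $\sum_j\beta(j,i)\le y_i(t)$.

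It then partitions $N(t)\setminus O(t)$ into segments of jobs sharing the same set $O_j=\{i\in O(t): \bar y_j\ge\bar y_i\}$, where $\bar y_j=\min\{y_j(t),\alpha p_j\}$. A long layer/entrypoint analysis shows that every borrow path from a segment $S$ to $O(t)\setminus O_S$ passes through a single bottleneck whose usable capacity is at most $p_{j^*(S)}(t)$. So each segment loses at most one job, there are at most $|O(t)|+1$ segments, and a Hall-type slot matching handles the rest. This gives $(2+\frac{1}{1-\alpha})|O(t)|$.

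For $C(t)$, the paper splits the borrowers of $i$ into two groups. Those last processed by $r_i$ each satisfy $p_j(t)\ge(1-\alpha)y_i(t)$. Those last processed after $r_i$ number at most one, by \Cref{obs:clairvoyant-jobs-block-earlier-jobs}. This gives $(1+\frac{1}{1-\alpha})|O(t)|$.

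Your ``main obstacle'' paragraph correctly locates the difficulty, but without these ingredients the bound is not established.
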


Then, $|A(t)| \leq \big( 4+ \frac{2}{1-\alpha} \big) \cdot |O(t)|$, and \Cref{thm:main} follows via integration over time.
Thus, it remains to prove \Cref{thm:main:local}. To this end, we fix any time $t$ for the remainder of this section. Note that if $|O(t)|=0$, then also $|A(t)| = 0$ because our algorithm is non-idling. Thus, we can assume that $|O(t)| \geq 1$. Moreover, we assume that $|A(t) \setminus O(t)| \geq 1$, as otherwise the statement is trivial. 
For a job $j$, we denote with $t_j$ the latest point in time before or at $t$ at which $j$ is being processed.

\subsection{Borrowing}

We start by defining \emph{borrowing}~\cite{KalyanasundaramP00}. The goal of this concept is to bring the algorithm's schedule in relation to the fixed optimal schedule by describing local transformations that allow to modify the algorithm's schedule into the optimal schedule. 
We prove in the next section that such a transformation via these local operations must be possible, which will ultimately allow us to compare both schedules.
To make schedules easier to describe, we first introduce \emph{lifetimes}.

\begin{definition}[Lifetime]
  Consider a fixed schedule.
  For a job $j$, we refer to $I_j \coloneq [r_j,\min\{C_j,t\}]$ as the \emph{lifetime} of $j$. The \emph{lifetime} $I(Q)$ of a set $Q \subseteq J$ is defined as $I(Q) \coloneq \bigcup_{j \in Q} I_j$.
\end{definition}

Now, observe that if a job $i$ receives $\gamma$ units of processing during the lifetime of another job $j \in A(t)$, we can find a similar schedule where instead of $i$ we process $j$. As a result, $j$ receives up to $\gamma$ units of processing more until time $t$. We say that $j$ \emph{borrowed} from $i$. We further distinguish whether job $i$ receives processing while being non-clairvoyant or clairvoyant. Clearly, there can be both situations for some fixed $j$ and $i$. In the following, we give precise definitions of this extension of the concept used in~\cite{KalyanasundaramP00}.

Borrowing is not tailored to the schedule of our algorithm. Thus, for the remaining section, we fix any non-idling schedule $S$. For any schedule-specific quantity, we add as superscript the schedule in which we consider this quantity. E.g., we denote with $A^S(t')$ the set of available jobs at time $t'$ in $S$. In later sections, when the schedule is clear from the context, we will omit any such indications.

\begin{definition}[Borrowing]\label{def:borrowing}
  We say that a job $j$ can \emph{directly borrow non-clairvoyantly} from $i$ if $i$ is executed at any time $t'$ with $i \in  N^S(t')$ during the lifetime of $j$, that is, $t' \in I^S_j$, and write $j \borrow^S_N i$.
  We say that a job $j$ can \emph{directly borrow clairvoyantly} from $i$ if $i$ is executed at any time $t'$  with $i \in C^S(t')$ during the lifetime of $j$, that is, $t' \in I^S_j$, and write $j \borrow^S_C i$.
\end{definition}

\begin{definition}[Borrow Graph]\label{def:borrow-graph}
Let $E^S_N \coloneq \{(j,i) \mid j \borrow^S_N i \}$ and $E^S_C \coloneq \{(j,i) \mid j \borrow^S_C i \}$. The borrow graph is the multi-graph $G^S_B = (J, E^S)$ with $E^S = E^S_N \cup E^S_C$.
\end{definition}

For a job $j$, we denote by $R^S_j$ the set of jobs that are reachable from $j$ in $G^S_B$ (including $j$).
Moreover, for every $i \in R^S_j$, we say that $j$ can \emph{borrow} from $i$.
Given a set $Q$ of edges or vertices of $G^S_B$, we call a path a  $Q$-path if it only uses edges or vertices in $Q$, respectively. For example, we call a path $P$ in $G_B$ that only uses edges from $E^S_N$ an \emph{$E^S_N$-path}.

\begin{lemma}\label{obs:lifetime:1}
  Let $P$ be a path in $G^S_B$. Then, $I^S(P) = [\min_{j \in P} r_j, \max_{j \in P} \min\{C^S_j, t\}]$.
\end{lemma}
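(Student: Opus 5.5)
The union of intervals along a path is an interval as soon as consecutive lifetimes overlap. So the plan is to show exactly that, and then read off the endpoints.

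\textbf{Consecutive lifetimes overlap.} Let $P = (j_1, \dots, j_m)$ with edges $(j_\ell, j_{\ell+1}) \in E^S$. By \Cref{def:borrowing}, such an edge means $j_{\ell+1}$ is executed at some time $t' \in I^S_{j_\ell}$.
\begin{itemize}
\item A job is executed at $t'$ only if it is available, so $r_{j_{\ell+1}} \le t' < C^S_{j_{\ell+1}}$.
\item Since $t' \in I^S_{j_\ell}$, we also have $t' \le t$.
\end{itemize}
Hence $t' \in [r_{j_{\ell+1}}, \min\{C^S_{j_{\ell+1}}, t\}] = I^S_{j_{\ell+1}}$. Thus $t' \in I^S_{j_\ell} \cap I^S_{j_{\ell+1}}$, so the two intervals intersect.

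\textbf{Induction on the length of the path.} The claim is that $\bigcup_{\ell \le k} I^S_{j_\ell}$ is a closed interval for every $k$. For the step, the union of a closed interval $U$ with a closed interval meeting it is again an interval. The interval $I^S_{j_{k+1}}$ meets $I^S_{j_k} \subseteq U$, so this applies.

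\textbf{Reading off the endpoints.} An interval equal to a finite union of closed intervals $[a_\ell, b_\ell]$ is $[\min_\ell a_\ell, \max_\ell b_\ell]$. Here $a_\ell = r_{j_\ell}$ and $b_\ell = \min\{C^S_{j_\ell}, t\}$, which gives the claimed formula.

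\textbf{Main obstacle.} The only delicate point is the boundary convention. Lifetimes are closed at $\min\{C_j,t\}$, and the execution time $t'$ must indeed lie in both intervals. This holds because execution at $t'$ requires $t' < C^S_{j_{\ell+1}}$, and $t' \le t$ is inherited from $I^S_{j_\ell}$.

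A degenerate case also needs a remark: if $r_j > t$, then $I_j$ is empty. However, no job released after $t$ can lie on a path, except possibly as a lone start vertex. An edge into such a job would require executing it at some $t' \le t$, which is impossible. An edge out of it would require its lifetime to contain an execution time, which is impossible since the lifetime is empty. So, implicitly restricting to jobs released by $t$ as the lifetime definition intends, every lifetime on a path of length at least one is nonempty and the argument goes through.
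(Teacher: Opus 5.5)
Your proof is correct and rests on the same key fact as the paper's: an edge of $G^S_B$ forces the two lifetimes to share a point, so the union along a path cannot have a gap. The paper argues by contradiction (a gap at $t'$ would split $P$ into two nonempty groups with disjoint lifetimes, yet some edge of $P$ must cross the split), while you run the equivalent induction along the path and spell out the details the paper leaves implicit, namely why the execution time lies in both lifetimes and how jobs released after $t$ are handled.
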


\begin{proof}
  By definition of $I^S(P)$, we clearly have that $\min_{j \in P} r_j$ is the smallest value in $I^S(P)$, and that $\max_{j \in P} \min\{C^S_j, t\}$ is the largest value in $I^S(P)$. Thus, it remains to argue that $I^S(P)$ has no gaps. 

  To see this, assume otherwise, and let $t'$ be a value with $\min_{j \in P} r_j < t' < \max_{j \in P} \min\{C^S_j, t\}$ and $t' \not\in I^S(P)$. 
  Define $P_1 \coloneq \{j_1 \in P \mid  I^S_{j_1} \subseteq [\min_{j\in P} r_j , t')\}$ and 
  $P_2 \coloneq \{j_2 \in P \mid I^S_{j_2} \subseteq (t', \max_{j \in P} \min\{C^S_j, t\}]\}$. 
  Observe that $P_1 \cup P_2 = P$, and, by definition of $I^S(P)$, neither $P_1$ nor $P_2$ can be empty. 
  Thus, there must exist at least one edge $e$ on $P$ between a job $j_1 \in P_1$ and $j_2 \in P_2$ (in some direction). However, $I^S_{j_1} \cap I^S_{j_2} = \emptyset$ implies that $j_1$ and $j_2$ cannot directly borrow from each other; a contradiction to the existence of $e$ in $G^S_B$.
\end{proof}

\Cref{obs:lifetime:1} immediately gives the following corollaries.

\begin{corollary}\label{obs:flow:cut-lifetime}
  For any path $P$ in $G^S_B$, $I^S(P) = [\min_{j \in P} r_j, \max_{j \in P} \min\{C^S_j, t\}]$. As a consequence, for every job $j$, we have
  $I^S(R^S_j) = [\min_{j \in R^S_j} r_j, \max_{j \in R^S_j} \min\{C^S_j, t\}]$.
\end{corollary}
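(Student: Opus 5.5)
The first claim of \Cref{obs:flow:cut-lifetime} is exactly \Cref{obs:lifetime:1}, so I only need to prove the second claim about $R^S_j$. My plan is to reduce it to the path case. Note that $R^S_j$ is generally not a single path, but every vertex of it is reached from $j$ by a path.

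Fix a job $j$. For each $i \in R^S_j$, I choose a path $P_i$ in $G^S_B$ from $j$ to $i$, taking the trivial path $P_j = \{j\}$ for $i=j$. By \Cref{obs:lifetime:1}, each $I^S(P_i)$ is a closed interval. Each such interval contains $I^S_j$, because $j \in P_i$.

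Since every $P_i \subseteq R^S_j$ and every $i$ lies on its own path $P_i$, we have $I^S(R^S_j) = \bigcup_{i \in R^S_j} I^S(P_i)$. This is a union of intervals that all contain the common interval $I^S_j$. Such a union is itself an interval, with no gaps. Its endpoints are the minimum of the left endpoints and the maximum of the right endpoints. Because $J$ is finite, these extrema are attained. Expanding them via \Cref{obs:lifetime:1} gives $\min_{i \in R^S_j} r_i$ and $\max_{i \in R^S_j} \min\{C^S_i, t\}$.

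One degenerate case needs care: if $r_j > t$, then $I^S_j$ is empty, so the union could lose its common point. In that case, however, $j$ has no outgoing edges (its lifetime is empty), so $R^S_j = \{j\}$ and the claim is trivial under the convention that $I_j = [r_j, \min\{C_j,t\}]$. There is no genuine obstacle here; the only care needed is this common-point observation that makes the union of the path intervals connected.
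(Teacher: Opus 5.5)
Your proof is correct and essentially matches the paper. The paper gives no separate argument and calls the statement immediate from \Cref{obs:lifetime:1}; your step of writing $I^S(R^S_j)$ as the union of the path intervals $I^S(P_i)$, all of which contain the nonempty interval $I^S_j$, is exactly what makes it immediate (rerunning the gap argument of \Cref{obs:lifetime:1} on $R^S_j$, which is connected through $j$, would work equally well).
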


\begin{corollary}
  \label{obs:lifetime:2}
  If a path $P$ in $G^S_B$ contains a job $j \in A^S(t)$, then $I^S(P) = [\min_{j \in P} r_j, t]$. As a consequence, for every job $j \in A^S(t)$, we have
  $I(R^S_j) = [\min_{j \in R^S_j} r_j, t]$.
\end{corollary}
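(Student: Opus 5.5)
The corollary follows from \Cref{obs:lifetime:1} (equivalently \Cref{obs:flow:cut-lifetime}) by identifying the right endpoint of the interval. Let $P$ be a path in $G^S_B$ that contains a job $j \in A^S(t)$. By \Cref{obs:lifetime:1}, $I^S(P) = [\min_{i \in P} r_i, \max_{i \in P} \min\{C^S_i, t\}]$, so it suffices to show that $\max_{i \in P} \min\{C^S_i, t\} = t$.

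Each term $\min\{C^S_i, t\}$ is at most $t$, so the maximum is at most $t$. Conversely, $j \in A^S(t)$ means $r_j \le t < C^S_j$, hence $\min\{C^S_j, t\} = t$. Since $j \in P$, the maximum is attained and equals $t$. This proves the first claim.

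For the consequence, fix $j \in A^S(t)$. Apply \Cref{obs:flow:cut-lifetime} to the vertex set $R^S_j$, which gives $I^S(R^S_j) = [\min_{i \in R^S_j} r_i, \max_{i \in R^S_j} \min\{C^S_i, t\}]$. Because $j \in R^S_j$, the same argument shows the right endpoint is exactly $t$.

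There is one point to check: that \Cref{obs:flow:cut-lifetime} really applies to the whole set $R^S_j$ and not only to a single path. The lifetime $I^S(R^S_j)$ is the union of the lifetimes $I^S(P)$ over all paths $P$ starting at $j$. Each such $I^S(P)$ is an interval containing $I^S_j$, so any two of them intersect. A union of pairwise intersecting intervals is again an interval, with endpoints the minimum release date and the maximum truncated completion time. This step is routine; it is the only place where the path structure is used, and it is already covered by the cited corollary.
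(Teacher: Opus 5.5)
Your proof is correct and matches the paper, which simply states that this corollary follows immediately from \Cref{obs:lifetime:1} (via \Cref{obs:flow:cut-lifetime}): since $j \in A^S(t)$ gives $\min\{C^S_j,t\}=t$, the right endpoint must equal $t$. Your passage from paths to $R^S_j$, taking the union of intervals that all contain the nonempty $I^S_j=[r_j,t]$, is a valid way to fill in the step the paper leaves implicit.
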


Finally, we have the following simple proposition.

\begin{proposition}\label{obs:flow:cut-closure}
  Consider a job $j$.
  Every job $j'$ that is executed (in particular, released) during $I(R^S_j)$ satisfies $j' \in R^S_j$.
\end{proposition}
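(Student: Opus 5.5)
We prove the proposition directly from the definition of the borrow graph together with \Cref{obs:flow:cut-lifetime}. Fix $j$, and let $j'$ be a job executed in $S$ at some time $t' \in I(R^S_j)$. If $j'$ is only released in $I(R^S_j)$, non-idleness gives an executed job there, so the executed case is the essential one. By \Cref{obs:flow:cut-lifetime}, $I(R^S_j)$ is the union of the lifetimes $I^S_i$ of the jobs $i \in R^S_j$. Hence there is some $i \in R^S_j$ with $t' \in I^S_i$.

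At time $t'$, job $j'$ is available and being executed. So either $j' \in N^S(t')$ or $j' \in C^S(t')$. In the first case \Cref{def:borrowing} gives $i \borrow^S_N j'$, and in the second case it gives $i \borrow^S_C j'$. In either case $(i,j') \in E^S$. Since $i$ is reachable from $j$ in $G^S_B$, appending the edge $(i,j')$ shows that $j'$ is reachable from $j$, so $j' \in R^S_j$. If $j' = i$, the conclusion is immediate.

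For the parenthetical claim about released jobs, suppose $r_{j'} \in I(R^S_j)$ and $r_{j'} \le t$. Right after $r_{j'}$, the job $j'$ is available. Its lifetime $I^S_{j'}$ therefore intersects some $I^S_i$ with $i \in R^S_j$, in the sense that $r_{j'} \in I^S_i$. If $j'$ is executed at some time during $I^S_i$, then $i \borrow^S j'$ by the same argument as above.

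The delicate point is the case where $j'$ is released but receives no processing during the lifetime of any job in $R^S_j$. The reading we intend is that "released" refers to jobs that are also executed within that interval. We handle this by interpreting the statement as concerning executed jobs. If needed, we use non-idleness to note that some job is executed at every point of $I(R^S_j)$. That job lies in $R^S_j$, and its lifetime, together with $\max_{i\in R^S_j}\min\{C^S_i,t\}$, extends the interval consistently with \Cref{obs:flow:cut-lifetime}.

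The main obstacle is only this bookkeeping: making sure that $t'$ lies in a single lifetime $I^S_i$, which \Cref{obs:flow:cut-lifetime} supplies via the union structure, and handling the boundary case $t' = \min\{C^S_i,t\}$. The boundary case is harmless, since the lifetimes are closed intervals and the definition of borrowing uses $t' \in I^S_i$.
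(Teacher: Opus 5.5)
Your argument is correct and is the paper's one-line proof: the execution time $t'$ lies in $I^S_d$ for some $d \in R^S_j$ (this is just the definition of $I(R^S_j)$ as a union of lifetimes, so \Cref{obs:flow:cut-lifetime} is not needed), hence $(d,j') \in E^S$ and $j' \in R^S_j$. You can drop the paragraphs on merely released jobs. A job $j' \neq j$ released in $I(R^S_j)$ but never processed there has no incoming edge from $R^S_j$, so it is not in $R^S_j$. Only the executed reading is true, and that is the one the paper proves and uses in the proof of \Cref{thm:flow:main}.
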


\begin{proof}
  If $j'$ is executed at some time $t' \in I(R^S_j)$, then there is an edge $(d,j')$ in $G^S_B$ for some $d \in R^S_j$, and thus, %
  $j' \in R^S_j$.
\end{proof}

\subsection{The Flow Problem}
\label{sec:flow-problem}

We continue with a refinement of the borrow graph. The idea is that the borrow graph only tells us which job can in principle borrow from which job, but not \emph{how much} amount of processing. For our analysis, however, it will be important to argue about the maximum amount of work that multiple jobs can borrow via some job $i$ in a certain time interval. 

To this end, we introduce a refinement of the borrow graph and define a flow problem for it. A feasible flow then exactly describes how much amount of work jobs can borrow from each other. In particular, this is a non-trivial extensions to the arguments used in \cite{KalyanasundaramP00}. 

This concept, similar to borrowing, is not tailored to the schedule of our algorithm. 
Thus, throughout this section, we consider any non-idling schedule $S$.

\begin{definition}[Flow Network]\label{def:flow-network}
  Let $T^S$ be any time discretization such that $T^S \supseteq \{ r_j \mid r_j \leq t, j \in J \} \cup \{ C^S_j \mid r_j \leq t, j \in J \} \cup \{0, t\}$ and $\tau \leq t$ for every $\tau \in T^S$, and denote $T^S = \{\tau_1,\ldots,\tau_{\ell+1} \}$ such that $\tau_{\ell'} \leq \tau_{\ell''}$ if $\ell' < \ell''$.
  The \emph{flow network} $G^S_F$ w.r.t.\ $T^S$ is defined as follows:
  \begin{itemize}
  \item For every job $j \in J$, there is a vertex $j$.
  \item For every job $i \in J$ %
  and for every $\ell' \in \{1,\ldots,\ell\}$, 
  we introduce one vertex $v^i_{\ell'}$ and an edge $(v^i_{\ell'}, i)$ with capacity $q^S_i([\tau_{\ell'},\tau_{\ell'+1}])$, and for every %
  $j \in J \setminus \{i\}$, we introduce the edge $(j, v_{\ell'}^i)$ with infinite capacity 
  if $[\tau_{\ell'},\tau_{\ell'+1}] \subseteq I_j$.
  We also refer to the vertices $v_{\ell'}^i$ as \emph{dummy vertices}.
  \item Every vertex $j \in A^S(t) \setminus O(t)$ has a supply equal to $p^S_j(t)$ and every vertex $i \in O(t)$ has a demand equal to $y^S_i(t)$. All other supplies and demands are equal to zero.
  \end{itemize}
\end{definition}

Let $f \colon V(G^S_F) \times V(G^S_F) \to \mathbb{R}$ be a feasible flow in $G^S_F$. We use the standard convention that if $(v,w) \in V(G^S_F) \times V(G^S_F)$ is not in $E(G^S_F)$, then $f(v,w)=0$. We extend the notation for $f$ as follows. For all $j,i \in J$, we denote by $f(j,i)$ the total amount of flow that is directly sent from $j$ to $i$ in $G^S_F$, that is, $f(j,i) \coloneq \sum_{\ell'=1}^\ell f(j,v_{\ell'}^i)$.

The main theorem of this section guarantees the existence of a feasible flow in $G^S_F$ that uses the total supply. 

\begin{theorem}\label{thm:flow:main}
  There exists a feasible flow in $G^S_F$ that uses the total supply of jobs in $A^S(t) \setminus O(t)$. %
\end{theorem}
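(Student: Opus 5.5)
The plan is to argue by contradiction via max-flow/min-cut, and then turn the cut condition into a volume (work) comparison between $S$ and the optimal schedule.

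\textbf{Setting up the contradiction.} I treat each supply as an edge from a super-source $s$ with capacity $p^S_j(t)$, and each demand as an edge to a super-sink with capacity $y^S_i(t)$. Take a maximum flow $f$ in $G^S_F$, and suppose some $j_0\in A^S(t)\setminus O(t)$ does not send its full supply. Let $X$ be the set of vertices reachable from $j_0$ in the residual graph of $f$, and let $Q\coloneq X\cap J$ be its job vertices. Standard residual facts then give:
\begin{compactenum}[(i)]
\item No flow enters $X$ from outside.
\item Every infinite-capacity edge $(j,v^i_{\ell'})$ with $j\in Q$ has $v^i_{\ell'}\in X$.
\item Every edge $(v^i_{\ell'},i)$ with $v^i_{\ell'}\in X$ and $i\notin Q$ is saturated.
\item Every demand edge of $i\in Q\cap O(t)$ is saturated, since otherwise there would be an augmenting path.
\end{compactenum}

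\textbf{Structure of $Q$.} Every job in $Q$ is reached from $j_0$ by a residual path, and consecutive jobs on such a path have intersecting lifetimes. This is because a path through $v^i_{\ell'}$ means $[\tau_{\ell'},\tau_{\ell'+1}]$ lies in the lifetime of both neighbouring jobs, and backward residual arcs behave the same way. Arguing exactly as in \Cref{obs:lifetime:1}, and using $j_0\in A^S(t)$ as in \Cref{obs:lifetime:2}, I get $I(Q)=[a,t]$ with $a=\min_{j\in Q}r_j$.

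Since $T^S$ contains all release and completion times, each elementary interval $[\tau_{\ell'},\tau_{\ell'+1}]\subseteq[a,t]$ lies inside $I_j$ for some $j\in Q$. By (ii), every dummy vertex of such an interval lies in $X$. By (iii), the total processing $W_{\mathrm{out}}$ that $S$ spends during $[a,t]$ on jobs outside $Q$ equals the flow leaving $X$ through dummy edges.

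\textbf{Flow balance.} Flow conservation on $X$, together with (i) and the unsaturated supply of $j_0$, gives
\[
W_{\mathrm{out}}+\sum_{i\in Q\cap O(t)} y^S_i(t) \;<\; \sum_{j\in Q\cap A^S(t)\setminus O(t)} p^S_j(t).
\]

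\textbf{Volume argument.} All jobs in $Q$ are released at or after $a$, and $S$ is non-idling on $[a,t]$.
\begin{itemize}
\item In $S$, the work done on $Q$ up to $t$ is exactly $(t-a)-W_{\mathrm{out}}$. Hence $\sum_{j\in Q}p^S_j(t)=\sum_{j\in Q}p_j-(t-a)+W_{\mathrm{out}}$.
\item The optimal schedule processes at most $t-a$ units of $Q$ before $t$. Hence $\sum_{j\in Q}p^{O}_j(t)\ge\sum_{j\in Q}p_j-(t-a)$.
\end{itemize}
Jobs outside $O(t)$ have $p^{O}_j(t)=0$, and for $i\in O(t)$ we have $p^{O}_i(t)\le p_i=p^S_i(t)+y^S_i(t)$. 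Combining these,
\[
\sum_{j\in Q}p^S_j(t)-W_{\mathrm{out}}\;\le\;\sum_{i\in Q\cap O(t)}\bigl(p^S_i(t)+y^S_i(t)\bigr).
\]
Cancelling the terms $p^S_i(t)$ for $i\in Q\cap O(t)$ on both sides, and noting that completed jobs contribute $0$, yields
\[
\sum_{j\in Q\cap A^S(t)\setminus O(t)}p^S_j(t)\le W_{\mathrm{out}}+\sum_{i\in Q\cap O(t)}y^S_i(t),
\]
which contradicts the strict inequality from the flow balance.

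\textbf{Expected obstacle.} The delicate step is showing that the residual-reachable job set $Q$ has a single contiguous lifetime $[a,t]$. Residual paths may use backward arcs, so I must check that these also connect only jobs with overlapping lifetimes; dummy-vertex adjacency guarantees this. I also need the elementary intervals of $[a,t]$ to be fully covered by lifetimes of jobs in $Q$, so that all work in $[a,t]$ is accounted for either inside $Q$ or via saturated outgoing dummy edges.
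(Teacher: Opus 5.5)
Your argument is correct, and it takes a genuinely different and more direct route than the paper.

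\textbf{The paper's route.} The paper argues operationally. Starting from a maximum flow, it repeatedly converts flow into actual borrow operations on the schedule (\Cref{lemma:flow:borrow-apply-edge}, supported by the re-discretization \Cref{obs:flow-increase-precision} and the normalization \Cref{prop:flow:no-outgoing-sink-flow}), which yields a modified schedule $S'$. It then uses \Cref{lemma:flow:positive-paths} to show that the borrow-graph closure $R^{S'}_{j_1}$ of the earliest-released job in $A^{S'}(t)\setminus O(t)$ is disjoint from $O(t)$. The volume contradiction follows because $S'$ works only on $R^{S'}_{j_1}$ during $[r^*,t]$ (\Cref{obs:flow:cut-closure}).

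\textbf{Your route.} You never modify $S$. Your residual-reachable set $Q$ plays the role of $R^{S'}_{j_1}$. The work that $S$ still spends on outside jobs during $[a,t]$, which the paper removes by schedule surgery, you charge directly to the saturated dummy-to-job cut edges, i.e., to $W_{\mathrm{out}}$. This is the max-flow/min-cut certificate written out explicitly. It is shorter and self-contained, and it avoids the delicate points of the paper's proof:
\begin{itemize}
\item termination of the ``exhaustive'' application of \Cref{lemma:flow:borrow-apply-edge};
\item the time discretization changing from $S$ to $S'$;
\item the augmentation step that mixes flows in $G^S_F$ and $G^{S'}_F$.
\end{itemize}

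\textbf{What the paper's route buys.} It gives an operational reading of the flow as a sequence of borrow operations. Some of its lemmas, notably \Cref{lemma:flow:positive-paths}, are needed later anyway, for \Cref{thm:main:borrowing} and \Cref{lem:segment:bottleneck:capacity}.

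\textbf{Two cosmetic remarks.}
\begin{itemize}
\item Property (ii) only forces the dummies $v^i_{\ell'}$ with $i\neq j$ into $X$. This suffices, since only $i\notin Q$ enter $W_{\mathrm{out}}$, and you in fact only need the cut capacity to be at least $W_{\mathrm{out}}$.
\item If your residual search from $j_0$ may pass through the super-source, $Q$ can also pick up other unsaturated supply vertices. This is harmless, since their lifetimes also end at $t$.
\end{itemize}
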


Intuitively, this implies that there exists a set of borrow operations such that every job $j \in A^S(t) \setminus O(t)$ can be completed until time $t$, and only the work of jobs in $O(t)$ until time $t$ decreases by these operations, as these jobs are the only vertices in $G_F^S$ with positive demand.
The remainder of this section is devoted to proving this theorem. %

We start with the observation that we arbitrarily increase the precision of the time discretization without changing $f(j,i)$ for all $j,i \in J$. Note that increasing the precision essentially does not change the structure of the flow network and may only lead to %
more dummy vertices. %

\begin{restatable}{observation}{obsflowincreaseprecision}\label{obs:flow-increase-precision}
  Let $f$ be a feasible flow for $G_F^S$ w.r.t.\ $T^S$, and $0 \leq t' \leq t$. Then, 
  there exists a feasible flow $f'$ for $G_F^S$ w.r.t.\ $T^S \cup \{t'\}$ such that $f(j,i) = f'(j,i)$ for all $j,i \in J$.
\end{restatable}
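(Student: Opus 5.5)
We construct $f'$ explicitly by splitting the one time slot that contains $t'$ and redistributing the flow through the old dummy vertex proportionally. If $t' \in T^S$ there is nothing to do, so assume $\tau_{k} < t' < \tau_{k+1}$ for some $k$. The new discretization replaces the interval $[\tau_k,\tau_{k+1}]$ by the two intervals $[\tau_k,t']$ and $[t',\tau_{k+1}]$. All other slots are unchanged, so for every job $i$ and every $\ell' \neq k$ we keep the dummy vertex $v^i_{\ell'}$, its incident edges and capacities, and we set $f'$ equal to $f$ there.

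For each job $i$, the old dummy vertex $v^i_k$ is replaced by two new dummy vertices $a^i$ and $b^i$, corresponding to $[\tau_k,t']$ and $[t',\tau_{k+1}]$. Their capacities are $q^S_i([\tau_k,t'])$ and $q^S_i([t',\tau_{k+1}])$, which sum to $q^S_i([\tau_k,\tau_{k+1}])$. The key structural check concerns the incoming edges. Since $T^S$ contains all release dates and all completion times (capped at $t$), no lifetime endpoint lies strictly inside $(\tau_k,\tau_{k+1})$. Hence $[\tau_k,\tau_{k+1}] \subseteq I_j$ holds if and only if both $[\tau_k,t'] \subseteq I_j$ and $[t',\tau_{k+1}] \subseteq I_j$. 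So $j$ has an infinite-capacity edge to $v^i_k$ exactly when it has such edges to both $a^i$ and $b^i$.

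Now fix $i$, let $c = q^S_i([\tau_k,\tau_{k+1}])$, and set $\lambda = q^S_i([\tau_k,t'])/c$, with $\lambda$ arbitrary, say $1/2$, if $c=0$. Define
\[
f'(j,a^i) = \lambda f(j,v^i_k), \quad f'(j,b^i) = (1-\lambda) f(j,v^i_k), \quad f'(a^i,i) = \lambda f(v^i_k,i), \quad f'(b^i,i) = (1-\lambda) f(v^i_k,i).
\]

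We then verify feasibility.
\begin{itemize}
\item \emph{Conservation at the new dummy vertices.} At $a^i$ and $b^i$ it holds by scaling, because conservation held at $v^i_k$.
\item \emph{Conservation at job vertices.} Each job vertex sends and receives the same totals as before, since the two scaled pieces sum back to the original value. Supplies and demands are therefore still met exactly.
\item \emph{Capacities.} We have $f'(a^i,i) = \lambda f(v^i_k,i) \le \lambda c = q^S_i([\tau_k,t'])$, and similarly for $b^i$. When $c=0$, all these flows are $0$, so the bound holds trivially.
\end{itemize}

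Finally, $f'(j,i) = \sum_{\ell'} f'(j,\cdot)$ regroups the same summands, since $\lambda f + (1-\lambda) f = f$. Hence $f'(j,i) = f(j,i)$ for all $j,i$.

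The only delicate point is the edge-set correspondence, which relies on $T^S$ containing every release and completion time up to $t$. The degenerate case $c=0$ is handled by the convention above.
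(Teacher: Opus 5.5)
Your proof is correct and follows essentially the same route as the paper. You split the slot containing $t'$, replace each $v^i_k$ by two dummy vertices for the two subintervals, and divide the flow through $v^i_k$ between them; your proportional choice of $\lambda$ just makes explicit the split whose existence the paper only asserts. One small remark: the step you flag as delicate only needs the trivial direction (if $[\tau_k,\tau_{k+1}]\subseteq I_j$ then both halves lie in $I_j$), so you do not actually need $T^S$ to contain all lifetime endpoints, since any extra edges in the refined network can simply carry zero flow.
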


The following proposition holds as we can compute a maximum flow for $G_F^S$ by using a standard augmenting-path algorithm that never uses vertices in $O(t)$ as inner nodes of an augmenting path.%

\begin{proposition}
  \label{prop:flow:no-outgoing-sink-flow}
  There always exists a maximum feasible flow $f$ in $G_F^S$ such that $\sum_{j \in J \setminus \{i\}} f(i,j) = 0$ for all $i \in O(t)$. That is, vertices with positive demand do not have outgoing flow in $f$. 
\end{proposition}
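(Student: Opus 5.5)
We need to prove the proposition: there is a maximum feasible flow in $G_F^S$ in which no vertex $i \in O(t)$ has outgoing flow. The plan is to start from an arbitrary maximum flow and remove any flow leaving a demand vertex by cancelling it, without decreasing the flow value. We model the supplies and demands in the standard way: add a super-source $s$ with an edge $(s,j)$ of capacity $p^S_j(t)$ for every $j \in A^S(t)\setminus O(t)$, and a super-sink $z$ with an edge $(i,z)$ of capacity $y^S_i(t)$ for every $i \in O(t)$. A feasible flow is then an $s$-$z$ flow, and its value is the total supply used. Note that $A^S(t)\setminus O(t)$ and $O(t)$ are disjoint, so no vertex has both positive supply and positive demand.

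The approach is via a flow decomposition. First, take any maximum flow $f$. Since $G_F^S$ can contain cycles (for example $j \to v^i_{\ell'} \to i \to v^j_{\ell''} \to j$), we first cancel all flow cycles. This keeps the flow feasible and keeps its value, so we may assume $f$ is acyclic. Second, decompose $f$ into $s$-$z$ paths. On each path $P$, let $i$ be the first vertex of $P$ that lies in $O(t)$; such a vertex exists because the last vertex before $z$ belongs to $O(t)$. Replace $P$ by the prefix of $P$ up to $i$ followed directly by the edge $(i,z)$.

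The key step is showing that this rerouting keeps the flow feasible.
\begin{itemize}
\item All capacities on the prefix edges can only decrease or stay the same, so they remain satisfied.
\item The flow on each edge $(i,z)$ must still respect the demand $y^S_i(t)$. The total flow into $z$ through $i$ after rerouting equals the total flow of paths whose first $O(t)$-vertex is $i$. This is at most the total flow entering $i$ in the original flow.
\item The original flow entering $i$ may exceed $y^S_i(t)$, with the surplus passing through $i$ toward other sinks. This is the main obstacle: rerouting all of it to $(i,z)$ could violate the capacity of $(i,z)$.
\end{itemize}

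To handle this obstacle, we argue with the residual graph rather than by naive truncation. We run augmenting paths in the original network with the single modification that vertices of $O(t)$ may not be used as inner nodes, and we then show the resulting flow is still maximum.

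Suppose the restricted algorithm stops, so no restricted augmenting path exists. Assume for contradiction that an unrestricted augmenting path exists. Choose one with a minimal number of inner $O(t)$-vertices, and let $i$ be its first inner $O(t)$-vertex. Two cases arise.
\begin{itemize}
\item If the edge $(i,z)$ has residual capacity, then cutting the path at $i$ gives a restricted augmenting path, a contradiction.
\item Otherwise $(i,z)$ is saturated. In the restricted flow, all flow entering $i$ goes to $z$, because $i$ never forwards flow. So the flow into $i$ equals $y^S_i(t)$, and the path reaches $i$ only through a forward edge $(v,i)$.
\end{itemize}

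In the saturated case, $i$ has no outgoing flow, so the path must leave $i$ along a forward edge $(i,v^{j}_{\ell''})$. To finish, I would use the structure of the network: the prefix of the path reaching $i$ can be swapped with the unit of flow currently entering $i$. Concretely, this shifts the path to start by pushing back along the residual edge $(z,i)$ reversal, which yields an augmenting path with fewer inner $O(t)$-vertices. That contradicts the minimal choice.

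Hence the restricted flow is maximum. By construction, no vertex in $O(t)$ sends flow to any other vertex, which is exactly the statement $\sum_{j \in J\setminus\{i\}} f(i,j) = 0$ for all $i \in O(t)$.
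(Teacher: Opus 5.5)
Your proof has a gap at the step you flag as the main obstacle. The obstacle is not real: you are missing one structural fact about $G^S_F$. A vertex $i \in O(t)$ has no supply, since $O(t)$ and $A^S(t)\setminus O(t)$ are disjoint. Its only incoming edges are $(v^i_{\ell'},i)$ with capacities $q^S_i([\tau_{\ell'},\tau_{\ell'+1}])$. Because $T^S$ starts at $0$ and ends at $t$, these capacities sum to exactly $y^S_i(t)$, which is the demand of $i$. So the flow entering $i$ can never exceed $y^S_i(t)$, contrary to what you assert.

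You never use this identity, and some such property is indispensable. In a general network with several sinks the proposition fails: take $s\to a\to i_1$ with capacities $2$, and $i_1\to z$, $i_1\to i_2$, $i_2\to z$ with capacity $1$; every maximum flow forwards one unit through $i_1$. Consequently your saturated case does not go through. You appeal to ``the structure of the network'' without naming it, and then propose traversing the reversal of $(z,i)$. That would route the augmenting path through $z$, which is not a legal augmenting path. The promised swap is never defined, and nothing shows it reduces the number of inner $O(t)$-vertices.

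Once you have the capacity identity, both of your routes close in one line. For truncation: after cancelling cycles and decomposing into $s$-$z$ paths, the flow you route onto $(i,z)$ is at most the total flow on the edges $(v^i_{\ell'},i)$, hence at most $y^S_i(t)$. So the truncated flow is feasible and has the same value.

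For the residual argument, the saturated case is vacuous. This is the paper's route: a standard augmenting-path algorithm that never uses $O(t)$-vertices as inner nodes. If $(i,z)$ carries $y^S_i(t)$ and $i$ forwards nothing, then every edge $(v^i_{\ell'},i)$ is saturated. The only backward residual edge into $i$ comes from $z$. So no augmenting path can reach $i$ at all. In the unsaturated case, as you note, the path can simply stop at $i$.
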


The following lemma implies that %
we can find a new schedule in which the supply and demands changed according to the flow.

\begin{restatable}{lemma}{lemmaflowborrowapplyedge}\label{lemma:flow:borrow-apply-edge}
  Let $f$ be a feasible flow in $G^S_F$ and let $j,i \in J$ be such that $f(j,i) > 0$ %
  and $\sum_{i' \in J \setminus \{j\}} f(i',j) = 0$.
  Then, there exists a $\gamma > 0$, a feasible non-idling schedule $S'$, and a feasible flow $f'$ for $G_F^{S'}$ such that
  \begin{enumerate}[(a)]
    \item $p_j^{S'}(t) = p^S_j(t) - \gamma$ and $p_i^{S'}(t) = p^S_i(t) + \gamma$,
    \item $y_k^{S'}(t) = y^S_k(t)$ for all jobs $k \in J \setminus \{i,j\}$,
    \item $f'(j,i) = f(j,i) - \gamma$, and
    \item $f'(k,k') = f(k,k')$ for all job pairs $k,k' \in J$ with $\{k,k'\} \not= \{i,j\}$.
  \end{enumerate}
\end{restatable}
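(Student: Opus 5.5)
Since $f(j,i) > 0$, there is an index $\ell'$ with $f(j, v^i_{\ell'}) > 0$. By the definition of $G^S_F$, this means $[\tau_{\ell'},\tau_{\ell'+1}] \subseteq I^S_j$ and the edge $(v^i_{\ell'},i)$ has capacity $q^S_i([\tau_{\ell'},\tau_{\ell'+1}]) > 0$. So in $S$, job $i$ is processed for positive time inside an interval during which $j$ is alive. I will swap a small amount of $i$'s processing in this interval for processing of $j$.

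\textbf{Choosing the swap.} First I refine the discretization using \Cref{obs:flow-increase-precision}; this keeps every $f(k,k')$ unchanged. The refinement lets me assume the interval $[\tau_{\ell'},\tau_{\ell'+1}]$ is short enough that $i$ is processed on a set $X$ of positive measure in it. Set
\[
\gamma \coloneq \min\bigl\{ f(j,v^i_{\ell'}),\, q^S_i(X),\, p^S_j(t) \bigr\}.
\]
This is positive: $f(j,v^i_{\ell'}) > 0$ by choice of $\ell'$. Also $p^S_j(t) > 0$, because flow leaving $j$ while nothing enters it forces $j$ to be a supply vertex, i.e.\ $j \in A^S(t)$ (flow conservation). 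To get $S'$ from $S$, pick a subset $X' \subseteq X$ of measure $\gamma$ and assign it to $j$ instead of $i$. Since $X' \subseteq I_j$ and $j$ is not complete there, $S'$ is feasible. It also stays non-idling, since the machine still runs the same times. Properties (a) and (b) follow directly.

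\textbf{Building $f'$.} This is the main step. Everywhere except at the dummy vertices of $i$ and $j$ in the $\ell'$-th slot, set $f' = f$. On that slot, capacity moves: $(v^i_{\ell'}, i)$ drops by $\gamma$ and $(v^j_{\ell'}, j)$ rises by $\gamma$. Route $f'(j,v^i_{\ell'}) = f(j,v^i_{\ell'}) - \gamma$ and push $\gamma$ less through $(v^i_{\ell'},i)$, so that edge stays within its new capacity. Then compare the new supplies and demands with $f$:
\begin{itemize}
  \item $j$ now needs to send out $\gamma$ less. If $j \in O(t)$, then instead its demand $y_j$ rose by $\gamma$, which also matches, because $j$ receives no flow.
  \item The inflow required at $i$ falls by $\gamma$. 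Its demand drops by $\gamma$ if $i \in O(t)$; if $i \in A^S(t)\setminus O(t)$, its supply grows by $\gamma$, and conservation at $i$ needs a separate check.
\end{itemize}
The hypothesis $\sum_{i'} f(i',j) = 0$ is what keeps the new edge $(v^j_{\ell'},j)$ harmless: we may put zero flow on it, so its capacity change never violates anything, and (d) holds for all pairs involving $j$ as a receiver.

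\textbf{Main obstacle.} The delicate case is $i \in A^S(t) \setminus O(t)$, where $i$ has both supply and in/out flow. Here $f$ sends $\gamma$ less into $i$ while $i$'s supply rose by $\gamma$. So $i$ has an excess of $2\gamma$ relative to what $f$ routes out, or, under the paper's convention, the supply $p_i(t)$ goes up. I would handle this by a case check on the sign conventions: supply is $p^S_i(t)$ exactly, so after the swap $i$ must emit $\gamma$ more. If this fails as stated, the fallback is to check that $i \in A^S(t)\setminus O(t)$ cannot actually occur in the intended use, or to argue via a flow decomposition that the path through $i$ continues and can absorb the change without altering any $f(k,k')$ with $\{k,k'\} \ne \{i,j\}$.

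Finally, (c) holds by construction, and (d) holds because all other dummy-edge flows were copied.
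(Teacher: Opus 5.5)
Your construction is the paper's: pick a slot $\ell'$ with $f(j,v^i_{\ell'})>0$, give $\gamma$ units of $i$'s processing in that slot to $j$ (possible because $j$ has no inflow, so $p^S_j(t)\ge f(j,i)\ge\gamma$), and lower the flow on $(j,v^i_{\ell'})$ and $(v^i_{\ell'},i)$ by $\gamma$. The gap is the step you flag as the main obstacle: you leave conservation at $i$ for $i\notin O(t)$ open, and your computation there is wrong. Feasibility of $f$ gives $\mathrm{out}_f(i)-\mathrm{in}_f(i)\le p^S_i(t)$. Here supplies bound the net outflow from above, since a feasible flow need not use all supply (cf.\ $v(f)$ in the proof of \Cref{thm:flow:main}); with equality the argument is identical. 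In $f'$ the outflow of $i$ is untouched and its inflow drops by exactly $\gamma$, so
\[
\mathrm{out}_{f'}(i)-\mathrm{in}_{f'}(i)=\mathrm{out}_f(i)-\mathrm{in}_f(i)+\gamma\le p^S_i(t)+\gamma=p^{S'}_i(t).
\]
The supply increase and the inflow decrease cancel, so there is no excess of $2\gamma$ and $i$ need not emit anything extra. The same line covers $i\in D^S(t)\setminus O(t)$, which your case list omits; such an $i$ becomes a supply vertex with supply $\gamma$ in $S'$. Neither fallback works. The case genuinely occurs when the lemma is applied repeatedly in the proof of \Cref{thm:flow:main}, because a flow path can pass through a job alive in $S$ but not in $O(t)$. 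And rerouting flow beyond $i$ would change some $f(k,k')$ with $\{k,k'\}\neq\{i,j\}$, contradicting (d).

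The case $i\in O(t)$ also needs an ingredient you skip. Since $i$ has no supply, it must not become a net source, i.e.\ $\mathrm{in}_{f'}(i)\ge\mathrm{out}_{f'}(i)$ must hold. If $i$ forwards flow, lowering its inflow by $\gamma$ can break this, and (d) forbids compensating on its out-edges. The paper uses \Cref{prop:flow:no-outgoing-sink-flow}: demand vertices carry no outgoing flow, a property the lemma preserves because it only decreases flow. Hence $\mathrm{in}_f(i)\ge f(v^i_{\ell'},i)\ge\gamma$, and inflow and demand both drop by $\gamma$.

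Two smaller points. First, the case $j\in O(t)$ in your bullet is vacuous, since you already showed $j$ is a supply vertex. Second, $G^{S'}_F$ may differ from $G^S_F$ in more than capacities, because $C_j$ can move before $t$ and $C_i$ beyond $t$. Your refinement is done before the swap and does not address this. The paper works with $T^S\cup T^{S'}$ via \Cref{obs:flow-increase-precision}. The edges out of $j$ that vanish carry no flow in $f'$: $j$ completing early forces $\gamma=p^S_j(t)$, which puts all of $j$'s outflow on the swapped edge.
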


\begin{restatable}{lemma}{lemmaflowpositivepaths}\label{lemma:flow:positive-paths}
  Let $j \in A^S(t) \setminus O(t)$ and $i \in O(t)$. Then, there is path through jobs $j,k_1,\ldots,k_m,i$ with positive capacity in $G^S_F$  if and only if there is a path through jobs $j,k_1,\ldots,k_m,i$ in $G_B^S$.
\end{restatable}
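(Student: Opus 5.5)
We prove the two directions separately, by matching edges of $G^S_F$ passing through a dummy vertex with edges of $G^S_B$. Here a path "through jobs $j,k_1,\ldots,k_m,i$" in $G^S_F$ means a path $j \to v^{k_1}_{\ell_1} \to k_1 \to v^{k_2}_{\ell_2} \to \cdots \to k_m \to v^{i}_{\ell_{m+1}} \to i$. Positive capacity means every edge on it has positive capacity. The infinite-capacity edges $(k, v^{k'}_{\ell'})$ are automatically fine, so only the edges $(v^{k'}_{\ell'}, k')$ with capacity $q^S_{k'}([\tau_{\ell'},\tau_{\ell'+1}])$ matter. The whole argument therefore reduces to one single-edge claim:
\begin{quote}
for $k \neq k'$, there exists $\ell'$ with $[\tau_{\ell'},\tau_{\ell'+1}] \subseteq I^S_k$ and $q^S_{k'}([\tau_{\ell'},\tau_{\ell'+1}])>0$ if and only if $(k,k') \in E^S$, i.e.\ $k \borrow^S_N k'$ or $k \borrow^S_C k'$.
\end{quote}
Applying this claim to each consecutive pair of jobs on the path gives the lemma in both directions.

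\textbf{($\Rightarrow$).} Suppose $q^S_{k'}([\tau_{\ell'},\tau_{\ell'+1}])>0$ with $[\tau_{\ell'},\tau_{\ell'+1}] \subseteq I^S_k$.
\begin{itemize}
\item Then $k'$ is executed at some time $t'$ in this interval, and hence $t' \in I^S_k$.
\item At time $t'$, either $k' \in N^S(t')$ or $k' \in C^S(t')$.
\item By \Cref{def:borrowing}, this gives $k \borrow^S_N k'$ or $k \borrow^S_C k'$, so $(k,k') \in E^S$.
\end{itemize}

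\textbf{($\Leftarrow$).} Suppose $k'$ is executed at a time $t' \in I^S_k = [r_k, \min\{C^S_k,t\}]$. We need a discretization interval $[\tau_{\ell'},\tau_{\ell'+1}]$ that lies inside $I^S_k$ and in which $k'$ receives positive processing.
\begin{itemize}
\item The endpoints $r_k$ and $\min\{C^S_k,t\}$ belong to $T^S$: $r_k \le t$ holds because $I^S_k$ is nonempty, $C^S_k \in T^S$ whenever $r_k \le t$, and $t \in T^S$.
\item Consequently, $I^S_k$ is exactly a union of consecutive discretization intervals.
\item Interpret "executed at $t'$" as $k'$ receiving processing on a set of positive measure near $t'$ within $I^S_k$. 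This is standard: schedules are piecewise, and a single instant contributes no work.
\item Then some discretization interval inside $I^S_k$ carries positive processing of $k'$.
\end{itemize}

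\textbf{Main obstacle.} The only delicate point is this measure-zero issue. If $t'$ is an endpoint of $I^S_k$, the processing of $k'$ could lie just outside $I^S_k$. I would handle it in one of two ways:
\begin{itemize}
\item Adopt the convention that execution at $t'$ means processing in some interval $[t',t'+\epsilon)$, or symmetrically $(t'-\epsilon,t']$, contained in $I^S_k$.
\item Alternatively, invoke \Cref{obs:flow-increase-precision} to refine $T^S$ by inserting the relevant time points. This does not change the job-to-job flow values, so positivity of capacities along the path is preserved.
\end{itemize}
After this, the path in $G^S_B$ lifts edge by edge to a positive-capacity path in $G^S_F$, which completes the equivalence.
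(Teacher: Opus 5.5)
Your proof is correct and follows the paper's own argument. The paper also proceeds edge by edge: positive capacity of $(v^{k'}_{\ell'},k')$ with $[\tau_{\ell'},\tau_{\ell'+1}]\subseteq I^S_k$ yields an execution of $k'$ during $I^S_k$. Conversely, since $r_k$ and $\min\{C^S_k,t\}$ lie in $T^S$, the execution time falls in a discretization interval contained in $I^S_k$.

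Your first fix for the endpoint measure-zero issue, reading ``executed at $t'$'' as positive-measure processing within $I^S_k$, is the convention the paper uses implicitly. Your second fix via \Cref{obs:flow-increase-precision} would not help: refining $T^S$ cannot create processing of $k'$ inside $I^S_k$ where there is none.
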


We finally prove the main theorem of this section.

\begin{proof}[Proof of \Cref{thm:flow:main}]
  Fix any non-idling schedule $S$.
  For the sake of contradiction, we assume that there exists no feasible flow in $G^S_F$ of value $\sum_{j \in A^S(t) \setminus O(t)} p^S_j(t)$. Let $f$ be a feasible flow of maximum value $v(f)$, which is the total amount of supply used by $f$. By assumption, we have that $v(f) < \sum_{j \in A^S(t) \setminus O(t)} p^S_j(t)$.

  Now, observe that if there exists a job with positive supply, then there also exists a job in $G^S_F$ without any incoming flow, as otherwise we send flow on a cycle. Thus, we can iteratively and exhaustively apply \Cref{lemma:flow:borrow-apply-edge}. Let $S'$ be the resulting schedule, and $f$ the resulting flow in $G_F^{S'}$.

  Note that since $v(f) < \sum_{j \in A^S(t) \setminus O(t)} p_j(t)$ and by the properties of \Cref{lemma:flow:borrow-apply-edge}, $A^{S'}(t)\setminus O(t)$ cannot be empty, that is, there exists a job $j \in J \setminus O(t)$ such that $p^{S'}_j(t) > 0$. Moreover, for every job $j \in A^{S'}(t) \setminus O(t)$, there cannot exist a job $i \in R^{S'}_j \cap O(t)$ with $y^{S'}_i(t) > 0$.
  Otherwise, \Cref{lemma:flow:positive-paths} guarantees that there exists a path $P$ of positive capacity from $j$ to $i$, and since $p_j(t) > 0$ and $y_i(t) > 0$, we can find a flow with positive value on $P$ and apply \Cref{lemma:flow:borrow-apply-edge} to all consecutive jobs on $P$ to find a flow of a larger value; a contradiction. %
  Thus, $R^{S'}_j \cap O(t) = \emptyset$ for every job $j \in A^{S'}(t) \setminus O(t)$, because a job $i$ with $y^{S'}_i(t) = 0$ clearly cannot be reached by any other job in $G_B^{S'}$.

  Let $j_1$ be the job with minimum release date in $A^{S'}(t) \setminus O(t)$. Thus, for all $j \in A^{S'}(t) \setminus O(t)$, \cref{obs:flow:cut-lifetime} gives $j \in R^{S'}_{j_1}$, and   thus, $R^{S'}_j \subseteq R^{S'}_{j_1}$.
  Let $r^*$ denote the minimum release date over the jobs in $R^{S'}_{j_1}$. Note that we might have $r^* \not= r_{j_1}$ if the minimum release date belongs to a job that completes by time $t$ in both $S'$ and the optimal schedule.
  By definition of lifetimes, we have  $I(R^{S'}_{j_1}) = [r^*,t]$.
  Thus, \cref{obs:flow:cut-closure} gives that every job $j \in J$ that is processed after $r^*$ and completed by time $t$ is also contained in $R^{S'}_{j_1}$. In particular, this means that $S'$ only processes jobs in $R^{S'}_{j_1}$ during $[r^*,t]$. However, as $j_1 \in A^{S'}(t) \setminus O(t)$, the schedule $S'$ is not able to complete all of these jobs by point in time $t$. As $S'$ is non-idling and only works on jobs in $R^{S'}_{j_1}$ during $[r^*,t]$, this implies $\sum_{j \in R^{S'}_{j_1}} p_j > t - r^*$.

  Since $R^{S'}_{j_1} \cap O(t) = \emptyset$, we know that the optimal solution is able to complete every job of $R^{S'}_{j_1}$ by only working on them during $[r^*,t]$. 
  However, this is a contradiction to $\sum_{j \in R^{S'}_{j_1}} p_j > t - r^*$.
\end{proof}

\subsection{Borrowing via Flows}
\label{sec:borrowing-via-flows}

For this section (and the following), we again consider the schedule of the algorithm.

\Cref{thm:flow:main} guarantees the existence of a flow $f$ in $G_F$ that uses the total supply of vertices in $A(t) \setminus O(t)$, and thus, has a total flow value of $\sum_{j \in A(t) \setminus O(t)} p_j(t)$. In the following, we decompose such a flow and give properties regarding the total amount of flow going from every source $j \in A(t) \setminus O(t)$ to every sink $i \in O(t)$ in $G_F$. The following definition makes this more precise.

\begin{definition}
  \label{def:borrow:values}
  Let $f$ be a flow in $G_F$ of value $\sum_{j \in A(t) \setminus O(t)} p_j(t)$, which exists by \Cref{thm:flow:main}. Furthermore, let $\mathcal{P}$ denote a decomposition of $f$ into a finite number of path flows. For $j \in A(t) \setminus O(t)$ and $i \in O(t)$, let $\mathcal{P}_{ji}$ denote the subset of paths in $\mathcal{P}$ that start in $j$ and end in $i$. For a path $P \in \mathcal{P}_{ji}$ let $\gamma(P)$ denote the amount of flow that $P$ sends from $j$ to $i$. We define:
  \begin{itemize}
    \item $\beta(j,i) = \sum_{P \in  \mathcal{P}_{ji}} \gamma(P)$ for all pairs $(j,i)$ with $j \in A(t) \setminus O(t)$ and $i \in O(t)$.
    \item $\beta(j,i) = 0$ for all remaining pairs of jobs.
  \end{itemize}
\end{definition}

  In the appendix, we show the following theorem, which establishes properties of the values $\beta(j,i)$ that will be crucial for proving our main result.

\begin{restatable}{theorem}{thmmainborrowing}\label{thm:main:borrowing}
The values $\beta(j,i)$, for every $j \in A(t) \setminus O(t)$ and $i \in O(t)$, satisfy the following properties:
  \begin{enumerate}[(i)]
    \item $\beta(j,i) = 0$ for every $j \in A(t) \setminus O(t)$ and $i \in O(t) \setminus R_j$,
    \item $p_j(t) = \sum_{i \in O(t)} \beta(j,i)$ for every $j \in A(t) \setminus O(t)$, and
    \item $y_i(t) \geq \sum_{j \in A(t) \setminus O(t)} \beta(j,i)$ for every $i \in O(t)$.
  \end{enumerate}
\end{restatable}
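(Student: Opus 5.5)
We read all three properties off the flow $f$ from \Cref{thm:flow:main} together with its path decomposition $\mathcal{P}$. Before doing so we normalize $f$. First, remove all flow on cycles; this keeps $f$ feasible and does not change its value. Next, by \Cref{prop:flow:no-outgoing-sink-flow}, we may assume that no vertex in $O(t)$ has outgoing flow. Since flow can only originate at sources, and only jobs in $A(t)\setminus O(t)$ carry supply, every path in $\mathcal{P}$ then starts at a job $j\in A(t)\setminus O(t)$ and ends at a vertex with positive demand, namely a job $i\in O(t)$. In particular, no path passes through a sink as an inner vertex. (If one preferred to keep the maximum flow arbitrary, one could instead argue directly that a path through an inner sink can be truncated there, since the demand at that sink is not exceeded. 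The prior normalization is cleaner.)

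\emph{Property (ii).} The flow has value $\sum_{j} p_j(t)$ and each source $j$ has supply $p_j(t)$, so every source sends out exactly its supply. In the decomposition, the paths starting at $j$ carry precisely the net outflow of $j$. By the normalization, each such path ends in some $i\in O(t)$. Hence $p_j(t)=\sum_{i\in O(t)}\sum_{P\in\mathcal{P}_{ji}}\gamma(P)=\sum_{i\in O(t)}\beta(j,i)$.

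\emph{Property (iii).} Feasibility means that the net inflow into $i\in O(t)$ is at most its demand $y_i(t)$. Every path ending at $i$ contributes $\gamma(P)$ to this net inflow. No path starts at $i$, because $i$ has no supply, and by the normalization no path uses $i$ as an inner vertex. Therefore $\sum_{j}\beta(j,i)=\sum_j\sum_{P\in\mathcal{P}_{ji}}\gamma(P)$ equals the inflow into $i$, which is at most $y_i(t)$.

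\emph{Property (i).} This is the step that needs care. Suppose $\beta(j,i)>0$. Then some path $P\in\mathcal{P}_{ji}$ carries positive flow. Its job vertices $j,k_1,\dots,k_m,i$ are connected through dummy vertices, and every edge $(v^{k}_{\ell'},k)$ on it carries positive flow. That flow is bounded by the capacity $q_k([\tau_{\ell'},\tau_{\ell'+1}])$, so this capacity is positive. So $P$ is a path of positive capacity in $G_F$, and \Cref{lemma:flow:positive-paths} yields a path through $j,k_1,\dots,k_m,i$ in $G_B$; hence $i\in R_j$. One can also check this directly: a positive-flow step $k\to v^{k'}_{\ell'}\to k'$ requires $[\tau_{\ell'},\tau_{\ell'+1}]\subseteq I_k$ and $q_{k'}>0$ on that interval. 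Thus $k'$ is executed during the lifetime of $k$, either while non-clairvoyant or while clairvoyant, which gives the edge $(k,k')\in E_N\cup E_C$. Contrapositively, $i\notin R_j$ forces $\beta(j,i)=0$.

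The main obstacle is the bookkeeping in the decomposition. We must make sure that paths neither pass through sinks nor sit on cycles, which is why we normalize $f$ first. Once that is done, (ii) and (iii) are just conservation and feasibility, and (i) is an edge-by-edge translation from $G_F$ to $G_B$.
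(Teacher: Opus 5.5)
Your proof is correct and matches the paper's argument for (i): a positive-flow decomposition path is a positive-capacity path in $G_F$, so \Cref{lemma:flow:positive-paths} gives $i \in R_j$. It also matches for (ii), since the flow saturates every supply.

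For (iii), the paper needs no normalization. It bounds all flow entering $i$ by the capacities of the only incoming edges $(v^i_{\ell'},i)$, which sum to at most $y_i(t)$. Your normalization via \Cref{prop:flow:no-outgoing-sink-flow} replaces the chosen $f$, and hence the $\beta$-values, by a different flow. This is harmless but unnecessary. Paths passing through $i$ contribute nothing to its net inflow, and every decomposition path ends at a demand vertex anyway.
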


\subsection{Matching Arguments}
\label{sec:matching:arguments}

The goal of this section is to give intuition on how we use \Cref{thm:flow:main} to bound $|A(t) \setminus O(t)|$ by some function of $|O(t)|$, and eventually prove \Cref{thm:main:local}. To do so, we heavily use the structure of our algorithm's schedule.

We first discuss a scenario where $D(t) = C(t) = \emptyset$, that is, every job that has been released until time $t$ is unfinished and non-clairvoyant at time $t$.
This means that no job has emitted so far, and thus, the algorithm has only executed SETF. Consider two jobs $j \in N(t) \setminus O(t)$ and $i \in O(t)$.

In this scenario, we can observe that job $j$ can only borrow from $i$ if $j$ can \emph{directly borrow} from~$i$. A more general observation was made by \textcite{KalyanasundaramP00} in their analysis of SETF.

\begin{observation}[Lemma~8 in \cite{KalyanasundaramP00}]
If $D(t) = C(t) = \emptyset$, then $i \in R_j$ implies $(j,i) \in E_N$.
\end{observation}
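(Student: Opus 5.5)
We prove the statement by induction along a path in the borrow graph $G_B$. The first step is to record the structure of the schedule under the hypothesis $D(t)=C(t)=\emptyset$. No job has emitted by time $t$, so every job released by $t$ is still in $N(t)$. By \Cref{obs:sept-like}, the algorithm has run pure SETF on all available jobs throughout $[0,t]$. Moreover, every edge of $G_B$ between jobs released by $t$ lies in $E_N$, because no job is ever executed while clairvoyant. The lifetime of every such job is $I_j=[r_j,t]$, since no job completes before $t$.

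Next comes the key structural fact about SETF. Suppose $(j,k)\in E_N$, so $k$ is executed at some time $t'\in[r_j,t]$. By \Cref{obs:sept-like}, $y_k(t')\le y_j(t')$ at that moment. I will establish a monotonicity statement: under SETF, once $y_k(t')\le y_j(t')$ holds at a time when both are available and $k$ is being processed, the order of elapsed times is preserved. Indeed, SETF always raises the minimal elapsed times first, so a job with smaller elapsed time can never overtake one with larger elapsed time; it can only catch up to it and then proceed at the same rate. Applying this to the last time $t_k$ at which $k$ is processed, it follows that whenever $k$ runs during $[r_j,t]$, every job available at that time has elapsed time at least $y_k$ at that time. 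In particular, I will show $y_k(t)\le y_j(t)$ and that $k$ is processed at time $t_k\ge r_j$ with $y_k(t_k)=\min_{i'\in A(t_k)}y_{i'}(t_k)$.

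For the induction, take a path $j=k_0,k_1,\ldots,k_m=i$ in $G_B$ and show $(j,k_\ell)\in E_N$ for all $\ell$. Assume $(j,k_{\ell})\in E_N$; we know $(k_\ell,k_{\ell+1})\in E_N$. Then $k_{\ell+1}$ is executed at some time $t''\ge r_{k_\ell}$. We must show it is also executed at some time $\ge r_j$. If $r_{k_\ell}\ge r_j$ this is immediate. Otherwise $r_{k_\ell}<r_j$, and $k_{\ell+1}$ might only have run before $r_j$. In that case I would use the elapsed-time ordering. Since $(j,k_\ell)\in E_N$, job $k_\ell$ is run at some time in $[r_j,t]$. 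At that point $y_{k_\ell}$ is minimal among all available jobs, so $y_{k_\ell}\le y_{k_{\ell+1}}$ there, because $k_{\ell+1}$ is still available, as no job completes before $t$. Now look at the time $t''$ at which $k_{\ell+1}$ ran, when $y_{k_{\ell+1}}(t'')\le y_{k_\ell}(t'')$. By the no-overtaking property, after $t''$ the two elapsed times stay ordered and SETF processes the minimum. Hence, when $k_\ell$ runs after $r_j$ with elapsed time at most that of $k_{\ell+1}$, the two must be tied and are processed together at equal rate. So $k_{\ell+1}$ is executed at that time in $[r_j,t]$, giving $(j,k_{\ell+1})\in E_N$.

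The main obstacle is making this tie argument rigorous in continuous time. The exact claim needed is: if $y_a(t_1)\le y_b(t_1)$ with both available, then $y_a(t')\le y_b(t')$ for all $t'\ge t_1$; and if $a$ is processed at $t'$ while $y_a(t')=y_b(t')$, then $b$ is also processed. This is essentially Lemma~8 of \cite{KalyanasundaramP00}, and I would prove it by a short argument on the derivative of $y_b-y_a$ under SETF, which is nonnegative whenever the difference is zero.
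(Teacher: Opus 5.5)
Your proof is correct. The induction along the path works: if $k_{\ell+1}$ ran only before $r_j$ while $k_\ell$ was available, and $k_\ell$ runs in $I_j$ while $k_{\ell+1}$ is available, then Catch-Up forces a tie and SETF processes both, which is the standard argument behind Lemma~8 of \cite{KalyanasundaramP00}; the paper cites that lemma without proof. The continuous-time fact you flag as the main obstacle is already \Cref{lemma:ketchup}, combined with the SETF rule that all minimum-elapsed-time jobs are processed together, and the paper uses the same tie argument in its proof of \Cref{lem:shortcut:2}.
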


By the definition of the algorithm, this 
means that if $i \in R_j$ then there must be a point in time $t'$ during $I_i \cap I_j$ at which we have $y_j(t') \geq y_i(t')$. Since both $i$ and $j$ are non-clairvoyant at time $t$, we can conclude $y_j(t) \geq y_i(t)$.

For an arbitrary scenario, i.e., without the assumption that $D(t) = C(t) = \emptyset$,  these two observations are stated in the following two lemmas.

\begin{lemma}[Catch-Up]\label{lemma:ketchup}
  If the algorithm processes job $i$ at time $t'$ and $j,i \in N(t')$, then $y_j(t'') \geq y_i(t'')$ for all times $t'' \geq t'$ such that $j,i \in N(t'')$.
\end{lemma}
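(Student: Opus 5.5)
We first pin down the situation at time $t'$. At $t'$ the algorithm processes $i$ with $i \in N(t')$, so it must be in case 2, which runs SETF on $N(t')$. Hence by \Cref{obs:sept-like}, or directly from the rule, $y_i(t') = \min_{k \in N(t')} y_k(t')$. Since $j \in N(t')$, this gives $y_j(t') \ge y_i(t')$. The plan is to show that this inequality persists at every later time $t''$ at which both jobs are still non-clairvoyant. Arguing by contradiction, suppose some $t'' > t'$ has $j,i \in N(t'')$ but $y_j(t'') < y_i(t'')$.

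Step 1 is to note that both jobs stay in $N$ throughout $[t', t'']$. Progress $y_k$ is nondecreasing, and membership in $N$ means $y_k \le \alpha p_k$. So if $j \in N(t')$ and $j \in N(t'')$, then $y_j(\tau) \le y_j(t'') \le \alpha p_j$ for all $\tau \in [t',t'']$, and neither job completes in between since neither has reached $p_k$. The same holds for $i$. Thus during the whole interval both jobs are available and non-clairvoyant, and each receives processing only through case 2, i.e.\ SETF on $N(\tau)$.

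Step 2 uses continuity. The function $g(\tau) \coloneq y_j(\tau) - y_i(\tau)$ is continuous, with $g(t') \ge 0$ and $g(t'') < 0$. Let $\tau_0 \coloneq \sup\{\tau \in [t',t''] \mid g(\tau) \ge 0\}$. Then $g(\tau_0) = 0$, because $g$ is continuous and $g(t'')<0$ forces $\tau_0 < t''$. Moreover, $g < 0$ on $(\tau_0, t'']$, so $i$ must receive positive processing at arbitrarily small times after $\tau_0$ while $y_i > y_j$.

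Step 3 derives the contradiction, and it is the main (though mild) obstacle because of the continuous-time semantics. At any time $\tau \in (\tau_0, t'']$ where $i$ is processed, the algorithm is in case 2 and processes only jobs of minimum progress in $N(\tau)$. But $j \in N(\tau)$ and $y_j(\tau) < y_i(\tau)$, so $i$ is not of minimum progress and cannot be processed. Hence $y_i$ is constant on $(\tau_0, t'']$, while $y_j$ is nondecreasing. This yields $g(t'') \ge g(\tau_0) = 0$, contradicting $g(t'')<0$.

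To make this rigorous without worrying about instantaneous rates, we phrase it via the measure of the set of times on which $i$ is processed: that set intersected with $\{\tau : g(\tau) < 0\}$ has measure zero. Then $y_i(t'') - y_i(\tau_0) = 0$ follows, and hence $y_j(t'') \ge y_j(\tau_0) = y_i(\tau_0) = y_i(t'')$, the desired contradiction.
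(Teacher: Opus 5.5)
Your proof is correct and follows the paper's argument: you get $y_j(t') \ge y_i(t')$ from the SETF rule (\Cref{obs:sept-like}), then show this inequality persists while both jobs remain in $N$. Your supremum/continuity argument makes rigorous what the paper states in one line, namely that once $y_j \ge y_i$ and both jobs are non-clairvoyant, $j$ is never processed faster than $i$.
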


\begin{proof}
  Since the algorithm processes job $i$ at time $t'$ and $j,i \in N(t)$, \Cref{obs:sept-like} implies $y_j(t') \geq y_i(t')$. The lemma then follows because as long as both $j$ and $i$ are non-clairvoyant, the algorithm will at any time process $j$ at most as much as $i$.
\end{proof}

\begin{lemma}\label{lemma:nonclairvoyant-direct-borrow}
  If $j,i \in N(t)$ and $(j,i) \in E_N$, then $y_{j}(t) \geq y_i(t)$.
\end{lemma}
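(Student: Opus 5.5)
The plan is to reduce the statement to the Catch-Up Lemma (\Cref{lemma:ketchup}). By \Cref{def:borrowing}, $(j,i) \in E_N$ means there is a time $t'$ in the lifetime $I_j = [r_j,\min\{C_j,t\}]$ at which the algorithm executes $i$ while $i \in N(t')$. Since $j \in N(t) \subseteq A(t)$, we have $C_j > t$, so $I_j = [r_j,t]$ and $r_j \le t' \le t$. First we check that $j \in N(t')$ as well. The job $j$ is released by $t'$ and unfinished, so $j \in A(t')$. Moreover $y_j(\cdot)$ is non-decreasing, so $y_j(t') \le y_j(t) \le \alpha p_j$, which gives $j \in N(t')$. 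Hence at time $t'$ both $j,i \in N(t')$ and $i$ is processed.

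Next we apply \Cref{lemma:ketchup} with $t'' = t$. This requires $j,i \in N(t'')$ for $t''=t$, which is exactly the hypothesis $j,i \in N(t)$. The lemma then yields $y_j(t) \ge y_i(t)$.

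The only delicate point is the boundary case $t' = t$, or more generally whether "executed at time $t'$" should be read as processing on a small interval $[t',t'+\varepsilon)$. If $t'=t$, \Cref{obs:sept-like} applied directly at time $t$ gives $y_i(t) \le y_j(t)$, since $i$ is among the processed jobs of minimal progress and $j \in A(t)$. I would also note that membership in $N$ is monotone: once $y_j > \alpha p_j$, the job stays clairvoyant. Therefore $j,i \in N(t)$ implies $j,i \in N(t'')$ for all $t'' \in [t',t]$ at which they are available, so the argument inside Catch-Up, that $j$ is processed at most as fast as $i$ while both are non-clairvoyant, applies throughout $[t',t]$.
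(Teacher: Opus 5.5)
Your proof is correct and follows essentially the paper's argument. You take a time $t'\in I_j=[r_j,t]$ at which $i$ is processed while non-clairvoyant, conclude $j,i\in N(t')$ from monotonicity of $y$ and the fact that both jobs are alive and non-clairvoyant at $t$, and then apply \Cref{lemma:ketchup} with $t''=t$. The paper just takes the first such $t'$. Your separate treatment of $t'=t$ is harmless but unnecessary, since the Catch-Up Lemma already allows $t''=t'$.
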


\begin{proof}
  Let $t'$ be first point in time during $I_j$ at which $i$ is being processed. Since $j$ and $i$ are both non-clairvoyant at time $t$, we conclude $j,i\in N(t')$. Then, $y_{j}(t) \geq y_i(t)$ follows from \Cref{lemma:ketchup}.
\end{proof}

Continuing the special case of above, since $j \in N(t)$, at least a $(1-\alpha)$-fraction of the job is unfinished at time $t$. More specifically, $p_j(t) \geq (1-\alpha) y_j(t)$, and thus, $\frac{1}{1-\alpha} p_j(t) \geq y_i(t)$. This means that the optimal solution can finish at most $\frac{1}{1-\alpha}$ jobs in $A(t) \setminus O(t)$ by borrowing work from~$i \in O(t)$. Formally, using \Cref{thm:main:borrowing}, we have
\begin{align*}
  |A(t) \setminus O(t)| 
  = \sum_{j \in A(t) \setminus O(t)} \frac{p_j(t)}{p_j(t)} 
  = \sum_{j \in A(t) \setminus O(t)} \sum_{i \in O(t)} \frac{\beta(j,i)}{p_j(t)} 
  &\leq \frac{1}{1-\alpha} \sum_{i \in O(t)} \sum_{j \in A(t) \setminus O(t)}  \frac{\beta(j,i)}{y_i(t)} \\
  &\leq \frac{1}{1-\alpha} \sum_{i \in O(t)} \frac{y_i(t)}{y_i(t)} = \frac{1}{1-\alpha} |O(t)| \ ,
\end{align*}
proving \Cref{thm:main:local} for this scenario. In the first equality, we used that $\frac{1}{1-\alpha} p_j(t) \geq y_i(t)$ for every $i \in O(t) \cap R_j$, which is possible because $\beta(j,i) = 0$ if $i \in O(t) \setminus R_j$ by \Cref{thm:main:borrowing}. 
The second inequality uses Property $(iii)$ of~\Cref{thm:main:borrowing}.
This is roughly the strategy that has been used for the analysis of SETF under speed augmentation~\cite{KalyanasundaramP00,BansalD07}. For our algorithm on the other hand, this proof will \emph{not} work in general.

The reason for this is that we relied on the two inequalities (i) $p_j(t) \geq (1-\alpha) y_j(t)$ and (ii) $y_i(t) \le y_j(t)$. Together, these inequalities imply $\frac{1}{1-\alpha} p_j(t) \geq y_i(t)$ for all $j \in A(t)\setminus O(t)$ and $i \in R_j \cap O(t)$, meaning that the remaining processing time of $j$ (the supply of $j$ in the flow problem) is not too small compared to the received processing time of $i$ (the demand of $i$ in the flow problem).

Inequality (i) only holds because of $j \in N(t)$. However, as $A(t) = N(t) \cup C(t)$, we might have $j \in C(t)$ in the general case. To handle this complication, 
 
we will prove \Cref{thm:main:local} for both $N(t)$ and $C(t)$ separately, which is encapsulated in the following two theorems.

\begin{theorem}\label{thm:local:nonclairvoyant}
  At any time $t$, it holds that $|N(t) \setminus O(t)| \leq \big(2 + \frac{1}{1-\alpha} \big) \cdot |O(t)|$.
\end{theorem}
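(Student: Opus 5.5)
We follow the matching argument of the special case, but restrict the sources to $N(t)\setminus O(t)$. The extra additive term $2|O(t)|$ pays for those optimal jobs $i$ from which a non-clairvoyant job can borrow only through clairvoyant work.

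Fix the flow from \Cref{thm:flow:main} and the values $\beta(j,i)$ from \Cref{thm:main:borrowing}. For $j\in N(t)\setminus O(t)$ we have $p_j(t)\ge(1-\alpha)p_j\ge(1-\alpha)y_j(t)$, so inequality (i) of the special case still holds. The difficulty is inequality (ii), namely $y_i(t)\le y_j(t)$ for $i\in R_j\cap O(t)$. This can fail, for example, when the path from $j$ to $i$ passes through clairvoyant executions or through completed jobs.

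We therefore partition $O(t)$ into three classes. The \emph{good} jobs are those $i\in O(t)$ with $y_i(t)\le y_j(t)$ for every $j\in N(t)\setminus O(t)$ with $\beta(j,i)>0$. The remaining jobs fall into at most two \emph{exceptional} classes, each to be charged at most $|O(t)|$ jobs of $N(t)\setminus O(t)$.

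For the sources matched mostly to good jobs, the calculation of the special case goes through verbatim. It uses properties (ii) and (iii) of \Cref{thm:main:borrowing}, summed only over good $i$, and shows that the number of $j\in N(t)\setminus O(t)$ with $\sum_{i\text{ good}}\beta(j,i)\ge p_j(t)/2$ (or with an analogous fractional threshold) is at most $\frac{1}{1-\alpha}|O(t)|$ up to constants.

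The structural core is to understand when $y_i(t)>y_j(t)$ can occur with $i\in R_j$ and $j\in N(t)$. Take a path $j=k_0,k_1,\dots,k_m=i$ in $G_B$. Along $E_N$-edges between jobs that are still non-clairvoyant, \Cref{lemma:ketchup} and \Cref{lemma:nonclairvoyant-direct-borrow} propagate $y_{k_a}(t)\ge y_{k_{a+1}}(t)$. So the monotonicity can break only at an edge $(k,k')\in E_C$, or at a job that becomes clairvoyant or completes before $t$.

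At such a point, \Cref{obs:clairvoyant-jobs-block-earlier-jobs} says that a clairvoyant job $k$ executed at time $t'$ blocks every job alive at $t'$ until $C_k$. Together with \Cref{obs:srpt-like}, which says that clairvoyant work is SRPT-like on all of $A(t)$, this should show two things:
\begin{itemize}
\item Any such break point lies in a nested, interval-like structure of clairvoyant execution blocks.
\item The non-clairvoyant jobs of $A(t)\setminus O(t)$ that route flow across such a block have lifetimes starting before the block, so they are essentially the jobs preempted by it.
\end{itemize}

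We then want an injective charging of these jobs to $O(t)$. A clairvoyant job executed at time $t'$ has remaining time at most that of every available job, so by an SRPT-type counting the optimum must still hold a job for each maximal block. This should yield one class of size at most $|O(t)|$. The second class collects the jobs whose lifetimes straddle the boundary where a source in $N(t)$ becomes comparable to a job in $O(t)$ only through a completed job.

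The main obstacle is this charging for the exceptional jobs. It requires showing that each maximal clairvoyant-blocked region contributes at most one (or boundedly many) problematic non-clairvoyant job per optimal job. My first attempt will be to pick, for each exceptional $i\in O(t)$, the latest break edge on a flow path into $i$. I will then show, via \Cref{obs:clairvoyant-jobs-block-earlier-jobs}, that all sources using this edge share the same blocking job $k$. Finally, among those sources I will use the SETF ordering of \Cref{obs:sept-like} to show that at most one of them fails the good-case inequality, with all others still satisfying $y_i(t)\le y_j(t)$ relative to some good job. Summing the good part and the two exceptional classes gives $\big(2+\frac{1}{1-\alpha}\big)|O(t)|$.
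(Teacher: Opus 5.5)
Your proposal does not contain the step that carries the proof, and the mechanism you sketch for it does not work. The obstruction is a \emph{volume} of bad borrowing, not a \emph{number} of bad sources. Modify \Cref{fig:example-delta-edge} by releasing several jobs $j_1,\dots,j_m$ at $r_j$ instead of one. Also release some jobs that the optimum never touches, so that it has room to complete every $j_a$. SETF runs the $j_a$ at equal rates, so they all end with the same progress $y_{j_a}(t)<y_i(t)$. All of them reach $i$ only through $k$, and the flow may give each of them $\beta(j_a,i)>0$. So the SETF ordering cannot single out one failing source: either all fail or none does. The paper instead bounds the \emph{total} amount. For a segment $S$, every path to $O\setminus O_S$ passes through a single bottleneck $b(S)$. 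Hence all this flow is at most the remaining work $p_{b(S)}(\tau)$ of $b(S)$ at the first time $\tau$ it can be borrowed. \Cref{lem:stealing:via:bottlenecks:2} then gives $p_{b(S)}(\tau)\le\frac{1-\alpha}{\alpha}\by_{j^*(S)}\le p_{j^*(S)}(t)$ (\Cref{lem:segment:bottleneck:capacity}), which costs exactly one job per segment. Getting there requires the layer, entrypoint and bottleneck analysis behind \Cref{thm:yinequality:no:bottlenecks}, \Cref{thm:unique:earliest:bottleneck} and \Cref{obs:segment:bottleneck}. None of your ``should show'' statements supply this. Likewise, the SRPT-type claim that the optimum holds a job for each maximal clairvoyant block, and the injectivity of the resulting charging, are asserted but not argued.

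Your classification and accounting also need changing.

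\textbf{Actual versus truncated progress.} Comparing actual progress creates exceptions that are not real. Suppose $i$ runs alongside $j$ under SETF, emits, and is then completed clairvoyantly before $t$. The edge $(j,i)\in E_N$ has $y_i(t)=p_i>y_j(t)$, yet $p_j(t)\ge\frac{1-\alpha}{\alpha}y_j(t)\ge(1-\alpha)p_i$, so $i$ is perfectly chargeable. Note that you need this stronger form of inequality (i), not $p_j(t)\ge(1-\alpha)y_j(t)$. With truncated progress $\by$, any bottleneck-free path from an alive job gives $\by_j\ge\by_i$ (\Cref{lem:good-path-from-active}, \Cref{thm:yinequality:no:bottlenecks}). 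So breaks at completed or clairvoyant jobs cost nothing, and your second exceptional class is not where the additive term comes from. In the paper, $2|O(t)|$ simply bounds the $|\cS|\le|O(t)|+1$ discarded jobs $j^*(S)$.

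\textbf{Granularity.} Declaring $i$ exceptional as soon as one incomparable source borrows from it pushes all the legitimate borrowing from $i$ into the exceptional classes. The paper splits per pair instead, via $O_S$ versus $O\setminus O_S$ for each segment. With $\by$, each $O_j$ is a prefix of $O(t)$ ordered by $\by$, so the segments are nested. This nesting is what lets the Hall-type induction of \Cref{lemma:segment-inductive-matching} yield exactly $\frac{1}{1-\alpha}|O(t)|$.

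\textbf{Constant.} Your threshold count over sources with $\sum_{i\text{ good}}\beta(j,i)\ge p_j(t)/2$ loses a factor $2$. So even a correct treatment of the exceptional jobs would not reach the stated constant $2+\frac{1}{1-\alpha}$.
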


\begin{theorem}\label{thm:local:clairvoyant}
  At any time $t$, it holds that $|C(t) \setminus O(t)| \leq \big(1 + \frac{1}{1-\alpha} \big) \cdot |O(t)|$.
\end{theorem}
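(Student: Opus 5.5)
The plan is to reuse the flow-based counting argument from the special case, but now with the supply of clairvoyant jobs $j \in C(t)\setminus O(t)$. The obstacle is that $p_j(t)$ may be tiny compared with $y_j(t)$, so inequality (i) fails. We therefore do not charge against $y_i(t)$ directly. Instead, we exploit the SRPT-like structure given by \Cref{obs:srpt-like} and \Cref{obs:clairvoyant-jobs-block-earlier-jobs}.

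\textbf{Nesting.} Order the jobs of $C(t)\setminus O(t)$ by emission time $s_{j_1} < s_{j_2} < \dots < s_{j_m}$.
\begin{itemize}
\item By \Cref{obs:clairvoyant-jobs-block-earlier-jobs}, if $j_b$ was released before $s_{j_a}$, then $j_b$ is not processed after $s_{j_a}$ while $j_a$ is alive. Since both jobs are alive at $t$, this contradicts that $j_b$ emits later, unless $r_{j_b} > s_{j_a}$.
\item Hence $r_{j_{a+1}} > s_{j_a}$, and after that time the algorithm only works on jobs released after $s_{j_a}$, apart from $j_a$ itself.
\item In addition, $p_{j_a}(t) \le p_{j_b}(t)$ at the times $j_b$ is run for $b > a$, in SRPT fashion. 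Since $j_a$ is not touched afterwards, we get the chain $p_{j_1}(t) \le \dots$ relations, together with a structure of nested intervals $[s_{j_a}, t]$.
\end{itemize}

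\textbf{Step 1: counting via volume.} The aim is to show that for each $a$, the set $Q_a \coloneq \{j_a,\dots,j_m\}$ satisfies
\[
|O(t)\cap J_a| \ \ge\ \Big\lceil \tfrac{(1-\alpha)\,|Q_a|}{\,\cdot\,} \Big\rceil\text{-type bounds},
\]
where $J_a$ denotes jobs released after $s_{j_a}$ (or $j_a$ itself). To get this, I would use \Cref{thm:main:borrowing} restricted to the sources $Q_a$.
\begin{itemize}
\item By \Cref{obs:flow:cut-closure} and the nesting, every job reachable from $j_b \in Q_a$ either was executed during $[r_{j_b}, t] \subseteq [s_{j_a}, t]$, or is reached through such jobs. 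The exception is clairvoyant processing of earlier emitters, which is impossible while they are blocked.
\item So the demand vertices receiving $\beta(j_b,\cdot)$ lie in $O(t)\cap J_a$, or are $j_a$ itself.
\item Each such $i$ has $y_i(t)$ bounded by the work done on $i$ after $s_{j_a}$.
\item Whenever the algorithm worked on $i$ non-clairvoyantly, we had $\tfrac{1-\alpha}{\alpha}y_i < \min_C p \le p_{j_a}(\cdot)$. Whenever it worked on $i$ clairvoyantly, $p_i \le p_{j_a}$. Either way $y_i(t) \le \tfrac{1}{1-\alpha}\, p_{j_a}(s_{j_a})$ roughly, giving the $\frac{1}{1-\alpha}$ factor.
\end{itemize}

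\textbf{Step 2: charge per job.} With this, I would assign each $j_b$ to a job of $O(t)$ so that each $i\in O(t)$ receives at most $1+\frac{1}{1-\alpha}$ jobs. Consider the greedy from the latest emitter backwards. The supply $p_{j_b}(t)$ must be absorbed by the demands of $O(t)$ jobs in $J_b$. Each such demand is at most $\frac{1}{1-\alpha}$ times the smallest remaining processing time among the later clairvoyant jobs, which by SRPT order is at most $p_{j_b}(t)$. Then the sum-over-$\beta$ computation of the special case goes through, with $p_j(t)$ compared to $y_i(t)$ along the nesting. The extra additive $+1$ accounts for the single job $i$ per level that may coincide with a boundary case, e.g. $j_a$'s sibling partially processed before $s_{j_a}$.

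\textbf{Main obstacle.} The hard part is Step 1: proving $y_i(t) \le \frac{1}{1-\alpha} p_j(t)$ (up to one exceptional $i$) for every $i \in R_j\cap O(t)$ reached through long borrow paths mixing $E_N$ and $E_C$ edges. I would first try induction along the path, using \Cref{lemma:ketchup} for $E_N$ edges and \Cref{obs:clairvoyant-jobs-block-earlier-jobs} for $E_C$ edges. The exceptional jobs would be those released before $s_j$ and still in $N$, which contributes the additive constant.
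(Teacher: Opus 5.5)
There is a genuine gap, and it sits where you flag the main obstacle. The inequality you need is $(1-\alpha)y_i(t)\le p_j(t)$ for the pairs $j\in C(t)\setminus O(t)$, $i\in R_j\cap O(t)$ that carry flow. It is never established, and your sketch cannot yield it. Bounding $y_i(t)$ by $\frac{1}{1-\alpha}p_{j_a}(s_{j_a})$, or by the remaining time of a clairvoyant job at the moment $i$ was processed, says nothing about $p_j(t)$: $j$ may keep being processed afterwards and end with $p_j(t)$ arbitrarily small.

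The missing idea is a case split on the last processing time $t_j$ of $j$ versus $r_i$, not an induction along mixed $E_N$/$E_C$ paths.
\begin{itemize}
\item If $t_j\le r_i$, then $j$ is clairvoyant and frozen on $[r_i,t]$. Applying the algorithm's rule at the last non-clairvoyant processing of $i$ (at $t_i$ if $i\in N(t)$, just before $s_i$ otherwise) gives directly $\frac{1-\alpha}{\alpha}y_i(t)\le p_j(t)$, respectively $(1-\alpha)p_i\le p_j(t)$. The path from $j$ to $i$ plays no role.
\item If $t_j>r_i$, the inequality can genuinely fail, and these pairs must be paid for by a separate counting argument.
\end{itemize}

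That counting is where your treatment of exceptions goes wrong. You allow one exceptional $i$ per $j$ (or per level), but a single $j$ can have arbitrarily many. Run $j$ together with many long jobs under SETF until $j$ emits, then run $j$ down to $p_j(t)=\varepsilon$. The other jobs are frozen in $N(t)$ with $y_i(t)=\alpha p_j\gg\varepsilon/(1-\alpha)$, all lie in $R_j$, and all can lie in $O(t)$.

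What holds is the reverse: for each $i$ there is at most one $j\in C(t)$ with $i\in R_j$ and $t_j>r_i$. This follows because no job released before $t_j$ is processed in $[t_j,t]$ (a consequence of \Cref{obs:clairvoyant-jobs-block-earlier-jobs}), combined with \Cref{obs:flow:cut-closure}. Your nesting observation is the right raw material for this, but you draw the wrong conclusion from it. With this bound, the exceptional terms satisfy $\beta(j,i)/p_j(t)\le 1$ and sum to at most $|O(t)|$, which gives the additive $1$. The remaining terms give $\frac{1}{1-\alpha}|O(t)|$ via property (iii) of \Cref{thm:main:borrowing}.

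Smaller points:
\begin{itemize}
\item Your SRPT chain is reversed. For $b>a$, $j_a$ is frozen from $s_{j_b}$ on and $(1-\alpha)p_{j_b}\le p_{j_a}(s_{j_b})$, so $p_{j_b}(t)\le p_{j_a}(t)$.
\item Your localization of $R_{j_b}\cap O(t)$ to $J_a\cup\{j_a\}$ fails for $b=a$, since $j_a$ can borrow from jobs processed before $s_{j_a}$.
\end{itemize}
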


Proving the latter theorem will be way easier than proving the former, and only requires small additions to the scenario sketched above. Thus, we defer the proof of \Cref{thm:local:clairvoyant} to later (cf. \Cref{sec:clairvoyant}), and for now, we concentrate on proving \Cref{thm:local:nonclairvoyant}.

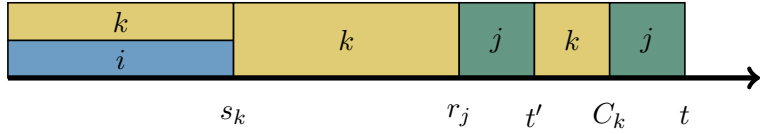
\begin{figure}[tb]
  \begin{center}
    \begin{tikzpicture}
      \jobT[job1](0,0)(3,0.5)($i$);
      \jobT[job2](0,0.5)(3,0.5)($k$);
      \jobT[job2](3,0)(3,1)($k$);
      \jobT[job3](6,0)(1,1)($j$);
      \jobT[job2](7,0)(1,1)($k$);
      \jobT[job3](8,0)(1,1)($j$);

      \node at (3,-0.5) {$s_k$};
      \node at (6,-0.5) {$r_j$};
      \node at (7,-0.5) {$t'$};
      \node at (8,-0.5) {$C_k$};
      \node at (9,-0.5) {$t$};

      \draw[line width=2pt,->] (0,0) -- (10,0);
    \end{tikzpicture}
  \end{center}
  \caption{Situation of our algorithm where the simple matching argument does not work.}\label{fig:example-delta-edge}
\end{figure}

To prove~\Cref{thm:local:nonclairvoyant}, we only need to consider jobs $j \in N(t)\setminus O(t)$ that satisfy Inequality (i) using the same argument as above. Unfortunately, Inequality (ii) does not hold in general for all $j \in N(t)\setminus O(t)$ and $i \in O(t)$, which again breaks the above approach.
In particular, there can be a situation where $i \in R_j$ but $y_{j}(t) < c \cdot y_i(t)$ for any $c > 0$.
To this end, we consider a scenario that is illustrated in \Cref{fig:example-delta-edge}. Consider two jobs $i$ and $k$ that run non-clairvoyantly at the same rate until $k$ emits, that is, we have $y_i(s_k) = y_k(s_k) = \alpha p_k$. Then, our algorithm processes only $k$ until at time $r_j$ job $j$ is being released, and the algorithm switches to processing it until time $t'$, which satisfies $\frac{1-\alpha}{\alpha} y_j(t') \geq p_k(t')$. Then, the algorithm runs $k$ to completion, and continues working on $j$ until time $t$. 

Note that $j$ can borrow from $k$, as $(j,k) \in E_C$, and $k$ can borrow from $i$ as $(k,i) \in E_N$. Thus, $i \in R_j$.
Note that $p_k(t')$ can be an arbitrarily small positive number. Thus, we cannot argue that $y_j(t) \geq y_i(t) = y_i(s_k)$ as in the special case above. In particular, it can be that $y_j(t) < c \cdot y_i(t)$ for every $c > 0$, ruling out the simple analysis outlined above.

Specifically, there could be multiple jobs that, in the same way as $j$, can borrow from $i$ via~$k$. Since $y_j(t)$ can be arbitrarily small compared to $y_i(t)$, the received processing time of $i$ would be enough to cover the remaining processing time of arbitrarily many such jobs $j$. However, a crucial observation will be that such jobs $j$ can borrow from $i$  \emph{only} via $k$, and $k$ can be seen as a \emph{bottleneck}. Our main proof idea is to compare the remaining processing time of $j$ to the capacity of the bottleneck instead of comparing directly with $y_i(t)$. The next section makes this idea more precise.

\subsection{Matching via Segments and the Proof of \Cref{thm:local:nonclairvoyant}}
\label{sec:segments}

For the sake of clarity, we use in this section the notation $N  \coloneq N(t)$ and $O \coloneq O(t)$.

We start with the definition of \emph{segments}. Intuitively, a segment is a subset of jobs $j \in S \subseteq N \setminus O$ that can borrow from the same set of jobs of $i \in O$ such that $y_{j}(t) \geq y_i(t)$. As outlined in the above section, this inequality essentially allows us to argue that at most $O(\frac{1}{1-\alpha})$ jobs $j$ can be completed by borrowing the total amount of processing of $i$.
Actually, for this argument suffices a slightly weaker inequality based on a truncation of $y_j(t)$ to $\alpha p_j$:

\begin{definition}[Truncated progress]
  For every job $j$, we define $\by_j \coloneq \min\{y_j(t), \alpha \cdot p_j\}$.
\end{definition}

We now work towards the formal definition of a segment. We use the following notation:
\begin{definition}\label{def:steal-types}
  For every $j \in N \setminus O$, we define 
  \begin{itemize}
    \item $O_j \coloneq \{ i \in O \mid \by_j \geq \by_i \}$, and 
    \item $\bO_j \coloneq \{i \in O \mid\by_j < \by_i \text{ and } i \in R_j \}$.
  \end{itemize}
\end{definition}

Note that $\bO_j \subseteq O \setminus O_j$ for every $j \in N \setminus O$. In principle, we could also add the requirement that $i \in R_j$ to the definition of $O_j$. However, some proofs will be easier without this requirement.

\begin{definition}[Segment]
  \label{def:segment}
A maximal set of jobs $S \subseteq N \setminus O$ is a \emph{segment} if $S \neq \emptyset$ and $O_j = O_{j'}$ for all $j,j' \in S$. 
We denote by $\cS$ the set of all segments. For a segment $S \in \cS$, we use $O_S$ to refer to the set $O_j$ for each $j \in S$.
\end{definition}

Note that $\cS$ is a partition of $N \setminus O$.
For a segment $S$, we define $\bS \coloneq \{j \in S \mid R_j \cap (O \setminus O_j) \neq \emptyset \}$. Intuitively, if $j \in \bS$ then there exists a job $i \in O$ from which $j$ might borrow but $\by_j < \by_i$. Therefore, the jobs in $\bS$ are of particular interest for our analysis as they separate it from the special case of~\Cref{sec:matching:arguments}: If $\bS = \emptyset$ for all $S \in \cS$ then we can use the analysis of~\Cref{sec:matching:arguments} by plugging in the truncated progress instead of the actual progress.
If $\bS \neq \emptyset$, then let $j^*(S)$ be the job in $\bS$ of minimum release date. Note that $j^*(S) \in S$. 

\begin{lemma}\label{lemma:segments:order}
  There is an order $S_1,\ldots,S_L$ of all segments $\cS$ such that $O_{S_1} \subsetneq \ldots \subsetneq O_{S_L}$.
\end{lemma}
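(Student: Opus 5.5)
The sets $O_j$ are all threshold sets, and that is what drives the proof. For $j \in N \setminus O$ we have $O_j = \{ i \in O \mid \by_i \leq \by_j \}$, so $O_j$ depends on $j$ only through the scalar $\by_j$, and it is monotone in that scalar. If $\by_j \leq \by_{j'}$, then every $i$ with $\by_i \leq \by_j$ also satisfies $\by_i \leq \by_{j'}$, hence $O_j \subseteq O_{j'}$. The family $\{O_j \mid j \in N \setminus O\}$ is therefore a chain under inclusion. This is exactly where it helps that $O_j$ was defined without the reachability requirement $i \in R_j$: with that requirement added, the sets would depend on the borrow graph and need not be nested.

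The plan is as follows. First I record the monotonicity above. Next I note that segments are exactly the equivalence classes of the relation $O_j = O_{j'}$ on $N \setminus O$. Maximality in \Cref{def:segment} forces each segment to be a full class, and since these classes partition $N \setminus O$ (as remarked after the definition), the sets $O_S$ of distinct segments are pairwise distinct. Finally, every $O_S$ belongs to the chain above, so the distinct sets $O_S$ are totally ordered by inclusion, and distinctness makes the inclusions strict. Sorting the finitely many segments by $|O_S|$, or equivalently by any representative value $\by_j$ with $j \in S$, gives the order $S_1, \ldots, S_L$ with $O_{S_1} \subsetneq \cdots \subsetneq O_{S_L}$.

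No step here is a real obstacle. The one point to handle carefully is that two jobs with different values of $\by_j$ may still have equal sets $O_j$, namely when no $\by_i$ with $i \in O$ lies between them. Such jobs simply fall into the same segment. So the order should be defined through the sets $O_S$ themselves, not through the $\by$-values, and then strictness follows directly from the sets being distinct. If $N \setminus O = \emptyset$, then $L = 0$ and the statement holds vacuously.
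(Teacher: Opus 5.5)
Your proof is correct and takes essentially the same route as the paper. The paper orders $O$ by non-decreasing $\by_i$ and notes that each $O_j$ is a prefix of that order, which is your threshold/chain observation; it then uses the definition of a segment to conclude that $O_{S_\ell} = O_{S_{\ell'}}$ forces $S_\ell = S_{\ell'}$, which is your distinctness argument for strict inclusions.
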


\begin{proof}
  Let $i_1, \ldots, i_{|O|}$ denote the elements of $O$ ordered by non-decreasing $\by_i$, i.e., $\ell \le \ell'$ implies $\by_{i_\ell} \le \by_{i_{\ell'}}$. For each $j \in N \setminus O$, the set $O_j$ is a possibly empty prefix of $i_1, \ldots, i_{|O|}$. This implies that there must exist an order $S_1,\ldots,S_L$ of the segments $\cS$ such that $O_{S_1} \subseteq \ldots \subseteq O_{S_L}$. We conclude the proof by observing that the definition of a segment implies that if $O_{S_\ell} = O_{S_{\ell'}}$ then $S_\ell = S_{\ell'}$.
\end{proof}

As an immediate corollary of \Cref{lemma:segments:order}, we can bound the number of segments by $|O|$.

\begin{corollary}\label{coro:number-of-segments}
  $|\cS| \leq |O| + 1$.
\end{corollary}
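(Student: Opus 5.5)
The corollary follows from the chain in \Cref{lemma:segments:order} by a counting argument. Let $S_1,\ldots,S_L$ be the order given there, so that
\[
  O_{S_1} \subsetneq O_{S_2} \subsetneq \cdots \subsetneq O_{S_L},
\]
and every $O_{S_\ell}$ is a subset of $O$.

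We track cardinalities along the chain. A strict inclusion of finite sets strictly increases cardinality, so
\[
  |O_{S_1}| < |O_{S_2}| < \cdots < |O_{S_L}|.
\]
These are $L$ pairwise distinct integers. Each lies in $\{0,1,\ldots,|O|\}$, since $\emptyset \subseteq O_{S_\ell} \subseteq O$. That range contains exactly $|O|+1$ values, so $L \le |O|+1$.

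Finally, $L = |\cS|$, because the order in \Cref{lemma:segments:order} lists every segment exactly once. Hence $|\cS| \le |O|+1$.

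No step here is a real obstacle. The one point to check is that the ordering is injective on segments, i.e., distinct segments are not listed as the same $S_\ell$. This is guaranteed by the end of the proof of \Cref{lemma:segments:order}, which uses maximality of segments: if $O_{S_\ell} = O_{S_{\ell'}}$, then $S_\ell = S_{\ell'}$.
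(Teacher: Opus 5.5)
Your argument is correct and is exactly the counting the paper has in mind when it calls this an immediate corollary of \Cref{lemma:segments:order}: a strictly increasing chain of subsets of $O$ has at most $|O|+1$ members, one for each possible cardinality $0,\ldots,|O|$. You also handle the only point that needs care, namely that distinct segments have distinct sets $O_S$ by maximality, so the ordering lists each segment exactly once.
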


For the remainder of this section, let $S_1, \ldots, S_L$ be the segments in $\cS$ indexed as in \Cref{lemma:segments:order}. For a segment $S$, we use the shorthand notation
\[
  S' \coloneq \begin{cases}
    S \setminus \{j^*(S)\} & \text{ if } \bS \neq \emptyset \\
    S & \text{ otherwise.}
  \end{cases}
\]
Clearly, $|S| \geq |S'| \geq |S| - 1$ for any segment $S$. 

The following two lemmas are the main ingredients to the proof of \Cref{thm:local:nonclairvoyant}. They bound the maximum amount of work that jobs in certain segments can borrow from certain jobs in $O$. While the first lemma is essentially just a corollary of \Cref{thm:main:borrowing}, proving the second lemma will require a stronger structural understanding of the borrow graph $G_B$ and flow network $G_F$ regarding bottlenecks, and thus, is deferred to the subsequent sections.
Recall that $L = |\cS|$.

\begin{lemma}[Good borrowing]
  \label{lemma:segment-good-stealing}
  For every $\ell \in \{1,\ldots,L\}$, we have 
  \( \sum_{k=1}^{\ell} \sum_{j \in S_k} \sum_{i \in O_{S_\ell}} \beta(j,i) \leq \sum_{i \in O_{S_{\ell}}} y_i(t).
  \)
\end{lemma}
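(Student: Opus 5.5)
This lemma should follow directly from Property (iii) of \Cref{thm:main:borrowing}, by enlarging the range of the sum over $j$ and then applying (iii) to each $i$. All terms are nonnegative, because each $\beta(j,i)$ is a sum of path-flow values $\gamma(P) \geq 0$.

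First, the segments $S_1,\ldots,S_\ell$ are pairwise disjoint subsets of $N \setminus O$, which is contained in $A(t) \setminus O(t)$. Hence, for each fixed $i \in O_{S_\ell}$,
\[
\sum_{k=1}^{\ell} \sum_{j \in S_k} \beta(j,i) \;\leq\; \sum_{j \in A(t)\setminus O(t)} \beta(j,i).
\]
This step uses only nonnegativity and disjointness: since $\cS$ is a partition of $N \setminus O$, no job $j$ is counted twice.

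Second, $O_{S_\ell} \subseteq O = O(t)$. So Property (iii) of \Cref{thm:main:borrowing} applies to every $i \in O_{S_\ell}$ and gives
\[
\sum_{j \in A(t)\setminus O(t)} \beta(j,i) \leq y_i(t).
\]

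Finally, I would swap the order of summation, which is fine because all sums are finite, and sum the resulting inequality over $i \in O_{S_\ell}$. This yields
\[
\sum_{k=1}^{\ell} \sum_{j \in S_k} \sum_{i \in O_{S_\ell}} \beta(j,i) \leq \sum_{i \in O_{S_\ell}} y_i(t).
\]

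I do not expect a real obstacle. The only point to check is that disjointness of the segments prevents double counting. The ordering from \Cref{lemma:segments:order} is not needed here; it only matters in how the lemma is used later, namely that $O_{S_\ell}$ contains $O_{S_k}$ for all $k \leq \ell$.
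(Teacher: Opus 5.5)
Your proposal is correct and follows essentially the same route as the paper, which applies Property (iii) of \Cref{thm:main:borrowing} to each $i \in O_{S_\ell}$ and sums over these $i$. You make explicit the nonnegativity of the $\beta$-values and the disjointness of the segments inside $N(t)\setminus O(t) \subseteq A(t)\setminus O(t)$; the paper uses both silently.
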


\begin{proof}
By \Cref{thm:main:borrowing}, it holds that $\sum_{k=1}^\ell \sum_{j \in S_k} \beta(j,i) \leq y_i(t)$ for every $i \in O_{S_\ell}$. Summing over all $i \in O_{S_\ell}$ yields the stated bound.
\end{proof}

\begin{lemma}[Bottleneck borrowing]
  \label{lem:segment:bottleneck:capacity}
  Let $S$ be a segment such that $\bS \neq \emptyset$. Then,
  it holds that
  \(
    \sum_{j \in S} \sum_{i \in O \setminus O_S} \beta(j,i) \leq p_{j^*(S)}(t).
  \)
\end{lemma}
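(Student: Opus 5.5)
The plan is to show that all flow from jobs of $S$ to jobs in $O\setminus O_S$ must pass through a bottleneck, namely processing done on clairvoyant jobs during the lifetime of $j^* := j^*(S)$ after some critical time. The total capacity of this bottleneck is at most $p_{j^*}(t)$. Concretely, I would work with the flow $f$ and its path decomposition $\mathcal{P}$ from \Cref{def:borrow:values}. For $j\in S$ and $i\in O\setminus O_S$ with $\beta(j,i)>0$, I would consider any path $P\in\mathcal{P}_{ji}$. By \Cref{thm:main:borrowing}(i) we have $i\in R_j$ and $\by_j<\by_i$. I want to show that every such path contains an arc of the form $(k,v^{k'}_{\ell'})$ where the capacity of $v^{k'}_{\ell'}$ counts processing of a clairvoyant job $k'$ at times after a certain time $\sigma$. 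I would also show that the total such clairvoyant processing inside $I(R_{j^*})$ after $\sigma$ is at most $p_{j^*}(t)$. Since distinct paths share capacities, summing $\gamma(P)$ over all these paths is bounded by the total capacity of this cut, which gives the lemma.

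\textbf{Structure step.} Take any $j\in\bS$ and $i\in R_j\cap(O\setminus O_j)$, so $\by_j<\by_i$. Along a borrow path from $j$ to $i$, non-clairvoyant edges alone cannot increase the (truncated) progress, by \Cref{lemma:ketchup} and \Cref{lemma:nonclairvoyant-direct-borrow}. Hence the path must use an edge $(a,b)\in E_C$, meaning $b$ is processed clairvoyantly during $I_a$. By \Cref{obs:clairvoyant-jobs-block-earlier-jobs}, once $b$ is run clairvoyantly at time $t'$, every job alive at $t'$ is frozen until $C_b$. I would let $\sigma$ be the latest start of such a clairvoyant blocking phase that still intersects the lifetime of $j^*$ (roughly, the last time the algorithm ran a clairvoyant job with $p_b(t')\le\frac{1-\alpha}{\alpha}\by_{j^*}$-type bound while $j^*$ was present). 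Using that $j^*$ has the minimum release date in $\bS$ and \Cref{obs:flow:cut-lifetime}, every $j\in S$ has $I_j$ meeting $I_{j^*}$. So every path from any $j\in S$ to $O\setminus O_S$ must enter jobs released before $r_{j^*}$ through clairvoyant processing that happens while $j^*$ is alive and unprocessed.

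\textbf{Capacity step.} I would bound the total clairvoyant processing occurring during $I_{j^*}$ on these bottleneck jobs. When a job $k$ is executed clairvoyantly while $j^*\in N$ is alive, the rule gives $p_k(t')\le\frac{1-\alpha}{\alpha}y_{j^*}(t')$. Combined with \Cref{obs:clairvoyant-jobs-block-earlier-jobs} (jobs finish nested, so only one such chain is relevant at a time), the total such work is at most $\frac{1-\alpha}{\alpha}\by_{j^*}\le(1-\alpha)p_{j^*}\le p_{j^*}(t)$, using $j^*\in N$. I expect the main obstacle to be this step: showing that the clairvoyant work through which paths must pass is a single nested chain, so it is not counted repeatedly, and proving rigorously that no path can bypass it via non-clairvoyant edges. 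I would first try an induction on path length using \Cref{lemma:ketchup}, keeping track of the invariant "current job's truncated progress $\le \by_{j}$ unless a clairvoyant edge inside $I_{j^*}$ was used."
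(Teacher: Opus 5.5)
There is a genuine gap, in both the capacity step and the choice of the cut. The algorithm's rule only bounds one job at one moment: if $k$ runs clairvoyantly at $t'$ while $j^*\in N(t')$ is alive, then $p_k(t')\le\frac{1-\alpha}{\alpha}y_{j^*}(t')$. \Cref{obs:clairvoyant-jobs-block-earlier-jobs} only makes clairvoyant executions inside $I_{j^*}$ nested or disjoint; it does not make them unique. Arbitrarily many short jobs can be released one after another during $I_{j^*}$, each emitting and then finishing, so the total clairvoyant work in $I_{j^*}$ is not $O(\by_{j^*})$.

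Pushing $\sigma$ as late as you suggest keeps the capacity small but breaks the cut. Suppose $j^*$ is blocked by a job $b_1$ that earlier ran in parallel with a target $i_1$ (the situation of \Cref{fig:example-delta-edge}), and after $C_{b_1}$ a tiny fresh job emits and completes during $I_{j^*}$. Then your cut misses the path $j^*\to b_1\to i_1$. Moving $\sigma$ earlier brings back the unbounded sum.

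The missing idea is that a \emph{single} job $b(S)$ lies on every path from every $j\in\bS$ to every $i\in O\setminus O_S$. The paper gets this from the layer/entrypoint machinery:
\begin{itemize}
\item uniqueness of bottlenecks, \Cref{thm:bottleneck:uniqueness};
\item \Cref{thm:unique:earliest:bottleneck}, where different first bottlenecks for two targets would force $\by_j\ge\min\{\by_{i_1},\by_{i_2}\}$, putting one target in $O_S$;
\item \Cref{obs:segment:bottleneck}, which extends this from $j^*$ to all of $\bS$.
\end{itemize}
The paper's cut is then the set of arcs $(v^{b(S)}_k,b(S))$ with $\tau_k\ge\min_{d\in D}r_d$. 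Its total capacity is $p_{b(S)}(\tau)$, which \Cref{lem:stealing:via:bottlenecks:2} bounds by $\frac{1-\alpha}{\alpha}\by_{j^*}\le p_{j^*}(t)$.

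Moreover, the relevant cut need not lie inside $I_{j^*}$, so even a single-job bound cannot come from $j^*$ being alive. Here is a configuration:
\begin{itemize}
\item $b$ runs in parallel with $i$ until it emits;
\item $d$ is released during $b$'s clairvoyant phase, so $b$ finishes inside $I_d$;
\item $j^*$ is released right after $C_b$ and catches up with $d$ while $d$ is still behind $i$.
\end{itemize}
For $t$ before $j^*$ and $d$ reach $y_i$, the path $j^*\to d\to b\to i$ has positive capacity and $\by_{j^*}<\by_i$. Yet no clairvoyant job runs during $I_{j^*}$ at all, so your $\sigma$ is undefined. The rule bounds $p_b(\tau_d)$ only through $y_d$. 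Transferring this to $\by_{j^*}$ for every $\delta$-predecessor $d$ reachable from $j^*$ is exactly \Cref{lem:stealing:via:bottlenecks:2}, which rests on the good-path lemmas.

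Your structure step does not supply these lemmas:
\begin{itemize}
\item \Cref{lemma:ketchup} and \Cref{lemma:nonclairvoyant-direct-borrow} only compare two jobs that are both in $N(t)$.
\item A single $E_N$-edge out of a completed job can increase truncated progress (\Cref{fig:ex:bad-edge}). Monotonicity along $E_N$-paths from an active job therefore needs the shortcut argument of \Cref{lem:good-path-from-active}.
\item Knowing that a path uses \emph{some} $E_C$-edge does not localize the flow. Paths through $E_C$-edges that avoid all bottlenecks still satisfy $\by_j\ge\by_i$ (\Cref{thm:yinequality:no:bottlenecks}).
\end{itemize}
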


\Cref{lem:segment:bottleneck:capacity} ties into the intuition given in~\Cref{sec:matching:arguments} that jobs $j \in N\setminus O$ can only borrow from jobs $i \in O$ with $\by_i > \by_j$ via certain bottlenecks. We will prove the lemma in the subsequent sections by first showing that there is a \emph{single} bottleneck that needs to be part of \emph{every} path in the borrow graph from \emph{any} $j \in S$ to \emph{any} $i \in O\setminus O_S$, and second proving that the capacity of this bottleneck is at most $p_{j^*(S)}(t)$.

We can compose \Cref{lemma:segment-good-stealing} and \Cref{lem:segment:bottleneck:capacity}
to the following lemma.

\begin{lemma}\label{lemma:segment-stealing}
  For every $\ell \in \{1,\ldots,L\}$, we have 
  \[ \sum_{k=1}^{\ell} \sum_{j \in S_k} \sum_{i \in O} \beta(j,i) \leq \sum_{i \in O_{S_{\ell}}} y_i(t) + \sum_{k=1}^{\ell} p_{j^*(S_k)}(t) \cdot \ind[\bS_k \neq \emptyset] \ .
  \]
\end{lemma}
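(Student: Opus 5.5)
We split the left-hand side according to whether the target job $i$ lies in $O_{S_\ell}$ or in $O \setminus O_{S_\ell}$:
\[
\sum_{k=1}^{\ell} \sum_{j \in S_k} \sum_{i \in O} \beta(j,i) = \sum_{k=1}^{\ell} \sum_{j \in S_k} \sum_{i \in O_{S_\ell}} \beta(j,i) + \sum_{k=1}^{\ell} \sum_{j \in S_k} \sum_{i \in O \setminus O_{S_\ell}} \beta(j,i).
\]
The first term is bounded by $\sum_{i \in O_{S_\ell}} y_i(t)$ directly by \Cref{lemma:segment-good-stealing}. So it remains to show that the second term is at most $\sum_{k=1}^{\ell} p_{j^*(S_k)}(t) \cdot \ind[\bS_k \neq \emptyset]$. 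We do this segment by segment, i.e., we show for each $k \le \ell$ that
\[
\sum_{j \in S_k} \sum_{i \in O \setminus O_{S_\ell}} \beta(j,i) \le p_{j^*(S_k)}(t)\cdot \ind[\bS_k \neq \emptyset].
\]

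Fix $k \le \ell$. By \Cref{lemma:segments:order}, $O_{S_k} \subseteq O_{S_\ell}$, hence $O \setminus O_{S_\ell} \subseteq O \setminus O_{S_k}$. Since all $\beta$ values are nonnegative, the sum for segment $k$ is at most $\sum_{j \in S_k} \sum_{i \in O \setminus O_{S_k}} \beta(j,i)$. We now distinguish two cases.

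\textbf{Case $\bS_k \neq \emptyset$.} This last sum is at most $p_{j^*(S_k)}(t)$ by \Cref{lem:segment:bottleneck:capacity}.

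\textbf{Case $\bS_k = \emptyset$.} By definition of $\bS_k$, every $j \in S_k$ satisfies $R_j \cap (O \setminus O_{S_k}) = \emptyset$. Thus every $i \in O \setminus O_{S_k}$ lies in $O \setminus R_j$, and Property (i) of \Cref{thm:main:borrowing} gives $\beta(j,i) = 0$. So the sum vanishes, which matches the indicator being $0$.

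Summing the per-segment bounds over $k = 1,\ldots,\ell$ and adding the bound on the first term yields the claim. There is no real obstacle here: all difficulty sits in \Cref{lem:segment:bottleneck:capacity}, which we take as given. The only points needing care are the monotonicity $O_{S_k} \subseteq O_{S_\ell}$ used to enlarge the target set, and the separate treatment of segments with $\bS_k = \emptyset$, where \Cref{lem:segment:bottleneck:capacity} does not apply and \Cref{thm:main:borrowing}(i) is used instead.
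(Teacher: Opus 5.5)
Your proof is correct and is the composition the paper has in mind; the paper gives no separate proof and only says the lemma follows by composing \Cref{lemma:segment-good-stealing} and \Cref{lem:segment:bottleneck:capacity}. You split the targets into $O_{S_\ell}$ and $O\setminus O_{S_\ell}$, use monotonicity $O_{S_k}\subseteq O_{S_\ell}$ to apply \Cref{lem:segment:bottleneck:capacity} per segment, and handle the $\bS_k=\emptyset$ case explicitly via property (i) of \Cref{thm:main:borrowing}, a detail the paper leaves implicit.
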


Using this, we can prove the following technical lemma, which bounds the number of jobs in every prefix of the set of ordered segments. The proof of \Cref{thm:local:nonclairvoyant} is then essentially a corollary of it.

\begin{lemma}
  \label{lemma:segment-inductive-matching}
  For each $\ell \in \{1,\ldots, L \}$, it holds that $\sum_{k = 1}^\ell |S_{k}'| \le \frac{1}{1-\alpha} |O_{S_{\ell}}|$.
\end{lemma}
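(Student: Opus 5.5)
We argue by induction on $\ell$, using \Cref{lemma:segment-stealing} together with a per-job lower bound on the remaining processing time. The key inequality is that every $j \in S_k$ satisfies $p_j(t) \geq (1-\alpha) p_j \geq \frac{1-\alpha}{\alpha}\by_j$, since $j \in N$. Moreover, $\by_j \geq \by_i$ for every $i \in O_{S_k}$. Since $O_{S_\ell}$ is a prefix $i_1,\ldots,i_m$ of the elements of $O$ ordered by $\by$ (as in the proof of \Cref{lemma:segments:order}), each $j \in S_k$ with $k \le \ell$ has $\by_j \ge \by_i$ for all $i \in O_{S_k}$. Also, $\by_i \le y_i(t)$, and $y_i(t) \le \alpha p_i$ can fail only for clairvoyant $i$. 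So the comparison to $y_i(t)$ needs care.

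The plan is a charging argument. By Property (ii) of \Cref{thm:main:borrowing}, $\sum_{k\le \ell}\sum_{j\in S_k} p_j(t) = \sum_{k\le\ell}\sum_{j \in S_k}\sum_{i\in O}\beta(j,i)$. \Cref{lemma:segment-stealing} bounds this by $\sum_{i\in O_{S_\ell}} y_i(t) + \sum_{k\le \ell} p_{j^*(S_k)}(t)\ind[\bS_k\neq\emptyset]$. The bottleneck terms $p_{j^*(S_k)}(t)$ are absorbed by moving $j^*(S_k)$ to the left-hand side. Each such job contributes its own $p_{j^*(S_k)}(t)$ on the left, which cancels, leaving $\sum_{k\le\ell}\sum_{j\in S_k'} p_j(t) \le \sum_{i\in O_{S_\ell}} y_i(t)$. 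It then remains to show that each $j\in S_k'$ has $p_j(t)$ at least $(1-\alpha)$ times an appropriate average of $y_i(t)$. This cannot be done uniformly, because jobs in early segments are only comparable to the small-progress jobs of $O_{S_k}$. This is why induction over the nested prefixes is needed, rather than a single inequality.

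For the induction, I would strengthen the hypothesis to a weighted statement. Work segment by segment. For $S_k$, compare only against the new jobs $O_{S_k}\setminus O_{S_{k-1}}$, or more precisely against $\by$-values. The bound $\sum_{k\le\ell}|S_k'|\,(1-\alpha)\min_{j\in S_k}\by_j \le \sum_{i\in O_{S_\ell}}\by_i$-type quantities gives the count once $\by_j \ge \by_{i}$ for the largest $i\in O_{S_k}$. This follows from an exchange or greedy argument: fill the sorted $\by_i$ with demand, prefix by prefix. Concretely, I would prove $(1-\alpha)\sum_{k\le \ell}|S_k'|\,\by_{i_{m_k}} \le \sum_{i\in O_{S_\ell}}\by_i$, where $i_{m_k}$ is the last element of $O_{S_k}$. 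Combined with $\by_{i}\le \by_{i_{m_k}}$ for all $i\in O_{S_k}$, a pigeonhole over the prefix yields $\sum_{k\le \ell}|S_k'|\le \frac{1}{1-\alpha}|O_{S_\ell}|$. The count bound follows by an Abel-summation argument: the sum $\sum_k |S_k'|\cdot \by_{i_{m_k}}$ dominates $\min$-weighted counts.

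The main obstacle is replacing $y_i(t)$ by $\by_i$ in \Cref{lemma:segment-stealing}, that is, handling $i\in O$ that are clairvoyant with $y_i(t) > \alpha p_i$. For these I would use that $\beta(j,i) \le$ the flow into $i$. Jobs in $S_k$ with $\by_j \ge \by_i = \alpha p_i$ have $p_j(t) \ge (1-\alpha)p_j \ge (1-\alpha) p_i \ge (1-\alpha) y_i(t)$ anyway. So the ratio $y_i(t)/p_j(t) \le \frac{1}{1-\alpha}$ still holds for every $i\in O_{S_k}$, and the direct matching argument from \Cref{sec:matching:arguments} goes through with $y_i(t)$ itself. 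I would therefore carry out the final computation exactly as there: $|S_k'| \le \sum_{j}\sum_{i\in O_{S_k}}\beta(j,i)/p_j(t)$ plus bottleneck terms, which are cancelled as above.
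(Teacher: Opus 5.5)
Your second paragraph is the paper's reduction: \Cref{lemma:segment-stealing} for the prefix $S_1,\dots,S_\ell$, with each $p_{j^*(S_k)}(t)$ cancelled against the summand of $j^*(S_k)$, gives $\sum_{k\le\ell}\sum_{j\in S_k'}p_j(t)\le\sum_{i\in O_{S_\ell}}y_i(t)$. Your per-pair bound $p_j(t)\ge(1-\alpha)y_i(t)$ for $j\in S_k$, $i\in O_{S_k}$ is also correct, including the case $y_i(t)>\alpha p_i$ via $p_j\ge p_i$. The paper's chain passes through $\by_i\ge y_i(t)$, which is false there, so your argument is the right justification.

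The gap is the step you finally commit to: redoing the normalized computation of \Cref{sec:matching:arguments} with ``bottleneck terms cancelled as above''. That cancellation exists only in the un-normalized sum, where $p_{j^*(S_k)}(t)$ is literally one of the summands. Once you write $|S_k'|=\sum_{j\in S_k'}\sum_{i\in O}\beta(j,i)/p_j(t)$, the part $\sum_{j\in S_k'}\sum_{i\in O\setminus O_{S_k}}\beta(j,i)/p_j(t)$ has nothing to cancel against and nothing bounding it. \Cref{lem:segment:bottleneck:capacity} bounds only the total flow $\sum_{j\in S_k}\sum_{i\in O\setminus O_{S_k}}\beta(j,i)$, and that flow need not be sent by $j^*(S_k)$ at all. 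Nothing you use prevents several jobs of $S_k'$ with $p_j(t)\ll p_{j^*(S_k)}(t)$ from drawing most of their supply through $b(S_k)$. Each of them then contributes nearly a full unit to $|S_k'|$ through these uncancelled terms. This is exactly the single per-job comparison that your own second paragraph rightly says cannot work.

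The Abel-summation idea of your third paragraph does close the gap and is a genuine alternative to the paper's argument, but it needs a correctly derived weighted inequality.
\begin{itemize}
\item Let $m_k:=|O_{S_k}|$ with $m_0:=0$, and let $a_k:=\max_{i\in O_{S_k}}y_i(t)$ (with $a_k:=0$ if $O_{S_k}=\emptyset$). Since the sets $O_{S_k}$ are nested, $a_k$ is non-decreasing in $k$.
\item Your per-pair bound gives $p_j(t)\ge(1-\alpha)a_k$ for $j\in S_k'$. Also $\sum_{i\in O_{S_\ell}}y_i(t)\le\sum_{k\le\ell}(m_k-m_{k-1})a_k$.
\item Combined with the prefix inequality, this yields $E_\ell:=\sum_{k\le\ell}\bigl(m_k-m_{k-1}-(1-\alpha)|S_k'|\bigr)a_k\ge0$ for every $\ell$.
\item For positive weights, $\sum_{k\le\ell}\bigl(m_k-m_{k-1}-(1-\alpha)|S_k'|\bigr)=E_\ell/a_\ell+\sum_{k<\ell}E_k(1/a_k-1/a_{k+1})\ge0$, which is the lemma.
\item Zero weights are harmless. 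If $a_k=0$, the prefix inequality forces $S_1'=\dots=S_k'=\emptyset$ because alive jobs have $p_j(t)>0$, so you start the summation after the last zero.
\end{itemize}
Your $\by$-weighted inequality is also true, but ``exchange or greedy'' does not prove it. The missing fact is $y_i(t)\le\by_i/\alpha$ (since $y_i(t)\le p_i$), used together with your $p_j(t)\ge\frac{1-\alpha}{\alpha}\by_j$.

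The paper instead splits each $i\in O$ into $\frac{1}{1-\alpha}$ slots of value $(1-\alpha)y_i(t)$ and inductively extends a Hall matching from earlier segments into slots. The Abel route uses the same two inputs (the prefix inequalities and the per-pair bound), but it is purely algebraic and does not need $\frac{1}{1-\alpha}$ to be an integer.
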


\begin{proof}
Recall that we assume that $\frac{1}{1-\alpha}$ is integer. %
We will first split every job $i \in O$ into $\frac{1}{1-\alpha}$ \emph{slots}, and each slot receives a $1-\alpha$ fraction of $y_i(t)$ as progress at time $t$. Then, we will prove for each $\ell \in \{1,\ldots, L \}$ that there exists a matching that matches every job in $\sum_{k = 1}^\ell |S_{k}'|$ to some slot, which implies the lemma.

To this end, we first introduce $\frac{1}{1-\alpha}$ slots $\hO(i)$ for every $i \in O$. Let $\hO \coloneq \bigcup_{i \in O} \hO(i)$ be the total set of slots. For every $j \in N \setminus O$, we define $\hO_j \coloneq \bigcup_{i \in O_j} \hO(i)$ and for every segment $S$, we define $\hO_S \coloneq \bigcup_{i \in O_S} \hO(i)$. Note that $\frac{1}{1-\alpha}|O_j| = |\hO_j|$, $\frac{1}{1-\alpha}|O_S| = |\hO_S|$, and $\frac{1}{1-\alpha}|O| = |\hO|$. Further, we define $\hy_{i'}(t) \coloneq (1-\alpha)y_i(t)$ for every $i' \in O(i)$ and $i \in O$. 
Thus, for every segment $S$, every $j \in S$ and $i \in \hO_j$, we have $\by_j \geq \by_i$, and further,
\begin{equation}
  p_j(t) \geq \frac{1-\alpha}{\alpha} y_j(t)
    = \frac{1-\alpha}{\alpha} \by_j
    \geq \frac{1-\alpha}{\alpha} \by_i
    \geq \frac{1-\alpha}{\alpha} y_i(t) \geq \hy_i(t)  \ . \label{eq:slot-y-ineq}
\end{equation}
We now prove by induction that, for every $\ell \in \{1,\ldots, L \}$, it holds that $\sum_{k = 1}^\ell |S_{k}'| \le |\hO_{S_{\ell}}|$. This will conclude the lemma.

We start with $\ell = 1$.
\Cref{thm:main:borrowing} and \Cref{lemma:segment-stealing} imply 
\begin{align*} 
  \sum_{j \in S_1} p_j(t) 
  = \sum_{j \in S_1 }\sum_{i \in O} \beta(j,i) 
  &\leq p_{j^*(S_1)}(t) \cdot \ind[\bS_1 \neq \emptyset] + \sum_{i \in O_{S_1}} y_i(t) \\
  &= p_{j^*(S_1)}(t) \cdot \ind[\bS_1 \neq \emptyset] + \sum_{i \in \hO_{S_1}} \hy_i(t) \ ,
\end{align*}
which implies $\sum_{j \in S'_1} p_j(t) \leq \sum_{i \in \hO_{S_1}} \hy_i(t)$. Now, we can use~\eqref{eq:slot-y-ineq} to conclude that every term in the left sum of the inequality is at least as large as every term in the right sum. Thus, $|S'_1| \leq |\hO_{S_1}|$.

For the induction step, fix $\ell \geq 2$, and assume that for every $\ell' \in \{1,\ldots,\ell-1\}$, 
we have $\sum_{k=1}^{\ell'} |S'_{k}| \leq |\hO_{S_{\ell'}}|$. Also recall that $\hO_j = \hO_{j'}$ for all $j,j' \in S$ for every segment $S$. 
Thus, for every $Z \subseteq \bigcup_{k=1}^{\ell-1} S'_k$, we have $|Z| \leq |\bigcup_{j \in Z} \hO_j|$. 
By Hall's Theorem, this means that for each $j \in \bigcup_{k=1}^{\ell-1}S'_k$, there exists a slot $m(j) \in \hO_j  \subseteq \hO_{S_\ell}$ such that for every slot $i \in \hO_{S_\ell}$ it holds that $|\{j \mid m(j)=i \}| \leq 1$, that is, every slot is matched at most once.
Let $M \coloneq \{m(j) \mid j \in \cup_{k=1}^{\ell-1}S'_k\}$ be the set of matched slots in $\hO_{S_\ell}$.
\Cref{thm:main:borrowing} and \Cref{lemma:segment-stealing} imply 
\[
  \sum_{k=1}^\ell \sum_{j \in S_k} p_j(t) 
  = \sum_{k=1}^\ell \sum_{j \in S_k} \sum_{i \in O} \beta(j,i) 
  \leq \sum_{i \in O_{S_\ell}} y_i(t) + \sum_{k=1}^\ell p_{j^*(S_k)}(t) \cdot \ind[\bS_k \neq \emptyset] \ ,
\]
giving
\[
  \sum_{j \in S'_\ell} p_j(t) + \sum_{k=1}^{\ell-1} \sum_{j \in S'_k} p_j(t)  
  \leq \sum_{i \in O_{S_\ell}} y_i(t) 
  = \sum_{i \in \hO_{S_\ell}} \hy_i(t) 
  = \sum_{i \in M} \hy_i(t) + \sum_{i \in \hO_{S_\ell} \setminus M} \hy_i(t) \ .
\]
The definition of $M$ and~\eqref{eq:slot-y-ineq} yield $\sum_{i \in M} \hy_i(t) \leq \sum_{k=1}^{\ell-1} \sum_{j \in S'_k} p_j(t)$. Thus, the above implies
\[
  \sum_{j \in S'_\ell} p_j(t) \leq \sum_{i \in \hO_{S_\ell} \setminus M} \hy_i(t) \ .
\]
By~\eqref{eq:slot-y-ineq}, the above implies $|S'_\ell| \leq |\hO_{S_\ell} \setminus M|$. Since $|M| = |\bigcup_{k=1}^{\ell-1} S'_k|$ and segments are pairwise disjoint, we conclude that 
\[
  \bigg| \bigcup_{k=1}^{\ell} S'_k \bigg| = \bigg| \bigcup_{k=1}^{\ell-1} S'_k \bigg| + |S'_\ell| \le |\hO_{S_\ell} \setminus M| + |M| \leq |\hO_{S_\ell}| \ ,
\]
which concludes the proof of the statement.
\end{proof}

Finally, we can prove \Cref{thm:local:nonclairvoyant}.

\begin{proof}[Proof of \Cref{thm:local:nonclairvoyant}]
  \Cref{lemma:segment-inductive-matching} gives \(
    \sum_{k = 1}^L |S_{k}'| \le \frac{1}{1-\alpha} |O_{S_{L}}| \leq \frac{1}{1-\alpha} |O(t)|.
  \)
  Moreover, for every segment $S$, we have $|S| \leq |S'| + 1$. Thus, by combining the above with \Cref{coro:number-of-segments}, we obtain
  \[
    |N(t) \setminus O(t)| 
    = \sum_{k=1}^L |S_k| 
    \leq |\cS| + \sum_{k=1}^L |S'_k|
    \leq 1 + |O(t)| + \frac{1}{1-\alpha} |O(t)| \leq \bigg(2 + \frac{1}{1-\alpha} \bigg) \cdot |O(t)| \ .
  \]
  This concludes the proof of the theorem.
\end{proof}

\subsection{Outline: Proof of~\Cref{lem:segment:bottleneck:capacity}}
\label{sec:lemma:outline}

To complete the proof of~\Cref{thm:local:nonclairvoyant}, it remains to prove~\Cref{lem:segment:bottleneck:capacity}. The proof requires multiple ingredients that are shown in the subsequent sections. Here, we give a brief outline of the proof idea.

The idea of our proof is motivated by the following lemma, which we prove in~\Cref{sec:non-clairvoyant:paths}.
\begin{lemma}
  \label{lem:nc-borrowing}
  If a job $j \in A(t)$ can reach a job $i$ via an $E_N$-path, then $\by_j \ge \by_i$.
\end{lemma}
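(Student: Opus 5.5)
We argue by induction on the length of the $E_N$-path. Because intermediate jobs on the path may already be completed (or clairvoyant) at time $t$, the statement itself is too weak to induct on. So the plan is to formulate, for every edge $(k,k') \in E_N$, a time-indexed comparison of progress values, and then chain these comparisons along the path $j = k_0, k_1, \ldots, k_m = i$.

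For a single edge $(k,k') \in E_N$ there is a time $t' \in I_k$ at which $k'$ is processed while $k' \in N(t')$. At $t'$ the algorithm is in its SETF branch. By \Cref{obs:sept-like}, $y_{k'}(t') \le y_k(t')$, since $k$ is available at $t'$. If $k$ is also non-clairvoyant at $t'$, the Catch-Up \Cref{lemma:ketchup} keeps $y_k \ge y_{k'}$ for as long as both stay non-clairvoyant. If instead $k$ is clairvoyant at $t'$, then $y_k(t') > \alpha p_k \ge \by_k$ already. The moment at which this comparison can break is when $k$ emits first, at $s_k$. By \Cref{obs:clairvoyant-jobs-block-earlier-jobs}, $k'$ is then frozen during $[s_k, C_k]$. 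Hence, up to time $\min\{C_k,t\}$, the progress of $k'$ stays at most $\max\{y_k, \alpha p_k\}$, and truncating gives $\by_{k'}$ at most the truncated progress of $k$ at the end of $k$'s lifetime.

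For the first job $j \in A(t)$ this yields exactly $\by_j \ge$ (progress of $k_1$ up to time $t$). If $j$ emitted, $k_1$ is frozen from $s_j$ to $t < C_j$, so $y_{k_1}(t) = y_{k_1}(s_j) \le y_j(s_j) = \alpha p_j = \by_j$. Otherwise Catch-Up applies directly.

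The main obstacle is propagating this through intermediate jobs $k_r$ that completed before $t$. After $C_{k_r}$, the successor $k_{r+1}$ may gain more progress than $\alpha p_{k_r}$, so a naive per-edge inequality on $\by$ fails. My plan is to strengthen the induction hypothesis to compare progress values at a common reference time. Since $j \in A(t)$, \Cref{obs:lifetime:2} gives that the whole path's lifetime is $[\min r, t]$. I will prove, by induction from $i$ backwards, that for every $r$ the value $y_{k_{r+1}}$ evaluated at the time $k_{r+1}$ is last processed non-clairvoyantly before $t$ is at most the level that the algorithm's SETF threshold had reached at that time. That threshold is $\min_{N} y$, which is non-decreasing on any interval during which the set of jobs it ranges over does not change.

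I would then use \Cref{obs:clairvoyant-jobs-block-earlier-jobs} to argue that whenever this threshold could decrease, namely at the release of a new job, that job lies outside the relevant lifetime chain before its release. Chaining these steps gives $\by_i \le$ (threshold level reached while $j$ is alive) $\le \by_j$, where the last inequality holds because $j$ is either still non-clairvoyant and has been SETF-processed up to that level, or has emitted and frozen all earlier-released jobs. Making this threshold-monotonicity argument precise across completions of intermediate clairvoyant jobs is the step I expect to require the most care.
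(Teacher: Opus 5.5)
Your proposal breaks down at exactly the step you flag as the hard one, and the mechanism you sketch for that step cannot carry it.

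\textbf{Why the threshold argument fails.} Your strengthened hypothesis is vacuous. Whenever $k_{r+1}$ is processed non-clairvoyantly, the algorithm is in its SETF branch, so $y_{k_{r+1}}$ \emph{equals} $\min_N y$ at that moment. Nothing is therefore transmitted from one edge to the next.

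Monotonicity of the threshold cannot fill this in, because $\min_N y$ drops to $0$ at every release, in particular at $r_j$. In the critical case, the end of $i$'s non-clairvoyant processing (the moment that determines $\by_i$) lies before $r_j$. For example, take $j \to h \to i$, where $h$ and $i$ start together long before $r_j$, $i$ emits and is finished by SRPT before $r_j$, and $h$ is processed non-clairvoyantly later during $I_j$. Saying that new jobs lie ``outside the relevant lifetime chain'' gives no inequality between $\by_i$ and $\by_j$ in this situation.

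Your final inequality ``threshold reached while $j$ is alive $\le \by_j$'' is false for $\min_N y$. After $j$ emits, an older frozen job in $N$ can have progress above $\alpha p_j$. The inequality holds only for levels at which non-clairvoyant processing actually occurs in $I_j$. When $j\in C(\tau)$, the reason is the switching rule $\frac{1-\alpha}{\alpha}y_k(\tau)<p_j(\tau)<(1-\alpha)p_j$, not freezing, since jobs released after $s_j$ are not frozen.

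The same oversight affects your first-edge paragraph. There, $k_1$ is frozen after $s_j$ only if $r_{k_1}<s_j$; the other case is Case (2) of \Cref{lem:c:nc:edge}. Also, the quantity to bound there is $\by_{k_1}$, not the full progress of $k_1$. The full progress can exceed $\by_j$ once $k_1$ emits and is completed by SRPT.

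\textbf{What is missing.} You need a way to get past edges leaving jobs completed before $t$. The paper shows that only such edges can be bad (\Cref{coro:bad-edges}). It then removes them by shortcuts (\Cref{lem:shortcut:1,lem:shortcut:2} and \Cref{obs:shortcut:3}) until the path is good.

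Your ingredients can also be completed by a direct induction on the path length $m$, in three steps:
(a) For every job $h$ and every time $\tau\in I_h$ at which some $k\in N(\tau)$ is processed, $y_k(\tau)\le\by_h$. This follows from \Cref{obs:sept-like} if $h\in N(\tau)$, and from the switching rule if $h\in C(\tau)$.
(b) $\by_i=y_i(\lambda_i)$, where $\lambda_i$ is the end of $i$'s non-clairvoyant processing before $t$.
(c) The predecessor $k_{m-1}$ is alive at some time $\le\lambda_i$, and the prefix $k_0,\dots,k_{m-1}$ contains $j\in A(t)$. So by \Cref{obs:lifetime:2} its lifetime is $[\min r,t]\ni\lambda_i$, and some prefix job $h$ has $\lambda_i\in I_h$.
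Then (a) gives $\by_i\le\by_h$, and the induction hypothesis on the shorter $E_N$-path from $j$ to $h$ gives $\by_h\le\by_j$. The step your proposal does not supply is comparing $i$ with a job whose lifetime covers the \emph{last} non-clairvoyant processing of $i$, instead of chaining edge by edge.
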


With respect to the segments (cf.~\Cref{def:segment}), this means that if a job $j \in S$ for a segment $S$ can reach a job $i \in O$ via an $E_N$-path, then $\by_j \ge \by_i$, and thus, $i \in O_S$. Vice versa, a job $i \in R_j \setminus O_S$ \emph{cannot} be reachable by $j$ via an $E_N$-path.

Since~\Cref{lem:segment:bottleneck:capacity} considers exactly such jobs $j \in \bS$ and $i\in O\setminus O_S$ that satisfy $\by_i > \by_j$, we now know that there cannot be an $E_N$-path between any such pair $j$ and $i$. In~\Cref{sec:structure}, we exploit this insight and further analyze the structure of the borrow graph, specifically in absence of $E_N$-paths between jobs $j \in N(t)$ and $i \in O(t)$. To this end, we introduce the concept of \emph{bottlenecks}, which are jobs that cannot be avoided on paths between jobs $j \in N(t)$ and jobs $i \in O(t)$ with $\by_i > \by_j$. 

In~\Cref{sec:fine-grained}, we show that for a fixed $j \in \bS$ there is one unique bottleneck $b$ that is the first bottleneck on \emph{any} path from $j$ to \emph{any} $i \in R_j\setminus O_S$. This means that if $j$ wants to borrow from any job in $i \in R_j\setminus O_S$, then this must be via $b$. It is possible to show that the maximum amount of processing that $j$ can borrow via $b$ is $p_j(t)$, which implies $\sum_{i \in O\setminus O_S} \beta(j,i) \le p_j(t)$. If $\bS$ contains only a single job, then this already gives us~\Cref{lem:segment:bottleneck:capacity}.

Since $\bS$ can contain multiple jobs, we show in~\Cref{sec:segment:borrowing} that the bottleneck $b_{j^*(S)}$ is part of \emph{every} path from \emph{any} $j \in \bS$ to any $i \in O\setminus O_S$. To finish the proof, we then prove that \emph{all} jobs in $\bS$ together can borrow at most $p_{j^*(S)}$ units of processing via $b_{j^*(S)}$, which will imply the lemma.

\subsection{The Structure of the Borrow Graph}
\label{sec:structure}

In order to prove \Cref{lem:segment:bottleneck:capacity}, we analyze the structure of the borrow graph.
We do this with respect to a fixed job $i \in O(t)$. Later, we will see that the relevant structures are essentially independent of the choice of $i$.
Thus, fix some $i \in O(t)$. 
In this section, we always refer to the borrow graph when talking about reachability and paths.
We start with the definition of \emph{layers}. 

\begin{definition}[layer]\label{def:layers}
  The \emph{layers} $L_0^i,\ldots,L_{\ell_i}^i$ of a job $i$ are defined as follows:
  \begin{enumerate}[(i)]
      \item $L_0^i$ is the set of jobs that can reach $i$ with an $E_N$-path.
      \item $L_x^i$ for $x \geq 1$ is the set of jobs in $J \setminus \left(\bigcup_{0 \le y \leq x-1} L_y^i\right)$ that can reach some job in $L_{x-1}^i$ via a path in $G_B$ that ends with an edge in $E_C$ but otherwise only uses edges in $E_N$.
  \end{enumerate}
  We write $L_{\leq x}^i \coloneq \bigcup_{0 \le y \le x} L_y^i$ and $L_{\geq x}^i \coloneq \bigcup_{x \le y \le \ell_i} L_y^i$.
\end{definition}
In general, the layer $L_x^i$ is defined to contain all jobs that can reach $i$ in $G_B$ by using $x$ edges from $E_C$ and cannot reach $i$ in $G_B$ by using less than $x$ edges from $E_C$. As outlined in~\Cref{sec:lemma:outline}, \Cref{lem:segment:bottleneck:capacity} is concerned with jobs $j \in N(t)\setminus O(t)$ such that $i \in R_j$ and $\by_i > \by_j$. By \Cref{lem:nc-borrowing}, such jobs must satisfy $j \in L^i_{\ge 1}$.

Next, we define edges and jobs (vertices) that allow paths from higher layers to lower layers; specifically from a job in $L_{x+1}^i$ to a job in $L_x^i$.

\begin{definition}[$\delta$-edges]
We define $\delta_x^i \coloneq \{ (v,u) \in E_C\mid v \in L_{x+1}^i \land u \in L_x^i \}$.
\end{definition}

\begin{definition}[entrypoint]
     A job $k \in L_x^i$ is an \emph{entrypoint of layer $L_x^i$} if both of the following two conditions hold:
    \begin{enumerate}[(i)]
        \item $(j',k) \in \delta_x^i$ for some $j' \in L_{x+1}^i$, and
        \item there exists a path $P$ in $G_B$ from some $j \in A(t) \cap L_{\geq x+1}^i$ to $k$ that, apart from $k$, only visits jobs in $L_{\geq x+1}^i$.
    \end{enumerate}
    We use $H_x^i \subseteq L_x^i$ to denote the set of all entrypoints of layer $L_x^i$. 
\end{definition}

Note that the second condition of the definition of entrypoints ensures that we only consider entrypoints that actually allow jobs in $ A(t) \cap L_{\geq x+1}^i$ to reach layer $L_x^i$. This is because these are the relevant jobs for~\Cref{lem:segment:bottleneck:capacity}. %

The following definition describes paths that go from a higher layer to a lower layer.

\begin{definition}[$x$-entry path]
  We call a path $P$ in $G_B$ \emph{$x$-entry path} if $P$ starts at some job $j \in L_{\geq x+1}^i$, ends at some entrypoint $k \in H_x^i$, and, apart from $k$, only visits jobs in $L_{\geq x+1}^i$.
\end{definition}

Having established layers and entrypoints with respect to $i \in O(t)$, we continue with the definition of \emph{bottlenecks}, which are special entrypoints. As the name suggests, a bottleneck $k$ is an entrypoint if there is some job $j \in A(t)$ that can \emph{only} borrow from $i$ via transferring work through $k$. Formally, we have the following definition.

\begin{definition}
  An entrypoint $k \in H_x^i$ is a \emph{bottleneck} of layer $L_x^i$ if 
  there exists a job $j \in (A(t) \cap L_{x+1}^i) \cup H_{x+1}^i$ that can \emph{only} reach layer $L_x^i$ via paths in $G_B$ that contain $k$.   
\end{definition}

We emphasize that a layer may have no bottleneck. 
On the other side, we show that every layer has also at most one bottleneck.
We will use $b_x^i$ to refer to the bottleneck of layer $L_x^i$ if it has one.

\begin{restatable}{lemma}{uniqueBottleneck}
  \label{thm:bottleneck:uniqueness}
  Every layer of job $i$ has at most one bottleneck.
\end{restatable}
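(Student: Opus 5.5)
We argue by contradiction: suppose layer $L_x^i$ has two distinct bottlenecks $b \neq b'$, both in $H_x^i$. Let $j \in (A(t)\cap L_{x+1}^i)\cup H_{x+1}^i$ witness that $b$ is a bottleneck, so every path from $j$ into $L_x^i$ contains $b$. Let $j'$ be the analogous witness for $b'$. Since $b'$ is an entrypoint, there is an $x$-entry path $P'$ from some job in $A(t)\cap L_{\ge x+1}^i$ to $b'$ that stays in $L^i_{\ge x+1}$ until it reaches $b'$. The plan is to use the lifetime structure, namely \Cref{obs:flow:cut-lifetime}, \Cref{obs:lifetime:2} and \Cref{obs:flow:cut-closure}, together with the structure of the algorithm, in particular \Cref{obs:clairvoyant-jobs-block-earlier-jobs}. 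From these we want to show that $j$ can reach $b'$ without passing through $b$. That gives a path from $j$ into $L_x^i$ avoiding $b$, which is the desired contradiction.

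\textbf{Step 1: pin down the time at which each bottleneck is entered.} Each entrypoint $k\in H_x^i$ is entered by a $\delta$-edge $(v,k)\in E_C$. So $k$ is executed while clairvoyant at some time $t'\in I_v$. Consider the last such clairvoyant execution of $b$ and of $b'$. By \Cref{obs:clairvoyant-jobs-block-earlier-jobs}, once a clairvoyant job $k$ runs at $t'$, every job alive at $t'$ that was released before $t'$ is frozen until $C_k$ and completes after $k$. Hence the clairvoyant processing intervals of $b$ and $b'$ are nested or disjoint. We may assume without loss of generality that $b$ is the one whose clairvoyant run $[s, C_b]$ is later or outermost.

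\textbf{Step 2: show that every relevant job's lifetime covers the bottleneck's clairvoyant run.} The witness $j$ lies in $A(t)$ or is an entrypoint of the next layer, and it reaches $L_x^i$ only via $b$. Since $A(t)$-jobs are alive until $t$, \Cref{obs:lifetime:2} shows that the lifetime of any path through $j$ extends to $t$. We then want to argue that the lifetime $I(P)$ of the path from $j$ to $b$ contains the clairvoyant execution time of $b'$, or that the path $P'$ to $b'$ has a lifetime intersecting $I_j$. Given such a time in the common lifetime, \Cref{obs:flow:cut-closure} puts $b'$, or a job on $P'$, in $R_j$ via an edge that avoids $b$. We then splice that edge with the remainder of $P'$ to obtain a path from $j$ to $b'$. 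This path stays in $L^i_{\ge x+1}$ until it reaches $b'$, which contradicts the bottleneck property of $b$. If the orientation from Step 1 is the other way round, we make the symmetric argument with $j'$ and $b$.

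\textbf{Main obstacle.} The hard part is Step 2. We must ensure the spliced path really avoids $b$ and does not drop into $L_x^i$ or lower layers prematurely. We also need to handle the case where $b$ itself is the only job alive across the gap, so that $b$ separates the two time intervals. For this we would use the freezing property: during $b$'s clairvoyant run no older job is processed. Consequently, any job executed there was released after the run began, and such jobs give $E_N$ or $E_C$ edges directly from $j$, whose lifetime contains the run, to the jobs on $P'$. If this becomes delicate, the fallback is to show that two bottlenecks would force two disjoint time intervals each covered only by one of them. That contradicts \Cref{obs:lifetime:1} applied to a path from $j$ through $b$ to $i$, which must pass through $b'$'s interval as well.
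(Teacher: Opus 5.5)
Your proposal stops where the actual proof has to happen. Step 2 only states what you want to show, and the tools you name cannot supply it.

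\Cref{obs:flow:cut-closure} places any job executed during $I(R_j)$ in $R_j$. But $R_j$ contains $b$ and everything downstream of $b$, so the observation gives no path that avoids $b$, and your splicing step is unsupported. To get a $b$-free path you would have to work with the lifetime of a path that stays inside $L^i_{\ge x+1}$. You would then have to rule out that the new edge ends at $b$ or elsewhere in $L^i_{\le x}$, which uses the layer definition essentially; compare how the proof of \Cref{lem:entrypoint:reachability} shows $h \neq k_x'$.

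Your appeal to \Cref{obs:lifetime:2} also covers only witnesses in $A(t)$. A witness in $H^i_{x+1}$ may complete long before $t$, so lifetimes of paths through it need not reach $t$. The fallback via \Cref{obs:lifetime:1} fails as well. If a path from $j$ through $b$ covers the time when $b'$ runs, the resulting edge into $b'$ may start at $b$ or at a job after $b$, which contradicts nothing.

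The missing idea is that the direction of the argument is set by the \emph{witnesses}, not by the timing of the bottlenecks' clairvoyant runs. The laminarity in Step 1 plays no role, and your closing ``symmetric argument'' shows that no orientation is actually fixed. The paper reduces the lemma to \Cref{lem:unique:bottleneck:1}: two witnesses that each reach only one entrypoint of $L^i_x$ via $x$-entry paths reach the same one. It then splits into three cases.

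If both witnesses lie in $A(t)\cap L^i_{x+1}$, the earlier-released one inherits every out-edge of the other, since both are alive until $t$. It therefore reaches the other's entrypoint by swapping the first vertex of the path.

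If both lie in $H^i_{x+1}$, the paper uses \Cref{lem:entrypoint:reachability} via \Cref{coro:entrypoint:reachability-entrypaths}. That lemma is a genuine case analysis built on \Cref{obs:entrypoint:paths}, \Cref{obs:entrypoint:order} and \Cref{lem:ec:cuts:prop:1}.

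In the mixed case, \Cref{lem:aux:multiple:bottlenecks:1} shows that an active job of $L^i_{x+1}$ reaches every entrypoint of $L^i_x$ that an entrypoint $k_{x+1}$ of $L^i_{x+1}$ reaches. It does so by locating $t_{k_{x+1}}$ in $I(P'\setminus\{k_x'\})$ for an $x$-entry path $P'$ of the active job.

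None of these three reachability-inclusion statements appears in your plan. The two involving entrypoint witnesses are where essentially all the work of the lemma lies.
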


Next, we consider situations of the following type: Assume $L_x^i$ and $L_{x+1}^i$ both have bottlenecks $k_x$ and $k_{x+1}$. From what we have shown so far, it is not clear that a job $j \in A(t) \cap L^i_{x+2}$ that can reach jobs in $L_{x+1}^i$ only via paths that contain the bottleneck $k_{x+1}$ can only reach jobs in $L_{x}^i$ only via paths that contain the bottleneck $k_x$. We show that this is indeed the case. %

\begin{restatable}{lemma}{layeredBottlenecks}
  \label{coro:multiple:bottlenecks}
  Consider a layer $L_x^i$ and let $j \in A(t) \cap L_{x+1}^i$ be a job that only can reach $L_x^i$ through the bottleneck of $L_x^i$. Then, $j$ can only reach $i$ via paths that contain \emph{all} bottlenecks of the layers $L_y^i$ with $y \leq x$.
\end{restatable}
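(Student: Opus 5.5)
We argue by induction on $x$, going downward from $x$ to $0$. The structural fact we rely on is the layer structure of $G_B$ with respect to $i$. Every path from a job in $L_{\ge y+1}^i$ to $i$ must at some point enter $L_{\le y}^i$. Recall that $L_y^i$ consists of the jobs that need exactly $y$ edges of $E_C$ to reach $i$. Consequently, an edge $(v,u)$ with $v \in L_{y'}^i$ and $u \in L_{y''}^i$ satisfies $y'' \ge y'-1$ if it lies in $E_C$, and $y'' \ge y'$ if it lies in $E_N$. Hence a path from $L_{\ge y+1}^i$ that first enters $L_{\le y}^i$ does so through a $\delta_y^i$-edge into $L_y^i$, and the last part of the path before this point stays in $L_{\ge y+1}^i$. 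If this part starts at a job of $A(t)$ in a higher layer, then the entered vertex is an entrypoint in $H_y^i$. We will use this to write every path from $j$ to $i$ as a concatenation of consecutive $y$-entry paths.

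\textbf{Step 1 (base claim).} Take a path $P$ from $j$ to $i$. It must reach $L_x^i$, and by hypothesis it contains $b_x^i$. Let $P'$ be the suffix of $P$ starting at the last occurrence of $b_x^i$; it is a path from $b_x^i$ to $i$.

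\textbf{Step 2 (propagation).} Assume that layer $L_{x-1}^i$ has a bottleneck $b_{x-1}^i$. We claim that every path from $b_x^i$ to $L_{x-1}^i$ passes through $b_{x-1}^i$. Since $b_x^i \in H_x^i$, it is a legitimate witness in the definition of a bottleneck of $L_{x-1}^i$: that definition allows witnesses from $(A(t)\cap L_x^i) \cup H_x^i$. The difficulty is that $b_{x-1}^i$ is defined through \emph{some} witness that can only reach $L_{x-1}^i$ via $b_{x-1}^i$, and this witness need not be $b_x^i$. To connect the two, we use \Cref{thm:bottleneck:uniqueness} in its proof form. Suppose $b_x^i$ had a path $Q$ into $L_{x-1}^i$ that avoids $b_{x-1}^i$. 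Let $w$ be the witness for $b_{x-1}^i$. We would like $w$ to reach $b_x^i$, or to reach the vertices on $Q$ after $b_x^i$, while staying in $L_{\ge x}^i$. Then $w$ would obtain a path into $L_{x-1}^i$ avoiding $b_{x-1}^i$, a contradiction.

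To get such a connection, we exploit the structure of our algorithm. By \Cref{obs:clairvoyant-jobs-block-earlier-jobs}, a clairvoyantly executed job blocks all earlier jobs. Together with \Cref{obs:lifetime:1}, which says that lifetimes along paths are intervals, this should show that $E_C$-edges entering a layer are ordered in time. In particular, lifetimes within $L_{\ge x}^i$ that reach $L_{x-1}^i$ overlap the lifetime of the latest-emitting entrypoint. We expect this step to be the main obstacle. If the direct argument fails, we would instead restrict to paths that start from the specific job $j$, which lies in $A(t)$, so that its lifetime extends to $t$. We would then argue that every vertex of $L_x^i$ reachable from $j$ after passing $b_x^i$ is reachable from the witness of $b_{x-1}^i$, again via \Cref{obs:lifetime:1}.

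\textbf{Step 3 (conclusion).} Iterating Step 2 from $y=x-1$ down to $0$, skipping layers without bottlenecks, we obtain the following. The suffix $P'$ has to pass $b_{x-1}^i$. Its suffix after the last occurrence of $b_{x-1}^i$ has to pass the next bottleneck, and so on. By construction these occurrences appear along $P$ in order, so $P$ contains all bottlenecks of the layers $L_y^i$ with $y \le x$. The bottleneck $b_0^i$, if it exists, is handled in the same way, with $L_0^i$ being the layer that contains $i$.
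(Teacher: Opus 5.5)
Your skeleton matches the paper's. You cut a path from $j$ to $i$ into consecutive entry paths and show that, from any entrypoint of $L_{y+1}^i$, layer $L_y^i$ can only be entered at $b_y^i$. This is \Cref{obs:entrypoint:bottleneck}, combined as in \Cref{thm:multiple:bottlenecks}, and your Steps 1 and 3 are fine. One small point: when you skip a layer without a bottleneck, Step 2 must be applied to whichever entrypoint of $L_{y+1}^i$ your path enters through, not to a bottleneck. Your argument only uses $b_x^i\in H_x^i$, so you only need to state this. The real gap is that Step 2 carries the entire content of the lemma, and you do not prove it. ``This should show that $E_C$-edges entering a layer are ordered in time'' is a heuristic, not an argument. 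Your fallback does not help either. The comparison you need is between the entrypoint $k_{y+1}$ through which your path enters $L_{y+1}^i$ and the witness $w$ of $b_y^i$, and neither of them is $j$.

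What is missing are two transfer statements. First, any two entrypoints of $L_{y+1}^i$ reach exactly the same entrypoints of $L_y^i$ via $y$-entry paths (\Cref{lem:entrypoint:reachability}, \Cref{coro:entrypoint:reachability-entrypaths}). Second, a job in $A(t)\cap L_{y+1}^i$ reaches every entrypoint of $L_y^i$ that some entrypoint of $L_{y+1}^i$ reaches (\Cref{lem:aux:multiple:bottlenecks:1}). With these, Step 2 is a two-line case split on whether $w\in H_{y+1}^i$ or $w\in A(t)\cap L_{y+1}^i$.

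Proving the first statement is where the algorithm-specific work lies. Your splicing idea (``the lifetimes overlap, so some job on one path runs during the lifetime of a job on the other'') fails in exactly the cases that matter:
\begin{itemize}
\item The job whose lifetime contains the overlap may be the endpoint $k_y'\in L_y^i$ itself. The spliced path then leaves $L_{\ge y+1}^i$; the paper rules this out with \Cref{obs:entrypoint:order} and the $\delta$-edge properties of \Cref{lem:ec:cuts:prop:1}.
\item If the lifetime of the avoiding path ends strictly before that of the other path, there need be no splice at all. One must instead show that every job on the other path is last executed before $C_{k_y'}$, which contradicts \Cref{obs:entrypoint:order}. This uses \Cref{obs:clairvoyant-jobs-block-earlier-jobs} and the absence of $E_N$-edges from higher layers into $L_{y+1}^i$ and $L_y^i$.
\end{itemize}
Until you supply such an argument, your downward induction has no inductive step.
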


The next lemma and its corollary show that jobs $j$ that belong to a segment $S$ and satisfy $i \in R_j \setminus O_S$ can \emph{only} reach $i$ via paths that contain at least one bottleneck. We prove the lemma in~\Cref{app:no-bottleneck-borrowing}.

\begin{restatable}{lemma}{bottleneckFreePaths}
  \label{thm:yinequality:no:bottlenecks}
  If there exists at least one path from $j \in A(t)$ to some job $i$ in $G_B$ that does not contain any bottleneck, then $\by_j \ge \by_i$.
\end{restatable}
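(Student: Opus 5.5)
We argue by induction on the layer index of $j$ with respect to $i$, using \Cref{lem:nc-borrowing} as the base case. Fix $j \in A(t)$ and a path $P$ from $j$ to $i$ in $G_B$ that contains no bottleneck. If $j \in L_0^i$, then $j$ reaches $i$ via an $E_N$-path and \Cref{lem:nc-borrowing} gives $\by_j \ge \by_i$ directly. Otherwise $j \in L_x^i$ for some $x \ge 1$. We then decompose $P$ at the last edge of $E_C$ that enters the layer below the current one: the path leaves $L_{\ge x}^i$ through some $\delta$-edge $(v,k)$ with $k$ an entrypoint of a lower layer (condition (ii) of entrypoints holds because $j\in A(t)$ starts the path). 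Since $P$ avoids bottlenecks, $k$ is a non-bottleneck entrypoint. The aim is a chain of inequalities $\by_j \ge \by_{k} \ge \dots \ge \by_i$. However, $k$ need not be in $A(t)$, so the inequality cannot simply be propagated via \Cref{lem:nc-borrowing}. I would therefore strengthen the induction hypothesis to: for every entrypoint $k$ reachable from $j$ by a bottleneck-free entry path, $\by_j \ge \by_k$. The tail case is then handled by \Cref{lem:nc-borrowing} (or its proof, extended to jobs completed by $t$) applied from $k$'s layer down to $i$.

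The key step is comparing $j$ with a non-bottleneck entrypoint $k \in H_x^i$ reached via the $\delta$-edge $(v,k) \in E_C$. Here $k$ was executed clairvoyantly during $I_v$, so $k$ has emitted, and therefore $\by_k = \alpha p_k$. By \Cref{obs:clairvoyant-jobs-block-earlier-jobs}, every job released before that time and alive then is not processed until $C_k$, and $C_k \le C_{v'}$ for such jobs. Since $k$ is not a bottleneck, $j$ has an alternative route into $L_x^i$ avoiding $k$. Intuitively, this forces $j$ (or the jobs on its path) to exist while some non-clairvoyant execution happens, which by Catch-Up (\Cref{lemma:ketchup}) and \Cref{obs:sept-like} yields $y_j \ge$ the progress at which $k$ was run non-clairvoyantly, i.e.\ $\by_j \ge \alpha p_k = \by_k$. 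Concretely, I would consider the time $s_k$. If $r_j < s_k$, then $j$ was alive when $k$ was processed in $N$, and SETF-like behaviour gives $y_j(s_k)\ge y_k(s_k)=\alpha p_k$ as long as $j$ stays non-clairvoyant (otherwise $\by_j = \alpha p_j$ and one uses that $j$ emitted after being compared). If $r_j > s_k$, then every path from $j$ into $L_x^i$ must pass through $k$ by \Cref{obs:clairvoyant-jobs-block-earlier-jobs} and \Cref{obs:flow:cut-closure}: lifetimes after $s_k$ only see $k$ and later jobs until $C_k$. That makes $k$ a bottleneck, contradicting the assumption.

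The main obstacle is this last dichotomy. It requires showing that if $j$ is released after the relevant clairvoyant execution of $k$, then $k$ really separates $j$ from $L_x^i$, i.e.\ $k$ satisfies the bottleneck definition (with $j \in A(t)\cap L_{x+1}^i$ or $j$ reaching via $H_{x+1}^i$). I would attempt it via \Cref{obs:lifetime:1}: any $E_N$/$E_C$ edge from a job released after $s_k$ to a job in $L_x^i \setminus \{k\}$ would require that job to be executed after $s_k$ and before $C_k$, which the blocking observation forbids for jobs alive at $s_k$. Jobs released in $(s_k, C_k)$ and executed there would be clairvoyant-priority over $k$, and they lie in higher layers. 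Handling these intermediate clairvoyant jobs, possibly by a secondary induction on release dates, is where I expect most of the work. After that, chaining the per-layer inequalities with \Cref{lem:nc-borrowing} inside each layer finishes the proof.
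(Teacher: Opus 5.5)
Your proposal has a genuine gap. The step comparing $j$ with the specific entrypoint $k$ on $P$ fails, and the strengthened induction hypothesis behind it is false.

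\textbf{The case analysis does not work.} The case $r_j<s_k$ is vacuous. Since $j\in A(t)$ is alive while $k$ runs non-clairvoyantly just before $s_k\le t$, you would get $(j,k)\in E_N$, which contradicts $k$ lying in a lower layer than $j$ (compare \Cref{lem:ec:cuts:prop:1}). So everything rests on $r_j\ge s_k$. There your claim that $k$ must separate $j$ from the lower layer is false. \Cref{obs:clairvoyant-jobs-block-earlier-jobs} blocks jobs alive at $s_k$ only until $C_k$. Another entrypoint $k'$ that emitted earlier, with larger remaining time, can be run clairvoyantly inside $I_j$ after $C_k$. That gives $j$ a second $\delta$-edge into the same layer, which is exactly the situation of \Cref{lem:multiple:entries:1}.

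\textbf{A counterexample to the strengthened hypothesis.} Take $\alpha=\frac12$, $t=49$, and jobs $h$ ($r_h=0$, $p_h=100$); $k_2$ ($r_{k_2}=0$, $p_{k_2}=20$); $k_1$ ($r_{k_1}=25$, $p_{k_1}=8$); $i$ ($r_i=25$, $p_i=6$); $j$ ($r_j=36$, $p_j=100$). The algorithm runs:
$h,k_2$ in parallel on $[0,20]$; $k_2$ on $[20,25]$; $k_1,i$ in parallel on $[25,31]$; $i$ on $[31,34]$; $k_1$ on $[34,36]$; $j$ on $[36,39]$; $k_1$ on $[39,42]$; $j$ on $[42,44]$; $k_2$ on $[44,49]$.
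The only other jobs run during $I_j$ are $k_1$ and $k_2$, both clairvoyantly. Job $i$ runs non-clairvoyantly during $[25,31]\subseteq I_{k_1}\cap I_{k_2}\cap I_h$. Hence $L^i_0=\{i,k_1,k_2,h\}$, $L^i_1=\{j\}$, $H^i_0=\{k_1,k_2\}$, and no layer has a bottleneck. The path $(j,k_2,i)$ is bottleneck-free and $r_j>s_{k_2}$, yet $\by_j=5<10=\by_{k_2}$. Only $\by_j\ge\min\{\by_{k_1},\by_{k_2}\}=4$ holds, and the lemma's conclusion $\by_j\ge\by_i=3$ is still true. So a chain of inequalities through the entrypoints of the given path cannot exist in general.

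\textbf{The missing idea.} A layer crossing can only be compared against the \emph{minimum} over two reachable entrypoints, so the chain must be built adaptively rather than along $P$.
\begin{itemize}
\item \Cref{lem:multiple:entrypoints}: if $j\in(A(t)\cap L^i_{x+1})\cup H^i_{x+1}$ reaches two distinct entrypoints $k_1,k_2$ of $L^i_x$ via $x$-entry paths, then $\by_j\ge\min\{\by_{k_1},\by_{k_2}\}$. Its proof (\Cref{lem:multiple:entries:1} and the two lemmas after it) is a case analysis of the clairvoyant tie-breaking between $k_1$ and $k_2$. It is driven by \Cref{obs:ec:cut:aux:1}: a clairvoyant execution of an entrypoint $k_2$ at $t'$ with $p_{k_2}(t')\ge(1-\alpha)\min\{p_{k_1},p_{k_2}\}$ forces every job alive at $t'$ to have truncated progress at least $\min\{\by_{k_1},\by_{k_2}\}$.
\item \Cref{lem:aux:inequality} then uses \Cref{lem:unique:bottleneck:1} and \Cref{thm:multiple:bottlenecks}. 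Once some bottleneck-free path exists, $j$ enters the layer directly below it through two distinct entrypoints, and no lower layer has a bottleneck. So the entrypoint chosen at each step again reaches two entrypoints of the next layer.
\item The last step, from the chosen $k\in H^i_0$ down to $i$, is not a routine extension of \Cref{lem:nc-borrowing} to completed jobs. Such jobs can have a bad first $E_N$-edge (\Cref{fig:ex:bad-edge}). The paper needs \Cref{lem:good-path-from-entrypoints}, which uses condition (ii) of entrypoints to show the first edge is good.
\end{itemize}
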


\begin{corollary}
  \label{coro:yinequality:no:bottlenecks}
  Let $j \in S$ for a segment $S \in \mathcal{S}$ and let $i \in R_j \cap O$.
  If $j$ can reach $i$ without visiting any bottleneck, then $i \in O_S$. Consequently, if $i \not\in O_S$, then any path from $j$ to $i$ must contain at least one bottleneck.
\end{corollary}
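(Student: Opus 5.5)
This is a direct consequence of \Cref{thm:yinequality:no:bottlenecks} together with the definitions of $O_j$ and segments. The plan is to prove the first statement, and then obtain the second as its contrapositive.

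For the first statement, fix $j \in S$ and $i \in R_j \cap O$, and suppose there is a path in $G_B$ from $j$ to $i$ that avoids every bottleneck. Since $S \subseteq N \setminus O \subseteq A(t)$, we have $j \in A(t)$. Hence the hypothesis of \Cref{thm:yinequality:no:bottlenecks} holds, and it gives $\by_j \ge \by_i$. By \Cref{def:steal-types}, this means $i \in O_j$. By \Cref{def:segment}, $O_j = O_S$ for every $j \in S$, so $i \in O_S$.

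For the second statement, suppose $i \in (R_j \cap O) \setminus O_S$. Since $i \in R_j$, at least one path from $j$ to $i$ exists. If some such path contained no bottleneck, the first part would give $i \in O_S$, which is a contradiction. So every path from $j$ to $i$ contains at least one bottleneck.

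One point needs care. Bottlenecks are defined relative to the fixed target job $i$, through its layers $L^i_x$. Therefore ``bottleneck'' in the corollary must be read with respect to that same $i$, matching how \Cref{thm:yinequality:no:bottlenecks} is stated for a given $i$. Once this is noted, there is no real obstacle, and the corollary is pure bookkeeping.
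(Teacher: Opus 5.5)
Your proof is correct and is the argument the paper intends (it states the corollary as immediate from \Cref{thm:yinequality:no:bottlenecks} without a separate proof): since $j \in S \subseteq N(t) \subseteq A(t)$, that lemma gives $\bar{y}_j \ge \bar{y}_i$, hence $i \in O_j = O_S$, and the second claim is the contrapositive. Your reading of ``bottleneck'' as relative to the layers of the target job $i$ is also the intended one.
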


\subsection{Fine-Grained Borrowing}
\label{sec:fine-grained}

In the previous section, we showed that a path from a job $j \in S$ for a segment $S$ to a job $i \in O(t) \setminus O_S$ must contain at least one bottleneck of $i$. In this section, we generalize this statement and show that there is \emph{one} bottleneck that is part of \emph{every} path from $j$ to \emph{any} $i \in O(t) \setminus O_S$.

To this end, we first introduce some more notation.
Fix a job $i \in O(t)$ and consider a job $j \in A(t)$ with $i \in R_j$. We use $\ell_{ji}$ to denote the index of the layer $L_{x}^i$ with $j \in L_{x}^i$. If every path from $j$ to $i$ contains at least one bottleneck of $i$, then we use $\sigma_{ji}$ to denote the index of the first layer that $j$ cannot reach without visiting a bottleneck, i.e., every path from $j$ to some $i' \in R_j \cap L^i_{\le \sigma_{ji}}$ visits at least one bottleneck of $i$ and for every layer $L_x^i$ with $\sigma_{ji} < x \le \ell_{ji}$ there exists an entrypoint $k \in H_x^i$ that $j$ can reach using a path without bottleneck of $i$. We use $b_{ji}$ 
to refer to the unique bottleneck (cf.\ \Cref{thm:bottleneck:uniqueness}) of layer $L_{\sigma_{ji}}^i$. Intuitively, $b_{ji}$ is the first bottleneck to appear on any path from $j$ to~$i$.

The next theorem and its corollary show that the first bottlenecks on paths from $j \in S$ to any $i \in R_j \setminus O_S$ are the exact same job.

\begin{theorem}
    \label{thm:unique:earliest:bottleneck}
    Let $j \in A(t)$ and let $i_1,i_2$ be two distinct jobs that $j$ can only reach via paths that contain at least one bottleneck. Then, $b_{ji_1} = b_{ji_2}$ or $\by_j \ge \min\{ \by_{i_1}, \by_{i_2}\}$.
\end{theorem}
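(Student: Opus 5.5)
We argue by contradiction: assume $b_1 \coloneq b_{ji_1} \neq b_2 \coloneq b_{ji_2}$, and derive $\by_j \ge \min\{\by_{i_1},\by_{i_2}\}$. The idea is that two distinct first bottlenecks give $j$ two essentially independent routes of borrowing. On each route we find a stretch, before the bottleneck, that $j$ can traverse without passing any bottleneck. We then show that the algorithm's schedule forces the progress of $j$ to dominate the progress of one of the $i_\ell$.

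\textbf{Step 1: structure at $b_1$ and $b_2$.} By definition of $\sigma_{ji_\ell}$, job $j$ reaches some entrypoint in every layer $L^{i_\ell}_x$ with $\sigma_{ji_\ell} < x \le \ell_{ji_\ell}$ without visiting a bottleneck of $i_\ell$. Every continuation to $L^{i_\ell}_{\sigma_{ji_\ell}}$ then passes $b_\ell$ via a $\delta$-edge in $E_C$. So $b_\ell$ is executed clairvoyantly during the lifetime of some predecessor. By Observation~\ref{obs:clairvoyant-jobs-block-earlier-jobs}, every job alive at that moment is blocked until $C_{b_\ell}$. I would use Corollary~\ref{obs:flow:cut-lifetime} on bottleneck-free path prefixes to locate the lifetimes $I_{b_1}$, $I_{b_2}$ relative to $I_j$. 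I expect them to be nested or ordered; by symmetry assume $C_{b_1} \le C_{b_2}$, or that $b_1$ is released later.

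\textbf{Step 2: derive the progress inequality.} Since $b_1 \neq b_2$, the bottleneck $b_2$ must be avoidable on some path towards $i_1$, or $b_1$ towards $i_2$. Otherwise both would lie on both routes, and the uniqueness statements Lemma~\ref{thm:bottleneck:uniqueness} and Lemma~\ref{coro:multiple:bottlenecks} would force the first bottlenecks to coincide. From such an avoidance I would build a path from $j$ to, say, $i_1$ that uses no bottleneck of $i_1$. This could be done by rerouting through the layers of $i_2$, using that $b_1$'s clairvoyant execution blocks all earlier-released jobs, so that entry into the lower layers of $i_1$ is also possible via edges present in $i_2$'s structure. Lemma~\ref{thm:yinequality:no:bottlenecks} then yields $\by_j \ge \by_{i_1}$. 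If no rerouting works, I would fall back on the catch-up argument (Lemma~\ref{lemma:ketchup}) together with Lemma~\ref{lem:nc-borrowing}. The aim is to show that $j$ was non-clairvoyantly processed alongside a job that reaches $i_1$ or $i_2$ via an $E_N$-path, which gives the truncated inequality.

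\textbf{Main obstacle.} The hard step is the rerouting in Step 2. Layers, entrypoints and bottlenecks are defined relative to a fixed target $i$, so a path that avoids bottlenecks of $i_2$ may still hit bottlenecks of $i_1$. I would first try to show that if $b_1$ and $b_2$ are distinct, then each is not a bottleneck for the other target. Lifetime intervals would do the work here: a bottleneck's clairvoyant execution splits time into a before-part and an after-part, and two distinct first bottlenecks would have to sit at incomparable positions, contradicting Observation~\ref{obs:clairvoyant-jobs-block-earlier-jobs}. I would then apply Lemma~\ref{thm:yinequality:no:bottlenecks} to the resulting bottleneck-free path.
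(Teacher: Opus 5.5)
Your proposal has a genuine gap: the mechanism in Step~2 cannot produce the conclusion. By the hypothesis of the theorem, every path from $j$ to $i_1$ contains a bottleneck of $i_1$; indeed $b_1 = b_{ji_1}$ lies on every such path. So no rerouting ``through the layers of $i_2$'' can yield a bottleneck-free $j$-$i_1$-path, and \Cref{thm:yinequality:no:bottlenecks} is never applicable here.

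Your construction also uses only $b_1 \neq b_2$, not the assumption $\by_j < \min\{\by_{i_1},\by_{i_2}\}$. If it worked, it would show that $b_1 = b_2$ always holds. But $b_1 \neq b_2$ is consistent with the hypotheses, which is exactly why the theorem has a second disjunct. For the same reason, your plan to show that two distinct first bottlenecks ``sit at incomparable positions'' aims at a contradiction that does not exist: in the distinct case their lifetimes turn out to be nested. The catch-up fallback (\Cref{lemma:ketchup}) is the wrong tool as well. The needed comparison is against a clairvoyant execution of a bottleneck, i.e., the algorithm's SRPT threshold, not SETF progress.

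What is missing is the following chain.
\begin{enumerate}[(i)]
\item Normalize by $s_{b_1} \le s_{b_2}$, not by completion or release times. \Cref{obs:unique:earliest:bottleneck:2} gives $s_{b_2} \le t_{b_1}$, and \Cref{obs:unique:earliest:bottleneck:1} gives $r_{b_2} \ge s_{b_1}$. Then \Cref{obs:clairvoyant-jobs-block-earlier-jobs} yields $I_{b_2} \subseteq I_{b_1}$.
\item This forces $b_2 = i_2$. Otherwise the successor of $b_2$ on a $b_2$-$i_2$-path is processed during $I_{b_1}$, so $b_1$ reaches it. Since $j$ reaches $b_1$ without visiting $b_2$, $j$ would then reach $i_2$ while avoiding $b_2$.
\item Let $t' \ge C_{i_2}$ be the next execution of $b_1$. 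Since $i_2$ finished before $b_1$ after emitting, $p_{b_1}(t') = p_{b_1}(s_{i_2}) \ge (1-\alpha)p_{i_2}$.
\item If $r_j \le t'$, \Cref{obs:ec:cut:aux:1} gives $\by_j \ge \by_{i_2}$.
\item Otherwise, some job $d$ on a $b_1$-avoiding $j$-$i_2$-path is alive at $t'$ with $(d,b_1) \in \delta^{i_1}_{\sigma_{ji_1}}$. Then \Cref{lem:stealing:via:bottlenecks:2} yields $\frac{1-\alpha}{\alpha}\by_j \ge p_{b_1}(t') \ge (1-\alpha)p_{i_2}$, i.e., $\by_j \ge \alpha p_{i_2} \ge \by_{i_2}$.
\end{enumerate}
So the inequality in the distinct case comes from comparing $\by_j$ with the remaining processing time of the outer bottleneck $b_1$, not from any bottleneck-free path.
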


\begin{corollary}
  \label{coro:unique:earliest:bottleneck}
  Let $j \in S$ for a segment $S$ and $i_1,i_2 \in R_j \setminus O_S$, then $b_{ji_1} = b_{ji_2}$.
\end{corollary}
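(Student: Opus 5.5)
The corollary should follow directly from \Cref{thm:unique:earliest:bottleneck}, once we check that its hypotheses hold and that its second alternative is excluded.

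Fix a segment $S$, a job $j \in S$, and $i_1,i_2 \in R_j \setminus O_S$.

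First, I show that both $b_{ji_1}$ and $b_{ji_2}$ are well-defined. Each $i_m$ lies in $R_j \cap O$, since $O_S \subseteq O$ and the statement is about jobs of $O$ from which $j$ may borrow. Because $i_m \notin O_S = O_j$, \Cref{coro:yinequality:no:bottlenecks} says that every path from $j$ to $i_m$ contains at least one bottleneck of $i_m$. Hence $\sigma_{ji_m}$ is defined, and by \Cref{thm:bottleneck:uniqueness} so is the unique bottleneck $b_{ji_m}$. If $i_1 = i_2$, the claim is trivial, so assume the two jobs are distinct. Since $j \in S \subseteq N \setminus O \subseteq A(t)$, both jobs now satisfy the hypotheses of \Cref{thm:unique:earliest:bottleneck}.

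Second, I apply \Cref{thm:unique:earliest:bottleneck}. It gives either $b_{ji_1} = b_{ji_2}$, or $\by_j \ge \min\{\by_{i_1},\by_{i_2}\}$. Suppose the second case holds, and say the minimum is attained at $i_m$. Then $\by_j \ge \by_{i_m}$ with $i_m \in O$. By \Cref{def:steal-types} this means $i_m \in O_j = O_S$, which contradicts $i_m \in R_j \setminus O_S$. So the first case must hold, and $b_{ji_1} = b_{ji_2}$.

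The only delicate point is that $O_j$ is defined by comparing truncated progress $\by$ alone, without requiring reachability. Because of this, the inequality supplied by the theorem translates immediately into membership in $O_S$. The bottleneck reasoning is all contained in \Cref{thm:unique:earliest:bottleneck}, so I expect no further obstacle here.
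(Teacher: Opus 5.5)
Your proof is correct and is the derivation the paper intends; the paper states this corollary without a separate proof, as a direct consequence of \Cref{thm:unique:earliest:bottleneck}, with \Cref{coro:yinequality:no:bottlenecks} making $b_{ji_1},b_{ji_2}$ well-defined and the alternative $\by_j \ge \min\{\by_{i_1},\by_{i_2}\}$ excluded because it would place one of $i_1,i_2$ in $O_j = O_S$. One small wording point: $i_1,i_2 \in O$ does not follow from $O_S \subseteq O$; it is an implicit standing assumption, since the bottleneck notation $b_{ji}$ is only set up for $i \in O(t)$.
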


To prove the theorem, we will need the following auxiliary lemma. The lemma also will be helpful later to upper bound the amount of processing that $j$ can borrow from jobs in $i$ with $\by_i > \by_j$.

\begin{restatable}{lemma}{stealingViaBottlenecks}
  \label{lem:stealing:via:bottlenecks:2}
  Let $j \in A(t)$ be such that on every path from $j$ to $i$ is at least one bottleneck and let $\Delta= \{d_1,\ldots,d_r\} = \{d \in J \mid (d,b_{ji}) \in \delta^i_{\sigma_{ji}}\}$. For each $d_q \in \Delta$, let $\tau_q$ denote the earliest point in time during $I_{d_q}$ at which $b_{ji}$ is processed. 
  If $j$ can reach $d_q \in \Delta$ via an $L^i_{\geq \sigma_{ji}+1}$-path, then $\frac{1-\alpha}{\alpha} \by_j \ge p_{b_{ji}}(\tau_q)$.
\end{restatable}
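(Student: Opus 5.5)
The plan is to compare $j$ with the algorithm's decision at time $\tau_q$. At that moment the algorithm runs $b \coloneq b_{ji}$ as a clairvoyant job, so $b$ must beat every job alive at $\tau_q$. I then transfer this comparison along the $L^i_{\geq \sigma_{ji}+1}$-path from $j$ to $d_q$.

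\emph{Step 1: $b$ is clairvoyant at $\tau_q$.} I first show that $b\in C(\tau_q)$. Suppose instead that $b$ was processed non-clairvoyantly at some time during $I_{d_q}$. Then $(d_q,b)\in E_N$. Since $b\in L^i_{\sigma_{ji}}$, the job $b$ reaches $L^i_{\sigma_{ji}-1}$ via a path that uses only $E_N$-edges except for a final $E_C$-edge. Prepending the edge $(d_q,b)$ gives $d_q$ such a path, which would put $d_q$ in $L^i_{\le\sigma_{ji}}$. This contradicts $d_q\in L^i_{\sigma_{ji}+1}$. (If $\sigma_{ji}=0$, then $d_q$ would even reach $i$ by an $E_N$-path.)

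Hence the algorithm is in its SRPT branch at $\tau_q$ and picks $b$. By the rule of the algorithm and \Cref{obs:srpt-like}, $p_b(\tau_q)\le p_k(\tau_q)$ and $p_b(\tau_q)\le\frac{1-\alpha}{\alpha}\min_{k'\in N(\tau_q)}y_{k'}(\tau_q)$ for every $k\in A(\tau_q)$. Moreover, \Cref{obs:clairvoyant-jobs-block-earlier-jobs} tells us that no job alive at $\tau_q$ (other than $b$) is executed during $[\tau_q,C_b]$.

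\emph{Step 2: the bound for a job alive at $\tau_q$.} Let $k$ be any job with $r_k\le\tau_q<C_k$ and $k\in A(t)$, or more generally with $y_k$ evaluated suitably. I claim $p_b(\tau_q)\le\frac{1-\alpha}{\alpha}\by_k$. There are two cases.
\begin{itemize}
\item If $k\in N(\tau_q)$, then $p_b(\tau_q)\le\frac{1-\alpha}{\alpha}y_k(\tau_q)$. Also $y_k(\tau_q)\le\min\{y_k(t),\alpha p_k\}=\by_k$.
\item If $k\in C(\tau_q)$, then $\by_k=\alpha p_k$. In this case $p_b(\tau_q)\le p_k(\tau_q)<(1-\alpha)p_k=\frac{1-\alpha}{\alpha}\by_k$.
\end{itemize}
In particular, if $r_j\le\tau_q$, the lemma follows directly with $k=j$, since $j\in A(t)$.

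\emph{Step 3: the case $r_j>\tau_q$, which I expect to be the main obstacle.} Walk along the given path $j=k_0,k_1,\dots,k_m=d_q$ inside $L^i_{\geq\sigma_{ji}+1}$. By \Cref{obs:lifetime:1}, the lifetimes along the path form an interval reaching back to $\tau_q\in I_{d_q}$. So let $k_r$ be the first job on the path that is alive at $\tau_q$. By Step 2 applied to $k_r$, we get $p_b(\tau_q)\le\frac{1-\alpha}{\alpha}\by_{k_r}$.

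It remains to show $\by_j\ge\by_{k_r}$, or at least $\frac{1-\alpha}{\alpha}\by_j\ge p_b(\tau_q)$ directly, using the prefix $k_0,\dots,k_{r-1}$. All jobs in this prefix are released after $\tau_q$. I distinguish the type of each prefix edge.
\begin{itemize}
\item For $E_N$-edges I use \Cref{lemma:ketchup} and \Cref{lem:nc-borrowing}. These propagate the inequality $\by_{\text{tail}}\ge\by_{\text{head}}$ along $E_N$-segments of the path.
\item For an $E_C$-edge $(k_s,k_{s+1})$, the job $k_{s+1}$ is run in SRPT mode during $I_{k_s}$ at some time $\tau'$. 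This yields $p_{k_{s+1}}(\tau')\le\frac{1-\alpha}{\alpha}\by_{k_s}$ by the same two-case argument as in Step 2.
\end{itemize}
The delicate point is to chain this through to $b$. Here I would use the blocking property of \Cref{obs:clairvoyant-jobs-block-earlier-jobs}, together with the fact that the path stays in layers $\ge\sigma_{ji}+1$ and hence avoids $b$. The goal is to show that any clairvoyantly run job on the prefix is run after $C_b$ or has remaining time at most $p_b(\tau_q)$, i.e.\ remaining work bounded in terms of $p_b(\tau_q)$. I expect an induction from $k_r$ backwards to $j$ to be the cleanest route. The invariant would be that each $k_s$ satisfies $\frac{1-\alpha}{\alpha}\by_{k_s}\ge p_b(\tau_q)$, and the induction handles $E_N$- and $E_C$-edges separately as above.
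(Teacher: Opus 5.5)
Steps 1 and 2 are correct, and the case $r_j\le\tau_q$ is handled essentially as in the paper's ``Property~1''. The gap is Step~3, which is the actual content of the lemma. There you only outline a backward induction along the prefix with invariant $\frac{1-\alpha}{\alpha}\by_{k_s}\ge p_{b_{ji}}(\tau_q)$, and the two local edge arguments you list cannot carry it.

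\emph{Across an $E_C$-edge.} If $(k_s,k_{s+1})$ is used at time $\tau'$, you only get $\frac{1-\alpha}{\alpha}\by_{k_s}\ge p_{k_{s+1}}(\tau')$. The quantity $p_{k_{s+1}}(\tau')$ is unrelated to $p_{b_{ji}}(\tau_q)$: the head may be almost finished at $\tau'$ while $\by_{k_s}$ is tiny. To transfer the invariant you would need $p_{k_{s+1}}(\tau')\ge p_{b_{ji}}(\tau_q)$, which is the opposite of the ``remaining time at most $p_b(\tau_q)$'' you aim for.

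\emph{Across $E_N$-edges.} \Cref{lem:nc-borrowing} only applies to $E_N$-paths starting in $A(t)$. The intermediate jobs of your prefix may be completed, and $E_N$-edges leaving completed jobs can be bad (\Cref{fig:ex:bad-edge}).

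So a purely local induction has no reason to work. The inequality holds for $j$ because $j\in A(t)$ and because of the global layer structure.

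The missing ingredient is a global comparison of $j$ with all of $\Delta$. The paper obtains it from two lemmas.
\begin{itemize}
\item \Cref{lem:bottleneck:inequality:1}, which rests on \Cref{lem:aux:inequality:1} (i.e.\ the key \Cref{lem:multiple:entrypoints}) and on \Cref{obs:entrypoint:bottleneck}. It gives $j'\in(A(t)\cap L^i_{\sigma_{ji}+1})\cup H^i_{\sigma_{ji}+1}$ with $\by_j\ge\by_{j'}$ that $E_N$-reaches some member of $\Delta$.
\item \Cref{lem:stealing:via:bottleneck:1}. By it, $j'$ then $E_N$-reaches every $d\in\Delta$ that is run non-clairvoyantly after $C_{b_{ji}}$.
\end{itemize}
With \Cref{lem:good-path-from-active,lem:good-path-from-entrypoints} this yields $\by_j\ge\by_{j'}\ge\by_d$ for all such $d$. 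The paper then orders $\Delta$ by $\tau$. It shows that after the last violating time $\tau_{h^*}$, no violating $d_h$ and no job of $L^i_{\ge\sigma_{ji}+1}$ alive at $\tau_{h^*}$ is ever run again, while $r_j>\tau_{h^*}$.

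With these two lemmas your own decomposition closes directly. Take your $k_r$ with $r\ge 1$.
\begin{itemize}
\item Since $b_{ji}$ runs clairvoyantly at $\tau_q\in I_{k_r}$, we have $(k_r,b_{ji})\in\delta^i_{\sigma_{ji}}$, so $k_r\in\Delta$.
\item $k_r$ must be run during $I_{k_{r-1}}$, i.e.\ after $\tau_q$. By \Cref{obs:clairvoyant-jobs-block-earlier-jobs} and \Cref{lem:ec:cuts:prop:1}, it is therefore run non-clairvoyantly after $C_{b_{ji}}$.
\item Hence $\by_j\ge\by_{j'}\ge\by_{k_r}$, and your Step~2 applied to $k_r$ finishes the proof.
\end{itemize}
Without this ingredient, which carries most of the paper's layer/entrypoint analysis, Step~3 remains unproved.
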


In order to show the theorem, we also rely on the following two observations.

\begin{observation}
    \label{obs:unique:earliest:bottleneck:1}
    Let $j \in A(t)$ be such that $j$ can reach some job $i$ only via paths that contain at least one bottleneck. Let $k$ be such a bottleneck. Then, there is no path from $j$ to any other job $j'$ that is alive at $s_k^-$ that does not contain $k$.
\end{observation}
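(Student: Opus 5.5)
The plan is a proof by contradiction. Suppose some job $j'$ alive at $s_k^-$ (that is, $r_{j'} < s_k < C_{j'}$) can be reached from $j$ by a path $P_1$ in $G_B$ that avoids $k$. I want to build from $j'$ a path to $i$ that also avoids $k$. Concatenating it with $P_1$ (and shortcutting to a simple walk) gives a path from $j$ to $i$ without $k$. That contradicts $k$ being a bottleneck which every $j$–$i$ path must pass through. The observation then follows for every such $j'$.

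\textbf{Step 1: what $j'$ can directly borrow from.} I apply \Cref{obs:clairvoyant-jobs-block-earlier-jobs} at $t' = s_k$, which is allowed since $k$ emits at $s_k$. Since $r_{j'} < s_k < C_{j'}$, the job $j'$ is not executed during $[s_k, C_k]$ and $C_k \le C_{j'}$. Hence $I_{j'} \supseteq [s_k, \min\{C_k,t\}]$. So $j'$ can directly borrow from every job executed in $[s_k, C_k]$. It can also borrow from every job executed during $[r_{j'}, s_k)$, which includes the non-clairvoyant processing of $k$ immediately before $s_k$.

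\textbf{Step 2: the out-edges of $k$.} Since $k$ is a bottleneck, it is an entrypoint of some layer $L_x^i$, so $k$ can reach $i$. Fix a simple path $Q = (k = v_0, v_1, \ldots, v_m = i)$ in $G_B$. I classify the first edge $(k, v_1)$. Either $v_1$ is executed during $[s_k, C_k]$, in which case $(j', v_1)$ is an edge by Step 1. Or $v_1$ is executed during $[r_k, s_k)$; if $r_{j'} \le r_k$, that time again lies in $I_{j'}$. In both subcases $j'$ reaches $v_1$ directly, and $v_1, \ldots, v_m$ avoids $k$ because $Q$ is simple. This gives the desired $k$-free path from $j'$ to $i$.

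\textbf{Step 3 (main obstacle): the remaining case $r_k < r_{j'}$ with $v_1$ executed only before $r_{j'}$.} Here I would use the interval structure of \Cref{obs:lifetime:1}. The union of the lifetimes of $v_1, \ldots, v_m$ together with $I_{j'}$ is an interval. If some $v_a$ with $a \ge 1$ is executed during $I_{j'}$, then $j'$ borrows directly from $v_a$ and I am done. Otherwise every $v_a$ is executed outside $I_{j'}$. Since $I_{j'}$ covers $[s_k, C_k]$, this forces all of $Q \setminus \{k\}$ to run before $r_{j'}$ or after $C_{j'}$. A path $v_1 \to \cdots \to i$ whose vertices all run on one side of $I_{j'}$ then has to jump the gap, and it can only do so through lifetimes that intersect $I_{j'}$. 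By \Cref{obs:flow:cut-closure}, the jobs executed there are borrowable from $j'$. I would try to show that the jump must use a job executed inside $I_{j'}$ other than $k$. The reason is that $k$ is not executed after $C_k \le C_{j'}$, and its only processing after $r_{j'}$ lies inside $I_{j'}$ anyway. If this gap argument does not close cleanly, I would instead choose $Q$ to minimize the number of $E_C$-edges, and use the layer structure around the $\delta$-edges into $k$ to pick $v_1$ executed after $r_{j'}$.
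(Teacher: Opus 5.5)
\textbf{Gap in Step 3.} Step 3 is not merely unfinished: the claim it is meant to establish is false in general. Steps 2--3 never use the assumed $k$-free path $P_1$ from $j$. So they are in effect trying to prove that \emph{every} job alive at $s_k^-$ has a $k$-free path to $i$.

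Here is an instance with $\alpha=\tfrac12$ where that fails while all hypotheses of the observation hold. Take jobs $i,u,k,j',j$ with $(r,p)$ equal to $(0,1000)$, $(0.5,2)$, $(2.5,1.6)$, $(3.6,100)$, $(5,100)$, and $t=6$. The algorithm:
\begin{itemize}
\item runs $i$ on $[0,0.5]$, then $u$ on $[0.5,1]$;
\item shares $i,u$ on $[1,2]$, and $u$ emits at $2$;
\item runs $u$ on $[2,2.5]$, $k$ on $[2.5,3]$, and $u$ on $[3,3.5]$, where $u$ completes;
\item runs $k$ on $[3.5,3.6]$ and $j'$ on $[3.6,4.2]$;
\item shares $k,j'$ on $[4.2,4.6]$, so $s_k=4.6$;
\item runs $k$ on $[4.6,5]$, $j$ on $[5,5.4]$, $k$ on $[5.4,5.8]$ (where $k$ completes), and $j$ on $[5.8,6]$.
\end{itemize}
Then $i\in O(t)$ trivially. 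The out-neighbours are $j\to\{k\}$, $j'\to\{k,j\}$, $k\to\{u,j',j\}$ with $(k,u)\in E_C\setminus E_N$, and $u\to\{i,k\}$. Hence $L^i_0=\{i,u\}$, $L^i_1=\{k,j'\}$ and $L^i_2=\{j\}$. The job $k$ is the unique entrypoint and the bottleneck of $L^i_1$, and every $j$--$i$ path passes through it.

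The job $j'$ is alive at $s_k^-$, yet its only out-neighbours are $k$ and $j$, and $j$ only borrows from $k$. So no $k$-free $j'$--$i$ path exists. This is exactly your Step 3 situation: $r_k<r_{j'}$, $Q=(k,u,i)$, and $u$ runs during $I_k$ only before $r_{j'}$. The only job released before $r_{j'}$ that is processed during $I_{j'}$ is $k$ itself, so the gap-jumping argument cannot succeed. The observation is still true in this instance, but only because $j$ cannot reach $j'$ without $k$ either.

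\textbf{Missing idea.} You already noted in Step 1 that $k$ is processed non-clairvoyantly just before $s_k$, inside $I_{j'}$; that is, $(j',k)\in E_N$. Combine this with the layer structure instead of building a path to $i$.
\begin{itemize}
\item Let $k$ be the bottleneck of $L^i_x$, so $k\in L^i_x$. Then $(j',k)\in E_N$ gives $j'\in L^i_{\le x}$ directly from the definition of layers.
\item Since $j$ cannot reach $i$ without $k$, every path from $j$ into $L^i_{\le x}$ must contain $k$ (cf.\ \Cref{coro:multiple:bottlenecks}).
\item So a $k$-free path from $j$ to $j'\neq k$ forces $j'\in L^i_{\ge x+1}$, a contradiction.
\end{itemize}
This is the paper's short argument, and it never requires $j'$ to reach $i$ while avoiding $k$.
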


\begin{proof}
    Since $k$ is a bottleneck of some layer $L_x^i$ with $k \in L_x^i$ and $j$ cannot reach $i$ without visiting $k$, it must be the case that any path from $j$ to some job in $L_{\leq x}^i$ contains $k$ (cf.~\Cref{coro:multiple:bottlenecks}). 
    For the sake of contradiction, assume that $j$ can reach a job $j'$ that is alive at $s_k^-$ via a path that does not contain $k$. Thus, we must have $j' \in L_{\geq x+1}^i$. Moreover, as $j'$ is alive at $s_k^-$, we have $(j',k) \in E_N$. However, $(j',k) \in E_N$ and $k \in L_x^i$ imply $j' \in L_{\leq x}^i$; a contradiction.
\end{proof}

Recall that for a job $j$, we use $t_j$ to denote the latest point in time $t'$ with $t'\le t$ at which $j$ is being processed.

\begin{observation}
    \label{obs:unique:earliest:bottleneck:2}
    Let $j \in N(t)$ be such that $j$ can reach some job $i$ only via paths that contain at least one bottleneck. 
    Let $k$ be such a bottleneck.
    Then, there is no path from $j$ to any job $j'$ with $t_{j'} \le s_k^-$ that does not contain $k$.
\end{observation}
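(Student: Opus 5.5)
The plan is to reduce the statement to \Cref{obs:unique:earliest:bottleneck:1} using the interval structure of lifetimes along paths (\Cref{obs:lifetime:1}). Suppose, for contradiction, that there is a path $P = (j = u_0, u_1, \ldots, u_m = j')$ in $G_B$ that avoids $k$ and ends in a job $j'$ with $t_{j'} \le s_k^-$. I will show that some job on $P$ is alive at $s_k^-$. Then the prefix of $P$ up to that job is a $k$-free path from $j$ to a job alive at $s_k^-$, which is exactly what \Cref{obs:unique:earliest:bottleneck:1} forbids.

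\textbf{Step 1: $s_k$ is well defined and $s_k \le t$.} Since $k$ is a bottleneck of some layer $L_x^i$, it is an entrypoint. Hence there is a $\delta$-edge $(d,k) \in \delta_x^i \subseteq E_C$, so $k$ is processed at some time $\tau \le t$ with $k \in C(\tau)$. Therefore $k$ has emitted by time $\tau$, and $s_k \le \tau \le t$.

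\textbf{Step 2: the lifetime of $P$ covers $s_k^-$.} By \Cref{obs:lifetime:1}, $I(P)$ is an interval. Because $j \in N(t) \subseteq A(t)$, we have $I_j = [r_j, t]$, so $I(P)$ contains $t \ge s_k$. On the other side, $j'$ is processed at $t_{j'} \ge r_{j'}$ while it is alive, so $t_{j'} \in I_{j'} \subseteq I(P)$, and $t_{j'} \le s_k^-$. Since $I(P)$ is an interval, it contains a point just before $s_k$, i.e., $s_k^- \in I(P)$. Hence some $u_q$ on $P$ satisfies $s_k^- \in I_{u_q}$, meaning $u_q$ is alive at $s_k^-$.

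\textbf{Step 3: conclude via \Cref{obs:unique:earliest:bottleneck:1}.} If $u_q \ne j$, then the subpath $u_0, \ldots, u_q$ avoids $k$ and reaches a job alive at $s_k^-$, contradicting \Cref{obs:unique:earliest:bottleneck:1}.

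\textbf{Main obstacle: the case $u_q = j$.} Here $j$ itself is alive at $s_k^-$, and \Cref{obs:unique:earliest:bottleneck:1} is stated only for other jobs $j'$. I will repeat the argument from its proof directly.
\begin{itemize}
\item At $s_k^-$ the job $k$ is being processed non-clairvoyantly, since it is about to emit. As $j$ is alive at that moment, $(j,k) \in E_N$.
\item Since $k \in L_x^i$, this gives $j \in L_{\le x}^i$.
\item On the other hand, $j$ reaches $L_x^i$ only through the bottleneck $k$ (by \Cref{coro:multiple:bottlenecks}, since $j$ reaches $i$ only via bottlenecks). This forces $j \in L_{\ge x+1}^i$, because a job in $L_{\le x}^i$ trivially reaches $L_{\le x}^i$ without using $k$.
\end{itemize}
These two conclusions contradict each other. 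In writing this up, I will check carefully that ``$k$ is processed at $s_k^-$'' holds in the left-limit sense: $k$ must be processed immediately before it emits, so it is processed while in $N(\cdot)$ during $I_j$.
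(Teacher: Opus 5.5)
Your proposal is correct and follows essentially the same route as the paper. Both use the interval property of $I(P)$ (\Cref{obs:lifetime:1}) to find a job on the $k$-free path that is alive at $s_k^-$, and derive a contradiction from \Cref{obs:unique:earliest:bottleneck:1}. Both exclude the case that this job is $j$ itself because it would give $(j,k)\in E_N$; the paper phrases this as $r_j \ge s_k$ and first applies \Cref{obs:unique:earliest:bottleneck:1} to $j'$ to get $C_{j'} < s_k$. Your explicit check in Step 1 that $s_k \le t$, via the incoming $\delta$-edge of the entrypoint $k$, is a small addition the paper leaves implicit.
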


\begin{proof}
    Via proof by contradiction. Assume $j$ can reach $j'$ with $t_{j'} \le s_k^-$ via a path $P$ that does not contain $k$. By \Cref{obs:unique:earliest:bottleneck:1}, the job $j'$ cannot be alive at $s_k^-$ and, therefore, must satisfy $C_{j'} \le s_k^-$.

    Observe that $r_j \ge s_k$, as otherwise $j \in N(t)$ implies that $(j,k) \in E_N$, which would contradict that $k$ is a bottleneck for $j$ with respect to $i$.
    Together with $C_{j'}<s_k$, we have $s_k \in I(P)$. However, this implies that at least one $d \in P$ is alive at $s_k$ and can be reached by $j$ via a path that does not contain $k$; a contradiction to \Cref{obs:unique:earliest:bottleneck:1}.
\end{proof}

Having the two observations in place, we are ready to show \Cref{thm:unique:earliest:bottleneck}.

\begin{proof}[Proof of \Cref{thm:unique:earliest:bottleneck}]
  To simplify the notation, let $b_1 \coloneq b_{ji_1}$ and $b_2 \coloneq b_{ji_2}$. Assume without loss of generality that $s_{b_1} \le s_{b_2}$.
  We show the theorem by assuming $b_1 \neq b_2$ and proving that this implies $\by_j \geq \min\{\by_{i_1},\by_{i_2}\}$.

  First, observe that we must have $s_{b_2} \le t_{b_1}$, as otherwise 
  \Cref{obs:unique:earliest:bottleneck:2} gives that every path from $j$ to $b_1$ must contain $b_2$; a contradiction to $b_1$ being the first bottleneck 
  on every path from $j$ to $i_1$.
  In the same way, we can argue that $r_{b_2} \ge s_{b_1}$. 
  Otherwise, \Cref{obs:unique:earliest:bottleneck:1} gives that $b_2$ would be alive at $s_{b_1}^-$ and every path from $j$ to $b_2$ must contain $b_1$.
  Thus, $b_2$ is released after $s_{b_1}$ and emits before $b_1$ is completed. \Cref{obs:clairvoyant-jobs-block-earlier-jobs} implies that $b_2$ completes before $b_1$. Combining these observations, we get $I_{b_2} \subseteq I_{b_1}$.

  We claim that $I_{b_2} \subseteq I_{b_1}$ implies $b_2 = i_2$. To see this, assume otherwise, i.e., $b_2 \not= i_2$. Then, there exists a $b_2$-$i_2$-path $P'$. Let $d$ be the direct successor of $b_2$ on this path. This means that $d$ is being processed during $I_{b_2} \subseteq I_{b_1}$. %
  This implies that there exists a $b_1$-$i_2$-path that does not contain $b_2$. Since there also is a $j$-$b_1$-path that does not contain $b_2$ because $b_1$ is the first bottleneck on every path from $j$ to $i_1$, there must exist a $j$-$i_2$-path that does not contain $b_2$; a contradiction.

    Having established $b_2 = i_2$, we finish the proof of the theorem by showing $\by_j \ge \by_{i_2}$. Recall that we already showed that $i_2 = b_2$ satisfies $r_{i_2} \ge s_{b_1}$ and $C_{i_2} = C_{b_2} \le C_{b_1}$. By definition of our algorithm, this implies that $b_1$ is not being processed during $I_{i_2}$. 
    Let $t'$ be the earliest point in time with $t' \ge C_{i_2}$ at which $b_1$ is executed again. 
    Note that this point in time must exist, as otherwise~$t_{b_1} \le s_{b_2}^-$ and, by \Cref{obs:unique:earliest:bottleneck:1}, every path from $j$ to $b_1$ must contain $b_2$. This is a contradiction to $b_1 \neq b_2$ being the first bottleneck on any path from $j$ to $i_1$.
    Next, observe that
    \[  
      p_{b_1}(t') \ge (1-\alpha) \cdot p_{i_2} \ .
    \]
    To see this, note that we must have $(1-\alpha) p_{i_2}= p_{i_2}(s_{i_2}) \le p_{b_1}(s_{i_2})$ since the algorithm finishes $b_2=i_2$ before $b_1$.
    Furthermore, $b_1$ is not being processed between $s_{i_2}$ and $t'$, so we get
    \[
    (1-\alpha) p_{i_2}= p_{i_2}(s_{i_2}) \le p_{b_1}(s_{i_2}) = p_{b_1}(t') \ .
    \]

    By assumption, there exists a path $P'$ from $j$ to $i_2$ that does not contain $i$. We distinguish between two cases: (1) $r_j \le t'$ and (2) $r_j > t'$.

    \paragraph*{Case (1):} Assume  $r_j \le t'$, this means that job $b_1 \in C(t')$ with $p_{b_1}(t') \ge (1-\alpha) p_{i_2}$ is processed at point in time $t' \in I_j$. In the appendix, we show that this implies $\by_j \ge \by_{i_2}$ (\Cref{obs:ec:cut:aux:1}).

    \paragraph*{Case (2):} Assume  $r_j > t'$ and recall that $j$ can reach $b_2 = i_2$ via a path $P$ that does not contain $b_1$. This implies that $P$ only uses jobs in $L_{\geq \sigma_{ji_1} + 1}^{i_1}$. Since $P$ is a path from $j$ to $i_2$, and we have $r_j > t'$ and $C_{i_2} \le t'$, the path must contain a job $d$ that is alive at point in time $t'$. Furthermore, as $P$ only uses jobs in $L_{\geq \sigma_{ji_1} + 1}^{i_1}$, we have $r_d \ge s_{b_1}$ by \Cref{obs:unique:earliest:bottleneck:1}.

    The facts that $r_d \ge s_{b_1}$ and $d \in J_A(t')$ imply $(d,b_1) \in E_C$, and thus, $(d,b_1) \in \delta_{\sigma_{ji_1}}^i$. Since $j$ can reach $d$ via a path that only contains jobs of $L_{\geq \sigma_{ji_1}+1}^{i_1}$, \Cref{lem:stealing:via:bottlenecks:2} implies 
    \[
    \frac{1-\alpha}{\alpha} \by_j \ge p_{b_1}(\tau) \ge p_{b_1}(t') \ge (1-\alpha) \cdot p_{i_2},
    \]
    where $\tau$ is the earliest point in time during $I_d$ at which $b_1$ is being processed.
    By rearranging this inequality, we get 
    \(
        \by_j \ge \alpha \cdot p_{i_2} \geq \by_{i_2}.
    \)
\end{proof}

\subsection{Segment Borrowing and the Proof of \Cref{lem:segment:bottleneck:capacity}}
\label{sec:segment:borrowing}

In the previous section, we showed for a fixed job $j \in S$ and a segment $S$ that there is a bottleneck that is part of every path from $j$ to some $i \in R_j\setminus O_S$. The following lemma generalizes this statement by showing that there is a bottleneck that is part of every path from \emph{any} $j \in S$ to \emph{any} $i \in R_j\setminus O_S$. 
Recall that $j^*(S)$ is a job with the smallest release date in $S$ .

\begin{lemma}
  \label{obs:segment:bottleneck}
  Let $S \in \mathcal{S}$ be a segment such that $\bS \neq \emptyset$. 
  Then, 
  for every $j \in \bS$ and every $i \in O \setminus O_S$, every path from $j$ to $i$ contains the bottleneck $b(S)$, which is the first bottleneck on every path from $j^*(S)$ to any $i \in O \setminus O_S$.
\end{lemma}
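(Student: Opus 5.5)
Write $j^* \coloneq j^*(S)$. By \Cref{coro:unique:earliest:bottleneck}, all $i \in R_{j^*} \setminus O_S$ share the same first bottleneck on paths from $j^*$, so $b(S)$ is well defined. If some $i \in O\setminus O_S$ is not reachable from $j^*$, the lemma still has to be checked for the other jobs in $\bS$. I will therefore work with an arbitrary pair $j \in \bS$, $i \in O\setminus O_S$ such that $j$ can reach $i$; if $j$ cannot reach $i$, the statement is vacuous.

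\textbf{Step 1 (reduce to one bottleneck per job).} By \Cref{coro:yinequality:no:bottlenecks}, every $j$-$i$ path contains a bottleneck of $i$. By \Cref{coro:unique:earliest:bottleneck}, all $i \in R_j\setminus O_S$ have the same first bottleneck $b_j$. So it suffices to show $b_j = b(S)$ and that every $j$-$i$ path passes through $b(S)$. For the second part I use \Cref{coro:multiple:bottlenecks}: once $j$ can reach the layer of $b(S)$ only through $b(S)$, every path to $i$ contains $b(S)$.

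\textbf{Step 2 (show $b_j = b_{j^*}$).} Let $b_j \neq b_{j^*}$ and derive a contradiction, imitating the proof of \Cref{thm:unique:earliest:bottleneck}, with $j$ and $j^*$ playing the roles of the two endpoints. Since $r_{j^*} \le r_j$ and both jobs lie in $N(t)$, they are both alive on $[r_j,t]$. This suggests:
\begin{itemize}
\item If $j$ processed nothing during $I_j$, then $y_j(t)=0$. In that case $j \in \bS$ would need $\by_i>0$, and one can check that $j$ reaches the same things as $j^*$.
\item Otherwise, use \Cref{lemma:ketchup}. If $j^*$ is processed during $I_j$ while both are non-clairvoyant, then $(j,j^*) \in E_N$ and $\by_j \ge \by_{j^*}$. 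If $j$ is processed during $I_{j^*}$, then $(j^*,j)\in E_N$.
\end{itemize}
In the main case, one of $j$, $j^*$ can reach the other by an $E_N$-edge, which involves no bottleneck since bottlenecks are entered via $\delta$-edges in $E_C$. Then every path from the other job transfers. If $(j,j^*) \in E_N$, then $j$ reaches $j^*$ without a bottleneck, so any bottleneck-free part of a path from $j^*$ is also available to $j$. The first bottleneck of $j$ therefore cannot lie strictly later than $b_{j^*}$.

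The reverse direction is the problem. Suppose $j$ could bypass $b_{j^*}$ while $j^*$ cannot, or $j^*$ can reach $j$ without a bottleneck. Then $j^*$ inherits $j$'s bottleneck-free paths, so $b_j$ and $b_{j^*}$ coincide. Alternatively, apply \Cref{obs:unique:earliest:bottleneck:1} and \Cref{obs:unique:earliest:bottleneck:2} to $k = b_{j^*}$: $j$ is alive at $s_k^-$ or released after $s_k$, and the lifetime-interval argument of \Cref{obs:lifetime:1} forces any path from $j$ to pass through a job alive at $s_k$, hence through $k$. Comparing $s_{b_j}$ and $s_{b_{j^*}}$ as in that proof should give $I_{b_j} \subseteq I_{b_{j^*}}$ (or the symmetric inclusion). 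That yields $b_j = i$, and then Case (1) or (2) of that proof gives $\by_j \ge \by_i$, contradicting $i \notin O_S$.

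\textbf{Main obstacle.} The hard part is Step 2 when neither $(j,j^*)$ nor $(j^*,j)$ is an $E_N$-edge. This happens when, throughout $I_j \cap I_{j^*}$, the machine only runs clairvoyant jobs or jobs with smaller progress. My first attempt is to pick a job $d$ processed at time $r_j$, which must be reachable from both $j$ and $j^*$ because both are alive then. I would argue that $d$ lies above the layer of $b_{j^*}$. If it does not, \Cref{lem:stealing:via:bottlenecks:2} gives $\frac{1-\alpha}{\alpha}\by_j \ge p_{b}(\tau)$, and combined with $O_j = O_{j^*}$ this forces the inequality $\by_j \ge \by_i$, which is excluded.
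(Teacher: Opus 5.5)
Your proposal has a genuine gap. Step~2 is never closed: the ``reverse direction'' is only sketched with hedges (``should give''), and the proposed resolution of the main obstacle is invalid. \Cref{lem:stealing:via:bottlenecks:2} yields $\frac{1-\alpha}{\alpha}\by_j \ge p_{b}(\tau)$. That bounds the remaining work of the bottleneck, not $\by_i$, and nothing in $O_j=O_{j^*}$ turns it into $\by_j\ge\by_i$. The situation of \Cref{fig:example-delta-edge} is exactly one where $\by_j$ is tiny compared to $\by_i$ although $j$ borrows from $i$ through a bottleneck.

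The direction you do handle, $(j,j^*)\in E_N$, is also the wrong one. It shows that $j$ can bypass whatever $j^*$ can bypass, whereas the lemma needs that $j^*$ can bypass whatever $j$ can. Your diagnosis of the obstacle is off as well. Since $j\in N(t)$, every execution of $j$ is non-clairvoyant and lies in $I_j\subseteq I_{j^*}$, so $(j^*,j)\in E_N$ whenever $y_j(t)>0$. The case where neither edge exists occurs only if $j$ was never processed. Your bullet ``if $j$ is processed during $I_{j^*}$, then $(j^*,j)\in E_N$'' already contains the key fact, but you never notice that it essentially always applies.

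The missing idea is that $j^*$ dominates $j$ in $G_B$. Because $r_{j^*}\le r_j$ and both jobs are alive at $t$, we have $I_j\subseteq I_{j^*}$, so every out-edge $(j,d)$ is also an out-edge $(j^*,d)$. Now suppose some $j$-$i$ path with $i\in O\setminus O_S$ avoided $b(S)$. Replacing its first vertex $j$ by $j^*$ (or prefixing the edge $(j^*,j)$) gives a $j^*$-$i$ path avoiding $b(S)$. Note $j^*\neq b(S)$: $j^*\in N(t)$, while a bottleneck has an incoming $\delta$-edge, so it is processed clairvoyantly at some time $\le t$. Since $O_{j^*}=O_S$, this puts $i\in R_{j^*}\setminus O_S$. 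By \Cref{coro:yinequality:no:bottlenecks} and \Cref{coro:unique:earliest:bottleneck}, every $j^*$-$i$ path then contains $b_{j^*i}=b(S)$, a contradiction.

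This is the paper's entire proof. You need neither $b_j=b_{j^*}$, nor the interval comparisons from \Cref{thm:unique:earliest:bottleneck} and \Cref{obs:unique:earliest:bottleneck:1,obs:unique:earliest:bottleneck:2}. You also need no separate case for $i$ unreachable from $j^*$, since reachability from $j$ implies reachability from $j^*$.
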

\begin{proof}
  First, note that for $j \in S$ and $i \in O \setminus O_S$, there is only a path from $j$ to $i$ if $i \in \bO_j$. Thus, in the following, it suffices to argue for a fixed $j \in S$ about jobs $i \in \bO_j$.

  Let $j^* \coloneq j^*(S)$.
  Recall that $j^*$ is the job of $\bS$ with minimum release date.
  By definition, every job $j \in \bS$ satisfies for every $i \in \bO_S$ that $\by_j < \by_i$ and $i \in R_j$, and thus, \Cref{coro:yinequality:no:bottlenecks} implies that every path from $j$ to $i$ contains at least one bottleneck.
  Moreover, \Cref{thm:unique:earliest:bottleneck} gives that $b_{j^*i_1} = b_{j^*i_2}$ for all (not necessarily distinct) $i_1,i_2 \in \bO_{j^*}$. Let $b(S)$ denote this first bottleneck of $j^*$ on every path from $j$ to any $i \in \bO_{j^*}$, that is, $b(S) = b_{j^*i}$ for every $i \in \bO_{j^*}$.
  Thus, for every $i \in \bO_{j^*}$, every path from $j^*$ to $i$ contains $b(S)$.

  To prove the statement, assume that there exists a path $P$ from $j \in \bS$ to  $i \in \bO_j$ that does not contain $b(S)$. 
  Since $r_j \ge r_{j^*}$ and $j,j^* \in A(t)$, we have $(j^*,j) \in E$ and can therefore extend $P$ to the path $P'$ from $j^*$ to $i$ that does not contain $b(S)$. 
  By~\Cref{coro:unique:earliest:bottleneck}, we have that the existence of $P'$ implies $i \in O_S$ as otherwise $b(S)$ would need to be the earliest bottleneck on $P'$. However, $i \in \bO_j$ implies $i \not\in O_S$, and we arrive at a contradiction.
\end{proof}

Finally, we can prove \Cref{lem:segment:bottleneck:capacity} by bounding the amount of processing that the jobs in $\bS$ can borrow via the bottleneck $b(S)$.

\begin{proof}[Proof of \Cref{lem:segment:bottleneck:capacity}] 
  Fix a segment $S$ such that $\bS \neq \emptyset$.
  Let $j^* \coloneq j^*(S)$.
  \Cref{obs:segment:bottleneck} guarantees that the bottleneck $b(S) = b_{j^*}$ exists.
  Thus, $b(S)$ is the earliest bottleneck that $j^*$ has to visit to reach any job in $O \setminus O_S$. 
  Fix an arbitrary $i \in O \setminus O_S$ such that $i \in R_{j^*}$, and assume that $b(S) \in H_x^i$ (recall that any bottleneck is an entrypoint).
  Since $b(S)$ is a bottleneck, $j^*$ can only reach jobs in $L_{\leq x}^i$ via paths that contain $b(S)$. 
  Thus, using that $b(S) \in L_{\leq x}^i$, any path $P$ from $j^*$ to some $O \setminus O_S$ must %
  contain a prefix $P'$ that starts at $j^*$, ends at $b(S)$, and otherwise only visits jobs in $L_{\geq x+1}^i$. Thus, $P'$ is an $x$-entry path.
  For a job $j \in S$, let $\Delta_j \coloneq \{(d, b(S)) \in E \mid (d, b(S)) \in \delta_{x_i}^i, i \in R_j \}$, where $x_i$ is the index of the layer of $i$ that contains $b(S)$.
  By definition of the layers, the path $P'$ must end with an edge $(d,b(S)) \in \Delta_{j^*}$. %
  By \Cref{lem:stealing:via:bottlenecks:2}, each such edge $(d,b(S))\in \Delta_{j^*}$ satisfies  $\frac{1-\alpha}{\alpha} \by_{j^*} \ge p_{b(S)}(\tau_d)$ for the earliest point in time $\tau_d$ during $I_d$ at which $b(S)$ is being processed. 
  
  Next, consider a job $j' \in S \setminus \{j^*\}$. By choice of $j^*$, we have $r_{j'} \ge r_{j^*}$. Since $j', j^* \in A(t)$, this implies that each job $d$ that is processed during $I_{j'}$ is also processed during $I_{j^*}$. Thus, each path $j'$-$i$-path $P$ in the borrow graph can be transformed into a $j^*$-$i$-path $\bP$ by just exchanging $j'$ with $j^*$. This implies that $\Delta_j \subseteq \Delta_{j^*}$.

  Let $D = \{ d \in J \mid (d, b(S)) \in \Delta_{j^*} \}$. Let $\tau$ be the earliest time at which $b(S)$ is being processed during $I_{d}$ for any $d \in D$. Note that such a time exists as $(d, b(S)) \in E$. Using our observation from above gives $\frac{1-\alpha}{\alpha} \by_{j^*} \ge p_{b(S)}(\tau)$.
  
  Let $T = \{\tau_1,\ldots,\tau_{\ell+1}\}$ as in \Cref{def:flow-network} of $G_F$.
  Let $\tau_{\ell'} \in T$ be the earliest time in $T$ such that there is an edge from some $d \in D$ to $v_{\ell'}^{b(S)}$ in $G_F$, which means that
  $[\tau_{\ell'},\tau_{\ell'+1}] \subseteq I_{d}$, and thus, $r_d \leq \tau_{\ell'}$.
  Combined with the definition of $T$, we conclude $\tau_{\ell'} = \min_{d \in D} r_d$.
  
  By \Cref{obs:segment:bottleneck}, every path from any $j \in S$ to any $i \in O \setminus O_S$ contains $b(S)$, and thus, an edge of $\Delta_{j^*}$. 
  Due to this and by our choice of $\tau_{\ell'}$, every path from $d \in D$ to $b(S)$ in $G_F$ contains one of the vertices $v_{\ell'}^{b(S)},\ldots,v_{\ell}^{b(S)}$ of $G_F$.
  Thus, \Cref{lemma:flow:positive-paths} implies that the flow in $f$ that is sent from any $j \in S$ to any $i \in O \setminus O_S$ in $G_F$ must use an edge from $(v_{\ell'}^{b(S)},b(S)),\ldots,(v_{\ell}^{b(S)},b(S))$ in $G_F$.
  Hence, using the construction of the $\beta$-values in \Cref{def:borrow:values},
  \[
  \sum_{j \in S} \sum_{i \in O \setminus O_S} \beta(j,i)
  \leq \sum_{k=\ell'}^{\ell} f(v_{k}^{b(S)},b(S)) 
  \leq \sum_{k=\ell'}^{\ell} q_{b(S)}([\tau_{k},\tau_{k+1}])
  \leq q_{b(S)}([\tau_{\ell'},\min\{t,C_{b(S)}\}]) = p_{b(S)}(\tau_{\ell'}) \ ,
  \]
  because $\tau_{\ell' + 1} \leq t$.

  Finally, note that $\tau \geq \tau_{\ell'}$, as $\tau$ is by definition the earliest time at which $b(S)$ is being processed during $I_d$ for any $d \in D$, and we observed that $\tau_{\ell'} = \min_{d \in D} r_d$. Specifically, the definition of $\tau$ gives that $p_{b(S)}(\tau_{\ell'}) = p_{b(S)}(\tau)$. The lemma then follows because $p_{b(S)}(\tau) \leq \frac{1-\alpha}{\alpha} \by_{j^*} \leq p_{j^*}(t)$, where the final inequality uses $j^* \in N(t)$.
\end{proof}

\section{Conclusion}

We have shown that any constant-factor lookahead on the job processing times is sufficient to break the deterministic lower bound for non-clairvoyant flow time minimization and to achieve a constant competitive ratio. Similar to the speed-augmented result by \textcite{KalyanasundaramP00}, our result illustrates that non-clairvoyant flow time scheduling is \enquote{almost} constant competitive, as making the algorithm slightly more powerful is sufficient to achieve constant competitiveness.

For future work, we suggest to study %
the problem in a setting where the signals are not completely precise, for example, using learning-augmented algorithm design. Finally, the $\alpha$-clairvoyant model naturally translates to scheduling problems with different objective functions, e.g., makespan minimization.

\printbibliography

\appendix

\section{Flow Problem: Missing Proofs of~\Cref{sec:flow-problem,sec:borrowing-via-flows}}

\obsflowincreaseprecision*

\begin{proof}
  If $t' \in T^S$, the statement is clearly true. Thus, assume that $t' \notin T^S$.
  Let $T^S = \{\tau_1,\ldots,\tau_{\ell+1} \}$ and let $\ell'$ such that $t' \in (\tau_{\ell'}, \tau_{\ell'+1})$. Let $j,i \in J$. If $f(j,i) = 0$, then we can clearly find $f'$ such that $f'(j,i) = 0$.
  Let $i \in J$. 
  Let $J' = \{j \in J \mid f(j,v_{\ell'}^i) > 0\}$.
  For every $j \in J'$ it holds that $[\tau_{\ell'},\tau_{\ell'+1}] \subseteq I_j$ and $\sum_{j \in J'} f(j,v_{\ell'}^i) = f(v_{\ell'}^i,i) \leq q_i^S([\tau_{\ell'},\tau_{\ell'+1}])$.
  Let $u_1^i$ and $u_2^i$ be the dummy vertex in $G_F^S$ w.r.t.\ $T^S \cup \{t'\}$ that corresponds to $[\tau_{\ell'}, t']$ and $[t',\tau_{\ell'+1}]$, respectively.
  Note that for every $j \in J'$, we have $[\tau_{\ell'}, t'],[t',\tau_{\ell'+1}] \subseteq I_j$. Thus, we can find a feasible flow $f'$ for $G_F^S$ w.r.t.\ $T^S \cup \{t'\}$ such that, for every $j \in J'$, $f'(j, u_1^i) + f'(j, u_2^i) = f(j,v_{\ell'}^i)$, $f'(u_1^i,i) + f'(u_2^i,i) = f(v_{\ell'}^i,i)$, $f'(u_1^i,i) \leq q_i^S([\tau_{\ell'}, t'])$, $f'(u_2^i,i) \leq q_i^S([t', \tau_{\ell'+1}])$, and that equals $f$ for all other edges. This implies that $f(j,i) = f'(j,i)$ for every $j \in J'$. Repeating this argument for every $i \in J$ completes the proof.
\end{proof}

\lemmaflowborrowapplyedge*

\begin{proof}
  Let $T = \{\tau_1,\ldots,\tau_{\ell+1}\}$ be as in \Cref{def:flow-network}. 
  Since $f(j,i) > 0$, there exists a smallest index $\ell'$ such that $f(j,v^i_{\ell'}) > 0$. Let $\gamma \coloneq f(j,v^i_{\ell'})$ and note that $\gamma > 0$.
  Since $\sum_{i' \in J} f(i',j) = 0$, the flow $f(j,i)$ must completely be covered by the supply of $j$, and thus, $p_j^S(t) \geq f(j,i) \geq \gamma$.
  Further, as $f(j,v^i_{\ell'}) > 0$, by the definition of $G_F^S$, we have %
  $[\tau_{\ell'},\tau_{\ell'+1}] \subseteq I_j$.
  Moreover, by flow conservation and since $v_{\ell'}^i$ has only one outgoing edge, it must hold that $f(v^i_{\ell'},i) \geq f(j,v^i_{\ell'}) = \gamma$. By the capacity constraint, we additionally get $f(v^i_{\ell'},i) \leq q^S_i([\tau_{\ell'},\tau_{\ell'+1}])$. 
  We conclude $q^S_i([\tau_{\ell'},\tau_{\ell'+1}]) \geq \gamma$.
  We construct the schedule $S'$ as follows. 
  First, we copy $S$ and then only modify it during time $[\tau_{\ell'},\tau_{\ell'+1}]$. 
  During $[\tau_{\ell'},\tau_{\ell'+1}]$, we replace the first $\gamma$ units of work on $i$, which exists as $q^S_i([\tau_{\ell'},\tau_{\ell'+1}]) \geq \gamma$, with $\gamma$ units of work on $j$, which is possible as $p_j^S(t) \geq \gamma$ and $r_j \leq \tau_{\ell'}$. It is clear from the construction that $S'$ satisfies properties (a) and (b). 
  Note that due to this modification, job $j$ might complete before time $t$ in $S'$, and job $i$ becomes unfinished at time $t$ in $S'$. Thus, we might have $T^{S} \neq T^{S'}$. However, using \Cref{obs:flow-increase-precision}, we can assume that  $f$ is feasible for $G_F^S$ w.r.t.\ $T^{S} \cup T^{S'}$, and that any feasible flow for $G_F^{S'}$ w.r.t.\ $T^{S'}$ is feasible for $G_F^{S'}$ w.r.t.\ $T^{S} \cup T^{S'}$. Since $G_F^S$ w.r.t.\ $T^{S} \cup T^{S'}$ is isomorphic to $G_F^S$ w.r.t.\ $T^{S} \cup T^{S'}$, we can w.l.o.g.\ assume that $V(G_F^{S'}) = V(G_F^{S})$ and $E(G_F^{S'}) = E(G_F^{S})$.

  Under this assumption, we construct the flow $f'$ for $G_F^{S'}$ as follows. We set $f'(u,v) \coloneq f'(u,v)$ for all $(u,v) \in E(G^{S'}_F) \setminus \{(j,v_{\ell'}^i),(v_{\ell'}^i,i)\}$, $f'(j,v_{\ell'}^i) \coloneq f(j,v_{\ell'}^i) - \gamma = 0$, and $f'(v_{\ell'}^i,i) \coloneq f(v_{\ell'}^i,i) - \gamma$. 
  By noting that %
  we only modified the flow between $j$, $v_{\ell'}^i$ and $i$, we conclude that (c) and (d) holds.
  It remains to verify that $f'$ is feasible.
  First observe that since %
  $f'(j,v_{\ell'}^i) =  0$, $f'(v_{\ell'}^i,i) = f(v_{\ell'}^i,i) - \gamma \geq \gamma - \gamma = 0$, and 
  \[
    f'(v_{\ell'}^i,i) = f(v_{\ell'}^i,i) - \gamma \leq q^{S}_i([\tau_{\ell'},\tau_{\ell'+1}]) - \gamma = q^{S'}_i([\tau_{\ell'},\tau_{\ell'+1}]) \ ,
  \]
  the flow $f'$ is non-negative and satisfies the capacity constraints. Finally, it is easy to see that $f'$ satisfies the flow conservation at $v_{\ell'}^i$ and also at all other vertices where the flow has not changed. It only remains to argue that $f'$ also satisfies the flow conservation at $i$ and $j$.

  If $i$ has no outgoing flow in $f'$, then $i$ must also have no outgoing flow in $f$, which implies $i \in O(t)$ as $i$ must have a positive demand. Since both, the incoming flow and the  demand of $i$, decrease by $\gamma$ from $f$ and $G_F^{S}$ to $f'$ and $G_F^{S'}$, $i$ must still respect the flow conservation in $f'$ and  $G_F^{S'}$.
  If $i$  has outgoing flow in $f'$ for $G_F^{S'}$, then it must also have had outgoing flow in $f$ for $G_F^{S}$. By~\Cref{prop:flow:no-outgoing-sink-flow}, we must therefore have $i \not\in O(t)$, and thus, $i \in A^{S'}(t) \setminus O(t)$. Hence, $i$ has a supply of $p_i^{S'} = p_i^{S}+\gamma$ in $G_F^{S'}$. Since $f$ is feasible for $G_F^{S}$ and the incoming flow of $i$ from $f$ to $f'$ is decreased by at most $\gamma$, the increased supply of $i$ suffices to cover the decreased incoming flow. We can conclude that $i$ respects the flow conservation.

  Since $j$ has no incoming flow in $f$, we must have $j \in A(t)^{S}\setminus O(t)$ as only such vertices have a positive supply. If $j \in D^{S'}(t)$, then $j$ also has no outgoing flow in $f'$, and therefore, respects the flow conservation. If $j \in  A(t)^{S'}\setminus O(t)$, then the supply of $j$ from $G_F^{S}$ to $G_F^{S'}$ decreases by $\gamma$. However, since the outgoing flow of $j$ also decreases by $\gamma$, $j$ still satisfies the flow conservation.
\end{proof}

\lemmaflowpositivepaths*

\begin{proof}
  First, assume that there is a path $P$ from $j$ to $i$ in $G_F^S$ with positive capacity. Consider any two consecutive jobs $j'$ and $i'$ on this path.
  Since the path has positive capacity, there must exist a vertex $v^{i'}_{\ell'}$ such that the edge $(j',v_{\ell'}^{i'})$ exists in $G_F^S$, giving $[\tau_{\ell'},\tau_{\ell'+1}] \subseteq I_{j'}$, and the edge $(v_{\ell'}^{i'},i')$ has positive capacity, which means that $q_{i'}^S([\tau_{\ell'},\tau_{\ell'+1}]) > 0$. This 
  implies that $i'$ is being processed during $[\tau_{\ell'},\tau_{\ell'+1}] \subseteq I_{j'}$, and thus, $(j',i')$ in an edge in $G_B^S$. Thus, $P$ is a path in $G_B^S$.

  Second, assume that there exists a path $P$ from $j$ to $i$ in $G_B^S$.
  Consider any edge $(j',i')$ on this path. This means that there exists a point in time $t' \in I_{j'}$ at which $i'$ is being processed.
  Since $0$, $r_{j'}$, $C_{j'}$, and $t$ are part of $T^S$, there exist $\tau_{\ell'} \in T^S$ such that $t' \in [\tau_{\ell'},\tau_{\ell'+1}]$ and $[\tau_{\ell'},\tau_{\ell'+1}] \subseteq I_{j'}$.
  Thus, the edge $(j',v_{\ell'}^{i'})$ in $G_F^S$ of infinite capacity exists, and the edge $(v_{\ell'}^{i'},i')$ has capacity $q_{i'}^S([\tau_{\ell'},\tau_{\ell'+1}]) > 0$, because $i'$ is being processed at time $t'$. We conclude that there exists a path from $j$ to $i$ in $G_F^S$ with positive capacity that visits the same jobs as $P$ in the same order.
\end{proof}

\thmmainborrowing*

\begin{proof}
  By \Cref{def:borrow:values} and \Cref{lemma:flow:positive-paths}, $\beta(j,i)$ can only be positive if $i \in R_j$, that is, if there is a path from $j$ to $i$ in $G_F$.
  Thus, by \Cref{def:borrow:values},  we have $\beta(j,i) = 0$ for every $j \in A(t) \setminus O(t)$ and $i \in O(t) \setminus R_j$, giving Property (i).

  Since the values $\beta(j,i)$ are constructed based on a flow $f$ of value $\sum_{j \in A(t) \setminus O(t)} p_j(t)$, which is the total supply, Property (ii) follows. 
  It remains to prove Property (iii). Fix any job $i \in O(t)$, and let $T = \{\tau_1,\ldots,\tau_{\ell+1}\}$ be as in \Cref{def:flow-network}. The edges $(v_{1}^i, i),\ldots,(v_{\ell}^i,i)$ are the only incoming edges of $i$ in $G_F$. The sum of their capacities is $\sum_{\ell'=1}^{\ell} q_{i}([\tau_{\ell'},\tau_{\ell'+1}]) \leq y_i(t)$ because $\tau_{\ell'+1} \leq t$. Thus, the total amount of flow that can reach $i$ in $G_F$ is at most $y_i(t)$, ensuring Property (iii).
\end{proof}

\section{Structure of the Borrow Graph: Proofs of~\Cref{thm:bottleneck:uniqueness,coro:multiple:bottlenecks}}\label{app:structure-borrow}

In this section, we further analyze the structure of the borrow graph and prove the~\Cref{thm:bottleneck:uniqueness,coro:multiple:bottlenecks} as given in~\Cref{sec:structure}.

As in~\Cref{sec:structure}, we fix a job $i \in O(t)$ and analyze the borrow graph with respect to this job.

We start with the following observation stating that active jobs in a layer $L_{x+1}^i$ must be able to reach an entrypoint of layer $L_x^i$ via an $x$-entry path.

\begin{observation}\label{obs:active-entrypath}
  Let $j \in A(t) \cap L_{\geq x+1}^i$ for some $x \geq 0$. Then, there exists an $x$-entry path from $j$ to some $k \in H_x^i$.
\end{observation}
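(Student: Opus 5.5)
The plan is to build the $x$-entry path by following a path from $j$ to $i$ and cutting it at the first vertex that lies in $L_{\le x}^i$. The key facts are that every job in $L_{\ge x+1}^i$ can reach $i$, and that a path can only step down one layer at a time through an $E_C$-edge into an entrypoint.

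First we fix the path. Since $j \in L_y^i$ for some $y \ge x+1$, the definition of layers gives a path from $j$ to $i$: a concatenation of segments, each consisting of $E_N$-edges followed by one $E_C$-edge, that descends layer by layer until it reaches $L_0^i$, and then an $E_N$-path to $i$. Call this path $P$ and let $k$ be the first vertex of $P$ that lies in $L_{\le x}^i$. This vertex exists because $i \in L_0^i$. All vertices of $P$ before $k$ lie in $L_{\ge x+1}^i$.

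Next we check that $k \in L_x^i$. Let $v$ be the predecessor of $k$ on $P$, so $v \in L_{\ge x+1}^i$.
\begin{itemize}
\item If $(v,k) \in E_N$, then anything that reaches $L_{x'}^i$ by an $E_N$-edge followed by such a segment is itself in a layer at most $x'$. Indeed, prepending an $E_N$-edge to a path of the form ($E_N$-edges, then one $E_C$-edge) keeps that form, and for $L_0$ it keeps an $E_N$-path. Hence $v \in L_{\le x}^i$, a contradiction.
\item So $(v,k) \in E_C$, and then $v \in L_{\le k\text{'s layer}+1}^i$. Write $k \in L_{x'}^i$ with $x' \le x$. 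This gives $v \in L_{\le x'+1}^i$, so $x'+1 \ge x+1$, i.e.\ $x' = x$. It also gives $v \in L_{x+1}^i$.
\end{itemize}

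It remains to check the two conditions of an entrypoint. The edge $(v,k) \in \delta_x^i$ gives condition (i). The prefix of $P$ from $j$ to $k$ gives condition (ii): it starts at $j \in A(t) \cap L_{\ge x+1}^i$ and, apart from $k$, visits only jobs in $L_{\ge x+1}^i$. Thus $k \in H_x^i$, and this prefix is the desired $x$-entry path.

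The main obstacle is the layer-monotonicity fact used above. If $u \to w$ is an $E_N$-edge, then the layer of $u$ is at most the layer of $w$. If it is an $E_C$-edge, then the layer of $u$ is at most the layer of $w$ plus one. Both follow directly from the definition, because layer $L_x^i$ consists of the jobs not already in lower layers that have a path of the prescribed shape. I would state this as a small auxiliary claim first, taking care that the layer sets are disjoint, and note that every vertex on $P$ can reach $i$ and so belongs to some layer.
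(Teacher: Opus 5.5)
Your proof is correct and follows the same route as the paper: take a path from $j$ to $i$, cut it at the first vertex in the lower layers, observe that this vertex is entered by an edge of $\delta_x^i$, and conclude that it is an entrypoint reached by the prefix, which is an $x$-entry path. The paper cuts at the first vertex of $L_x^i$, whereas you cut at the first vertex of $L_{\le x}^i$ and add the layer-monotonicity claim; this makes explicit the step the paper leaves to ``the definition of layers'', namely that the path cannot skip layer $x$ and must enter it through a $\delta_x^i$-edge.
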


\begin{proof}
  Since $j \in L_{\geq x+1}^i$, there exists a path $P$ from $j$ to $i$. By the definition of layers, there must be at least one job in $P$ that is contained in $L_{x}^i$. Let $k \in L_{x}^i$ be the first such job on $P$. By the definition of layers, it must be that $P$ uses an edge $(d,k) \in \delta_x^i$ to reach $k$, where $d \in L_{x+1}^i$. Our choice of $k$ implies that there exists an $x$-entry path from $j$ to $k$, and since also $j \in A(t)$, we can conclude that $k \in H_x^i$.
\end{proof}

\subsection{Properties of Entrypoints}

Having the layer structure and entrypoints set up, we want to more precisely analyze schedules of our algorithm that belong to a certain borrow graph structure.

First, we characterize the structure of edges $(j,k) \in \delta_x^i$. These edges allow a path in the borrow graph to enter layer $L_x^i$ from layer $L_{x+1}^i$.

\begin{lemma}
  \label{lem:ec:cuts:prop:1}
  Consider two jobs $j$ and $k$ such that $(j,k)\in \delta_x^i$. Then, it holds that
  \begin{enumerate}
      \item $r_j \ge s_{k}$,
      \item $r_j \le C_k \le C_j$, and
      \item $y_j(C_k) \le \alpha \cdot p_j$.
  \end{enumerate}
\end{lemma}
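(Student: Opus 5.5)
The plan is to unpack the definition of $\delta_x^i$ and derive each item from the algorithm's structure, mainly \Cref{obs:clairvoyant-jobs-block-earlier-jobs}. By definition, $(j,k)\in\delta_x^i$ means $(j,k)\in E_C$, $j\in L_{x+1}^i$ and $k\in L_x^i$. The first thing I would record is that $(j,k)\notin E_N$. Indeed, if $(j,k)\in E_N$, then prepending this $E_N$-edge to a path witnessing $k\in L_x^i$ gives a path from $j$ that ends in $L_{x-1}^i$ (or reaches $i$ if $x=0$) and uses only $E_N$-edges before its last edge. This would put $j\in L_{\le x}^i$, a contradiction. Next, since $(j,k)\in E_C$, I fix a time $t'\in I_j$ at which $k$ is executed while $k\in C(t')$. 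In particular $t'>s_k$, and by a small perturbation I may assume $r_j<t'<C_j$ (the execution of $k$ occupies an interval of positive length).

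\emph{Item 1.} Suppose $r_j<s_k$. Then $r_j<s_k<t'\le \min\{C_j,t\}$, so $s_k$ lies in the interior of $I_j$. Immediately before $s_k$, job $k$ is being processed (it reaches its $\alpha$-fraction exactly at $s_k$) while it is still non-clairvoyant. Those execution times lie in $I_j$, so $(j,k)\in E_N$, which contradicts the remark above. Hence $r_j\ge s_k$.

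\emph{Item 2.} Since $k$ is executed at $t'\in I_j$, we have $r_j\le t'\le C_k$. For the second inequality, I apply \Cref{obs:clairvoyant-jobs-block-earlier-jobs} to $k$ executed at $t'$ with $k\in C(t')$. Since $r_j<t'<C_j$, it gives that $j$ is not executed during $[t',C_k]$ and that $C_k\le C_j$.

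\emph{Item 3.} Suppose, for contradiction, that $y_j(C_k)>\alpha p_j$. Then $j$ emits at some time $s_j<C_k$ and is executed clairvoyantly right after $s_j$. By item 2, $j$ is not executed during $[t',C_k]$, so $s_j\le t'$. Moreover $k$ is alive at $s_j$, because $r_k\le s_k\le r_j\le s_j<C_k$ by item 1. Now I apply \Cref{obs:clairvoyant-jobs-block-earlier-jobs} again, this time to $j$ emitting at $s_j$. It yields that $k$ is not executed during $[s_j,C_j]$. But $k$ is executed at $t'$, and $s_j\le t'<C_j$, which is a contradiction.

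The main obstacle I expect is the boundary cases, namely equalities such as $r_j=t'$, $s_j=t'$ or $r_k=s_j$. These are needed to meet the strict hypotheses $r_j<t'<C_j$ of \Cref{obs:clairvoyant-jobs-block-earlier-jobs}. I would handle them by choosing $t'$ (respectively a time slightly after $s_j$) in the interior of an interval on which $k$ (respectively $j$) is processed. This is possible because processing occurs on sets of positive measure.
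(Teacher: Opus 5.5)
Your proposal is correct and follows essentially the paper's proof. Item 1 comes from $k$ running non-clairvoyantly just before $s_k$, which contradicts $(j,k)\notin E_N$, and item 2 comes from \Cref{obs:clairvoyant-jobs-block-earlier-jobs} applied to $k$ at $t'$. For item 3 you apply the same blocking observation at $j$'s emission time $s_j\le t'$ and contradict $k$ being executed at $t'$, whereas the paper uses the SRPT priority at a clairvoyant execution time of $j$ in $(s_j,C_k)$ and contradicts $C_k\le C_j$; this is only a minor variation of the same argument.
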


\begin{proof}
  By the layer structure, we have that $(j,k)\in E_C$ but $(j,k) \not\in E_N$.
  We show the first property via proof by contradiction. To this end, assume $r_j < s_k$.
  Since $(j,k)\in E_C$, there must be a point in time $t'$ with $r_j \le t' \le C_j$ such that $k \in C(t')$ and $k$ is executed at $t'$. This directly implies that $j$ must still be alive at point in time $s_k$, i.e., $r_j < s_k \le C_j$. As $k$ is executed at $s_k^-$ while being non-clairvoyant, this implies $(j,k)\in E_N$; a contradiction to the assumption that $(j,k) \not\in E_N$.

  For the second property, let $t' \ge r_j$ denote the earliest point in time at which $k$ is executed during $I_j$. %
  Then, \Cref{obs:clairvoyant-jobs-block-earlier-jobs} implies $C_k \leq C_j$.

  Finally, we show the third property via proof by contradiction. Assume $y_j(C_k) > \alpha \cdot p_j$. This implies $s_j < C_k$ and that $j$ is executed at some point in time $t'$ with $s_j < t' < C_k$; otherwise we would have $y_j(C_k) = \alpha \cdot p_j$. By the first property of this lemma, $k$ is also clairvoyant at $t'$. Thus, by definition of the algorithm, $p_j(t') \le p_k(t')$. However, this implies that $k$ (after $t'$) is not executed again until $j$ completes. This contradicts $C_k \le C_j$.   
\end{proof}

  We continue by establishing several properties of entrypoints. The first observations states that the lifetimes of differents paths between entrypoints of consecutive layers must intersect.

\begin{observation}
    \label{obs:entrypoint:paths}
    Let $k_{x+1}, k'_{x+1} \in H^i_{x+1}$, let $k_x,k_x' \in L_x^i$, let $P$ be a $k_{x+1}$-$k_x$-path in $G_B$ that (apart from $k_x$) only visits jobs in $L_{x+1}^i$ and let $P'$ be a $k'_{x+1}$-$k'_x$-path in $G_B$ that (apart from $k_x'$) only visits jobs in $L_{x+1}^i$. Then $I(P) \cap I(P') \neq \emptyset$.
\end{observation}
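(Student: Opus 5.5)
Since $P$ and $P'$ are paths in $G_B$, \Cref{obs:lifetime:1} gives that $I(P)$ and $I(P')$ are intervals. It therefore suffices to exhibit one point in time that lies in both lifetimes. The natural candidate comes from the entrypoints $k_x,k'_x \in L_x^i$ together with the entering $\delta$-edges at the ends of the two paths.

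Let $(d,k_x)$ be the last edge of $P$ and $(d',k'_x)$ the last edge of $P'$. Because $d,d' \in L_{x+1}^i$ and $k_x,k'_x \in L_x^i$, both edges lie in $\delta_x^i$. \Cref{lem:ec:cuts:prop:1} then gives $s_{k_x} \le r_d \le C_{k_x} \le C_d$, and similarly for $d',k'_x$. Hence $C_{k_x}\in I(P)$ and $C_{k'_x}\in I(P')$ (truncated at $t$).

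Next, using that $k_{x+1},k'_{x+1}\in H_{x+1}^i$, I would argue that both lifetimes extend far enough back in time. Each entrypoint $k_{x+1}$ itself is the head of a $\delta_{x+1}^i$-edge $(e,k_{x+1})$, and $k_{x+1}\in L_{x+1}^i$ lies on an $E_N$-prefix to $k_x$'s predecessor.

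The core step is a case analysis, assuming for contradiction that the intervals are disjoint, say $\max I(P) < \min I(P')$. Then every job of $P$ finishes before every job of $P'$ is released. I would use the layer structure to derive a contradiction. The job $k'_x$ is released after all of $P$ completes, including $k_x$ and $d$. The path $P'$ starts at $k'_{x+1}$, which is reachable (via condition (ii) of entrypoints) from some $j\in A(t)\cap L^i_{\ge x+2}$. That job $j$ is alive at $t$, so its lifetime covers everything after $r_j$. I would show $r_j\le \max I(P)$; otherwise the same reasoning applies to the $P$-side witness $j_0$.

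Both witnesses $j,j_0\in A(t)$ are alive at $t$. The one released earlier, say $j_0$, has a lifetime containing $I(P')$ (up to $t$). So $j_0$ directly borrows from jobs of $P'$ in the same way it borrows from $P$. By \Cref{lem:ec:cuts:prop:1}(3), the job $d'$ is still non-clairvoyant at $C_{k'_x}$, which is where $E_N$ versus $E_C$ edges matter. The plan is to show that some job of $P$ (or $j_0$) gets an $E_N$-edge into $L_x^i$ or $L_{x+1}^i$, contradicting its layer index. A second possibility is to show that $k_x$ would be alive at $s_{k'_x}$, giving an $E_N$ edge that lowers the layer of a job in $L_{x+1}^i$.

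I expect this last step to be the main obstacle: pinning down which job acquires a forbidden $E_N$-edge. I would try first \Cref{obs:clairvoyant-jobs-block-earlier-jobs} applied at $s_{k_x}$. Any job alive at $s_{k_x}$ and released before it gets an $E_N$-edge to $k_x$, so it lies in $L_{\le x}^i$. This forces all $L_{x+1}^i$-jobs of both paths to be released after $s_{k_x}$ and $s_{k'_x}$. Combined with nesting of clairvoyant blocks, this should force the intervals $[s_{k_x},C_{k_x}]$ and $[s_{k'_x},C_{k'_x}]$ to be nested, and hence the lifetimes to overlap.
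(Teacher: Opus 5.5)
Your proposal has a genuine gap: the decisive step is left open as ``the main obstacle,'' and the claims you offer for it do not hold up.
\begin{itemize}
\item The assertion that the earlier-released witness $j_0$ ``has a lifetime containing $I(P')$'' is unsupported. A witness $j_0\in A(t)\cap L^i_{\ge x+2}$ of an entrypoint may be released long after $I(P')$ has ended, because its entry path can reach back in time through other jobs of $L^i_{\ge x+2}$.
\item Similarly, $r_j\le \max I(P)$ is never justified. The witness of $k'_{x+1}$, the entrypoint on the \emph{later} path, is in any case useless: its entry path only covers times from $r_{k'_{x+1}}>\max I(P)$ on.
\item The fallback claim that all $L^i_{x+1}$-jobs of both paths are released after $s_{k_x}$ and $s_{k'_x}$ is false under your own disjointness assumption. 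Every job of $P$ completes by $\max I(P)<r_{k'_x}\le s_{k'_x}$. Moreover, a job can avoid an $E_N$-edge to $k_x$ simply by finishing before $s_{k_x}$.
\item The nesting of the clairvoyant blocks is asserted without any argument.
\item The remark that $P$ runs along an $E_N$-prefix is also unjustified, since edges inside $L^i_{x+1}$ may be $E_C$-edges.
\end{itemize}

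The missing idea is to use the \emph{whole} witness path of the entrypoint on the \emph{earlier} path, not a single witness job. Assume w.l.o.g.\ $\max I(P)<\min I(P')$. Let $\bar{P}$ be an $(x+1)$-entry path from some $h\in A(t)\cap L^i_{\ge x+2}$ to $k_{x+1}$.
\begin{itemize}
\item Since $h\in A(t)$, \Cref{obs:lifetime:2} gives $I(\bar{P})\supseteq[r_{k_{x+1}},t]$.
\item Since $k_{x+1}\in P$, we get $I(P')\subseteq (C_{k_{x+1}},t]$. Hence every time in $I(P')$ lies in the lifetime of some job of $\bar{P}\setminus\{k_{x+1}\}\subseteq L^i_{\ge x+2}$.
\item The last edge $(d',k'_x)$ of $P'$ means $k'_x$ is executed at some time in $I_{d'}\subseteq I(P')$. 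So some $g\in L^i_{\ge x+2}$ has an edge $(g,k'_x)$ in $G_B$.
\end{itemize}
Any edge into $L^i_x$ places $g$ in $L^i_{\le x+1}$, which is a contradiction; this holds for an $E_C$-edge just as for an $E_N$-edge. This is why the search for a forbidden $E_N$-edge, which you identify as the main obstacle, is unnecessary. \Cref{lem:ec:cuts:prop:1} and \Cref{obs:clairvoyant-jobs-block-earlier-jobs} are not needed for this observation at all.
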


\begin{proof}
  Via proof by contradiction. To this end, assume $I(P) \cap I(P') = \emptyset$. Furthermore, without loss of generality, assume that $\max(I(P))< \min(I(P'))$. Since $k_{x+1}$ is an entrypoint of layer $L_{x+1}^i$, there must be an $(x+1)$-entry path $\bP$ from some $h \in A(t)$ to $k_{x+1}$.  Since $k_{x+1} \in \bP$ and $h \in \bP$, we get $[r_{k_{x+1}},t] \subseteq I(\bP)$ by \Cref{obs:lifetime:2}. 
  This, together with our assumption that $\max(I(P))< \min(I(P'))$ and $k_{x+1} \in P$, implies $I(P') \subseteq I(\bP) \setminus I_{k_{x+1}}$. However, this means that $k_x'$ is executed during the lifetime of a job in $\bP \setminus k_{x+1}$, which must be part of $L_{\geq x+1}^i$; a contradiction to the layer definition.
\end{proof}

The next two observations establish an order of the layers over time in the following way: entrypoints of layers with larger indices are executed at later points in time than entrypoints of layers with smaller indices. Intuitively, this means that paths $P$ from jobs $j \in L_x^i$ to $i$ go backwards in time w.r.t.~to the latest points of execution $t_{j'}$ of jobs $j' \in P$ (possibly in a non-monotonous way for the non-entrypoint jobs in $P$). This intuition will be heavily used in the remaining proofs of the paper.

\begin{observation}
  \label{obs:entrypoint:order}
  Let $k_x \in L_x^i$ and let $k_y$ be an entrypoint of a layer $L_y$ with $y \geq x+1$. Then, $t_y > t_x$ for the latest points in time $t_y, t_x \le t$ at which $k_y$ and $k_x$ are executed, respectively. 
  \end{observation}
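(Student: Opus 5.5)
We argue by contradiction: assume $t_y \le t_x$. The plan is to exhibit a job of layer $L^i_{\geq x+2}$ that is alive at time $t_x$. Since $k_x$ is executed at $t_x$, that job would directly borrow from $k_x \in L_x^i$, which is incompatible with the layer definition.

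\emph{Step 1 (an entry path into $k_y$).} Since $k_y \in H_y^i$, condition (ii) of the definition of entrypoints gives a path $P$ in $G_B$ from some $h \in A(t) \cap L^i_{\geq y+1}$ to $k_y$ that, apart from $k_y$, only visits jobs in $L^i_{\geq y+1} \subseteq L^i_{\geq x+2}$. Let $d'$ be the predecessor of $k_y$ on $P$, and let $P^-$ be the subpath of $P$ from $h$ to $d'$. Every job of $P^-$ lies in $L^i_{\geq x+2}$.

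\emph{Step 2 (the lifetime of $P^-$ covers $t_x$).} Because $h \in A(t)$ lies on $P^-$, \Cref{obs:lifetime:2} gives $I(P^-) = [\min_{d\in P^-} r_d, t]$. The edge $(d',k_y)$ means that $k_y$ is executed at some time $t'' \in I_{d'}$. Then $r_{d'} \le t'' \le t_{k_y} = t_y$, since $t_y$ is the latest time up to $t$ at which $k_y$ runs. By assumption $t_y \le t_x \le t$, so
\[
t_x \in [r_{d'}, t] \subseteq I(P^-).
\]
Hence some job $d \in P^-$ satisfies $t_x \in I_d$. This holds even in the boundary case $t_x = t_y$, because lifetimes are closed intervals.

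\emph{Step 3 (contradiction with the layers).} Since $k_x$ is executed at $t_x \in I_d$, we have $(d,k_x) \in E_N$ or $(d,k_x) \in E_C$, according to whether $k_x$ is non-clairvoyant or clairvoyant at $t_x$.
\begin{itemize}
\item If $(d,k_x) \in E_C$, then appending this edge to an $E_N$-path witnessing membership of $k_x$ in $L_x^i$ shows that $d \in L^i_{\leq x+1}$.
\item If $(d,k_x) \in E_N$, then $d$ reaches the same target as $k_x$ with the same number of $E_C$-edges, so $d \in L^i_{\leq x}$.
\end{itemize}
In both cases $d \in L^i_{\leq x+1}$, contradicting $d \in L^i_{\geq x+2}$. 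Therefore $t_y > t_x$.

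The step needing most care is Step 2: we must guarantee that the portion of the path \emph{before} $k_y$ stays in layers $\geq y+1 \geq x+2$ and still has a lifetime reaching back to $t_x$. The key points are that the lifetime is anchored at $t$ by the active job $h$, and at or before $t_y$ by the job $d'$, which borrows from $k_y$.
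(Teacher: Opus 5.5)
Your proof is correct and follows the same route as the paper. You drop $k_y$ from a $y$-entry path, use \Cref{obs:lifetime:2} to see that the remaining path's lifetime covers $[t_y,t]\ni t_x$, and obtain an edge from a job in $L^i_{\geq x+2}$ to $k_x$ that contradicts the layer definition. Your Step 2 justifies $t_y \in I(P^-)$ more carefully than the paper does. In Step 3 the $E_C$ case needs no appending: the single edge $(d,k_x)\in E_C$ already witnesses $d \in L^i_{\leq x+1}$.
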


  \begin{proof}
  Via proof by contradiction. Assume $t_x \ge t_y$. Let $P$ denote a $y$-entry path from some $d \in A(t) \cap L^i_{\geq y + 1}$ to $k_y$, which exists because $k_y$ is an entrypoint. %
  Let $P'$ denote the subpath of $P$ without $k_y$. Then, we must have $t_y \in I(P')$, as otherwise it would be impossible to extend $P'$ to $P$. Since $d \in P'$, we get $[t_y,t] \subseteq I(P')$ by \Cref{obs:lifetime:2}. This implies $t_x \in I(P')$, and thus, the existence of an edge from some $h \in P'$ to $k_x$; a contradiction to the layer definition. 
  \end{proof}

  \begin{observation}
    \label{obs:entrypoint_blocking}
    Let $k \in H_x$ be an entrypoint of layer $L_x$ and let $d \in L^i_{\leq x-1}$ be a job from an earlier layer. Then, $d$ is not executed during $[C_k,t]$. Furthermore, if $d \in L^i_x$, then $d$ is not executed during $[C_k,t]$ while being non-clairvoyant.
\end{observation}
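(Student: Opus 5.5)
The plan is to use the $x$-entry path that certifies $k \in H_x^i$. Its lifetime covers the whole interval $[C_k,t]$. So any job executed during that interval gets an incoming borrow edge from a job in $L_{\geq x+1}^i$, and such an edge is incompatible with the job lying in a low layer.

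\textbf{Monotonicity of layers along edges.} First I would isolate this elementary property of \Cref{def:layers}. For an edge $(a,b)$ of $G_B$ with $b \in L_y^i$:
\begin{itemize}
\item if $(a,b) \in E_N$, then $a \in L^i_{\leq y}$;
\item if $(a,b) \in E_C$, then $a \in L^i_{\leq y+1}$.
\end{itemize}
For $y=0$ and an $E_N$-edge, prepend $(a,b)$ to an $E_N$-path from $b$ to $i$. For $y \geq 1$, the job $b$ reaches $L^i_{y-1}$ by a path consisting of $E_N$-edges followed by a single $E_C$-edge. Prepending an $E_N$-edge keeps this form, so $a$ reaches $L^i_{y-1}$ in the required way. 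Hence $a$ lies in $L_y^i$ unless it already belongs to an earlier layer. If $(a,b) \in E_C$ and $b \in L^i_y$, then $(a,b)$ itself is a path ending with an $E_C$-edge into $L^i_y$, so $a \in L^i_{\leq y+1}$.

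\textbf{The certifying path covers $[C_k,t]$.} Since $k \in H_x^i$, there is an $x$-entry path $P$ from some $h \in A(t) \cap L^i_{\geq x+1}$ to $k$. Let $P'$ be $P$ without $k$, and let $d'$ be the predecessor of $k$ on $P$. Then $d' \in P'$ and $(d',k) \in E$, so $k$ is executed at some time $t'' \in I_{d'}$ with $t'' \le C_k$. Since $h \in P' \cap A(t)$, \Cref{obs:lifetime:2} gives $I(P') = [\min_{j\in P'} r_j, t]$. We have $\min_{j\in P'} r_j \le r_{d'} \le t'' \le C_k$, hence $[C_k,t] \subseteq I(P')$. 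Moreover all jobs of $P'$ lie in $L^i_{\geq x+1}$.

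\textbf{First claim.} Suppose $d \in L_y^i$ with $y \le x-1$ is executed at some $t' \in [C_k,t]$. Then $t' \in I_{h'}$ for some $h' \in P'$, so $(h',d) \in E_N \cup E_C$. By monotonicity, $h' \in L^i_{\leq y+1} \subseteq L^i_{\leq x}$. This contradicts $h' \in L^i_{\geq x+1}$.

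\textbf{Second claim.} Suppose $d \in L_x^i$ is executed at some $t' \in [C_k,t]$ while $d \in N(t')$. Then the edge $(h',d)$ lies in $E_N$, and monotonicity gives $h' \in L^i_{\leq x}$, again a contradiction.

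\textbf{Main obstacle.} The only delicate point is the inclusion $[C_k,t] \subseteq I(P')$, which requires the left endpoint of $I(P')$ to be at most $C_k$. This is exactly what the last edge $(d',k)$ of the entry path supplies: $k$ is executed inside $I_{d'}$, and $d' \in P'$. The monotonicity step should also be checked carefully against the "earliest layer" convention in \Cref{def:layers}, especially the base case $L_0^i$.
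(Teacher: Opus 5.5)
Your proposal is correct and follows the paper's proof: take the $x$-entry path certifying $k \in H_x^i$, observe that its part without $k$ has lifetime covering $[C_k,t]$, and derive a contradiction from the resulting borrow edge that goes from a job in $L^i_{\geq x+1}$ into $L^i_{\leq x-1}$ (or an $E_N$-edge into $L^i_x$). The only difference is that you spell out two steps the paper states in one line each: the layer-monotonicity property along $E_N$- and $E_C$-edges, and why the left endpoint of $I(P')$ is at most $C_k$ (via the last edge $(d',k)$).
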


\begin{proof}
    Since $k$ is an entrypoint of layer $L^i_x$, there must be an $x$-entrypath $P$ from some job $j \in A(t) \cap L^i_{\geq x+1}$ to $k$. By~\Cref{obs:lifetime:2}, we have $I(P) \supseteq [r_k,t]$. Furthermore, as some job in $P \setminus \{k\}$ must be interrupted by $k$ for $P$ to be a path, we have $I(P\setminus\{k\}) \supseteq [C_k,t]$.

    If $d\in L^i_{ \leq x-1}$ is executed during $[C_k,t]$, then it interrupts some job $d' \in P\setminus \{k\}$, which implies $(d',d) \in E$. However, since $d \in L^i_{\leq x-1}$ and $d' \in L^i_{\geq x+1}$, the edge $(d',d)$ contradicts the layer definition.

    If  $d \in L^i_x$ is executed during $[C_k,t]$ while being non-clairvoyant, then this implies $(d',d) \in E_N$ for some $d' \in  P\setminus \{k\} \subseteq L^i_{\geq x+1}$. However, the edge $(d',d) \in E_N$ implies $d'\in L^i_{\leq x}$, a contradiction.
\end{proof}

Finally, we show that all entrypoints of a layer $L_{x+1}^i$ can reach exactly the same entrypoints of layer $L_x^i$.

  \begin{lemma}
  \label{lem:entrypoint:reachability}
  Let $k_{x+1}, k'_{x+1} \in H^i_{x+1}$ be two entrypoints of layer $L_{x+1}^i$. Then, $k_{x+1}, k'_{x+1}$ can reach exactly the same entrypoints of layer $L_{x}^i$ (via paths that only visit jobs in $L_{x+1}^i$ apart from the endpoints in $H_x^i$).
  \end{lemma}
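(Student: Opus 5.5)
By symmetry it suffices to fix an entrypoint $k_x \in H_x^i$ and a path $P$ from $k_{x+1}$ to $k_x$ that, apart from $k_x$, only visits jobs in $L_{x+1}^i$. The goal is then to build a path $P''$ from $k'_{x+1}$ to $k_x$ with the same property. Since $k'_{x+1}$ is an entrypoint of $L_{x+1}^i$, condition (i) of the definition means it has an incoming $\delta_{x+1}^i$-edge. Together with \Cref{obs:active-entrypath}, this suggests that $k'_{x+1}$ really does have an onward path into $L_x^i$. Concretely, by the definition of layers, $k'_{x+1} \in L_{x+1}^i$ can reach some $k'_x \in L_x^i$ via a path $P'$ that uses only $E_N$-edges except for a final $E_C$-edge. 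Every vertex of $P'$ except the last then lies in $L_{x+1}^i$: such a vertex reaches $L_x^i$ with one $E_C$-edge, so it lies in $L_{\le x+1}^i$, and it cannot lie in $L_{\le x}^i$ because $k'_{x+1}$ reaches it with $E_N$-edges only.

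Now \Cref{obs:entrypoint:paths} applies to $P$ and $P'$ and gives $I(P) \cap I(P') \neq \emptyset$. The plan is to splice the two paths at a common point of their lifetimes. Pick $t^\ast \in I(P)\cap I(P')$. The lifetime of $P' \setminus \{k'_x\}$ should still be an interval that meets $I(P\setminus\{k_x\})$. If it does not, I would shrink $P'$ by \Cref{obs:lifetime:1}, or argue directly that the last $E_C$-edge forces $I_{k'_x}$ to meet the lifetime of the preceding vertex.

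At time $t^\ast$ some job $d \in P$ is alive and some job $d' \in P'\setminus\{k'_x\}$ is alive. Let $e$ be whichever of the two is processed at $t^\ast$ (the schedule is non-idling inside these lifetimes). This gives an edge $(d',e)$ or $(d,e)$, depending on which job is running. The cleanest version goes via \Cref{obs:lifetime:1}: since $I(P)$ and $I(P')$ are intervals that overlap, their union is an interval. Take the first job $d$ on $P$ whose lifetime intersects $I(P'\setminus\{k'_x\})$; the job processed at some time in $I_d\cap I_{d'}$ gives an edge from $d'$ into the vertex set of $P$, or from $d$ into $P'$.

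In the first case I concatenate: $k'_{x+1} \to \dots \to d' \to (\text{job of }P) \to \dots \to k_x$. The intermediate vertices lie in $L_{x+1}^i$ as required, provided the edge does not land in $L_{\ge x+2}^i$ or in $L_{\le x-1}^i$. Layer monotonicity rules both out: an edge from $L_{x+1}^i$ into $L_{\le x-1}^i$ is impossible, and targets in $L_{\ge x+2}$ never arise because $P$ avoids them.

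The main obstacle is the second case, where the edge goes from $P$ into $P'$ rather than the other way. Here I would use time ordering instead. The suffix of $P$ up to $k_x$ runs backwards in time (\Cref{obs:entrypoint:order}), and $k_x$ is executed during the lifetime of its predecessor in $P$ while being clairvoyant, by \Cref{lem:ec:cuts:prop:1}. By \Cref{obs:clairvoyant-jobs-block-earlier-jobs} and \Cref{obs:entrypoint_blocking}, every job of $P'$ alive after $s_{k_x}$ is blocked until $C_{k_x}$. So either some job of $P'$ is alive while $k_x$ runs, which gives a direct edge to $k_x$, or $I(P')$ lies entirely before that time. In the latter case I would use \Cref{obs:lifetime:2}: the entry path reaching $k'_{x+1}$ from an active job covers $[r_{k'_{x+1}},t]$, and re-routing through it yields the required edge into $P$. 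I expect this re-routing to be the delicate step.
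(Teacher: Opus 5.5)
Your proposal has a genuine gap, and it sits in the case you yourself flag as the main obstacle.

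\textbf{The splicing step.} As written, the splice is not justified. At $t^\ast\in I(P)\cap I(P')$, non-idleness only says that \emph{some} job runs, possibly one lying on neither path. So ``let $e$ be whichever of the two is processed'' does not give an edge between the paths. You need a time that lies in $I(P'\setminus\{k'_x\})$ and at which a job of $P$ provably runs.

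The paper obtains such a time only in the case $\max I(P')\ge \max I(P)$:
\begin{itemize}
\item either $\max I(P)=C_d$ for some $d\in P$, so $d$ runs just before it;
\item or $\max I(P)=t$, in which case $t_{k_{x+1}}\in I(P')$ because $t_{k_{x+1}}>t_{k'_x}$ by \Cref{obs:entrypoint:order}.
\end{itemize}
If the only job of $P'$ alive at that time is $k'_x$, the paper shows that the predecessor $j'$ of $k'_x$ on $P'$ is also alive there. This uses $t_{k'_x}<t_{k_{x+1}}$ together with $C_{j'}\ge C_{k'_x}$ from \Cref{lem:ec:cuts:prop:1}. Your remark that the final $E_C$-edge makes $I_{k'_x}$ meet $I_{j'}$ does not give this.

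\textbf{The case $\max I(P')<\max I(P)$.} This case cannot be handled by splicing, and your sketch for it fails in three places.
\begin{itemize}
\item The claim that every job of $P'$ alive after $s_{k_x}$ is blocked until $C_{k_x}$ is false for jobs released after $s_{k_x}$. \Cref{obs:clairvoyant-jobs-block-earlier-jobs} only blocks jobs already alive at a clairvoyant execution, and a newly released job is processed at once by the SETF branch.
\item Your dichotomy ignores $I(P')$ lying in a gap between executions of $k_x$ or after $C_{k_x}$. It also ignores the alive job being $k'_x$ itself.
\item Above all, re-routing through the entry path $\bar P$ into $k'_{x+1}$ cannot produce a path \emph{from} $k'_{x+1}$. $\bar P$ ends at $k'_{x+1}$, and its other vertices lie in $L^i_{\geq x+2}$, which the required path may not visit.
\end{itemize}

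The paper argues by contradiction from the assumption that $k'_{x+1}$ cannot reach $k_x$. For $h\in P\setminus\{k_x\}$ it establishes:
\begin{itemize}
\item no such $h$ is alive at $s^-_{k'_x}$, since otherwise there would be an $E_N$-edge from $L^i_{x+1}$ into $L^i_x$;
\item no such $h$ is released after $\max I(P')$ or runs non-clairvoyantly after it; here $\bar P$ is used only to exhibit a forbidden $E_N$-edge from $L^i_{\geq x+2}$ into $L^i_{x+1}$;
\item no such $h$ runs during $[C_{k'_x},\max I(P')]$, since that would be a splice;
\item any such $h$ released after $s_{k'_x}$ that emits before $C_{k'_x}$ finishes before $k'_x$.
\end{itemize}
Together these force $t_h<C_{k'_x}=t_{k'_x}$ for every $h\in P\setminus\{k_x\}$. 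In particular $t_{k_{x+1}}<t_{k'_x}$, which contradicts \Cref{obs:entrypoint:order}. Your proposal contains nothing that plays the role of this argument.
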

  
  Since the paths in the above lemma are specifically $x$-entry paths, we conclude the following corollary.
  
  \begin{corollary}
  \label{coro:entrypoint:reachability-entrypaths}
  Let $k_{x+1}, k'_{x+1} \in H^i_{x+1}$ be two entrypoints of layer $L_{x+1}^i$. Then, $k_{x+1}, k'_{x+1}$ can reach exactly the same entrypoints of layer $L_{x}^i$ via $x$-entry paths.
  \end{corollary}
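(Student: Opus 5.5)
The plan is to derive the corollary from \Cref{lem:entrypoint:reachability} by comparing two sets. For an entrypoint $k_{x+1} \in H^i_{x+1}$, let $\mathcal{R}(k_{x+1})$ be the set of entrypoints in $H_x^i$ that $k_{x+1}$ reaches via an $x$-entry path. Let $\mathcal{R}'(k_{x+1})$ be the set it reaches via paths whose vertices, apart from the endpoint in $H_x^i$, all lie in $L_{x+1}^i$. \Cref{lem:entrypoint:reachability} says exactly that $\mathcal{R}'(k_{x+1}) = \mathcal{R}'(k'_{x+1})$. So it suffices to show $\mathcal{R}(k_{x+1}) = \mathcal{R}'(k_{x+1})$ for every entrypoint $k_{x+1} \in H^i_{x+1}$.

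The inclusion $\mathcal{R}' \subseteq \mathcal{R}$ is immediate, and it is the remark made before the corollary. A path that starts at $k_{x+1} \in L_{x+1}^i$, ends at some $k \in H_x^i$ and otherwise only visits $L_{x+1}^i \subseteq L_{\geq x+1}^i$ is by definition an $x$-entry path.

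For the reverse inclusion, take an $x$-entry path $P$ from $k_{x+1}$ to $k \in H_x^i$. We must turn it into a path that avoids $L_{\geq x+2}^i$. I would induct on the number of vertices of $P$ in $L_{\geq x+2}^i$.

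If that number is positive, consider the last vertex $w \in L_{\geq x+2}^i$ on $P$ and its successor $u \in L_{x+1}^i$. The edge $(w,u)$ means $u$ is executed at some time $t' \in I_w$. The prefix $P_1$ of $P$ from $k_{x+1}$ to $w$ is a path, so by \Cref{obs:lifetime:1} its lifetime $I(P_1)$ is an interval. I would try to show that $t' \in I(P_1)$ is also covered by the lifetime of a vertex of $P_1$ lying in $L_{x+1}^i$, or by $I_{k_{x+1}}$ itself. The tools for this are the time-ordering of layers (\Cref{obs:entrypoint:order}, \Cref{obs:entrypoint_blocking}) and \Cref{obs:lifetime:2} applied to the $(x+1)$-entry path into $k_{x+1}$. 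If this succeeds, $u$ is reached directly from an $L_{x+1}^i$-vertex of $P_1$, so $w$ and the part of $P_1$ after that vertex can be bypassed. The number of vertices in $L_{\geq x+2}^i$ then strictly decreases, and the induction closes.

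This bypass step is the main obstacle. Vertices of $P_1$ in higher layers may have lifetimes that extend the interval beyond the $L_{x+1}^i$-vertices. If the lifetime argument fails, I would fall back on a cruder route. We have $u \in L_{x+1}^i$, and $u$ reaches $k$ inside $L_{x+1}^i$. I would then try to argue via \Cref{obs:entrypoint:paths} that the suffix from $u$ to $k$ and an $L_{x+1}^i$-path from $k_{x+1}$ into $H_x^i$ (which exists by \Cref{obs:active-entrypath} together with \Cref{lem:entrypoint:reachability}) have intersecting lifetimes. Splicing the two paths at a common time would then give a path from $k_{x+1}$ to $k$ that stays inside $L_{x+1}^i$, and hence $k \in \mathcal{R}'(k_{x+1})$.
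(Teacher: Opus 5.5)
Your reduction is sound, and you are right that more than the inclusion $\mathcal{R}'(k_{x+1})\subseteq\mathcal{R}(k_{x+1})$ is required. The paper treats the corollary as immediate from \Cref{lem:entrypoint:reachability}, remarking only that the lemma's paths are $x$-entry paths, and that remark gives exactly this easy inclusion. The gap is that your proposal never proves the reverse inclusion.

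The bypass step is only something you ``would try to show''. Nothing in your sketch forces the time $t'\in I_w$ at which $u$ runs to be covered by the lifetime of an $L^i_{x+1}$-vertex of $P_1$. For instance, if $P_1$ is the single edge $(k_{x+1},w)$ and $t'>C_{k_{x+1}}$, then $w$ is the only vertex of $P_1$ alive at $t'$, and the covering you need fails.

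In that configuration the missing information lies outside $P$. Take an $(x+1)$-entry path $\bar{P}$ from some $h\in A(t)\cap L^i_{\geq x+1}$ into $k_{x+1}$. The lifetime $I(\bar{P}\setminus\{k_{x+1}\})$ is an interval ending at $t$ (\Cref{obs:lifetime:2}) and contains an execution time of $k_{x+1}$, so it contains $t'$. This yields condition (ii) of the entrypoint definition for $u$. Condition (i) holds via $(w,u)\in\delta^i_{x+1}$, so $u\in H^i_{x+1}$, and \Cref{lem:entrypoint:reachability} applied to $u$ and $k_{x+1}$ finishes. The case where $t'$ precedes that lifetime would still have to be handled.

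Your fallback does not work as stated either. \Cref{obs:entrypoint:paths} assumes both paths start at entrypoints of $L^i_{x+1}$, and whether $u$ is one is precisely what is unknown. Moreover, overlapping lifetimes alone do not produce an edge from the $k_{x+1}$-path into the $u$-path that stays inside $L^i_{x+1}$. Constructing such an edge is the entire case analysis in the proof of \Cref{lem:entrypoint:reachability}.

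A cleaner repair avoids proving $\mathcal{R}=\mathcal{R}'$ altogether. Rerun the proof of \Cref{lem:entrypoint:reachability} with $P$ an arbitrary $x$-entry path from $k_{x+1}$ to $k_x$. Use the contradiction hypothesis that $k'_{x+1}$ has no $x$-entry path to $k_x$, and keep $P'$ inside $L^i_{x+1}$ as in the original proof. Each step still goes through:
\begin{itemize}
\item The proof of \Cref{obs:entrypoint:paths} never uses that interior vertices lie in $L^i_{x+1}$.
\item Every splice in Case~(1) and in observations~(d) and~(e) of Case~(2) concatenates a prefix of $P'$, one edge, and a suffix of $P$. It therefore yields an $x$-entry path from $k'_{x+1}$ to $k_x$, which is still a contradiction.
\item Observation~(a) only needs $h\in L^i_{\geq x+1}$.
\item Observation~(b) is the one place that genuinely uses $h\in L^i_{x+1}$. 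It enters the final contradiction only through the claim $t_h<C_{k'_x}$ for $h=k_{x+1}$, which does lie in $L^i_{x+1}$.
\end{itemize}
This gives the corollary directly.
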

  
To conclude this section on properties of entrypoints, we prove \Cref{lem:entrypoint:reachability}.

  \begin{proof}[Proof of \Cref{lem:entrypoint:reachability}]
  We prove the lemma via contradiction. 
  To this end, assume that there exists an entrypoint $k_x$ of layer $L_x^i$ that can be reached by $k_{x+1}$ via a path that only visit jobs in $L_{x+1}^i$ apart from $k_x$, but $k_x$ cannot be reached by $k_{x+1}'$. Let $P$ denote the corresponding $k_{x+1}$-$k_x$-path. Since $k'_{x+1} \in L_{x+1}^i$, there exists a path $P'$ from $k_{x+1}'$ to some $k_x' \in L_x^i$ (note that in fact $k_x' \in H_x^i$) such that $P'$ only visits jobs in $L_x^i$ and the last edge of $P'$ is part of $\delta_x^i$. 
  Note that $k_x \neq k'_x$, and $k_x'$ is actually an entrypoint, that is, $k_x' \in H_x^i$. %

  By \Cref{obs:entrypoint:paths}, we have $I(P) \cap I(P') \not= \emptyset$. We distinguish between the two cases (1) $\max(I(P')) \ge \max(I(P))$ and (2) $\max(I(P)) > \max(I(P'))$. In both cases, we will prove a contradiction, which will imply the statement of the lemma.

  \paragraph*{Case (1): $\max(I(P')) \ge \max(I(P))$.} If $\max(I(P')) \ge \max(I(P))$ and $I(P') \cap I(P) \not= \emptyset$, then we must have $\max(I(P)) \in I(P')$. 
  Note that by definition of $I(P)$, we either have (i) $\max(I(P)) = C_d$ for some job $d \in P$ or  (ii) $\max(I(P)) = t$.
  
  \textbf{Subcase (i):} If $\max(I(P)) = C_d$ for some job $d \in P$, then the job $d$ must be executed during the lifetime of some job $h$ in $P'$, and thus, $(h,d) \in E$. If $h \not= k_x'$, then 
  there exists a path from $k_{x+1}'$ to $h$, and via $(h,d)$ to $k_x$ that apart from $k_x$ only visits jobs in $L_{x+1}^i$;
  a contradiction to our assumption. 
  However, if $h = k_x'$, 
  we could construct the same path, but this path would visit two jobs outside of $L_x^i$ and the contradiction does not necessarily follow.
  Thus, it remains to argue that we always have $h \not= k_x'$.  
  If $h = k_x'$, then we must have $t_{k_x'} < \max(I(P))$ as otherwise we would get $t_{k_x'} \ge \max(I(P)) \ge t_{k_{x+1}}$, which is a contradiction to \Cref{obs:entrypoint:order}. This means that $k_x'$ is alive at $\max(I(P))$ (since $h = k_x'$ is not executed at $\max(I(P))$) and not executed again afterwards.
  
  Thus, $k_x'$ is alive at point in time $t$.
  Let $(j',k_x') \in \delta_x^i$ be the final edge of $P'$. %
  By \Cref{lem:ec:cuts:prop:1}, we have $C_{j'} \ge C_{k_x'}$ and, therefore, $C_{j'} > t$. Furthermore, since $(j',k_x') \in \delta_x^i$ is an edge, $j'$ must be released before $t_{k_x'} < \max(I(P))$. This means that $j'$ is alive at $\max(I(P))$. Thus, we can assume $h = j' \not= k_x'$ and arrive at the same contradiction as described at the beginning of Subcase (i).
  
  \textbf{Subcase (ii):} If  $\max(I(P)) = t$, then we also have  $\max(I(P')) = t$ by our assumption that $\max(I(P')) \ge \max(I(P))$.
  Since we must have $\min\{t_{k_{x+1}},t_{k_{x+1}'}\} \ge \max\{t_{k_{x}},t_{k_{x}'}\}$ by \Cref{obs:entrypoint:order}, the fact that $\max(I(P)) = \max(I(P')) = t$ implies $t_{k_{x+1}},t_{k_{x+1}'} \in I(P) \cap I(P')$. 
  
  In particular, we get $t_{k_{x+1}} \in I(P')$, which means that $k_{x+1}$ is processed during the lifetime of some job $h \in P'$ at point in time $t_{k_{x+1}}$. Specifically, $(h,k_{x+1}) \in E$. 
  If $h \not= k_x'$, then
  the path from $k_{x+1}'$ to $h$ and via $(h,d)$ to $k_x$ visits, apart from $k_x$, only jobs in $L_{x+1}^i$;
  a contradiction to our assumption. 
  
  However, if $h = k_x'$, then the combined path would visit two jobs outside $L_{x+1}^i$ and the contradiction would not occur.
  Thus, it remains to argue that $h \not= k_x'$. 
  If $h = k_x'$, then $k_x'$ is still alive at $t_{k_{x+1}}$, because $k_{x+1}$ is being processed at time $t_{k_{x+1}}$ during $I_h$. Since $k_x'$ is not executed after $t_{k_x'} < t_{k_{x+1}}$, this means that $k_x'$ is still alive at point in time $t$. 
  Consider the edge $(j',k_x') \in \delta_x^i$ that is part of path $P'$. By definition of path $P'$ it must contain such an edge. By \Cref{lem:ec:cuts:prop:1}, we have $C_{j'} \ge C_{k_x'}$ and, therefore, $C_{j'} \ge t$. Furthermore, since $(j',k_x') \in \delta_x^i$ is an edge, $j'$ must be released before $t_{k_x'} < t_{k_{x+1}}$. This means that $j'$ is alive at $t_{k_{x+1}}$. Thus, we can assume $h = j' \not= k_x'$ and arrive at the same contradiction as described at the beginning of Subcase (ii).

  \paragraph*{Case (2): $\max(I(P')) < \max(I(P))$.} We start by making the following observations:
  \begin{enumerate}[(a)]
      \item No job in $h \in P \setminus \{k_x\}$ is alive at $s_{k_x'}^-$. Otherwise, we would get an edge $(h,k_x') \in E_N$ for a job $h \in L^i_{\geq x+1}$; a contradiction to the layer definition. Thus, all jobs $h \in P \setminus \{k_x\}$ must either satisfy $C_h < s_{k_x'}$ or $r_h \ge s_{k_x'}$.
      \item No job $h \in P \setminus \{k_x\}$ can have $r_h > I(P')$.
      Let $\bP$ be a $(x+1)$-entry path from some $d \in A(t) \cap L^i_{\geq x+2}$ to $k_{x+1}'$
      If $r_h > I(P')$, then $r_h \in I(\bP) \setminus I_{k_{x+1}'}$, which implies the existence of an edge $(d,h) \in E_N$ with $h \in L_{x+1}^i$ and $d \in L^i_{\geq x+2}$; a contradiction to the layer definition.
  
      More generally, no job in $h \in P \setminus \{k_x\}$ can be executed at a point in time $t' > \max(I(P'))$ while being non-clairvoyant. Otherwise, we arrive at the same contradiction.
  
      \item The combination of the first two observations implies that each $h \in P \setminus \{k_x\}$ either satisfies $C_h < s_{k_x'}$ or $s_{k_x'} \le r_h \le \max(I(P'))$.
  
      \item If $\max(I(P')) \ge r_h \ge C_{k_x'}$ for some $h \in P \setminus \{k_x\}$, then $h$ is being processed during the lifetime of some job in $P' \setminus \{k_x'\}$, which implies that $P$ and $P'$ can be combined into a $k_{x+1}'$-$k_x$-path that apart from $k_x$ only visits jobs in $L_{x+1}^i$; a contradiction to our assumption that $k_{x+1}'$ cannot reach $k_x$ via such a path.
      
      \item More generally, no job in $h \in P \setminus \{k_x\}$ can be executed at a point in time $C_{k_x'} \le  t' \le \max(I(P'))$. Otherwise, we arrive at the same contradiction.
      
      \item Consider a job $h \in P \setminus \{k_x\}$ with $r_h \geq s_{k_x'}$. If $h$ emits before $k_x'$ completes, then the algorithm will complete $h$ before $k_x'$ (cf.~\Cref{obs:clairvoyant-jobs-block-earlier-jobs}).
      
      Thus, all jobs $h \in P \setminus \{k_x\}$ that are still alive at $C_{k_x'}$ are still non-clairvoyant and, by Points (d) and (e), will therefore not be executed again.  
  \end{enumerate}
  
  Let $h \in P \setminus \{k_x\}$ and let $t_h$ be the latest point in time before $t$ when $h$ is executed.
  We claim that $t_h < C_{k_x'}$. If $C_h < s_{k_x'}$ then the claim trivially holds. Thus, by (c), we can assume that $s_{k_x'} \leq r_h \leq \max(I(P'))$.
  By (d), we know that it remains to verify the claim for the case $s_{k_x'} \leq r_h \leq C_{k_x'}$. 
  If $h \in D(C_{k_x'})$, the claim clearly holds.
  Otherwise, that is $h \in A(C_{k_x'})$, (f) gives that $h \in N(C_{k_x'})$. Now, (e) implies that $h$ is not executed between $C_{k_x'}$ and $\max(I(P'))$. Thus, $h \in N(\max(I(P')))$. Finally, (b) gives that $h$ cannot be executed after time $\max(I(P'))$ while being non-clairvoyant. This proves that $t_h < C_{k_x'}$.

  We will conclude the proof by showing that this leads to a contradiction. 
  By the definition of this case (that is, $\max(I(P')) < \max(I(P))$), we can conclude that $C_{k_x'} < t$, and thus, $t_{k_x'} = C_{k_x'}$.
  Now by the above claim using that $k_{x+1} \in P \setminus \{k_x\}$ we obtain $t_{k_{x+1}} < C_{k_x'} = t_{k_x'}$. This is a contradiction to \Cref{obs:entrypoint:order}.
  \end{proof}

\subsection{Bottlenecks}

We continue by proving~\Cref{thm:bottleneck:uniqueness}, i.e., that bottlenecks are unique (if they exist).

\uniqueBottleneck*

If $L_x^i$ has a bottleneck because $|H_{x}^i| =1$, then $L_x^i$ clearly has only one bottleneck since it only has one entrypoint. Thus, it remains to consider the case where $|H_x^i| > 1$ but $L_x^i$ still has a bottleneck.

To prove the theorem, we first show the following two auxiliary lemmas.

\begin{lemma}
  \label{lem:aux:multiple:bottlenecks:1}
  Let $j \in A(t) \cap L_{x+1}^i$ and $k_{x+1} \in H_{x+1}^i$. Let $k_x \in H_x^i$ be a job that can be reached by $k_{x+1}$ via an $x$-entry path. Then, $j$ can also reach $k_x$ via an $x$-entry path. 
\end{lemma}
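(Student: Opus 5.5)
We reduce the statement to the reachability property of entrypoints of layer $L_{x+1}^i$ given by \Cref{coro:entrypoint:reachability-entrypaths}. The job $j$ lies in $A(t) \cap L_{x+1}^i \subseteq A(t) \cap L_{\geq x+1}^i$, so \Cref{obs:active-entrypath} provides an $x$-entry path $P_0$ from $j$ to some entrypoint $k_x' \in H_x^i$. This alone gives no control over which entrypoint $P_0$ reaches. We therefore route $j$ through an entrypoint of $L_{x+1}^i$.

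\textbf{Step 1: reach an entrypoint of $L_{x+1}^i$ from $j$.} Since $j \in A(t)$, we would like $j$ to reach $k_{x+1}$ (or some entrypoint of $L_{x+1}^i$) using only jobs of $L_{\geq x+1}^i$. As $k_{x+1}$ is an entrypoint of $L_{x+1}^i$, there is an $(x+1)$-entry path $\bP$ from some $d \in A(t) \cap L^i_{\geq x+2}$ to $k_{x+1}$. By \Cref{obs:lifetime:2}, $I(\bP) = [\min_{h\in \bP} r_h, t]$, and $r_j$ lies in this interval or precedes it. If $r_j \le \min_{h \in \bP} r_h$, then $j$ is alive during all of $I(\bP)$. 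Hence $j$ has an edge to $k_{x+1}$ itself, since $k_{x+1}$ is processed during the lifetime of its predecessor on $\bP$, which lies inside $I_j$. Otherwise $r_j$ lies inside $I(\bP)$, so some $h \in \bP$ is alive when $j$ starts being processed, giving an edge $(h,j)$. That edge runs the wrong way.

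The cleaner route is to use $P_0$ directly. Let $P_0$ end with the edge $(d',k_x') \in \delta_x^i$, where $d' \in L_{x+1}^i$. We then compare $P_0$ with the $x$-entry path $P$ from $k_{x+1}$ to $k_x$, mimicking the case analysis of \Cref{lem:entrypoint:reachability}. Both paths visit only jobs of $L_{x+1}^i$ before their last vertex (for $P_0$, trim any prefix in higher layers; $j \in L_{x+1}^i$ already).

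\textbf{Step 2: adapt the proof of \Cref{lem:entrypoint:reachability} with $j$ in place of $k'_{x+1}$.} That proof used only two facts about $k'_{x+1}$. First, $I(P)\cap I(P')\neq\emptyset$, via \Cref{obs:entrypoint:paths}. Second, the order property of \Cref{obs:entrypoint:order}, together with the existence of an $(x+1)$-entry path into $k'_{x+1}$, was used in Case (2)(b). For $j \in A(t)$, \Cref{obs:lifetime:2} gives $I(P_0) = [\min r, t]$. Intersection with $I(P)$ is then automatic, since $k_{x+1}$ is executed at some time $\le t$ and $I(P)$ contains that time, which is at least $r_{k_{x+1}}$.

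Case (1), where $\max I(P_0) = t \ge \max I(P)$, goes through verbatim. Some job $h$ of $P_0$ is alive when the maximizing job of $P$ (or $k_{x+1}$ at $t_{k_{x+1}}$) is executed. If $h = k_x'$, we swap $h$ for the $\delta$-predecessor $d'$ exactly as in Subcases (i) and (ii). We then splice $P_0$ into $P$ and obtain an $x$-entry path from $j$ to $k_x$. Case (2) cannot occur, since $\max I(P_0) = t \ge \max I(P)$.

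\textbf{Main obstacle.} The delicate point is Subcase (ii). There we must show that some job of $P_0$ other than $k_x'$ is alive at $t_{k_{x+1}}$, using \Cref{lem:ec:cuts:prop:1} and \Cref{obs:entrypoint:order} ($t_{k_{x+1}} > t_{k_x'}$). We must also ensure that the spliced walk stays within $L_{x+1}^i$ apart from its endpoint $k_x$. The first job of the spliced walk in $L_x^i$ might be another entrypoint; in that case we truncate the walk there and iterate the argument, or note that the conclusion only needs some entrypoint reached.
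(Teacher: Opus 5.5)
There is a gap in Step 2, at the sentence claiming that $I(P_0)\cap I(P)\neq\emptyset$ is ``automatic.'' You only know $I(P_0)=[\min_{h\in P_0} r_h,\,t]$. The fact that $k_{x+1}$ is executed at some time $\le t$ lying in $I(P)$ says nothing about that time being at least $\min_{h\in P_0} r_h$. A priori every job on $P$ could finish before any job on $P_0$ is released. Excluding this is exactly the nontrivial content of the lemma, and it needs the hypothesis $k_{x+1}\in H^i_{x+1}$, i.e., the $(x+1)$-entry path $\bar P$ into $k_{x+1}$ from an active job of $L^i_{\geq x+2}$. Your ``Case (1) goes through verbatim'' starts from $\max I(P)\in I(P_0)$, so all of Step 2 rests on this unproved claim. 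The closing remark that ``the conclusion only needs some entrypoint reached'' is also wrong: the lemma is about the specific $k_x$.

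You can fill the gap with \Cref{obs:entrypoint:order}, which you already cite later; it gives $t_{k'_x}<t_{k_{x+1}}\le t$. Since $k'_x$ is processed during the lifetime of its predecessor on $P_0$ at some time $\le t_{k'_x}$, and $P_0\setminus\{k'_x\}$ contains $j\in A(t)$, \Cref{obs:lifetime:2} yields $[t_{k'_x},t]\subseteq I(P_0\setminus\{k'_x\})$. Hence some $h\in P_0\setminus\{k'_x\}$ is alive at $t_{k_{x+1}}$, so $(h,k_{x+1})\in E$. Then the $j$--$h$ prefix of $P_0$, followed by $(h,k_{x+1})$ and $P$, is an $x$-entry path to $k_x$, since all its vertices except $k_x$ lie in $L^i_{\geq x+1}$. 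An $x$-entry path may use $L^i_{\geq x+2}$, so no trimming is needed, and your worry about meeting another entrypoint of $L^i_x$ does not arise.

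This is essentially the paper's proof. The paper additionally derives $t_{k_{x+1}}\in I(P_0)$ directly from $\bar P$: otherwise $I(P_0)\subseteq I(\bar P\setminus\{k_{x+1}\})$, so $k'_x\in L^i_x$ would be processed during the lifetime of a job in $L^i_{\geq x+2}$, contradicting the layering. With either argument, the Subcase (i)/(ii) machinery and the swap to the $\delta$-predecessor $d'$ become unnecessary.
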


\begin{proof}
  Let $P$ denote an arbitrary $x$-entry path from $k_{x+1}$ to $k_x$, %
  and let $P'$ be an $x$-entry path from $j$ to some $k_x' \in H_x^i$ (cf.~\Cref{obs:active-entrypath}).
  Recall that $t_{k_{x+1}}$ is the latest point in time before or at time $t$ at which $k_{x+1}$ is executed.

  We first show that $t_{k_{x+1}} \in I(P')$. %
  To this end, assume that $t_{k_{x+1}} \notin I(P')$.
  Observe that, since $j \in A(t) \cap P'$, we have $t \in I(P')$ by \Cref{obs:lifetime:2}. This means that $t_{k_{x+1}} \not\in I(P')$ is only possible if $t_{k_{x+1}} < \min(I(P'))$.
  Since $k_{x+1}$ is an entrypoint, there exists a $(x+1)$-entry path $\bP$
  from some $d \in A(t) \cap L_{\geq x+2}^i$ to $k_{\geq x+1}$.
  Next, observe that $[t_{k+1},t] \subseteq I(\bP)\setminus I_{k_{x+1}}$. To see this, first note that $t \in  I(\bP)\setminus I_{k_{x+1}}$ holds by \Cref{obs:lifetime:2} as $d \in A(t) \cap (\bP \setminus \{k_{x+1}\})$. Second, note that $t_{k+1} \in I(\bP)\setminus I_{k_{x+1}}$ must hold as some job in $\bP \setminus \{k_{x+1}\}$ must be executed during $I_{k_{x+1}}$ by the assumption that $\bP$ is a path in $G_B$.
  Now, $[t_{k+1},t] \subseteq I(\bP)\setminus I_{k_{x+1}}$ and $t_{k_{x+1}} < \min(I(P'))$ imply $I(P') \subseteq I(\bP)\setminus I_{k_{x+1}}$. 
  This means that $k_x'$ is executed during the lifetime of some job in $\bP \setminus \{k_{x+1}\}$ while being non-clairvoyant; a contradiction to $k_x' \in L_x^i$ and $\bP \setminus \{k_{x+1}\} \subseteq L_{\geq x+2}^i$.

  Having established $t_{k_{x+1}} \in I(P')$, it remains to show that $j$ can reach $k_x$ via an $x$-entry path. %
  Observe that $t_{k_{x+1}} \in I(P')$ implies that $k_{x+1}$ 
  is executed during the lifetime of 
  some job $h \in P'$, and thus, there is the edge $(h,k_{x+1})$ in $G_B$.
  Therefore, we can combine the $j$-$h$-subpath of path $P'$, the edge $(h,k_{x+1})$, and path $P$ into a $j$-$k_x$-path in $G_B$. 

  It remains to argue that there exists such a combined path that apart from $k_x$ only visits jobs in $L_{\geq x+1}^i$. To this end, we show that there exists an $h \in L_{\geq x+1}^i$ such that $k_{x+1}'$ is being processed during $I_h$.
  Observe that $t_{k_x'} < t_{k_{x+1}}$ by \Cref{obs:entrypoint:order}. Next, consider $I(P'\setminus \{k_x'\})$. Since $j \in A(t) \cap (P'\setminus \{k_x'\})$, we have $t \in I(P'\setminus \{k_x'\})$ by \Cref{obs:lifetime:2}. Furthermore, we must have $t_{k_x'} \in I(P'\setminus \{k_x'\})$ as there must be some job $h' \in P'\setminus \{k_x'\}$ such that $k_{x}'$ is being processed at some point during $I_{h'}$; otherwise $k_{x}'$ cannot be on $P'$.
  Thus, $[t_{k_x'},t] \subseteq I(P'\setminus \{k_x'\})$, which implies $t_{k_{x+1}} \in  I(P'\setminus \{k_x'\})$. We can conclude that there exists some $h \in L_{\geq x+ 1}^i$ such that $k_{x+1}$ is being processed at some point during $I_{h}$.
\end{proof}

\begin{lemma}
  \label{lem:unique:bottleneck:1}
  Let $j_1,j_2 \in (A(t) \cap L_{x+1}^i) \cup H_{x+1}^i$. If $j_1$ can only reach a single entrypoint in $H_x^i$, say $k_1$, and $j_2$ can only reach a single entrypoint in $H_x^i$, say $k_2$, then $k_1 = k_2$.
\end{lemma}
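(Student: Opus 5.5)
The plan is to reduce everything to the entrypoints of layer $L_{x+1}^i$ and then apply \Cref{coro:entrypoint:reachability-entrypaths}. The aim is to show that the set of entrypoints in $H_x^i$ that any job $j \in (A(t) \cap L_{x+1}^i) \cup H_{x+1}^i$ can reach via $x$-entry paths contains a canonical set $K$. Here $K$ is the set of entrypoints of $H_x^i$ reachable via $x$-entry paths from an arbitrary (equivalently, every) entrypoint of $H_{x+1}^i$. Since $j_1$ and $j_2$ each reach exactly one entrypoint, and $K$ is nonempty, this forces $\{k_1\} = K = \{k_2\}$.

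\textbf{Case $H_{x+1}^i \neq \emptyset$.} By \Cref{coro:entrypoint:reachability-entrypaths}, all entrypoints of $H_{x+1}^i$ reach the same set $K \subseteq H_x^i$ via $x$-entry paths.

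\textbf{Step 1: $K$ is nonempty.} A job $k_{x+1} \in H_{x+1}^i \subseteq L_{x+1}^i$ lies in $L_{x+1}^i$. By the layer definition, it has a path that uses only $E_N$-edges followed by a final $\delta_x^i$-edge into $L_x^i$. The first vertex of $L_x^i$ on such a path is reached via a $\delta_x^i$-edge, and hence is an entrypoint. This needs condition (ii) of the entrypoint definition, which we get by prepending the $(x+1)$-entry path into $k_{x+1}$; that path starts at an active job in $L_{\ge x+2}^i$. So $K \neq \emptyset$.

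\textbf{Step 2: $j \in H_{x+1}^i$.} For such $j$, the reachable set is exactly $K$, so it contains $K$.

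\textbf{Step 3: $j \in A(t) \cap L_{x+1}^i$.} For such $j$, \Cref{lem:aux:multiple:bottlenecks:1} shows that $j$ reaches every $k_x \in K$ via an $x$-entry path. So its reachable set contains $K$.

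Combining the steps: if $j_1$ reaches only $k_1$, then $K \subseteq \{k_1\}$, and $K \neq \emptyset$ gives $K = \{k_1\}$. Likewise $K = \{k_2\}$, so $k_1 = k_2$.

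\textbf{Case $H_{x+1}^i = \emptyset$.} Then both $j_1, j_2 \in A(t) \cap L_{x+1}^i$, and \Cref{lem:aux:multiple:bottlenecks:1} gives nothing. This is the main obstacle. I would argue directly, mimicking the proof of \Cref{lem:entrypoint:reachability}. Take $x$-entry paths $P_1$ from $j_1$ to $k_1$ and $P_2$ from $j_2$ to $k_2$; they exist by \Cref{obs:active-entrypath}. Both $j_1$ and $j_2$ are active, so $t \in I(P_1) \cap I(P_2)$ by \Cref{obs:lifetime:2}. Without loss of generality $r_{j_1} \le r_{j_2}$. Then $(j_1, j_2) \in E$, because anything processed during $I_{j_2}$ is processed during $I_{j_1}$.

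Now I would go along $P_2$. Its prefix before $k_2$ lies in $L_{\ge x+1}^i$. If every vertex on it lies in $L_{x+1}^i$, then prepending $j_1$ gives an $x$-entry path from $j_1$ to $k_2$, so $k_2 = k_1$. If $P_2$ leaves $L_{x+1}^i$ into a higher layer, I would instead use the symmetric containment. The lifetime $I(P_2 \setminus \{k_2\})$ contains $[t_{k_2}, t]$, and I would show that $k_1$ and $k_2$ are processed during the lifetime of jobs on each other's prefixes. For this I would use \Cref{obs:entrypoint_blocking} and \Cref{obs:entrypoint:order}, which say that entrypoints of $L_x^i$ are not processed after the completion of another entrypoint, and I would derive a contradiction unless $k_1 = k_2$. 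Making this last interleaving argument rigorous is where I expect the real work.
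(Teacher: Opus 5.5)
Your argument for the case $H_{x+1}^i \neq \emptyset$ is correct. The canonical set $K$ from \Cref{coro:entrypoint:reachability-entrypaths}, its nonemptiness (your Step 1 is fine), and \Cref{lem:aux:multiple:bottlenecks:1} together give $\{k_1\} = K = \{k_2\}$. This repackages the paper's Cases 2 and 3; the paper does not need $K \neq \emptyset$ because it simply transfers the given $k_1$ to the other job. The gap is in the remaining case $H_{x+1}^i = \emptyset$. There you split on whether the prefix of $P_2$ stays in $L_{x+1}^i$, and for the sub-case where it enters a higher layer you defer to an interleaving argument via \Cref{obs:entrypoint_blocking} and \Cref{obs:entrypoint:order} that you never carry out. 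As written, that sub-case is unproved.

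That sub-case is not an obstacle; it comes from misreading the definition. An $x$-entry path only requires every vertex other than the final entrypoint to lie in $L_{\geq x+1}^i$, not in $L_{x+1}^i$. Since $r_{j_1}\le r_{j_2}$ and $j_1,j_2\in A(t)$, you have $I_{j_2}\subseteq I_{j_1}$. So if $(j_2,d)$ is the first edge of $P_2$, then $d$ is executed during $I_{j_1}$ and $(j_1,d)\in E$. Replacing $j_2$ by $j_1$ on $P_2$ (and shortcutting if $j_1$ already lies on $P_2$) gives a path from $j_1\in L_{x+1}^i$ to $k_2$ whose other vertices all lie in $L_{\geq x+1}^i$. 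That is an $x$-entry path, so $k_2=k_1$. This is exactly the paper's Case 1. It needs neither $H_{x+1}^i=\emptyset$ nor any information about which layers $P_2$ passes through, so you can drop both the case split and the interleaving argument.

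Note also that this replacement is what the containment $I_{j_2}\subseteq I_{j_1}$ directly gives. Your claimed edge $(j_1,j_2)$ would further require $j_2$ itself to be processed during $I_{j_1}$, which your justification does not establish.
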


\begin{proof}

  Consider two jobs $j_1,j_2 \in  (A(t) \cap L_{x+1}^i) \cup H_{x+1}^i$ that each can only reach a single entrypoint of $H_x^i$ via $x$-entry paths. Let $k_1$ and $k_2$ denote those entrypoints. Our goal is to show that $k_1 = k_2$. 

  \textbf{Case 1:} $j_1,j_2 \in  A(t) \cap L_{x+1}^i$. For the sake of contradiction, assume that $k_1 \neq k_2$. W.l.o.g.\ assume $r_{j_1} \le r_{j_2}$. Since $j_1$ and $j_2$ are both still alive at $t$, $r_{j_1} \le r_{j_2}$ implies that if a job $d$ is executed during $I_{j_2}$, then it is also executed during $I_{j_1}$. Thus, the $x$-entry path from $j_2$ to $k_2$ that exists by assumption can be transformed into an $x$-entry path from $j_1$ to $k_2$ by only exchanging $j_2$ with $j_1$ on the path. This implies that $j_1$ can also reach $k_2$ via an $x$-entry path; a contradiction.

  \textbf{Case 2:} $j_1, j_2 \in H_{x+1}^i$.
  \Cref{coro:entrypoint:reachability-entrypaths} implies that $j_1$ and $j_2$ can reach exactly the same entrypoints of $L_x^i$ via $x$-entry paths. Thus, $k_1 = k_2$.

  \textbf{Case 3:} $j_1 \in H_{x+1}^i$ and $j_2 \in A(t) \cap L_{x+1}^i$. \Cref{lem:aux:multiple:bottlenecks:1} implies that $j_2$ can reach every entrypoint of $H_x^i$ that $j_1$ can reach via $x$-entry paths. Thus, $k_1 = k_2$.  
\end{proof}

Having this lemma in place, proving \Cref{thm:bottleneck:uniqueness} is straightforward.

\begin{proof}[Proof of \Cref{thm:bottleneck:uniqueness}]
  If $|H_x^i| = 1$, the claim follows easily. For the case that $|H_x^i| > 1$,
  we assume $L_x^i$ has two bottlenecks $k_x, k_x' \in H_x^i$ with $k_x \not= k_x'$.  By definition there must exist two jobs $j,j' \in (A(t) \cap L_{x+1}^i) \cup H_{x+1}^i$ that can only reach layer $L_x^i$ via paths that contain $k_x$ or $k_x'$, respectively. Since $j,j' \in L_{x+1}^i$, there must be such paths $P$ and $P'$ that are $x$-entry paths, respectively. This allows us to apply \Cref{lem:unique:bottleneck:1} to conclude that $k_x = k_x'$; a contradiction.
\end{proof}

\subsection{Multiple Bottlenecks}

We have shown that a layer $L_x^i$ has at most bottleneck. Next, we consider situations of the following type: Assume $L_x^i$ and $L_{x+1}^i$ both have bottlenecks $k_x$ and $k_{x+1}$. From what we have shown so far, it is not clear that a job $j \in A(t) \cap L^i_{x+2}$ that can reach jobs in $L_{x+1}^i$ only via paths that contain the bottleneck $k_{x+1}$ can only reach jobs in $L_{x}^i$ only via paths that contain the bottleneck $k_x$. In this section, we show that this is indeed the case. More precisely, we show the following theorem, which  directly implies \Cref{coro:multiple:bottlenecks}.

\begin{theorem}
    \label{thm:multiple:bottlenecks}
    Let $j \in A(t) \cap L_{x+1}^i$. Then, every path from $j$ to $i$ contains \emph{all} bottlenecks of the layers $L_y^i$ with $y \leq x-1$.
\end{theorem}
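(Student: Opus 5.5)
I would prove \Cref{thm:multiple:bottlenecks} by induction on $x$, peeling off one layer at a time. The structural fact that drives everything is that a path from $j$ to $i$ can only move down the layer structure through $\delta$-edges, and it moves down at most one layer per $\delta$-edge. Layers are defined by the minimum number of $E_C$-edges needed to reach $i$. So an edge $(u,v)$ with $u\in L^i_{y}$ and $v \in L^i_{z}$ satisfies $z \ge y-1$, and when $z=y-1$ the edge lies in $E_C$, hence in $\delta^i_{y-1}$.

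Consequently, every path $P$ from $j\in L^i_{x+1}$ to $i\in L^i_0$ visits every layer $L^i_y$ with $y\le x$. For each such $y$, let $k_y$ be the first vertex of $P$ in $L^i_y$. Then $k_y$ is entered by a $\delta^i_y$-edge, and the prefix of $P$ before $k_y$ lies entirely in $L^i_{\ge y+1}$. Because $j\in A(t)$, this prefix makes $k_y$ an entrypoint, i.e.\ $k_y \in H^i_y$.

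The main step is to show that $k_y$ equals the bottleneck $b_y$ whenever layer $L^i_y$ has one, for each $y\le x-1$. I would do this by going down the path. Let $k_{y+1}\in H^i_{y+1}$ be the first vertex of $P$ in layer $y+1$ (for $y=x$ the relevant vertex is $j$ itself). The subpath of $P$ from $k_{y+1}$ to $k_y$ stays in $L^i_{\ge y+1}$. It might dip back into higher layers, but by \Cref{obs:entrypoint_blocking} and \Cref{obs:entrypoint:order} I expect that it can be shortcut to a path that stays in $L^i_{y+1}$, at least as far as which layer-$y$ entrypoint it reaches. So $k_{y+1}$ reaches $k_y$ via an $y$-entry path.

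Now suppose $b_y$ exists. By definition there is a witness $w\in (A(t)\cap L^i_{y+1})\cup H^i_{y+1}$ that reaches layer $L^i_y$ only through $b_y$. By the proof of \Cref{lem:unique:bottleneck:1}, $w$ reaches exactly one entrypoint of $H^i_y$. \Cref{coro:entrypoint:reachability-entrypaths} says all entrypoints of $H^i_{y+1}$ reach the same set of entrypoints of $H^i_y$. \Cref{lem:aux:multiple:bottlenecks:1} says every active job of $L^i_{y+1}$ reaches every entrypoint of $H^i_y$ that any entrypoint of $H^i_{y+1}$ reaches.

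The case that needs care is $w\in A(t)\cap L^i_{y+1}$ while $k_{y+1}$ is an entrypoint. Then $w$ reaches everything $k_{y+1}$ reaches. Since $w$ reaches only $b_y$, $k_{y+1}$ reaches only $b_y$ in $H^i_y$, and therefore $k_y=b_y$. If instead $w\in H^i_{y+1}$, then $w$ and $k_{y+1}$ reach the same set, which again forces $k_y=b_y$.

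The restriction $y\le x-1$ is exactly what guarantees that $k_{y+1}$ is an entrypoint rather than $j$ itself. For $y=x$, the path starts at $j$, which is active but may not be the witness. I expect the dipping/shortcut argument in the main step to be the main obstacle: a path leaving $k_{y+1}$ may wander into $L^i_{\ge y+2}$ before descending to $L^i_y$. I would try to handle it by showing that any such detour reaches an active job of $L^i_{\ge y+2}$ whose lifetime covers $[t_{k_{y+1}},t]$. Then \Cref{lem:aux:multiple:bottlenecks:1}-style time-interval arguments, using \Cref{obs:lifetime:2}, should give that the layer-$y$ entrypoint $k_y$ is also reachable by $k_{y+1}$ via a genuine $y$-entry path. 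With that in hand, the identification $k_y=b_y$ applies for every $y\le x-1$, and this proves the theorem.
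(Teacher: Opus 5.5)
Your argument is correct and is essentially the paper's proof. You take the first vertex of the path in $L^i_y$, observe that it is an entrypoint reached from an entrypoint $k_{y+1}\in H^i_{y+1}$ by a $y$-entry path, and force it to be the bottleneck via \Cref{lem:aux:multiple:bottlenecks:1} when the witness is active, or via the entrypoint-reachability result \Cref{coro:entrypoint:reachability-entrypaths} when the witness is an entrypoint. The ``dipping'' obstacle you flag does not arise: by definition a $y$-entry path may visit any jobs of $L^i_{\ge y+1}$ apart from its endpoint, so the subpath of $P$ from $k_{y+1}$ to $k_y$ is already a $y$-entry path, and both cited results, which are stated for $y$-entry paths, apply directly without any shortcutting.
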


\begin{proof}
  Consider a layer $L_y^i$ with $y \leq x - 1$ that has a bottleneck. If $|H_y^i| = 1$, then the definition of layers implies that $j \in A(t) \cap L_{x+1}^i$ has to visit the bottleneck of $L_y^i$ in order to reach~$i$. Thus, it remains to consider the case $|H_y^i| > 1$.

  In this case, for $L_y^i$ to have a bottleneck $k_y$, there must exist a job $j' \in (A(t) \cap L_{y+1}^i) \cup H_{y+1}^i$ that can only reach layer $L_y^i$ through paths that contain $k_y$.
  We show that $j$ also must cross $k_y$ to reach $i$ via proof by contradiction. To this end, assume that there is a $y$-entry path $P$ from $j$ to some $k_y' \in L_y^i$ with $k_y' \neq k_y$. 
  If $j$ can reach $i$ without visiting $k_y$, then such a path must exist. 
  Note that since $j \in A(t) \cap L_{\geq y+1}^i$ and $P$ is a $y$-entry path, $k_y'$ is an entrypoint of layer $L_y^i$. 

  Observe that $P$ must contain a $k_{y+1}$-$k_y'$-subpath $P'$, where $k_{y+1} \in H_{y+1}^i$ and $P'$ is a $y$-entry path.

  If $j' \in A(t) \cap L_{y+1}^i$, then \Cref{lem:aux:multiple:bottlenecks:1} implies that $j' \in A(t) \cap L_{y+1}^i$ can also reach $k_y'$ via a $y$-entry path. %
  This is a contradiction to our assumption that $j' \in A(t) \cap L_{y+1}^i$ is only able to reach layer $L_y^i$ through paths that contain $k_y$.

  If $j' \in H_{y+1}^i$, then \Cref{lem:entrypoint:reachability} implies that $j'$ and $k_{y+1}$ can reach exactly the same entrypoints of $L_y^i$ via $y$-entry paths. Thus, we must either have $k_y = k'_y$ or $j'$ and $k_{y+1}$ can reach $k_y$ and $k_y'$; both cases are a contradiction.
\end{proof}

The following observation partially extends the statement of~\Cref{thm:multiple:bottlenecks} to entrypoints.

\begin{observation}
  \label{obs:entrypoint:bottleneck}
  If a layer $L_x^i$ has a bottleneck, then every entrypoint $j \in H_{x+1}^i$ can reach jobs in $L_{x}^i$ \emph{only} via paths that contain the bottleneck $b^i_x$.
\end{observation}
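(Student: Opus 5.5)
The plan is to reduce the observation to the reachability-equivalence of entrypoints (\Cref{lem:entrypoint:reachability}, in the form of \Cref{coro:entrypoint:reachability-entrypaths}) together with the fact that the bottleneck $b^i_x$ is the only entrypoint reachable by some witness job. Let $k \in H_{x+1}^i$ be an arbitrary entrypoint. By the definition of a bottleneck, there is a witness $j' \in (A(t) \cap L_{x+1}^i) \cup H_{x+1}^i$ that can reach $L_x^i$ only via paths containing $b^i_x$. Any path from $k$ to a job of $L_{\le x}^i$ must enter $L_{\le x}^i$ for the first time through some job of $L_x^i$. This holds because, by the layer definition, a job of $L_{\ge x+1}^i$ cannot have an edge into $L_{\le x-1}^i$, nor an $E_N$-edge into $L_x^i$. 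The first job $k'$ of $L_x^i$ on such a path is therefore reached by an $x$-entry path starting from $k$. In particular, $k'$ is an entrypoint in $H_x^i$: its incoming edge lies in $\delta_x^i$, and condition (ii) is inherited from an $(x+1)$-entry path into $k$ concatenated with this prefix.

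The argument then proceeds by contradiction. Suppose $k$ reaches $L_x^i$ by a path avoiding $b^i_x$. Then the first-entry entrypoint $k' \in H_x^i$ satisfies $k' \neq b^i_x$, and $k$ reaches $k'$ via an $x$-entry path. We split by the type of the witness $j'$.

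\emph{Case $j' \in H_{x+1}^i$.} By \Cref{coro:entrypoint:reachability-entrypaths}, $j'$ reaches exactly the same entrypoints of $L_x^i$ via $x$-entry paths as $k$ does. Hence $j'$ reaches $k'$ by an $x$-entry path. This path contains $b^i_x$ only if $b^i_x = k'$, since its interior lies in $L_{\ge x+1}^i$ while $b^i_x \in L_x^i$. So $j'$ reaches $L_x^i$ while avoiding $b^i_x$, a contradiction.

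\emph{Case $j' \in A(t) \cap L_{x+1}^i$.} \Cref{lem:aux:multiple:bottlenecks:1}, applied with $k_{x+1} = k$ and $k_x = k'$, gives an $x$-entry path from $j'$ to $k'$. This path again avoids $b^i_x$, giving the same contradiction.

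The main point needing care is the first-entry argument: we must verify that the prefix of a path from $k$ up to its first job in $L_{\le x}^i$ really is an $x$-entry path ending at an element of $H_x^i$. This means checking that no edge goes from $L_{\ge x+1}^i$ directly into $L_{\le x-1}^i$ or into $L_x^i$ via $E_N$, which follows directly from the definition of layers (if $(u,v) \in E_N$ with $v \in L^i_{\le x}$ then $u \in L^i_{\le x}$, and a single $E_C$-edge lowers the layer index by at most one). We must also check that $k'$ satisfies condition (ii) of entrypoints. For this, take an $(x+1)$-entry path from some active job in $L_{\ge x+2}^i$ to $k$ and concatenate it with the prefix; the concatenation remains inside $L_{\ge x+1}^i$ apart from $k'$.
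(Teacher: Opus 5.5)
Your proof is correct and follows the paper's argument. Assume a path from the entrypoint to $L^i_x$ that avoids $b^i_x$, take the bottleneck witness, and derive a contradiction: via \Cref{lem:aux:multiple:bottlenecks:1} when the witness is active in $L^i_{x+1}$, and via \Cref{lem:entrypoint:reachability} (in its $x$-entry-path form, \Cref{coro:entrypoint:reachability-entrypaths}) when it is an entrypoint of $H^i_{x+1}$. Your first-entry argument only spells out a step the paper asserts without justification, namely that such a path yields an $x$-entry path to an entrypoint of $H^i_x$ other than $b^i_x$.
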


\begin{proof}
  Via proof by contradiction. To this end, assume that there is an entrypoint $k_{x+1} \in L_{x+1}^i$ that can reach $L_{x}^i$ via an entrypoint $k_x \in H_x^i \setminus \{b_x^i\}$, i.e., there exists $x$-entry-path $P$ from $k_x$ to $k_{x+1}$.

  Since $b^i_x$ is a bottleneck but $|H_x^i| > 1$, there must be a job $j \in (A(t) \cap L_{x+1}^i) \cup H_{x+1}^i$ that can only reach $L_x^i$ via $b_x^i$.
  If $j \in A(t) \cap L_{x+1}^i$, then \Cref{lem:aux:multiple:bottlenecks:1} implies that $j$ can also reach $L_{x+1}^i$ via $k_x$; a contradiction.
  If $j \in H_{x+1}^i$, then \Cref{lem:entrypoint:reachability} implies that $j$ can also reach $L_{x+1}^i$ via $k_x$; a contradiction.
\end{proof}

\section{Fine-Grained Borrowing}

In this section, we want to consider a fixed job $i \in O(t)$ and analyze how much processing a job $j \in A(t)$ can borrow from $i$, i.e., we want to upper bound $\beta(j,i)$. As seen in~\Cref{sec:matching:arguments,sec:segments}, one useful way of upper bounding $\beta(j,i)$ is by showing the inequality $\by_j \ge \by_i$. Unfortunately, this inequality does not hold for every pair $i \in O(t)$ and $j \in A(t)$. However, as one of the main results in this section, we will show that the inequality holds if $j$ can reach $i$ in $G_B$ via a path that does not contain any bottleneck of $i$.
With respect to the segments (cf.~\Cref{def:segment}), this means that if a job $j \in S$ for a segments $S$ can reach a job $i \in O$ without visiting any bottleneck, then $i \in O_S$. Vice versa, if $j$ can reach a job $i \in O\setminus O_S$, then this must be via at least one bottleneck.

Afterwards, we consider jobs $j \in A(t)$ that must visit at least one bottleneck of $i$ in order to reach $i$ and analyze how much processing $j$ can borrow via that bottleneck. As argued above, for a job $j$ in a segment $S$, all jobs $i \in O\setminus O_S$ that $j$ can actually reach fall into this category.
This analysis will later on be a key ingredient for proving~\Cref{lem:segment:bottleneck:capacity}.

For the remainder of this section, we fix a job $i \in O(t)$ and analyze jobs $j$ with $i \in R_j$.

\subsection{Borrowing via Non-Clairvoyant Paths}
\label{sec:non-clairvoyant:paths}

In order to show that $\by_j \ge \by_i$ holds for jobs $j \in A(t)$ that can reach $i$ without visiting a bottleneck of $i$, we first consider jobs $j \in L_0^i$, i.e., jobs that can reach $i$ via $E_N$-paths. By definition, such jobs $j \in L_0^i$ can reach $i$ without visiting a bottleneck of $i$.
To analyze these jobs, we introduce the concept of \emph{good} edges.

\begin{definition}[Good and bad edges]
  We call an edge $(j,k) \in E_N$ \emph{good} if $\by_j \geq \by_k$, and \emph{bad} otherwise.
  We call a path in $G_B$ \emph{good} if it is composed of good edges.
\end{definition}

If a job $j \in A(t)$ has a good $E_N$-path to job $i$, then the definition of good edges immediately implies $\by_j \ge \by_i$. 
Unfortunately, we will see below that not every $E_N$-path is also good. 
However, we will show that the existence of an $E_N$-path from $j \in A(t)$ to $i$ implies the existence of a good path from $j$ to $i$. More precisely, we will prove the following lemma, which also implies~\Cref{lem:nc-borrowing}.

\begin{lemma}\label{lem:good-path-from-active}
  If a job $j \in A(t)$ can reach a job $i$ with an $E_N$-path, then $j$ can also reach $i$ via a good path, and thus, $\by_j \geq \by_i$.
\end{lemma}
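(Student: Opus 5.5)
We plan to show that an $E_N$-path from $j$ to $i$ with the minimum number of edges is automatically good. The inequality $\by_j \geq \by_i$ then follows by chaining $\by_a \geq \by_b$ along its edges.

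For a job $b$ that is executed at some point while non-clairvoyant, let $\tau_b$ be the last time $\tau_b \le t$ at which $b$ is executed while $b \in N(\tau_b)$, taking left limits if necessary. After $\tau_b$, job $b$ receives processing only while clairvoyant, if at all. Hence its progress at $\tau_b$ is, up to the limit, exactly $\by_b$. Either $b$ emits at $\tau_b$, so the progress is $\alpha p_b$, or $b$ is never processed again before $t$, so it is $y_b(t) \le \alpha p_b$.

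\textbf{Local criterion for a good edge.} Let $(a,b) \in E_N$ with $a$ alive at $\tau_b$, i.e., $\tau_b \in I_a$. At $\tau_b$ the algorithm runs the SETF branch on $b$. By \Cref{obs:sept-like}, $b$ has minimal progress among all of $A(\tau_b)$, so $y_a(\tau_b) \geq y_b(\tau_b)$. Since $y_b(\tau_b) \le \alpha p_b$ and $\by_b = y_b(\tau_b)$, we get
\[
\by_a = \min\{y_a(t), \alpha p_a\} \geq \min\{y_a(\tau_b), \alpha p_a\} \geq \min\{y_b(\tau_b), \alpha p_a\}.
\]
If $y_a(\tau_b) \le \alpha p_a$, the middle term is $y_a(\tau_b) \ge y_b(\tau_b) = \by_b$. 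Otherwise $a$ is already clairvoyant at $\tau_b$. By the second computation preceding \Cref{obs:sept-like} we then have $\alpha p_a \le y_a(\tau_b)$ and also $y_b(\tau_b) < \alpha p_a$. In both cases $\by_a \ge \by_b$, so the edge is good.

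\textbf{Shortest paths only contain good edges.} Fix a shortest $E_N$-path $j=v_0,v_1,\dots,v_m=i$ and consider an edge $(v_{q-1},v_q)$. Let $P_q = \{v_0,\dots,v_{q-1}\}$ be the prefix. Since $v_0 = j \in A(t)$, \Cref{obs:lifetime:2} gives $I(P_q) = [\min_{v \in P_q} r_v, t]$.

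Job $v_q$ is executed non-clairvoyantly at some time in $I_{v_{q-1}}$. That time is at least $r_{v_{q-1}}$, and by definition $\tau_{v_q}$ is at least that time and at most $t$. Hence $\tau_{v_q} \in I(P_q)$, and some $v_p$ with $p \le q-1$ is alive at $\tau_{v_q}$.

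Since $v_q$ is executed at $\tau_{v_q}$ while in $N(\tau_{v_q})$, we have $(v_p, v_q) \in E_N$. If $p < q-1$, then $v_0,\dots,v_p,v_q,\dots,v_m$ is a strictly shorter $E_N$-path, contradicting minimality. Thus $p = q-1$, so $v_{q-1}$ is alive at $\tau_{v_q}$, and the local criterion makes the edge good. Applying this to every edge shows the path is good, and chaining gives $\by_j \ge \by_i$.

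\textbf{Main obstacle.} The delicate step is the local criterion. We must pin down $\tau_b$ rigorously as a supremum, with the left-limit and tie issues handled, so that $\by_b$ really equals $b$'s progress at that instant. We must also cover the case where $a$ is alive at $\tau_b$ but already clairvoyant, which needs the truncation $\min\{\cdot,\alpha p_a\}$ rather than the raw progress $y_a$. The second step is purely combinatorial once \Cref{obs:lifetime:2} is available.
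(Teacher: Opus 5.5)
Your proof is correct, and it takes a genuinely different route from the paper.

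The paper first classifies $E_N$-edges by the status of their endpoints at time $t$ (\Cref{lem:nc:c:edge,lem:c:nc:edge,lem:c:c:edge,lem:final:good:edge}). It concludes in \Cref{coro:bad-edges} that a bad edge $(k_1,k_2)$ requires $k_1\in D(t)$ and a non-clairvoyant execution of $k_2$ after $s_{k_1}$. It then repeatedly applies the shortcut rules of \Cref{lem:shortcut:1,lem:shortcut:2}, plus a further shortcut based on \Cref{obs:shortcut:3}, until no bad edge is left.

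You instead single out one time, the last non-clairvoyant execution $\tau_b$ of the head $b$, and note that $\by_b=y_b(\tau_b)$. Goodness of $(a,b)$ then follows directly from \Cref{obs:sept-like} whenever $a$ is alive at $\tau_b$. Minimality of the path, together with the interval structure from \Cref{obs:lifetime:2}, forces every edge of a shortest $E_N$-path to meet this criterion.

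Comparing the two: your argument replaces both the four-way case analysis and the iterative shortcutting with one short criterion. The paper's route buys an explicit description of bad edges and shortcut lemmas, which it reuses for \Cref{lem:good-path-from-entrypoints}. Your criterion appears to handle that entrypoint version just as easily: let the $x$-entry path into $k$ cover the times after $C_k$. Any covering job there lies in $L^i_{\ge x+1}$ and would create a forbidden $E_N$-edge into $L^i_{\le x}$.

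On the boundary issues you flagged, the following choices make the argument airtight.
\begin{itemize}
\item Interpret $(a,b)\in E_N$ as positive non-clairvoyant processing of $b$ within $I_a$.
\item Let $\tau_b$ be the right endpoint of the support of that processing in $[0,t]$.
\item In the local criterion, require that $I_a$ contain a left neighbourhood of $\tau_b$. If only $\tau_b=r_a$ holds, \Cref{obs:sept-like} gives no comparison with $a$.
\item In the shortest-path step such a $v_p$ exists. Indeed $\tau_{v_q}>r_{v_{q-1}}\ge\min_{v\in P_q}r_v$, so finitely many closed lifetimes cover a left neighbourhood of $\tau_{v_q}$, and one of them must itself contain such a neighbourhood.
\item The strict bound $y_b<\alpha p_a$ survives only as $y_b(\tau_b)\le\alpha p_a$ in the limit, which is all you need.
\end{itemize}
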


A similar statement has been shown in \cite{KalyanasundaramP00} for SETF. Note that proving this statement for our algorithm is significantly more complex, because our algorithm may execute SRPT before time $t$, which makes the structure of its schedule more complicated.
In order to generalize the statement to our algorithm, we first characterize good and bad edges.

\paragraph*{Characterization of good and bad edges.} 
To this end, we first show that every edge $(j,k) \in E_N$ with $j \in N(t)$ is good. Observe that if $(j,k) \in E_N$ and $j,k \in N(t)$, then $y_j(t) \geq y_k(t)$ by \Cref{lemma:nonclairvoyant-direct-borrow}, and thus, $(j,k)$ is good. Next, we slightly extend this statement.

\begin{lemma}\label{lem:nc:c:edge}
  Let $j \in N(t)$ and $k \in J$. If there is an edge $(j,k) \in E_N$, then $y_j(t) \ge y_k(s_k)$ and $(j,k)$ is good.
\end{lemma}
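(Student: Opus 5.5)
The plan is to use the catch-up argument of \Cref{lemma:ketchup} at the first moment where $k$ is run non-clairvoyantly during $I_j$, and then to compare truncated progress values. Since $(j,k) \in E_N$, there is a time $t' \in I_j$ at which $k$ is executed with $k \in N(t')$. Because $j \in N(t)$ and progress is monotone, $y_j(t') \le y_j(t) \le \alpha p_j$. Hence $j \in N(t')$ as well (note $t' \ge r_j$ and $t' < C_j$ since $j$ is alive at $t$).

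We split into two cases according to whether $k$ is still non-clairvoyant at time $t$.

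\emph{Case $k \in N(t)$ (or $k$ never emits before $t$).} Here \Cref{lemma:nonclairvoyant-direct-borrow} applies directly and gives $y_j(t) \ge y_k(t)$. Since $k \in N(t)$, we have $y_k(t) \le \alpha p_k$ and $y_k(t) \ge y_k(s_k)$ in the sense of the statement, interpreting $y_k(s_k)$ as progress up to emission, which is at least $y_k(t)$ only if $k$ emits later. For the good-edge claim we need only $\by_j = y_j(t) \ge y_k(t) = \by_k$.

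\emph{Case $k$ has emitted by time $t$, i.e.\ $s_k \le t$.} Then $t' < s_k$, since $k$ is non-clairvoyant at $t'$. On the whole interval $[t', s_k)$ both $j$ and $k$ are non-clairvoyant, because $j \in N(t'')$ for all $t'' \le t$. Applying \Cref{lemma:ketchup} at every $t'' \in [t', s_k)$ gives $y_j(t'') \ge y_k(t'')$, and by continuity $y_j(s_k) \ge y_k(s_k) = \alpha p_k$. Monotonicity then gives
\[
y_j(t) \ge y_j(s_k) \ge y_k(s_k) = \alpha p_k \ge \by_k.
\]
Since $j \in N(t)$, we have $\by_j = y_j(t)$, so the edge is good.

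The main obstacle is the boundary bookkeeping: we must make sure that $j$ remains in $N$ throughout the interval where \Cref{lemma:ketchup} is applied (which follows from $j \in N(t)$ and monotonicity), and that the limit at $s_k$ is taken correctly. Both the statement $y_j(t) \ge y_k(s_k)$ and goodness then follow uniformly. In the first case, $y_k(s_k) \le y_k(t)$ is either undefined or satisfies $y_k(s_k) \ge y_k(t)$; if needed, we define $s_k$ via $y_k(s_k) = \alpha p_k$ with $s_k > t$ and bound only $\by_k = y_k(t)$.
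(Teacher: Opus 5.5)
Your proof is correct and follows the paper's argument. For $k\in N(t)$ you invoke \Cref{lemma:nonclairvoyant-direct-borrow}; otherwise you start \Cref{lemma:ketchup} at the non-clairvoyant execution time $t'\in I_j$ of $k$ and evaluate it at $s_k\in[t',t]$, which gives $y_j(t)\ge y_j(s_k)\ge y_k(s_k)=\alpha p_k=\by_k$. Your first-case remarks about $y_k(s_k)$ are muddled, and the correct relation is $t'\le s_k$ rather than $t'<s_k$. Neither point matters: as in the paper, the inequality $y_j(t)\ge y_k(s_k)$ is only meaningful (and only true in general) when $k$ has emitted by time $t$, and for $k\in N(t)$ only goodness is needed.
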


\begin{proof}
  As stated above, if $k \in N(t)$, then the lemma is implied by~\Cref{lemma:nonclairvoyant-direct-borrow}. Thus, it remains to consider the case $k \in J\setminus N(t)$.

  Since $(j,k) \in E_N$, there is a point in time $t'$ with $k \in N(t')$ and $t' \in I_j$ at which $k$ is being processed. 
  \Cref{obs:sept-like} gives $y_k(t') \leq y_j(t')$. Since $j \in N(t)$ and $k \in N(t') \setminus N(t)$, job $k$ must emit during $[t',t]$, that is, $s_k \in [t',t]$. 
  Thus, \Cref{lemma:ketchup} implies $y_j(s_k) \geq y_k(s_k)$, and we can conclude that $\by_j = y_j(t) \geq y_j(s_k) \geq y_k(s_k) = \by_k$.
\end{proof}

After having shown that every $(j,k) \in E_N$ with $j \in N(t)$ is good, we continue by showing the same statement for $j \in C(t)$. To this end, we give separate lemmas for the two cases $k \in N(t)$ and $k \not\in N(t)$.

\begin{lemma}
  \label{lem:c:nc:edge}
  Let $j \in C(t)$ and $k \in N(t)$. If there is an edge $(j,k) \in E_N$, then $y_k(t) \le y_j(s_j)$, and thus, $(j,k)$ is good.
\end{lemma}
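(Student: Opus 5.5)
The plan is to use the edge $(j,k)\in E_N$ to find a time $t'\in I_j$ at which $k\in N(t')$ is processed, then show via Catch-Up that $k$ never overtakes $j$'s progress before $j$ emits. First I record what the edge gives: $k$ is processed at $t'$ with $k\in N(t')$. Since $k\in N(t)$, the relation $k\in N(\cdot)$ holds throughout $[t',t]$. I then split according to whether $j$ is still non-clairvoyant at $t'$.

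\emph{Case $s_j\ge t'$, i.e.\ $j\in N(t')$.} Here $j,k\in N(t'')$ for all $t''\in[t',s_j]$. By \Cref{lemma:ketchup} applied at $t'$, we get $y_j(t'')\ge y_k(t'')$ on this interval, and in particular $y_j(s_j)\ge y_k(s_j)$. It then remains to show $y_k(t)=y_k(s_j)$, i.e.\ that $k$ receives no processing during $[s_j,t]$.

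\emph{Case $s_j<t'$.} Then $j\in C(t')$ while $k\in N(t')$ is processed at $t'$. Since the algorithm is in its SETF mode (Observation~\ref{obs:sept-like}), we have $y_k(t')\le y_j(t')$. This is weaker than needed, so instead I show this case cannot occur. At time $s_j$, job $j$ emits, and by \Cref{obs:clairvoyant-jobs-block-earlier-jobs} with $k'=j$ and $t'=s_j$, every job alive and released before $s_j$ is not processed during $[s_j,C_j]$. As $j\in C(t)$ is alive at $t$, we have $C_j>t$, so $k$ is not processed in $[s_j,t]$ if $r_k<s_j$. If instead $r_k\ge s_j$, then $k$ is released while $j$ is clairvoyant. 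I plan to argue that $k$ is only processed non-clairvoyantly once $\frac{1-\alpha}{\alpha}y_k < p_{\min}$ over $C$, and that $j$ is still available with $p_j$ at least that bound; this does not immediately contradict anything. So for $r_k\ge s_j$ I would instead compare directly: since $k\in N(t)$ is processed in SETF mode at some time $t''\le t$, we get $\frac{1-\alpha}{\alpha}y_k(t'')<p_j(t'')$. Hence $y_k(t)<\frac{\alpha}{1-\alpha}p_j(t)$, which still does not give the bound against $y_j(s_j)=\alpha p_j$.

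The main obstacle is exactly this last subcase, where $k$ is released after $s_j$ and processed while $j$ is clairvoyant. To handle it, I would use that when $j$ was first processed as a clairvoyant job after $s_j$ (which is immediate, since $j$ is processed at $s_j$ and, by \Cref{obs:clairvoyant-jobs-block-earlier-jobs}, at least SRPT-first), $p_j$ stops decreasing only when some non-clairvoyant job with $\frac{1-\alpha}{\alpha}y<p_j$ appears. Tracking the last time $\tau\le t$ at which $j$ was processed, every later SETF step on $k$ satisfies $\frac{1-\alpha}{\alpha}y_k<p_j(\tau)\le(1-\alpha)p_j$. This yields $y_k(t)\le\alpha p_j=y_j(s_j)$, using $p_j(\tau)\le p_j(s_j)=(1-\alpha)p_j$. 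I expect this monotonicity argument (that $p_j$ only decreases after $s_j$) to close the subcase cleanly. Finally, goodness follows because $\by_j=\alpha p_j=y_j(s_j)\ge y_k(t)\ge\by_k$.
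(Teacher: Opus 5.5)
Your argument is correct and is essentially the paper's proof. It uses the same split on whether $k$'s non-clairvoyant processing inside $I_j$ happens before or after $s_j$. In the first case it combines Catch-Up with \Cref{obs:clairvoyant-jobs-block-earlier-jobs}. In the second case that observation forces $r_k\ge s_j$, and the SETF threshold $\frac{1-\alpha}{\alpha}y_k(\sigma)<p_j(\sigma)\le p_j(s_j)=(1-\alpha)p_j$ at every time $\sigma$ at which $k$ is processed finishes it; this also makes the paper's split into $p_k\le p_j$ and $p_k\ge p_j$ unnecessary.

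In the write-up, remove the false starts:
\begin{itemize}
\item The case $s_j<t'$ does occur, namely whenever $r_k\ge s_j$.
\item The bound $y_k(t)<\frac{\alpha}{1-\alpha}p_j(t)$ neither follows nor holds in general, since $j$ may be processed after $k$'s last processing.
\item Drop the detour through the last processing time $\tau$ of $j$; read literally, it only covers processing of $k$ after $\tau$.
\item State explicitly that your blocking argument for $r_k<s_j$ is what finishes the first case.
\end{itemize}
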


\begin{proof}
  Since $(j,k) \in E_N$, there is a point in time $t' \in I_j$ with $k \in N(t')$ at which $k$ is being processed. By \Cref{obs:sept-like}, we have $y_j(t') \geq y_k(t')$.

  By our assumption that $j \in C(t)$, we know that $j$ emits during $[r_j,t]$, i.e., $s_j \le t$. We finish the proof by distinguishing between the two cases (1) $t' < s_j$ and (2) $t' \ge s_j$.

  \paragraph*{Case (1):} Assume $t' < s_j$. Then $y_k(t') \le y_j(t')$ by \Cref{obs:sept-like}. Since $k$ is already alive at $t' < s_j$ and $k \in N(t)$, we get $k \in N(s_j)$. Thus,  \Cref{lemma:ketchup} gives $y_k(s_j) \le y_j(s_j)$. 
  Furthermore, \Cref{obs:clairvoyant-jobs-block-earlier-jobs} implies that $k$ is not executed during $[s_j,C_j] \supseteq [s_j,t]$. Thus, $\by_j = y_j(s_j) \ge y_k(s_j) = y_k(t) = \by_k$.

  \paragraph*{Case (2):} Assume $t' \ge s_j$. Then \Cref{obs:clairvoyant-jobs-block-earlier-jobs} implies $r_k \ge s_j$.

  If $p_k \le p_j$, then we must have $\by_k = y_k(t) \le y_j(s_j) = \by_j$. Otherwise, i.e., if $y_k(t) > y_j(s_j)$, the fact that $y_j(s_j) \ge \alpha p_j$ would imply $y_j(t) > \alpha p_j \ge \alpha p_k$, which means that $k$ would emit before time $t$ and, thus, not be part of $N(t)$; a contradiction the assumption of the lemma.

  If $p_k \geq p_j$, we again claim that $\by_k = y_k(t) \le y_j(s_j) = \by_j$ must hold. Assume otherwise, i.e., $y_k(t) > y_j(s_j)$. 
  Then, as $r_k \ge s_j$, there must be a point in time $t'' \geq r_k$ with $y_k(t'') = y_j(s_j) = \alpha \cdot p_j$. 
  
  Thus, $\frac{1-\alpha}{\alpha} y_k(t'') = (1-\alpha) \cdot p_j$. However, since $t'' \geq r_k \geq s_j$, we also have $p_j(t'') \le (1-\alpha) \cdot p_j$. Thus, $\frac{1-\alpha}{\alpha} y_k(t'') \geq p_j(t'')$, which means that the algorithm will not work on $k$ after time $t''$ until $j$ is completed. 
  Since $j$ does not complete before $t$, we can conclude with $\by_k = y_k(t) = y_k(t'') = y_j(s_j) = \by_j$. 
\end{proof}

\begin{lemma}
  \label{lem:c:c:edge}
  Let $j \in C(t)$ and $k \in J\setminus N(t)$. If there is an edge $(j,k) \in E_N$, then $\by_k = y_k(s_k) \le y_j(s_j) = \by_j$, and thus, $(j,k)$ is good.
\end{lemma}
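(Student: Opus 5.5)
Since both $j$ and $k$ are not in $N(t)$, both have emitted by time $t$ (or $k$ completed, in which case it also emitted), so $\by_j = \alpha p_j = y_j(s_j)$ and $\by_k = \alpha p_k = y_k(s_k)$. The claim therefore reduces to $p_k \le p_j$. Since $(j,k)\in E_N$, there is a time $t' \in I_j$ with $k \in N(t')$ at which $k$ is processed. By \Cref{obs:sept-like}, $y_k(t') \le y_{j}(t')$ if $j \in N(t')$. We split into two cases according to whether $t' < s_j$ or $t' \ge s_j$, mirroring the proof of \Cref{lem:c:nc:edge}.

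\textbf{Case (1): $t' < s_j$.} Then $j,k \in N(t')$, and $k$ is processed at $t'$. Since $k$ is processed at $t'$ while non-clairvoyant, \Cref{obs:clairvoyant-jobs-block-earlier-jobs} applied to $j$ emitting at $s_j$ shows that $k$ (alive at $s_j$ if $s_k > s_j$) is not executed during $[s_j, C_j] \supseteq [s_j,t]$. Thus $k$ must emit before $s_j$, i.e.\ $s_k \le s_j$, since otherwise $k$ would still be in $N(t)$.

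Now \Cref{lemma:ketchup} applies on $[t', s_k]$, during which both jobs are non-clairvoyant. It gives $y_j(s_k) \ge y_k(s_k) = \alpha p_k$. Moreover $y_j(s_k) \le y_j(s_j) = \alpha p_j$, since $s_k \le s_j$. Hence $\alpha p_k \le \alpha p_j$, as required. The subcase $s_k = s_j$ is handled by the same inequality.

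\textbf{Case (2): $t' \ge s_j$.} Then \Cref{obs:clairvoyant-jobs-block-earlier-jobs} forces $r_k \ge s_j$, because any job released before $s_j$ is not executed during $[s_j, C_j]$. So $k$ is released after $j$ emits, while $j$ is clairvoyant and alive through $t$.

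Suppose for contradiction that $p_k > p_j$. Consider the time $t''\ge r_k$ at which $y_k(t'') = \alpha p_j$. This time exists because $k$ eventually reaches $y_k = \alpha p_k > \alpha p_j$ by time $t$. At $t''$ we have
\[
\tfrac{1-\alpha}{\alpha} y_k(t'') = (1-\alpha)p_j \ge p_j(t''),
\]
so the first rule of the algorithm (if $k$ is the minimal-progress non-clairvoyant job) or \Cref{obs:sept-like} prevents processing $k$ non-clairvoyantly after $t''$ until $j$ completes. Since $C_j > t$, $k$ never reaches $y_k = \alpha p_k$ by time $t$. This contradicts $k \notin N(t)$.

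The main obstacle I expect is making this last step rigorous. If $k$ is not the minimum-progress job in $N$, it is not processed anyway. If it is, the condition $\min_{C} p \le \frac{1-\alpha}{\alpha}\min_N y$ holds, because $p_j(t'')$ is at most the right-hand side, so SRPT is chosen and $k$ is not processed. The tie case, where equality holds, is resolved in favor of rule 1 by the non-strict inequality in the algorithm's definition. Thus $p_k \le p_j$ and $\by_k \le \by_j$.
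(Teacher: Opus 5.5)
Your proof is correct and follows essentially the same route as the paper. It uses the same case split on $t' < s_j$ versus $t' \ge s_j$: \Cref{obs:clairvoyant-jobs-block-earlier-jobs} together with \Cref{lemma:ketchup} in the first case, and a threshold-crossing argument at $y_k(t'') = \alpha p_j$ in the second. Your deviations only streamline the paper's argument: you reduce to $p_k \le p_j$ upfront, absorb $s_k = s_j$ into Case (1), and obtain $t''$ from $r_k \ge s_j$ rather than from the bound $y_k(t') \le \alpha p_j$.
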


\begin{proof}
  First observe that if $s_j = s_k$, then $\by_j = y_j(s_j) = y_k(s_k) = \by_k$, and we are done. For the rest of the proof, assume $s_j \not= s_k$.

  Since $(j,k) \in E_N$, there is a point in time $t' \in I_j$ with $k \in N(t')$ at which $k$ is being processed.  We distinguish between the two cases (1) $t' < s_j$ and (2) $t' \ge s_j$.

  \paragraph*{Case (1):} First, consider the case $t' < s_j$. Then, both jobs are non-clairvoyant at point in time $t'$, i.e., $j,k \in N(t')$. By \Cref{obs:clairvoyant-jobs-block-earlier-jobs} and our assumption that $s_j \not= s_k$, this means that the job which emits first is completed before the other job can emit. 
  In particular, this means that $s_j < s_k$ is not possible as otherwise we would have $j \in D(t)$.
  Thus, we must have $t' \le s_k < s_j$. Since $k,j \in N(t')$ and $k$ is executed at $t'$, we have $y_j(t') \ge y_k(t')$ by definition of the algorithm. Using $t' \le s_k < s_j$,~\Cref{lemma:ketchup} then implies $y_j(s_k) \geq y_k(s_k)$, giving $\alpha p_j \geq y_j(s_k) \geq y_k(s_k) = \alpha p_k$. Thus, $\by_k = y_k(s_k) = \alpha p_k \le \alpha p_j =  y_j(s_j) = \by_j$.

  \paragraph*{Case (2):} Next, consider the case $t' \ge s_j$. 
  If $p_k \le p_j$, then we immediately get $\by_k = y_k(s_k) = \alpha p_k \le \alpha p_j \le y_j(s_j) = \by_j$.

  If $p_k > p_j$, then we claim that $y_k(t) \le y_j(s_j)$ holds, which implies $\by_k = y_k(s_k) \le y_k(t) \le y_j(s_j) = \by_j$. For the remainder of the proof, we show that the claim indeed holds.

  To see that the claim $y_k(t) \le y_j(s_j)$ holds, assume otherwise, i.e., $y_k(t) > y_j(s_j)$. 
  Since we still have $y_k(t') \le y_j(t') \le y_j(s_j)$ by \Cref{obs:sept-like} (and using the fact that $k \in N(t')$ is executed at $t'$),  $y_k(t) > y_j(s_j)$ implies that there must be a point in time $t''$ with $t' \leq t'' \leq t$ and $y_k(t'') = y_j(s_j) = \alpha \cdot p_j$. By our assumption that $p_k > p_j$, we then have $y_k(t'') < \alpha p_k$ and, thus, $k$ is still non-clairvoyant at point in time $t''$. Furthermore, $y_k(t'') = \alpha \cdot p_j$ implies $\frac{1-\alpha}{\alpha} y_k(t'') = (1-\alpha) \cdot p_j$. However, as $t'' \geq t' \geq s_j$, we also have $p_j(t'') \le (1-\alpha) \cdot p_j$. 
  Thus, $\frac{1-\alpha}{\alpha} y_k(t'') \geq p_j(t'')$, which means that the algorithm will not work on $k$ after time $t''$ until $j$ is completed.
  Since $j \in C(t)$, $k$ is not executed during $[t'',t]$. However, using $k \in N(t'')$, this implies $k \in N(t)$; a contradiction to the assumption of the lemma that $k \in J \setminus N(t)$.
\end{proof}

\Cref{lem:nc:c:edge,lem:c:c:edge,lem:c:nc:edge} imply that each edge $(j,k) \in E_N$ with $j \in C(t) \cup N(t)$ is good. This only leaves edges $(j,k) \in E_N$ with $j \in D(t)$ to potentially be bad and the example in~\Cref{fig:ex:bad-edge} shows that such edges can indeed be bad.
The crux of the example is that $k$ is executed after the job $j$ emits. 
With the next lemma, we show that this is indeed a necessary condition for such an edge to be bad.
Later, this condition will be used to prove the existence of shortcuts that allow to skip bad edges.

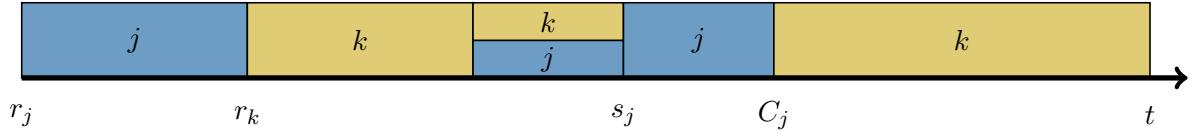
\begin{figure}[t]
  \centering
  \begin{tikzpicture}
    \jobT[job1](0,0)(3,1)($j$);
    \jobT[job2](3,0)(3,1)($k$);
    \node at (0,-0.5) {$r_j$};
    \node at (3,-0.5) {$r_k$};

    \jobT[job1](6,0)(2,0.5)($j$);
    \jobT[job2](6,0.5)(2,0.5)($k$);
    \node at (8,-0.5) {$s_j$};

    \jobT[job1](8,0)(2,1)($j$);
    \node at (10,-0.5) {$C_j$};

    \jobT[job2](10,0)(5,1)($k$);
    \node at (15,-0.5) {$t$};

    \draw[line width=2pt,->] (0,0) -- (15.5,0);
  \end{tikzpicture}
  \caption{Example of a job $j \in D(t)$ and $k \in N(t)$ with a bad edge $(j,k) \in E_N$. At point in time $s_j$, the jobs satisfy $y_j(s_j) = y_k(s_j)$ as they are processed in parallel. However, as $k$ is executed during $[s_j,t]$, we have $\by_j = y_j(s_j) = y_k(s_j) < y_k(t) = \by_k$, and the edge is bad.}
  \label{fig:ex:bad-edge}
\end{figure}

\begin{lemma}\label{lem:final:good:edge}
  Let $j \in D(t)$ and $(j,k) \in E_N$. Then, $(j,k)$ is good if at least one of the following conditions hold: 
  \begin{enumerate}[(i)]
    \item $k \in N(t)$ and $k$ is not executed during $[s_j,t]$.
    \item $k \not\in N(t)$ and $s_k \le s_j$.
  \end{enumerate}
\end{lemma}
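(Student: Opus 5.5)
Since $j \in D(t)$, job $j$ emitted before $t$, so $\by_j = \alpha p_j = y_j(s_j)$. The plan is to mirror the proofs of \Cref{lem:c:nc:edge,lem:c:c:edge}: take the time $t' \in I_j$ with $k \in N(t')$ at which $k$ is processed, and split on $t' < s_j$ versus $t' \ge s_j$. In each case we bound $\by_k$ by $y_j(s_j)$.

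\textbf{Case (i)}, $k \in N(t)$, so $\by_k = y_k(t)$. If $t' < s_j$, then $j,k \in N(t')$. Moreover $k \in N(s_j)$, because $k \in N(t)$ and $N$-membership is monotone until emission. By \Cref{lemma:ketchup}, $y_k(s_j) \le y_j(s_j)$. Since $k$ is not executed during $[s_j,t]$, we get $y_k(t) = y_k(s_j) \le \by_j$.

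If $t' \ge s_j$, then $k$ is executed during $[s_j,t]$, which contradicts hypothesis (i). So this subcase cannot occur.

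\textbf{Case (ii)}, $k \notin N(t)$ and $s_k \le s_j$, so $\by_k = y_k(s_k) = \alpha p_k$. Since $k \in N(t')$ we have $t' \le s_k \le s_j$, which puts us in the subcase $t' \le s_j$. At $t'$ both jobs are non-clairvoyant ($y_j(t') \le \alpha p_j$ because $t' \le s_j$), and $k$ is processed, so \Cref{obs:sept-like} gives $y_j(t') \ge y_k(t')$. Both jobs remain in $N$ throughout $[t',s_k]$, so \Cref{lemma:ketchup} yields $y_j(s_k) \ge y_k(s_k)$. Hence
\[ \by_k = y_k(s_k) \le y_j(s_k) \le y_j(s_j) = \by_j. \]

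The main obstacle is the boundary case $s_k = s_j$, and more generally applying \Cref{lemma:ketchup} at emission instants, where membership in $N$ is defined with $\le$. The definition of $N(t)$ via $y_j(t)\le \alpha p_j$ includes the emission instant itself, so \Cref{lemma:ketchup} applies at $t''=s_k$ and $t''=s_j$. If equality issues arise, I would argue by continuity of $y$ as $t'' \uparrow s_k$.

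A further detail is case (i) when $j$ is released after some processing of $k$. The point $t'$ lies in $I_j$, so $k$ is alive at $t'<s_j$. Together with $k\in N(t)$, this gives $k\in N(s_j)$, which is what the argument needs.
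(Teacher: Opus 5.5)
Your proof is correct and follows the paper's argument. You take the non-clairvoyant execution time $t'\in I_j$ of $k$, use \Cref{obs:sept-like} to get $y_j(t')\ge y_k(t')$, and carry this forward with \Cref{lemma:ketchup}: to $s_j$ in case (i) and to $s_k$ in case (ii). The differences are cosmetic: the paper proves equality $\by_j=\by_k$ in case (i) and treats $s_k=s_j$ as a separate subcase in (ii), whereas you settle for the inequality and absorb $s_k=s_j$ by applying \Cref{lemma:ketchup} at the emission instant, which is valid under the paper's $\le$ convention for $N(\cdot)$.
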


\begin{proof}
  Since $(j,k) \in E_N$, there is a point in time $t' \in I_j$ with $k \in N(t')$ at which $k$ is being processed. 
  We distinguish between which of the two stated conditions holds:
  \begin{enumerate}[(i)]
      \item If $k \in N(t)$ and $k$ is not executed during $[s_j,t]$, then we must have $t' < s_j$. Thus, $j,k \in N(t')$, and $y_j(t') \geq y_k(t')$ holds by \Cref{obs:sept-like}. Since $k \in N(t)$, \Cref{lemma:ketchup} implies 
      $y_j(s_j) \geq y_k(s_j)$, and since $j$ is being processed at time $s_j$, $y_j(s_j) = y_k(s_j)$.
      Thus, using that $k$ is not executed during $[s_j,t]$, $\by_j = y_j(s_j) = y_k(s_j) = y_k(t) = \by_k$, which implies that $(j,k)$ is good.

      \item If $k \not\in N(t)$ and $s_k \le s_j$, we distinguish two subcases. If $s_k = s_j$, then we have $\by_j = y_j(s_j) = y_k(s_k) = \by_k$. This is because both jobs emitting at the same time implies that the algorithm worked on both at that time because of $y_j(s_j) = y_k(s_j)$. This implies that $(j,k)$ is good.
      
      If $s_k < s_j$, then $j,k \in N(t')$. %
      Since there exists the point in time $t'$ such that $j,k \in N(t')$, we have $y_j(t') \geq y_k(t')$ by \Cref{obs:sept-like}. Thus,
      \Cref{lemma:ketchup} and $s_k < s_j$ imply $\alpha p_j \geq y_j(s_k) \geq y_k(s_k) = \alpha p_k$.
      We can conclude with $\by_j = \alpha p_j \geq \alpha p_k = \by_k$. Thus, $(j,k)$ is good.
  \end{enumerate}
\end{proof}

The~\Cref{lem:nc:c:edge,lem:c:nc:edge,lem:c:c:edge,lem:final:good:edge} imply the following corollary, which characterizes bad edges.

\begin{corollary}
    \label{coro:bad-edges}
    Let $(j,k) \in E_N$. If $(j,k)$ is bad, then $j \in D(t)$ and at least one of the following conditions holds:
    \begin{enumerate}[(i)]
      \item $k \not\in N(t)$ and $s_k > s_j$.
      \item $k \in N(t)$ and $k$ is executed during $[s_j,t)$.
    \end{enumerate}
    In either case, there exists a point in time $t' \in [s_j,t]$ at which $k$ is being processed and $k \in N(t')$.
\end{corollary}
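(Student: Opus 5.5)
This corollary is obtained by contraposition, combining the four edge lemmas just proven. Suppose $(j,k) \in E_N$ is bad. By \Cref{lem:nc:c:edge}, every edge of $E_N$ leaving a job of $N(t)$ is good. By \Cref{lem:c:nc:edge,lem:c:c:edge}, every edge of $E_N$ leaving a job of $C(t)$ is good, whether $k \in N(t)$ or $k \notin N(t)$. Since $j$ has been released by time $t$ (it has an outgoing edge, so some job was processed during $I_j$), and $A(t) = N(t) \cup C(t)$, we conclude $j \in D(t)$. Now apply \Cref{lem:final:good:edge} in contrapositive form: neither of its conditions (i), (ii) holds. So if $k \in N(t)$, then $k$ is executed during $[s_j,t]$, and if $k \notin N(t)$, then $s_k > s_j$. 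This yields the two stated cases.

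For case (ii), the statement writes $[s_j,t)$ rather than $[s_j,t]$. Execution at a single instant is immaterial, since processing is measured by $y_k$, so being executed during $[s_j,t]$ means positive processing there. It therefore also happens in $[s_j,t)$. We should also note that $s_j \le C_j \le t$ is well defined because $j \in D(t)$.

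It remains to prove the final sentence: there is a time $t' \in [s_j,t]$ at which $k$ is processed while $k \in N(t')$.

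In case (ii), $k \in N(t)$, and $y_k$ is nondecreasing. Hence $k \in N(t')$ for every $t' \le t$ at which $k$ is alive. So any execution time $t' \in [s_j,t)$ of $k$ works.

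In case (i), $k \notin N(t)$ and $s_k > s_j$. Here $k$ had $y_k < \alpha p_k$ just before $s_k$ and must receive processing immediately before $s_k$ in order to emit at $s_k$. Take $t'$ slightly before $s_k$ (in $s_k^-$ notation, with $t' > s_j$). There $k$ is processed while non-clairvoyant, and $s_k \le t$ because $k \notin N(t)$. If $\alpha = 0$ the non-clairvoyant phase is degenerate, but then $N(t')$ is empty for jobs that have started. Bad edges in $E_N$ would be vacuous in that case, since $E_N$ requires processing while in $N$, so this case causes no trouble.

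The main subtlety is the boundary of case (i). We need $s_k > s_j$ strictly so that a small left neighbourhood of $s_k$ still lies inside $[s_j,t]$. We also need the fact that emission requires processing just before $s_k$, which follows from $s_k$ being the earliest time with $y_k = \alpha p_k$.
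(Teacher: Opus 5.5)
Your proposal is correct and takes the same route as the paper, which states this corollary as a direct consequence of \Cref{lem:nc:c:edge,lem:c:nc:edge,lem:c:c:edge,lem:final:good:edge}. The contrapositive combination you give is exactly that argument, and your check of the final sentence (monotonicity of $y_k$ in case (ii), the $s_k^-$ execution in case (i)) fills in details the paper leaves implicit. One small tidy-up: in case (i), $s_k \le t$ needs not only $k \notin N(t)$ but also $r_k \le t$, which holds because $(j,k) \in E_N$ forces $k$ to be processed inside $I_j \subseteq [0,t]$.
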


\paragraph*{$E_N$-paths that start in $A(t)$.} Having characterized good and bad edges in $E_N$, we proceed to prove~\Cref{lem:good-path-from-active}, which states that an $E_N$-path from $j \in A(t)$ to $i$ implies the existence of a good path from $j$ to $i$.

Since $j \in A(t)$, we know by \Cref{coro:bad-edges} that an $E_N$-path $P$ starting with $j$ must start with a good edge. If $P$ is not good and, therefore, contains at least one bad edge, then the path must contain two consecutive edges $e,e' \in E_N$ such that $e$ is good and $e'$ is bad. The next two auxiliary lemmas show that such a situation often implies the existence of a shortcut that skips the bad edge $e'$.    

\begin{lemma}
  \label{lem:shortcut:1}
  Let $(j,k),(k,h) \in E_N$ such that $(j,k)$ is good and $(k,d)$ is bad. If $j \in A(t)$, then $(j,h) \in E_N$.
\end{lemma}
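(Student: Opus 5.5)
Since $(k,h)$ is bad, \Cref{coro:bad-edges} applies to this edge: it gives $k \in D(t)$ and a time $t' \in [s_k,t]$ at which $h$ is processed while $h \in N(t')$. To obtain $(j,h) \in E_N$, I will show that $t' \in I_j$, meaning $r_j \le t'$. Because $j \in A(t)$, we have $I_j = [r_j,t]$. So once $r_j \le t'$ is established, $h$ is executed non-clairvoyantly during the lifetime of $j$, and $(j,h)\in E_N$ follows by \Cref{def:borrowing}. It is enough to prove $r_j \le s_k$, since $s_k \le t'$.

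The argument is by contradiction: assume $r_j > s_k$. The edge $(j,k) \in E_N$ provides a time $t'' \in I_j$ at which $k$ is processed with $k \in N(t'')$, so $t'' \le s_k$. It also gives $t'' \ge r_j > s_k$, which is impossible. Hence $r_j \le s_k$.

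One technical point needs care: whether $t''$ may equal $s_k$ itself. By definition $k \in N(t'')$ requires $y_k(t'') \le \alpha p_k$. For $t''>s_k$ we would have $y_k(t'') > \alpha p_k$, because $k$ is still running right after $s_k$ (\Cref{obs:clairvoyant-jobs-block-earlier-jobs} applies to a job that emits), and it keeps running until $C_k$. Therefore $t'' \le s_k$ holds in every case, and the conclusion $r_j \le s_k \le t'$ follows.

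I do not expect a real obstacle here. The only delicate step is making sure the corollary yields a time $t' \ge s_k$, not just some time in $I_k$. \Cref{coro:bad-edges} provides exactly that. Note that the goodness of $(j,k)$ is not used; membership $j\in A(t)$ is what matters.
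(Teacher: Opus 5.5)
Your proof is correct and follows essentially the paper's argument: both derive $r_j \le s_k$ from the non-clairvoyant execution of $k$ during $I_j$, and then use \Cref{coro:bad-edges} to place a non-clairvoyant execution of $h$ in $[s_k,t] \subseteq I_j$. The paper re-derives that execution time via a case split on $h \in N(t)$, whereas you cite the corollary's final clause directly. One small remark: your side claim that $k$ ``keeps running until $C_k$'' is false in general, since newly released jobs can preempt it as in \Cref{fig:example-delta-edge}; you only need that $k$ is processed immediately after $s_k$, which already forces $y_k(t'')>\alpha p_k$ for all $t''>s_k$.
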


\begin{proof}
  First observe that, since $j \in A(t)$, every job $d$ that is executed after $r_j$ is specifically executed during $I_j$, and thus, $(j,d) \in E$. 

As $(k,h)$ is bad, \Cref{coro:bad-edges} implies that $h$ is executed during $[s_k,t]$. 
Furthermore, $(j,k) \in E_N$ implies that there is a point in time $t'$ with $r_j \le t' \le s_k$ at which $k$ is executed. 
Thus, $h$ is executed after $r_j$ and, as argued above, there exists an edge $(j,h) \in E$. %
If $h \in N(t)$, then clearly $(j,h) \in E_N$. If $h \not\in N(t)$, then \Cref{coro:bad-edges} implies $s_h > s_k$. 
Thus, $h$ is executed after $s_k \ge r_j$ at time $s_h^-$ with $h \in N(s_h^-)$, and we can conclude that $(j,h) \in E_N$.
\end{proof}

\begin{lemma}
  \label{lem:shortcut:2}
  Let $(j,k),(k,h) \in E_N$ such that $(j,k)$ is good and $(k,h)$ is bad. If $j \in D(t)$ and $h$ is being processed at a time $t'$ such that $t' \in I_k$,  $t' \le C_j$, and $h \in N(t')$, then $(j,h) \in E_N$.
\end{lemma}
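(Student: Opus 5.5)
We need to show $(j,h)\in E_N$: that $h$ is executed at some time in $I_j$ while non-clairvoyant. The hypothesis hands us a candidate time $t'$ with $t' \in I_k$, $t'\le C_j$ and $h\in N(t')$. So the plan is to show $t' \ge r_j$; then $t' \in [r_j, C_j] = I_j$ (recall $j\in D(t)$, so $I_j=[r_j,C_j]$), and $h$ is processed at $t'$ while in $N(t')$, which is exactly $j \borrow_N h$.

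To get $t'\ge r_j$, use $(k,h)$ being bad. By \Cref{coro:bad-edges}, $k\in D(t)$, and there is a time in $[s_k,t]$ at which $h$ is processed while non-clairvoyant. I would rather argue directly about $t'$. Suppose for contradiction $t'<r_j$. Since $(j,k)\in E_N$, $k$ is processed at some time $t_1\in I_j$ with $k\in N(t_1)$, so $t_1\ge r_j>t'$, and $k$ is still non-clairvoyant at $t_1$. Hence $s_k> t' $, and at time $t'$ both $k$ and $h$ are alive (as $t'\in I_k$) and non-clairvoyant, and $h$ is processed. By \Cref{lemma:ketchup}, $y_k(t'')\ge y_h(t'')$ for all later $t''$ while both stay in $N$. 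The aim is then to show that $(k,h)$ would be good, contradicting badness. Using \Cref{coro:bad-edges}: either (i) $h\notin N(t)$ and $s_h>s_k$, or (ii) $h\in N(t)$ and $h$ runs during $[s_k,t)$. In both cases $h$ is still in $N$ at $s_k$, so catch-up gives $y_h(s_k)\le y_k(s_k)=\alpha p_k$, and moreover $k$ emits while $h\in N(s_k)$ with $r_h\le t'<s_k$. Then \Cref{obs:clairvoyant-jobs-block-earlier-jobs} (applied with emitting job $k$ at $s_k$, and $h$ alive with $r_h<s_k<C_h$) says $h$ is not executed during $[s_k,C_k]$ and $C_k\le C_h$.

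That alone does not forbid $h$ running after $C_k$, so the contradiction must come from $j$. Here I would use that $j$ is also alive at $t_1\in(t',s_k]$, and $j\in D(t)$ with $t'\le C_j$. Hmm — actually the cleaner route: the case distinction can likely be avoided. The main obstacle is exactly justifying $t'\ge r_j$; my first attempt is the above, but if badness does not directly contradict, I would instead exploit that $h$ is processed at $t'$ with $h\in N(t')$ and $t'\le C_j$: if $t'<r_j$, then since $j$ is released after $t'$ and $h$ survives past $s_k\ge t_1\ge r_j$ in $N$, we get $h$ alive at $t_1$, hence during $I_j$; combined with \Cref{coro:bad-edges} giving $h$ processed non-clairvoyantly at some time $t_2\in[s_k,t]$, I would check $t_2\le C_j$ via \Cref{obs:clairvoyant-jobs-block-earlier-jobs}: if $j$ emitted before $t_2$ while $h$ alive, $h$ cannot run in $[s_j,C_j]$, so such $t_2$ either lies before $s_j$ (fine, in $I_j$ since $t_2\ge s_k\ge t_1\ge r_j$) or after $C_j$; ruling out the latter for the relevant time is the delicate step, and I would handle it by choosing the first non-clairvoyant execution of $h$ after $r_j$, which exists and lies before $s_j$ or else contradicts catch-up ordering between $h$ and $j$.

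Once some time in $I_j$ with $h$ processed and $h\in N$ is found, $(j,h)\in E_N$ follows by \Cref{def:borrowing}.
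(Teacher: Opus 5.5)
Your primary plan, proving $t'\ge r_j$ so that $t'$ itself witnesses $(j,h)\in E_N$, cannot succeed, because $t'<r_j$ is compatible with all hypotheses. For a concrete instance, let $\alpha=\tfrac12$ and take $t=10$. Release $k$ and $h$ at time $0$ with $p_k=2$ and $p_h=10$, and release $j$ at time $1$ with $p_j=2$. The algorithm then behaves as follows:
\begin{itemize}
\item it runs $k,h$ at rate $\tfrac12$ on $[0,1]$, then $j$ alone on $[1,1.5]$, then all three at rate $\tfrac13$ on $[1.5,3]$;
\item at time $3$ both $j$ and $k$ emit, and they are completed on $[3,4]$ and $[4,5]$ (breaking the tie in favour of $j$);
\item after that, $h$ runs alone and emits at time $9$.
\end{itemize}
Here $j\in D(t)$, $(j,k)$ is good ($\by_j=\by_k=1$), and $(k,h)$ is bad ($\by_h=5$). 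The time $t'=0.5$ meets every hypothesis, yet $t'<r_j=1$.

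So your attempt to show that $t'<r_j$ would force $(k,h)$ to be good is doomed. Your fallback via the time $t_2\in[s_k,t]$ from \Cref{coro:bad-edges} also fails on this instance: every non-clairvoyant execution of $h$ in $[s_k,t]$ lies in $[5,9]$, after $C_j=4$. Your last suggestion, taking the first non-clairvoyant execution of $h$ after $r_j$, does point to a valid witness here, but you only assert it. Appealing to ``catch-up between $h$ and $j$'' is circular: it presupposes that $h$ is processed while $j$ is alive, which is exactly what you need to show.

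The missing idea is a two-sided squeeze between $k$ and $h$ at your time $t_1$, and you already have every ingredient. Since $h$ is executed at $t'<t_1$ with $k,h\in N(t')$, and both remain in $N$ up to $t_1$ (you showed $t_1\le s_k$ and that $h$ stays non-clairvoyant until after $s_k$), \Cref{lemma:ketchup} gives $y_k(t_1)\ge y_h(t_1)$. Conversely, $k$ is executed at $t_1$ while $h\in N(t_1)$, so \Cref{obs:sept-like} gives $y_h(t_1)\ge y_k(t_1)$. Hence $y_h(t_1)=y_k(t_1)$. Since the SETF rule processes all minimal-progress jobs at the same rate, $h$ is processed at $t_1\in I_j$ while non-clairvoyant, which is exactly $(j,h)\in E_N$. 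This is the paper's argument, phrased there as a contradiction to $(j,h)\notin E_N$. It needs neither $t'\ge r_j$ nor the time supplied by \Cref{coro:bad-edges}.
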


\begin{proof}
  Since $(k,h)$ is bad, $k \in D(t)$ by \Cref{coro:bad-edges}.
  Observe that a good edge $(j,k) \in E_N$ with $j,k \in D(t)$ must satisfy $p_k \le p_j$. If $p_j < p_k$, then $\by_j = \alpha \cdot p_j < \alpha \cdot p_k = \by_k$ and $(j,k)$ is bad; a contradiction.
  
  We continue by showing the statement via proof by contradiction. To this end, assume $(j,h) \not\in E_N$. This means that $h$ is not executed during $I_j$ while being non-clairvoyant. Thus, the point in time $t'$ of the lemma cannot be contained in $I_j= [r_j, C_j]$. Since $t' \le C_j$ by assumption of the lemma, we must have $t' < r_j$. Then, the other assumption of the lemma that $t' \in I_k$ implies $r_k \le t' < r_j$.

  Next, consider the job $k$. By assumption that $(j,k)  \in E_N$, the job $k$ must be executed at a point in time $t'' \in I_j$ while being non-clairvoyant. Since $r_k < r_j$, we must have $t'' \in [r_j,s_j]$; an execution of $k$ during $(r_j,s_j]$ would contradict~\Cref{obs:clairvoyant-jobs-block-earlier-jobs}. Also, $k$ being processed at $t'' > t' \ge r_k$ while being non-clairvoyant implies $k \in N(t')$.
  Using the fact that $(k,h)$ is bad,~\Cref{coro:bad-edges} either gives us $h \in N(t)$ or $s_h > s_k$.  Both cases imply $h \in N(t'')$.

  To finish the proof, note that we identified the two point in time $t' < t''$ with the following properties: (i) $k,h \in N(t')$ and $h$ is executed at $t'$ and (ii) $k,h \in N(t'')$ and $k$ is executed at $t''$. The existence of these points in time and~\Cref{lemma:ketchup} imply that $y_k(t'') = y_h(t'')$. However, by definition of the algorithms, this implies that $k$ and $h$ are executed in parallel at $t''$. Since $t'' \in I_j$ and $h \in N(t'')$, this implies that edge $(j,h) \in E_N$ exists; a contradiction.  
\end{proof}

Having established the two shortcut lemmas, we need one more minor observation to be ready to prove~\Cref{lem:good-path-from-active}.

\begin{observation}
  \label{obs:shortcut:3}
  Let $P$ be an $E_N$-path that starts at a job $j$ but otherwise only contains jobs $k \in D(t)$ with $C_k \le t'$. Then, $r_j \le t'$. 
\end{observation}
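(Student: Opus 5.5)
We argue by induction along the path $P = (j = k_0, k_1, \ldots, k_m)$, using only the facts that every job on $P$ other than $j$ belongs to $D(t)$ with $C_{k_q} \le t'$, and that consecutive jobs on an $E_N$-path have intersecting lifetimes in a specific way. The target is to show that $r_{k_q} \le t'$ propagates backwards, i.e., that the release date of every job on the path, including $j$, is at most $t'$.

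If $m = 0$ there is nothing to say unless $j$ itself is assumed to satisfy the hypothesis, so consider $m \ge 1$. The key step is the edge $(j, k_1) \in E_N$. By Definition~\ref{def:borrowing}, $k_1$ is executed at some time $t'' \in I_j$, so $r_j \le t''$. Moreover, $k_1$ is executed only before its completion, so $t'' \le C_{k_1} \le t'$. Hence $r_j \le t'' \le t'$. This alone proves the observation, since only the first edge matters. Note that we do not even need that $k_1$ is processed non-clairvoyantly; any edge of $G_B$ suffices.

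If one prefers a statement robust to the interpretation that ``otherwise'' permits $j$ to be the only active job, the same one-line argument applies. Every edge $(k_q, k_{q+1})$ gives $r_{k_q} \le C_{k_{q+1}} \le t'$, so in fact all jobs on $P$ except possibly the last are released by $t'$. The only point requiring care, and thus the ``obstacle'', is confirming that an execution time of $k_1$ inside $I_j$ is bounded by $C_{k_1}$, which holds trivially since a completed job is not processed afterwards. Consequently, the proof is a direct consequence of the definition of $E_N$ and the lifetime $I_j = [r_j, \min\{C_j, t\}]$.
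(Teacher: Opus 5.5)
Your proof is correct and is the same as the paper's: both use only the first edge $(j,k_1)$ to get $r_j \le t'' \le C_{k_1} \le t'$, so the induction framing is superfluous. Note that the statement implicitly requires $P$ to have at least one edge; for $m=0$ it can fail. The paper assumes this too, by taking a direct successor of $j$.
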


\begin{proof}
  Let $k$ be the direct successor of $j$ on $P$. By assumption $C_k\le t'$, and thus, $r_k \le t'$. The edge $(j,k) \in E_N$ implies that there is a point in time $t''$ with $r_j \le t'' \le C_j$ at which $k$ is executed. Clearly, $r_j \le t'' \le C_k \le t'$
\end{proof}

\begin{proof} [Proof of \Cref{lem:good-path-from-active}]
  Let $P$ be an $E_N$-path from $j \in A(t)$ to $i$, and assume that $P$ contains at least one bad edge. %
  We exhaustively apply the shortcut rules of \Cref{lem:shortcut:1} and \Cref{lem:shortcut:2} to $P$. Note that this never increases the number of bad edges. Let $P'$ be the resulting path. If $P'$ is good, we are done.

  Otherwise, $P'$ contains at least one bad edge. Let $(k_1,k_2)$ be the first bad edge on $P'$.
  By \Cref{coro:bad-edges}, we know $k_1 \in D(t)$. 
  Since $j \not\in D(t)$, $(k_1,k_2)$ cannot be the first edge on $P'$. Let $k_0$ be the predecessor of $k_1$ on $P'$. 
  Note that $(k_0,k_1)$ is good due to our choice of $(k_1,k_2)$.
  Furthermore, we know that the shortcuts of \Cref{lem:shortcut:1,lem:shortcut:2} do not apply to the edges $(k_0,k_1)$ and $(k_1,k_2)$. This implies $k_0 \in D(t)$ and that $k_2$ is executed at a time $t' \in I_{k_1}$ with $t' > C_{k_0}$. 

  Let $P''$ denote a subpath of $P'$ that ends at $k_0$, starts at a job $h$ with $C_h \ge t'$, and between $h$ and $k_0$ only contains jobs $k$ with $C_k < t'$. Since the job $j \in A(t)$ satisfies $C_j \ge t \ge t'$, such a subpath $P''$ must exist.
  By \Cref{obs:shortcut:3}, we have $r_h \le t' \le C_h$. As $k_2$ is executed at point in time $t'$ while being non-clairvoyant, we get $(h,k_2) \in E_N$. Thus, we can replace $P''$ and $(k_0,k_1),(k_1,k_2)$ in $P'$ with the shortcut $(h,k_2)$.

  As long as our path contains bad edges, we can use these shortcuts to find a shorter path. %
  Since each shortcut strictly decreases the number of edges on the path, the procedure terminates if the path is good or contains the single edge $(j,i)$. In the latter case, $j \in N(t)$ and \Cref{coro:bad-edges} imply that $(j,i)$ is good.
\end{proof}

\paragraph*{$E_N$-paths that start with an entrypoint.} 
With~\Cref{lem:good-path-from-active} we have shown that an $E_N$ path from $j$ to $i$ implies the existence of a good path from $j$ to $i$ \emph{if} $j \in A(t)$. Next, we want to show that the implication also holds for $E_N$-paths that start at an entrypoint $k \in H_x^i$. Since we may have $k \not\in A(t)$, this is not implied by~\Cref{lem:good-path-from-active}. Thus, our goal is to show the following lemma. Later, this lemma will be useful for proving that all jobs $j$, which can reach $i$ without visiting a bottleneck, satisfy $\by_j \ge \by_i$.

\begin{lemma}
  \label{lem:good-path-from-entrypoints}
  Let $k \in H_x^i$ be an entrypoint of layer $L_x^i$ and let $i' \in L_x^i$ be a job that $k$ can reach via an $E_N$-path. Then, there exists a good path from $k$ to $i'$, and thus, $\by_k \ge \by_{i'}$.
\end{lemma}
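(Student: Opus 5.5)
Let $P$ be an $E_N$-path from the entrypoint $k \in H_x^i$ to $i' \in L_x^i$. The plan is to run the same shortcutting procedure as in the proof of \Cref{lem:good-path-from-active}. That proof used $j \in A(t)$ in only two places: to guarantee that the first edge is good (via \Cref{coro:bad-edges}), and as the anchor for the shortcut of \Cref{lem:shortcut:1} and for the subpath $P''$ that starts at a job $h$ with $C_h \ge t'$. If $k \in A(t)$, we are done by \Cref{lem:good-path-from-active}, so assume $k \in D(t)$. In that case we replace the role of ``alive at $t$'' by the structure that being an entrypoint provides. There is an $x$-entry path $\bP$ from some $d \in A(t) \cap L^i_{\ge x+1}$ to $k$. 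By \Cref{obs:entrypoint_blocking}, no job of $L^i_{\le x}$ is executed non-clairvoyantly during $[C_k,t]$, and (as in its proof) $I(\bP \setminus\{k\}) \supseteq [C_k,t]$.

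\textbf{Step 1 (the first edge is good).} Let $(k,k_1)$ be the first edge of $P$. If it were bad, \Cref{coro:bad-edges} would give a time $t' \in [s_k,t]$ at which $k_1$ is processed while $k_1 \in N(t')$. The job $k_1$ lies in $L^i_{\le x}$, since it reaches $i'$ by an $E_N$-path. Hence, by \Cref{obs:entrypoint_blocking}, $t' < C_k$. So $k_1$ runs non-clairvoyantly during $[s_k, C_k)$ while $k$ is clairvoyant, and \Cref{obs:clairvoyant-jobs-block-earlier-jobs} should forbid this, at least when $k_1$ was alive at the time $k$ ran after emitting. The remaining subcase is that $k_1$ is released after $s_k$ and interrupts $k$. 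In that subcase I would argue directly with the algorithm's rule, comparing $\frac{1-\alpha}{\alpha}y_{k_1}$ with $p_k$, that $\by_{k_1} \le \by_k$. This mirrors Case (2) of \Cref{lem:c:nc:edge}, with $C_k$ playing the role of $t$.

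\textbf{Step 2 (shortcuts).} Apply \Cref{lem:shortcut:2} exhaustively, and apply the analogue of the $P''$-shortcut from the proof of \Cref{lem:good-path-from-active}, to the first bad edge $(k_1,k_2)$ with its good predecessor $(k_0,k_1)$. That shortcut needs a job $h$ on the current path with $C_h \ge t'$, where $t'$ is the non-clairvoyant execution time of $k_2$ from \Cref{coro:bad-edges}. If no job on the prefix has $C_h \ge t'$, including $k$ itself, then $t' > C_k$. But $k_2 \in L^i_{\le x}$ is executed non-clairvoyantly at $t' \in [C_k,t]$, which contradicts \Cref{obs:entrypoint_blocking}. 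Therefore such an $h$ exists (possibly $h=k$), and by \Cref{obs:shortcut:3} we get $r_h \le t' \le C_h$, hence $(h,k_2) \in E_N$. Each shortcut shortens the path, and all shortcut endpoints still reach $i'$ via $E_N$, so they stay in $L^i_{\le x}$ and \Cref{obs:entrypoint_blocking} continues to apply. The procedure therefore terminates with a good path, or with the single edge $(k,i')$, which is good by Step 1. Chaining $\by$ along the good path gives $\by_k \ge \by_{i'}$.

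\textbf{Main obstacle.} I expect Step 1 to be the hard part, specifically the subcase where $k_1$ is released after $s_k$ while $k$ is still running. There the blocking observation does not directly apply. One has to combine the algorithm's switching rule, $p_k(\cdot) \le \frac{1-\alpha}{\alpha} y_{k_1}(\cdot)$ at the moment $k$ resumes, with $k_1$ remaining non-clairvoyant and unexecuted after $C_k$ (by \Cref{obs:entrypoint_blocking}) in order to bound $\by_{k_1}$ by $y_k(s_k)=\by_k$. Strictly, that would also require $\bar y_{k_1}=y_{k_1}(C_k)\le \alpha p_k$, and I would try to extract it by the argument of \Cref{lem:c:nc:edge}, Case (2).
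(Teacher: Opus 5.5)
Your proof is correct in substance. The key step, that the first edge out of $k$ is always good, is obtained by a genuinely different and shorter route than the paper's.

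The paper gets this step from \Cref{lem:layers:entrypoints:1,lem:layers:entrypoints:2}:
\begin{itemize}
\item It fixes the last edge $(j',k)\in\delta^i_x$ of an entry path and uses \Cref{lem:ec:cuts:prop:1} to get $r_k\le s_k\le r_{j'}\le C_k\le C_{j'}$.
\item It splits on whether the non-clairvoyant execution of the successor inside $I_k$ lies in $[r_k,s_k)$ or in $[s_k,r_{j'})$, with further subcases on that job's emission time.
\item Only afterwards does it use the entry path to exclude the leftover alternative, a non-clairvoyant execution in $[C_{j'},t]$.
\end{itemize}
You instead start from \Cref{coro:bad-edges}. You push the resulting non-clairvoyant execution of $k_1$ into $[s_k,C_k)$ via \Cref{obs:entrypoint_blocking}, force $r_{k_1}\ge s_k$ via \Cref{obs:clairvoyant-jobs-block-earlier-jobs}, and close with the SETF threshold. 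You never need the $\delta$-edge into $k$, only that the entry path covers $[C_k,t]$.

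Similarly, your existence argument for $h$ in Step~2 is a direct application of \Cref{obs:entrypoint_blocking} to $k_2\in L^i_{\le x}$. The paper re-derives the same fact by walking back along the entry path with \Cref{obs:shortcut:3}. The paper's route yields a reusable structural statement about entrypoints; yours is more economical.

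Two details should be tightened.

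\textbf{The subcase $r_{k_1}\ge s_k$.} The inequality you quote ``at the moment $k$ resumes'' points the wrong way: it bounds $y_{k_1}$ from below. What you need holds at every $\tau\in[r_{k_1},C_k)$ at which $k_1$ runs non-clairvoyantly: there rule~2 gives $\frac{1-\alpha}{\alpha}y_{k_1}(\tau)<p_k(\tau)\le(1-\alpha)p_k$. This leaves two cases.
\begin{itemize}
\item If $k_1$ emits before $C_k$, then $p_{k_1}\le p_k$, so $\by_{k_1}\le\alpha p_{k_1}\le\alpha p_k=\by_k$.
\item Otherwise $k_1\in N(C_k)$ with $y_{k_1}(C_k)\le\alpha p_k$. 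Then \Cref{obs:entrypoint_blocking} freezes $y_{k_1}$ on $[C_k,t]$, so again $\by_{k_1}\le\by_k$.
\end{itemize}
This is the capping argument of \Cref{lem:c:nc:edge}, Case~(2), that you name, so the fix is only in the direction you wrote.

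\textbf{Dropping \Cref{lem:shortcut:1} in Step~2.} That lemma is not tied to the start of the path; it applies to any triple on it. Without it, the latest predecessor $h$ of $k_1$ with $C_h\ge t'$ may be $k_0$ itself, for example if $k_0\in A(t)$. In that case \Cref{obs:shortcut:3} gives you nothing. Either keep \Cref{lem:shortcut:1} as the paper does, or argue directly: with $t'$ taken from \Cref{coro:bad-edges}, the edge $(k_0,k_1)\in E_N$ gives $r_{k_0}\le s_{k_1}\le t'$, so $t'\in I_{k_0}$ and $(k_0,k_2)\in E_N$.

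With these fixes the argument goes through.
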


Our strategy to prove the lemma is similar to the proof of~\Cref{lem:good-path-from-active}: We want to show that, whenever the path contains a bad edge, we can find a shortcut that skips the bad edge. However, for the proof of~\Cref{lem:good-path-from-active} it is essential that the first edge on the $E_N$-path is good. If the first job on the path is in $A(t)$, then we immediately get this property by~\Cref{coro:bad-edges}. We continue by proving a series of auxiliary lemmas, which will show that this property also holds if the path starts with an entrypoint.

\begin{lemma}
  \label{lem:layers:entrypoints:1}
  Let $k \in H_{x}^i$ be an entrypoint of layer $L_x^i$, let $(j,k) \in \delta_x^i$ and let $(k,d) \in E_N$. Then, at least one of the following two properties holds:
  \begin{enumerate}
      \item $(k,d)$ is good,
      \item $d$ is executed during $[C_j, t]$ while being non-clairvoyant.
  \end{enumerate}
\end{lemma}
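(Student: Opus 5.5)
We argue by contradiction: assume that $(k,d)$ is bad and that $d$ is \emph{not} executed during $[C_j,t]$ while being non-clairvoyant. The goal is then to derive $\by_d \le \by_k$.

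\textbf{Consequences of badness.} By \Cref{coro:bad-edges}, $k \in D(t)$. Either (i) $d \notin N(t)$ and $s_d > s_k$, or (ii) $d \in N(t)$ and $d$ is executed during $[s_k,t)$. In both cases $d$ is processed non-clairvoyantly at some time in $[s_k,t]$. Since $k \in D(t)$, we have $\by_k = y_k(s_k) = \alpha p_k$.

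\textbf{Structure of the $\delta$-edge.} By \Cref{lem:ec:cuts:prop:1} applied to $(j,k)\in\delta_x^i$, we have $s_k \le r_j \le C_k \le C_j$ and $y_j(C_k)\le \alpha p_j$. Let $\tau$ be the last time at which $d$ is executed while non-clairvoyant. By the contradiction assumption and the previous paragraph, $s_k \le \tau < C_j$. In both cases (i) and (ii) we get $\by_d \le y_d(\tau)$: in case (ii) $d$ receives no processing after $\tau$, so $\by_d = y_d(t)=y_d(\tau)$ up to the emission cap; in case (i) $\tau = s_d^-$, so $\by_d = y_d(s_d)$. It therefore suffices to show $y_d(\tau) \le \alpha p_k$. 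We split on the position of $\tau$.

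\textbf{Case $\tau \in [s_k, C_k]$.} Since $k$ is alive and clairvoyant at $\tau$ but the algorithm runs the non-clairvoyant job $d$, the rule (the second branch, \Cref{obs:sept-like}) forces
\[
y_d(\tau) \le \min_{j'\in N(\tau)} y_{j'}(\tau) < \tfrac{\alpha}{1-\alpha}\, p_k(\tau) \le \tfrac{\alpha}{1-\alpha}(1-\alpha)p_k = \alpha p_k .
\]
Hence $\by_d < \by_k$, contradicting badness. By \Cref{obs:clairvoyant-jobs-block-earlier-jobs}, any such $d$ must in fact have been released after $s_k$, which is consistent with this case.

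\textbf{Case $\tau \in (C_k, C_j)$.} This is the step I expect to be the main obstacle. Here $j$ is alive at $\tau$ and still non-clairvoyant at $C_k$. I would proceed as follows.
\begin{itemize}
\item Let $\theta \in I_j$ be the last time at which $k$ is run clairvoyantly within $I_j$.
\item If $d$ is alive at $\theta$, then \Cref{obs:clairvoyant-jobs-block-earlier-jobs} and the first case, applied at time $\theta$, bound $y_d$ at the moment $k$ completes.
\item After $C_k$, $d$ can only gain progress while it has minimal elapsed time among $N(\cdot)$. This includes $j$ if $j$ is still non-clairvoyant, so \Cref{lemma:ketchup} gives $y_d(\tau)\le y_j(\tau)$.
\item I would then compare $y_j$ with $\alpha p_k$ through the entrypoint property of $k$. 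Combining \Cref{obs:entrypoint_blocking} (no non-clairvoyant execution of $L^i_x$-jobs during $[C_k,t]$) with the $x$-entry path to $k$, a non-clairvoyant execution of $d$ after $C_k$ inside $I_j$ would give an $E_N$-edge from a job in $L^i_{\ge x+1}$ to $d$. I would then check whether this contradicts the layer definition (it does if $d \in L^i_{\le x}$). Otherwise, I would shortcut as in \Cref{lem:shortcut:2} to conclude that the witnessing execution can be moved to $[C_j,t]$, which contradicts the assumption.
\end{itemize}
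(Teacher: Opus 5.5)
Your reduction is sound up to the last case. Suppose $(k,d)$ is bad and $d$ has no non-clairvoyant execution in $[C_j,t]$. Then the last non-clairvoyant execution time $\tau$ of $d$ satisfies $s_k\le\tau<C_j$ and $\by_d\le y_d(\tau)$. For $\tau\in[s_k,C_k]$, the branch-2 inequality $y_d(\tau)\le\frac{\alpha}{1-\alpha}p_k(\tau)\le\alpha p_k=\by_k$ correctly gives goodness. This is essentially the paper's Subcase 2(b), and your choice of $\tau$ neatly absorbs the paper's Case 1 and Subcase 2(a).

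The gap is the case $\tau\in(C_k,C_j)$, which you do not close. None of your bullets yields $y_d(\tau)\le\alpha p_k$. \Cref{lemma:ketchup} gives at best $y_d(\tau)\le y_j(\tau)$, and nothing ties $y_j$ to $p_k$: \Cref{lem:ec:cuts:prop:1} only gives $y_j(C_k)\le\alpha p_j$. Your fallback, shortcutting ``as in \Cref{lem:shortcut:2}'' so that the witnessing execution ``can be moved to $[C_j,t]$'', is not a valid step. The algorithm's schedule is fixed, and \Cref{lem:shortcut:2} produces an edge of $G_B$, not a relocated execution. Moreover, if $j\in A(t)$ the interval $[C_j,t]$ is empty, while after $C_k$ the algorithm may well run $j$ and $d$ in parallel and push $y_d$ above $\alpha p_k$. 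So this case cannot be settled by schedule arguments about $j$ and $d$ alone.

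The missing idea is the layer membership of $d$, which you only mention conditionally. The paper's proof starts from $d\in L^i_x$. This holds where the lemma is used (\Cref{lem:layers:entrypoints:2}): there $d$ lies on an $E_N$-path from $k$ to a job of $L^i_x$, hence $d\in L^i_{\le x}$. Given this, any non-clairvoyant execution of $d$ during $[r_j,C_j)$ gives $(j,d)\in E_N$. Then $j$ reaches $i$ using at most $x$ edges of $E_C$, i.e.\ $j\in L^i_{\le x}$, contradicting $(j,k)\in\delta^i_x$. Since $r_j\le C_k$ by \Cref{lem:ec:cuts:prop:1}, this rules out $\tau\in(C_k,C_j)$ outright, and also $\tau\in[r_j,C_k]$. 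That leaves only $\tau\in[s_k,r_j)$, where $k$ is alive and clairvoyant, so your first case applies.

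With this observation, and with the hypothesis $d\in L^i_{\le x}$ stated explicitly, your argument is complete and somewhat shorter than the paper's case analysis. Without it, the case $\tau\in(C_k,C_j)$ remains open.
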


\begin{proof}
  Note that $d \in L_{x}^i$ and $j \not\in L_{\leq x}^i$. Thus, $d$ cannot be executed during $[r_j,C_j)$ while being non-clairvoyant, as the edge $(j,d) \in E_N$ would lead to a contradiction to the layer definition.
  Furthermore, by \Cref{lem:ec:cuts:prop:1}, we have $r_k \le s_k \le r_j \le C_k \le C_j$.

\begin{figure}[t]
  \centering
  \begin{tikzpicture}
      
      \draw[->,thick] (-1,0) -- (12,0);
      \node at (12.75,0) {time};

      \draw[thick,dashed,line width = 1.5pt] (1,0.75) -- (1,-0.75);
      \node at (1,-1) {$r_{k}$};

      \draw[thick,dashed,line width = 1.5pt] (3,0.75) -- (3,-0.75);
      \node at (3,-1) {$s_{k}$};

      \draw[thick,dashed,line width = 1.5pt] (4.5,0.75) -- (4.5,-0.75);
      \node at (4.5   ,-1) {$r_{j}$};

      \draw[thick,dashed,line width = 1.5pt] (8,0.75) -- (8,-0.75);
      \node at (8,-1) {$C_{k}$};

      \draw[thick,dashed,line width = 1.5pt] (9.5,0.75) -- (9.5,-0.75);
      \node at (9.5,-1) {$C_{j}$};

      \node at (2,1) {Case (1)};
      \node at (3.75,1) {Case (2)};

      \fill[pattern=north west lines, pattern color=red] (4.5,-0.3) rectangle (11.95,0.3);
      \fill[pattern=north west lines, pattern color=red] (-1,-0.3) rectangle (1,0.3);
      \fill[pattern=north west lines, pattern color=green] (1,-0.3) rectangle (4.5,0.3);
    \end{tikzpicture}
  \caption{Illustrates the possible values of $t'$ in the proof of~\Cref{lem:layers:entrypoints:1}: The red area indicates values that are not possible while the green area indicates values that are indeed possible.}
  \label{fig:layers:entrypoints:1}
\end{figure}
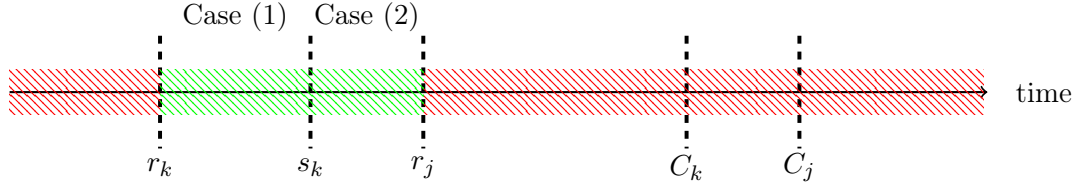

  The edge $(k,d) \in E_N$ implies that $d$ is being processed at some $t' \in I_k$ while being non-clairvoyant. By the observations above, we can distinguish between (1) $t' \in [r_k,s_k)$ or (2) $t' \in [s_k,r_j)$. \Cref{fig:layers:entrypoints:1} illustrates the two cases.

  \paragraph*{Case (1)} We have that $d$ is executed during $[r_k,s_k)$ while being non-clairvoyant.

  If $d$ emits before or at time $s_k$, i.e., $s_d \leq s_k$, then we must have $r_k \le s_d \le s_k$ by the assumption that $d$ is executed during $[r_k,s_k)$ while being non-clairvoyant. Thus, $\by_d = \alpha \cdot p_d= y_d(s_d) \le y_k(s_d) \le \alpha \cdot p_k = \by_k$ holds, because $k$ is non-clairvoyant at time $s_d$. Hence, $(k,d)$ is good and the first property of the lemma is satisfied.

  For the remainder of the proof of Case (1), assume that $d$ does not emit before or at $s_k$,~i.e., $s_d > s_k$. Since $d$ is executed at $t' \in [r_k,s_k)$ with $d,k \in N(t')$, we have $y_d(t') \leq y_k(t')$ by~\Cref{obs:sept-like}. On the other hand, $k$ is executed at point in time $s_k^-$ with $k,d \in N(s_k^-)$, so~\Cref{obs:sept-like} and~\Cref{lemma:ketchup} imply $y_d(s_k^-) = y_k(s_k^-)$ and $y_d(s_k) = y_k(s_k)$.
  Thus,  $y_d(s_k) = y_k(s_k) = \alpha p_k = \by_k$. 
  
  If $d$ is not executed again before $t$, i.e.~during $[s_k,t]$, then this implies $\by_d = y_d(t) = y_d(s_k) \le \by_k$ and $(k,d)$ is good.  
  If $d$ is executed again before $t$, i.e., during $[s_k,t]$, then \Cref{obs:clairvoyant-jobs-block-earlier-jobs} implies that this must happen after $C_k$. Furthermore, as $d$ has not emitted yet, i.e., $d \in N(s_k)$, the first execution of $d$ during $[C_k,t]$ must be non-clairvoyant. 
  Moreover, using the observation from above that $d$ cannot be executed during $[r_j, C_j)$ and $C_k \in[r_j,C_j)$, we can conclude that this execution must actually take place after $C_j$.
  Thus, $d$ is executed during $[C_j, t]$ and the second property of the lemma is satisfied.

  \paragraph*{Case (2)} By assumption of this case, we have that $d$ is executed during $[s_k,r_j)$ while being non-clairvoyant. Since $r_j \leq C_k$, we also have $s_k \leq r_d < r_j$ by \Cref{obs:clairvoyant-jobs-block-earlier-jobs}. We distinguish between the two subcases (a) $s_d < r_j$ and (b) $s_d \ge r_j$:
  \begin{enumerate}
    \item[\textbf{Subcase (a)}:]   If $d$ emits before $r_j$, that is, $s_d < r_j \le C_k$, then we must have $s_k \le s_d < r_j \le C_k$ as we already argued that $s_k \leq r_d$.

    If $s_d > s_k$, then the definition of our algorithm (tiebreaking rule for clairvoyant jobs) implies that $d$ completes before $k$. In particular, there must be a point in time during $[s_d,C_d] \subset[s_k,C_k]$ at which $k$ and $d$ are both clairvoyant and the algorithm executes $d$ but not $k$. By \Cref{obs:srpt-like}, this implies $p_d \le p_k$, and thus, $\by_d = \alpha \cdot p_d \le \alpha \cdot p_k = \by_k$. Hence, the edge $(k,d)$ is good.
    
    If $s_d = s_k$, then the algorithm must executed $d$ and $k$ in parallel at $s_d^- = s_k^-$. Thus, by~\Cref{obs:sept-like}, $y_k(s_k^-) = y_d(s_d^-)$ and, therefore, $\by_k = \alpha \cdot p_k = \alpha \cdot p_d = \by_d$. Hence, the edge $(k,d)$ is good.

    \item[\textbf{Subcase (b)}:] If $d$ does not emit before $r_j$, then there is a latest point in time $t'' \le r_j \le s_d$ at which $d$ is executed. As argued at the beginning of Case (2), we also have $s_k \le r_d \le t''$. This implies $k \in C(t'')$ and, thus, $k$ is not executed at point in time $t''$ as clairvoyant and non-clairvoyant jobs are never executed in parallel.
    
    Since $d$ is non-clairvoyant at time $t''$ and $k$ is not being processed at time $t''$, we have $\frac{1-\alpha}{\alpha} \cdot y_d(t'') \le p_k(t'') \le (1-\alpha) \cdot p_k$. This implies $y_d(t'') \le \alpha \cdot p_k = \by_k$. As argued at the very beginning of the proof, $d$ is not executed during $[r_j,C_j]$ while being non-clairvoyant, so $y_d(t'') \le \alpha \cdot p_k = \by_k$ and $d \in N(t'')$ imply $y_d(C_j) \le \alpha \cdot p_k = \by_k$
    
    If $d$ is not executed during $[C_j,t]$, then $\by_d = y_d(C_j) \le \by_k$ and the edge $(k,d)$ is good. Hence, the first property of the lemma is satisfied.
    If $d$ is executed during $[C_j,t]$, then it must also be executed during $[C_j,t]$ while being non-clairvoyant. Thus, the second property of the lemma holds.
  \end{enumerate}  
\end{proof}

\Cref{lem:layers:entrypoints:1} does not immediately show that the first edge on an $E_N$-path that starts with an entrypoint $k$ is good, as the second property of the lemma might hold. However, with the following lemma we show that the second property leads to a contradiction and, thus, cannot actually happen.

\begin{lemma}
  \label{lem:layers:entrypoints:2}
  Let $k \in H_x^i$ be an entrypoint of layer $L_x^i$ and $P$ be an $E_N$-path from $k$ to some job $i' \in L_x^i$. Then, the first edge on path $P$ is good.
\end{lemma}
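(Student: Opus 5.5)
Let $(k,d)$ be the first edge of $P$, and let $(j,k)\in\delta_x^i$ be an edge entering $k$, which exists because $k \in H_x^i$. By \Cref{lem:layers:entrypoints:1}, either $(k,d)$ is good, in which case we are done, or $d$ is executed at some time $t'' \in [C_j,t]$ while $d \in N(t'')$. It therefore suffices to show that the second alternative contradicts the layer structure. Since $P$ is an $E_N$-path from $k$ to $i'\in L_x^i$, and $k \in L_x^i$ can reach $i$ via $E_N$-edges followed by an $E_C$-edge into $L_{x-1}^i$ (or directly via $E_N$ if $x=0$), the job $d$ lies in $L_{\le x}^i$. Indeed, $d$ reaches $i'$ by an $E_N$-path, and $i' \in L_x^i$, so $d \in L_{\le x}^i$.

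The plan is to use the $x$-entry path witnessing that $k$ is an entrypoint. Let $Q$ be an $x$-entry path from some $h \in A(t)\cap L_{\ge x+1}^i$ to $k$, and let $Q' = Q\setminus\{k\}$. As in the proof of \Cref{obs:entrypoint_blocking}, we have $h \in A(t)$, so \Cref{obs:lifetime:2} gives $t \in I(Q')$. Moreover, some job of $Q'$ is alive while $k$ is executed, so $I(Q') \supseteq [C_k,t]$; more precisely $I(Q')\supseteq[t_k,t]$ with $t_k \le C_k$. Since $C_j \ge C_k$ by \Cref{lem:ec:cuts:prop:1}, the time $t''\in[C_j,t]$ lies in $I(Q')$. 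Hence some $d' \in Q' \subseteq L_{\ge x+1}^i$ is alive at $t''$, when $d$ is processed non-clairvoyantly, and so $(d',d)\in E_N$.

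Now $d \in L_{\le x}^i$ and $(d',d) \in E_N$. If $d\in L_y^i$ with $y \le x$, then $d'$ reaches $L_{y}^i$ by an $E_N$-edge, so $d' \in L_{\le y}^i \subseteq L_{\le x}^i$ (for $y=0$ directly; for $y\ge1$, prefixing $d$'s witnessing path with the $E_N$-edge keeps the form ``$E_N$ edges then one $E_C$ edge''). This contradicts $d' \in L_{\ge x+1}^i$, so the second alternative is impossible and $(k,d)$ is good. Essentially this is the second statement of \Cref{obs:entrypoint_blocking}, which directly says that no $d\in L_x^i$ is executed non-clairvoyantly during $[C_k,t]\supseteq[C_j,t]$. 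I would therefore just invoke it, after checking $d \in L_x^i$ (or $d\in L_{\le x-1}^i$, which is covered by its first part).

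The main obstacle is the bookkeeping that $d\in L_{\le x}^i$ and that $[C_j,t]\subseteq[C_k,t]$. The latter follows from $C_k\le C_j$ in \Cref{lem:ec:cuts:prop:1}. The former follows because $d$ has an $E_N$-path to $i'\in L_x^i$, and layer membership is monotone under prepending $E_N$-edges. Both steps are short.
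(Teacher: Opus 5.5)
Your proof is correct and follows the paper's argument. Both apply \Cref{lem:layers:entrypoints:1} and rule out its second alternative, because a job on the $x$-entry path (in $L_{\ge x+1}^i$) would then have an $E_N$-edge to $d \in L_{\le x}^i$. The only difference is cosmetic: the paper takes $(j',k)$ to be the last edge of the $x$-entry path, so $[C_{j'},t]$ lies directly in the lifetime of the subpath ending at $j'$; you instead use an arbitrary $\delta_x^i$-edge and bridge via $C_k \le C_j$, which amounts to invoking \Cref{obs:entrypoint_blocking}.
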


\begin{proof}
    Let $(k,d)$ denote the first edge on path $P$. Since $k$ is an entrypoint, there must be an $x$-entry path $P'$ from some job $j \in A(t) \cap L_{\ge x+1}$ to $k$. Such a path $P'$ must end with an edge $(j',k) \in \delta_x^i$.      
    This allows us to apply~\Cref{lem:layers:entrypoints:1} to the edges $(k,d)$ and $(j',k)$ to conclude that either $(k,d)$ is good, or $d$ is executed during $[C_{j'},t]$ while being non-clairvoyant. If the former holds, then this immediately implies this lemma. Therefore, assume that $d$ is executed during $[C_{j'},t]$ while being non-clairvoyant.

    Let $P''$ denote the $j$-$j'$-subpath of $P'$. Since $j \in A(t)$,~\Cref{obs:lifetime:2} implies $I(P'') \supseteq [r_{j'},t]$. Thus, $d$ is executed during $I(P'') \supseteq [C_{j'}, t]$ while being non-clairvoyant.  This implies that $d \in L_x^i$ is executed during $I_h$ for some job $h \in P''$ while being non-clairvoyant.
    However, we have $h \in P'' \subseteq L_{\ge x+1}^i$ as $P'$ is an $x$-entry path and $d \in P \subseteq L_{x}^i$ by assumption of the lemma. Thus, the non-clairvoyant execution of $d$ during $I_h$ is a contradiction to the layer definition.
\end{proof}

Having established that the first edge on an $E_N$-path that starts with an entrypoint is good, we are ready to prove~\Cref{lem:good-path-from-entrypoints} using a similar proof to~\Cref{lem:good-path-from-active}.

\begin{proof}[Proof of~\Cref{lem:good-path-from-entrypoints}]
  If $k = i'$, then the lemma holds trivially, so assume $k \not= i'$.
  By assumption, there exists an $E_N$-path $P'$ from $k$ to $i'$. %
  Let $d$ be the direct successor of $k$ on $P'$. 
  By \Cref{lem:layers:entrypoints:2}, we have that $(k,d)$ is good.
  
    We exhaustively apply the shortcuts of \Cref{lem:shortcut:1,lem:shortcut:2} to $P'$. Note that this never increases the number of bad edges on the path.
     Furthermore, note that the first edge on the path remains good by \Cref{lem:layers:entrypoints:2}, as it always starts with job $k$, even if it is changed by applying the shortcuts.
   If the resulting path $P''$ is good, we are done. 
  
  Thus, assume $P''$ contains at least one bad edge, and let $(k_1,k_2)$ be the first bad edge on $P''$. Recall that this edge is in $E_N$.
  By \Cref{coro:bad-edges}, we have $k_1 \in D(t)$. 
  Since we already established that $(k,d)$ is good, $(k_1,k_2)$ cannot be the first edge on $P'$. 
  Let $k_0$ be the predecessor of $k_1$ on $P''$. By assumption that $(k_1,k_2)$ is the first bad edge on $P''$, we have that $(k_0,k_1)$ is good. Furthermore, we know that the shortcuts of \Cref{lem:shortcut:1,lem:shortcut:2} do not apply to the edges $(k_0,k_1)$ and $(k_1,k_2)$. This implies $k_0 \in D(t)$ and that $k_2$ is executed during  $t' \in I_{k_1}$ such that $k_2 \in N(t')$ and $t' > C_{k_0}$. 

  We claim that, if a situation like this occurs, then there must exist at least one job $h$ that is a predecessor of $k_1$ on $P''$ and satisfies $C_h \ge t'$. 
  We show this claim via proof by contradiction. To this end, assume the claim is not true, i.e., that all jobs $h$ that are predecessors of $k_1$ on $P''$ have $C_h < t'$. In particular, this gives us $C_k < t'$. 
  Since $k$ is an entrypoint of layer $L_x^i$, there must be a job $j \in A(t) \setminus L_{\leq x}^i$ that can reach $k$ via an $x$-entry path $P$. %
   Since $j \in J_A(t)$, $P$ certainly contains at least one job $h'$ with $C_{h'} \ge t'$. Let $h'$ denote the latest such job on path $P$. That is, all successors of $h'$ on $P$ complete before $t'$. Since we already established $C_k < t'$, we have $h' \not= k$. Since $P$ is an $x$-entry path, this implies $h' \not\in L_{\leq x}^i$.  By \Cref{obs:shortcut:3}, we get $r_{h'} \le t' \le C_{h'}$. 
   Thus, $t' \in I_{h'}$, and, since $k_2$ is being processed at time $t'$ and $k_2 \in N(t')$, we conclude $(h',k_2) \in E_N$, which is a contradiction to $h' \not\in L_{\leq x}^i$ and $k_2 \in L_{\leq x}^i$.
   Note that $k_2 \in L_{\leq x}^i$ holds by the definition of layers as $k_2$ can reach the job $i' \in L_x^i$ via an $E_N$-path.

  With the claim in place, we know that there must be a predecessor $h$ of $k_2$ with $C_h \ge t'$. Picking the latest such predecessor on $P''$ allows us to use \Cref{obs:shortcut:3} again to obtain $r_h \le t'$. As before, this implies that the edge $(h,k_2)$ is in $E_N$. We can apply this shortcut to reduce the number of edges on the path $P''$ by at least one. 
  As long as our path contains bad edges, we can use these shortcuts to find a shorter path. %
  Since each shortcut strictly decreases the number of edges on the path, the procedure terminates if the path is good or contains the single edge $(k,i')$. In the latter case, \Cref{lem:layers:entrypoints:2} implies that $(k,i')$ is good.
\end{proof}

\subsection{Borrowing via Multiple Layers without Bottlenecks}
\label{app:no-bottleneck-borrowing}

The goal of this section is to prove that if a job $j \in N(t)$ can reach a job $i$ in the borrow graph $G_B$ without visiting any bottlenecks, then $\by_j \ge \by_i$. This statement is particularly relevant in the context of the segments (cf.~\Cref{def:segment}) as it implies that each $i \in O(t)$ that can be reached by a job $j$ that is part of a segment $S$ must satisfy $i \in O_S$. Vice versa, if a job $j$ of a segment $S$ can reach a job $i \in O(t) \setminus O_S$, then $j$ must visit at least one bottleneck in order to reach $i$.

The following theorem is the main result of this section.

\bottleneckFreePaths*

To prove it, we heavily rely on the following key lemma, which we state here but prove later in~\Cref{sec:key:lemma:proof}.

\begin{restatable}{lemma}{lemkey}
  \label{lem:multiple:entrypoints}
  If a job $j \in (A(t) \cap L^i_{x+1}) \cup H^i_{x+1}$ can reach at least two distinct entrypoints of $L^i_x$ via $x$-entry paths, then $\by_j \ge \by_k$ for at least one $k \in H^i_x$.
\end{restatable}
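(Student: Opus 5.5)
The plan is to mimic the proof of \Cref{thm:unique:earliest:bottleneck}, with two distinct entrypoints taking the place of two bottlenecks. Let $P_1,P_2$ be $x$-entry paths from $j$ to distinct entrypoints $k_1,k_2 \in H^i_x$. Let their final edges be $(d_1,k_1),(d_2,k_2) \in \delta^i_x$. By \Cref{lem:ec:cuts:prop:1} we have $s_{k_q} \le r_{d_q} \le C_{k_q} \le C_{d_q}$ and $y_{d_q}(C_{k_q}) \le \alpha p_{d_q}$ for $q=1,2$. Assume without loss of generality $s_{k_1} \le s_{k_2}$.

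\textbf{Step 1 (nesting).} I first want to show $I_{k_2} \subseteq I_{k_1}$, or else produce the desired inequality directly. By \Cref{obs:entrypoint:order}, every entrypoint of $L^i_{x+1}$ on the paths is executed after both $t_{k_1}$ and $t_{k_2}$. By \Cref{obs:entrypoint_blocking}, neither $k_1$ nor $k_2$ runs non-clairvoyantly after the completion of the other. If $r_{k_2} < s_{k_1}$, then $k_2$ is alive and non-clairvoyant at $s_{k_1}^-$, so \Cref{obs:clairvoyant-jobs-block-earlier-jobs} forces $C_{k_1} \le C_{k_2}$ and $k_2$ to be idle during $[s_{k_1},C_{k_1}]$. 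In that case $k_2$ is the entrypoint whose emission comes later, and I would handle it by swapping roles. So the main case is $s_{k_1} \le r_{k_2}$ and $s_{k_2} \le C_{k_1}$. Then \Cref{obs:clairvoyant-jobs-block-earlier-jobs} yields $C_{k_2} \le C_{k_1}$, and $k_1$ is not processed during $I_{k_2}$.

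\textbf{Step 2 (the key size inequality).} Since $P_1$ reaches $k_1$ through $d_1$, and $r_{d_1} \le C_{k_1}$ with $k_1$ executed during $I_{d_1}$, there is a first time $t' \ge C_{k_2}$ at which $k_1$ is executed again. Exactly as in \Cref{thm:unique:earliest:bottleneck}, we get $p_{k_1}(t') = p_{k_1}(s_{k_2}) \ge (1-\alpha)p_{k_2}$.

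\textbf{Step 3 (transfer to $j$).} At time $t'$ the algorithm runs the clairvoyant job $k_1$. Hence every job $h$ alive at $t'$ satisfies either $(1-\alpha)p_{k_2} \le p_{k_1}(t') \le \frac{1-\alpha}{\alpha}y_h(t')$ (if $h \in N(t')$) or $p_h(t') \ge p_{k_1}(t')$ (if $h \in C(t')$). In both cases this gives $\by_h \ge \alpha p_{k_2} \ge \by_{k_2}$; the second case needs a short argument in the spirit of \Cref{obs:ec:cut:aux:1}.

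If $j \in A(t)$ and $r_j \le t'$, apply this with $h=j$ and we are done. Otherwise, $P_1$ (or $P_2$) contains some job $d \in L^i_{\ge x+1}$ that is alive at $t'$ and has $r_d \ge s_{k_1}$. Then $(d,k_1) \in \delta^i_x$. I would invoke the argument of \Cref{lem:stealing:via:bottlenecks:2}, i.e.\ the fact that $j$ reaches $d$ within $L^i_{\ge x+1}$, to get $\frac{1-\alpha}{\alpha}\by_j \ge p_{k_1}(\tau) \ge p_{k_1}(t')$. Combined with Step 2, this gives $\by_j \ge \alpha p_{k_2} \ge \by_{k_2}$.

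For $j \in H^i_{x+1}$ rather than $A(t)$, I would use \Cref{coro:entrypoint:reachability-entrypaths} and \Cref{lem:aux:multiple:bottlenecks:1} to transfer reachability from an active job of a higher layer. I would then apply the same lower bound to $j$ itself, using that $j$'s execution lies after $t_{k_1}$ by \Cref{obs:entrypoint:order}.

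\textbf{Main obstacle.} I expect the difficulty to be Step 3. Two problems arise there. First, \Cref{lem:stealing:via:bottlenecks:2} is stated only for bottlenecks, so I would have to reprove its core, namely ``reaching a $\delta$-tail within higher layers lower-bounds $\by_j$ by the remaining work of the head,'' for a general entrypoint. Second, the case where $j$ itself is an entrypoint that is not active at $t$ needs separate handling. Ruling out the degenerate timing configurations in Step 1 via \Cref{obs:entrypoint:paths} is the other delicate point.
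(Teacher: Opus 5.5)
Your Steps 1 and 2 are essentially sound. Ordering by emission, one gets $s_{k_1}\le r_{k_2}\le s_{k_2}\le C_{k_2}\le C_{k_1}$ with $k_1$ idle during $I_{k_2}$. Hence $p_{k_1}(t')\ge(1-\alpha)p_{k_2}$, and by \Cref{obs:ec:cut:aux:1} every job alive at $t'$ has $\by\ge\by_{k_2}$. The case $r_{k_2}<s_{k_1}<s_{k_2}$ should not be ``swapped'': it is impossible, since $k_2$ would have to run non-clairvoyantly after $C_{k_1}$, contradicting \Cref{obs:entrypoint_blocking}.

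The genuine gap is Step 3, transferring this bound from a job $d$ alive at $t'$ back to $j$. Your $d$ is reachable from $j$ only along $P_1$ or $P_2$. These may contain $E_C$-edges, which do not propagate $\by$-inequalities (the phenomenon of \Cref{fig:example-delta-edge}). \Cref{lem:stealing:via:bottlenecks:2} cannot fill this gap:
\begin{itemize}
\item It cannot be cited. Its proof goes through Case (2) of \Cref{lem:bottleneck:inequality:1}, i.e.\ \Cref{lem:aux:inequality:1} and \Cref{lem:aux:inequality}, which invoke the very lemma you are proving.
\item The part relevant for $j\in L^i_{x+1}$, Case (1) of \Cref{lem:bottleneck:inequality:1}, needs the bottleneck to be the \emph{only} entrypoint reachable from $j$. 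That is what makes the $E_N$-part of the layer-defining path end at a tail of it, and your hypothesis negates exactly this. If the entrypoint reached by ``$E_N$-path plus one $\delta$-edge'' is $k_2$, there need not be any tail of $k_1$ that $j$ reaches by an $E_N$-path, so the argument has no starting point.
\end{itemize}
The same objection applies to imitating Case (2) of \Cref{thm:unique:earliest:bottleneck}. Your treatment of $j\in H^i_{x+1}$ is not an argument either. $t_j>t_{k_1}$ does not make $j$ alive at $t'$, and reachability from an active job of a higher layer bounds that job, not $\by_j$.

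The missing idea is to anchor on the only transfer mechanism available. By the definition of $L^i_{x+1}$, some $k^\ast\in H^i_x$ is reached from $j$ by an $E_N$-path $Q$ to a tail $d^\ast$, followed by $(d^\ast,k^\ast)\in\delta^i_x$. \Cref{lem:good-path-from-active} (for $j\in A(t)$) and \Cref{lem:good-path-from-entrypoints} (for $j\in H^i_{x+1}$) give $\by_j\ge\by_h$ for every $h$ reached from $j$ by an $E_N$-path. So $k^\ast$ must be one of your two entrypoints, and you must track whether it is the inner or the outer one.

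Note that $t'<C_j$ (for $j\in H^i_{x+1}$ by \Cref{obs:entrypoint_blocking}) and $I(Q)\supseteq[\min_{h\in Q}r_h,\min\{C_j,t\}]$ by \Cref{obs:lifetime:1}. Hence, whenever some job of $Q$ is released by $t'$, a job of $Q$ is alive at $t'$ and you are done. This always happens if $k^\ast=k_2$, since $r_{d^\ast}\le C_{k_2}\le t'$.

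Otherwise $k^\ast=k_1$ and all of $Q$ is released after $t'$. Take the first job $u$ of $P_2$ released by $t'$; it is alive at $t'$ and executed later. Then:
\begin{itemize}
\item $(u,k_1)\in\delta^i_x$, so $r_u\ge s_{k_1}$.
\item $u$ is blocked until $C_{k_1}$ by \Cref{obs:clairvoyant-jobs-block-earlier-jobs}.
\item By \Cref{lem:ec:cuts:prop:1} and again \Cref{obs:entrypoint_blocking}, its next execution is non-clairvoyant and lies in $[C_{k_1},\min\{C_j,t\}]\subseteq I(Q)$, using $r_{d^\ast}\le C_{k_1}$.
\end{itemize}
This yields an $E_N$-path from $j$ to $u$. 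Anchoring on the $E_N$-path to the tail, plus this ``blocked, then re-executed non-clairvoyantly inside the $E_N$-path's lifetime'' argument, is exactly how the paper closes its cases in \Cref{lem:multiple:entries:2,lem:multiple:entries:3}. With it, your nesting set-up would be a viable alternative to the paper's split by whether the cross edges $(j_1,k_2),(j_2,k_1)$ exist.
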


To prove the theorem, we first introduce some more notation.
Fix a job $i \in O(t)$ and consider a job $j \in A(t)$. We use $\ell_{ji}$ to denote the index of the layer $L_{x}^i$ with $j \in L_{x}^i$. If every path from $j$ to $i$ contains at least one bottleneck of $i$, then we use $\sigma_{ji}$ to denote the index of the first layer that $j$ cannot reach without visiting a bottleneck, i.e., every path from $j$ to some $i' \in R_j \cap L^i_{\le \sigma_{ji}}$ visits at least one bottleneck of $i$ and for every layer $L_x^i$ with $\sigma_{ji} < x \le \ell_{ji}$ there exists an entrypoint $k \in H_x^i$ that $j$ can reach using a path without bottleneck of $i$. We use $b_{ji}$ %
to refer to the unique bottleneck (cf.\ \Cref{thm:bottleneck:uniqueness}) of layer $L_{\sigma_{ji}}^i$. %

Using this notation, we can show the following lemma and its corollary, which make proving \Cref{thm:yinequality:no:bottlenecks} straightforward.

\begin{lemma}
  \label{lem:aux:inequality}
  Let $j \in A(t)\setminus L^i_0$ and let $L^i_y$ with $y < \ell_{j}$ be a layer such that $j$ can reach at least one entrypoint $k \in H^i_y$ without visiting a bottleneck.
  Then, for each layer $L^i_x$ with $y \le x \leq \ell_j - 1$, there exists an entrypoint $k \in H^i_x$ with $\by_j \ge \by_k$.
\end{lemma}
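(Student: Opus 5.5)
I would prove the statement by downward induction on $x$, from $x=\ell_j-1$ down to $x=y$. Throughout I use the hypothesis that $j$ reaches some entrypoint of $H^i_y$ without a bottleneck, so $j$ also reaches every intermediate layer $L^i_x$ with $y\le x\le \ell_j-1$ without a bottleneck. Indeed, any path from $j$ to $H^i_y$ passes through entrypoints of all layers in between, by the layer definition and \Cref{obs:active-entrypath}. I write $\ell_j$ for $\ell_{ji}$.

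\emph{Base case} $x=\ell_j-1$. Here $j\in A(t)\cap L^i_{x+1}$. If $j$ reaches at least two distinct entrypoints of $L^i_x$ via $x$-entry paths, then \Cref{lem:multiple:entrypoints} directly gives some $k\in H^i_x$ with $\by_j\ge\by_k$. Otherwise $j$ reaches exactly one entrypoint $k$ of $L^i_x$, which exists by \Cref{obs:active-entrypath}. Then every path from $j$ into $L^i_x$ passes through $k$, so $k$ is a bottleneck of $L^i_x$ by definition. This contradicts the assumption that $j$ reaches layer $L^i_y$, and hence $L^i_x$, without a bottleneck.

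\emph{Inductive step.} Assume the claim holds for $x+1\le \ell_j-1$, so there is $k'\in H^i_{x+1}$ with $\by_j\ge\by_{k'}$. I want an entrypoint $k\in H^i_x$ with $\by_{k'}\ge\by_k$.
\begin{itemize}
\item If $k'$ reaches two distinct entrypoints of $L^i_x$ via $x$-entry paths, apply \Cref{lem:multiple:entrypoints} to $k'\in H^i_{x+1}$ to obtain $k$.
\item Otherwise $k'$ reaches exactly one entrypoint of $L^i_x$. Such an entrypoint exists, since $k'$ lies in $L^i_{x+1}$ and its path to $L^i_x$ begins with $E_N$-edges inside $L^i_{x+1}$ followed by a $\delta$-edge, ending at an entrypoint. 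By \Cref{coro:entrypoint:reachability-entrypaths}, every entrypoint of $H^i_{x+1}$ then reaches exactly that one entrypoint $k$. By \Cref{lem:aux:multiple:bottlenecks:1}, $j$ reaches it as well.
\end{itemize}

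The main obstacle is this single-entrypoint subcase. I would argue that $k$ is then a bottleneck of $L^i_x$: the witness $k'\in H^i_{x+1}$ can only reach $L^i_x$ through $k$, which is exactly the bottleneck definition. The remaining worry is that $j$ might still reach $L^i_x$ through some other entrypoint, so no contradiction arises yet. To close this, I would use that $j$ reaches $L^i_x$ bottleneck-free, hence via some entrypoint $k''\ne k$. Any $x$-entry path from $j$ to $k''$ reaching $L^i_{x}$ from layer $x+1$ must leave through an entrypoint-reachable structure. I would try to show, using \Cref{lem:aux:multiple:bottlenecks:1} (which transfers $k'$'s reachable set to $j$), that $j$ then reaches two entrypoints $k,k''$. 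This gives the fallback that \Cref{lem:multiple:entrypoints} applies to $j$ itself, yielding some $k\in H^i_x$ with $\by_j\ge \by_k$ directly, bypassing the chain through $k'$.

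In either subcase transitivity gives $\by_j\ge\by_k$ for some $k\in H^i_x$, completing the induction down to $x=y$. If the direct transfer through $k'$ fails, this fallback of applying \Cref{lem:multiple:entrypoints} to $j$ whenever $j$ sees two entrypoints may actually make the induction unnecessary for all layers but those where $j$ sees only one entrypoint, and those are excluded by the bottleneck-freeness.
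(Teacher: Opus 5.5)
Your base case and the first bullet of the inductive step are fine, but the single-entrypoint subcase of the inductive step, the one you flag as the main obstacle, is left open, and your fallback does not work. In the inductive step $x \le \ell_j - 2$, so $j \in L^i_{\ge x+2}$ and hence $j \notin (A(t)\cap L^i_{x+1}) \cup H^i_{x+1}$. \Cref{lem:aux:multiple:bottlenecks:1} needs $j \in A(t)\cap L^i_{x+1}$, and \Cref{lem:multiple:entrypoints} needs $j \in (A(t)\cap L^i_{x+1}) \cup H^i_{x+1}$, so neither can be applied to $j$ at layer $x$. This is exactly why the induction must pass through an entrypoint $k' \in H^i_{x+1}$ and chain $\by_j \ge \by_{k'} \ge \by_k$. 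Your closing remark that the fallback could make the induction unnecessary is therefore mistaken.

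The missing idea is that the single-entrypoint subcase cannot occur. The paper gets this from \Cref{thm:multiple:bottlenecks}: for an active $j \in L^i_{\ge x+2}$, every path from $j$ into $L^i_{\le x}$ contains the bottleneck of $L^i_x$ if that layer has one. You already have the ingredients to see this directly:
\begin{enumerate}
\item Take any path from $j$ into $L^i_{\le x}$ and let $u$ be its first vertex in $L^i_{\le x+1}$. Then $u \in H^i_{x+1}$, because the prefix is an $(x+1)$-entry path from an active job.
\item The subpath from $u$ to the first vertex in $L^i_{\le x}$ is an $x$-entry path from $u$.
\item By \Cref{coro:entrypoint:reachability-entrypaths}, $u$ reaches via $x$-entry paths only the entrypoint $k$, so the path visits $k$.
\item You already showed $k$ is a bottleneck, with witness $k'$. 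Since $y \le x$, every path from $j$ to $H^i_y$ therefore visits a bottleneck, contradicting the hypothesis.
\end{enumerate}
Hence $k'$ must reach two distinct entrypoints of $L^i_x$, and your first bullet finishes the step. That is precisely the paper's argument: $L^i_x$ has no bottleneck by \Cref{thm:multiple:bottlenecks}, so $k_{x+1}$ reaches several entrypoints of $L^i_x$, and \Cref{lem:multiple:entrypoints} is applied to $k_{x+1}$, not to $j$.
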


\begin{proof}
  Via induction over $x \in \{\ell_{ji}-1, \ldots, y\}$.

  \textbf{Base case.} For the base case, consider layer $L^i_x$ with $x = \ell_{ji}-1$. This is the first layer on the path from $j$ to $i$ that does not contain $j$ itself. 
  
  If $L^i_x$ has a bottleneck, then $j$ must be able to reach at least two distinct
  entrypoints of $L^i_x$ via $x$-entry paths. Otherwise, \Cref{lem:unique:bottleneck:1} and the definition of bottlenecks would imply that $j$ can only reach $L^i_x$ via the bottleneck. The fact that $j$ can reach two distinct entrypoints of $L^i_x$ via $x$-entry paths allows us to apply \Cref{lem:multiple:entrypoints} to conclude that $j$ can reach an entrypoint $k$ with $\by_j \ge \by_k$. 

  If $L^i_x$ has no bottleneck, then we must have $|H^i_x| > 1$ and each job in $A(t) \cap L^i_{x+1}$ must be able to reach layer $L^i_x$ via at least two entrypoints of $L^i_x$. This allows us to apply \Cref{lem:multiple:entrypoints} again to conclude that $j$ can reach an entrypoint $k \in H^i_{x}$ with $\by_j \ge \by_k$. 

  \textbf{Induction step.} Consider a layer $L^i_x$ with $x \leq \ell_{ji} - 2$. By induction hypothesis, there exists an entrypoint $k_{x+1} \in H^i_{x+1}$ with $\by_j \ge \by_{k_{x+1}}$.  
  
  Furthermore, \Cref{thm:multiple:bottlenecks} implies that $j$ must visit all bottlenecks of layers $L_r^i$ with $r \leq \ell_{ji}-2$ to reach $i$. 
  By our assumption that $j$ can reach an entrypoint of layer $L^i_y$ without passing a bottleneck, this means that $L^i_x$ cannot have a bottleneck; otherwise $j$ could reach $i$ without visiting the bottleneck of $L^i_x$.
  
  By definition of bottlenecks, this means that $k_{x+1}$ can reach multiple entrypoints of $H^i_{x}$ via $x$-entry paths. \Cref{lem:multiple:entrypoints} implies that there exists an entrypoint $k_x$ of $L^i_x$ with $\by_{k_{x+1}} \ge \by_{k_x}$. As we already know that $\by_j \ge \by_{k_{x+1}}$ by induction hypothesis, we get $\by_j \ge \by_{k_x}$.
\end{proof}

If $j$ is a job such that every path from $j$ to $i$ contains at least one bottleneck of $i$, then we can plug $y = \sigma_{ji}+1$ into \Cref{lem:aux:inequality} to obtain the following corollary.

\begin{corollary}
  \label{lem:aux:inequality:1}
  Let $j \in A(t)$ be a job such that each path from $j$ to $i$ in $G_B$ contains a bottleneck of $i$. Then, for each layer $L^i_x$ with $\sigma_{ji} +1 \leq x \leq \ell_{ji} - 1$, there exists an entrypoint $k \in H^i_x$ with $\by_j \ge \by_k$.
\end{corollary}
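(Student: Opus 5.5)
This corollary should follow by a direct instantiation of \Cref{lem:aux:inequality} with $y = \sigma_{ji}+1$. The work is in checking that this choice meets the lemma's hypotheses; the conclusion is then exactly the claimed range $\sigma_{ji}+1 \le x \le \ell_{ji}-1$.

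First I would dispose of the degenerate cases. If $\sigma_{ji}+1 > \ell_{ji}-1$, the range of layers is empty and there is nothing to prove, so assume $\sigma_{ji}+1 \le \ell_{ji}-1$, which gives $y < \ell_{ji}$. Next I need $j \notin L^i_0$. Since every path from $j$ to $i$ contains a bottleneck, $\sigma_{ji}$ is defined, and the nonempty range forces $\ell_{ji} \ge 2$. Hence $j \in L^i_{\ell_{ji}}$ with $\ell_{ji}\ge 1$, so $j \in A(t)\setminus L^i_0$.

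The remaining hypothesis is that $j$ can reach some entrypoint $k \in H^i_{\sigma_{ji}+1}$ without visiting a bottleneck. For this I would use the defining property of $\sigma_{ji}$: $L^i_{\sigma_{ji}}$ is the first layer that $j$ cannot reach bottleneck-free, so for every layer $L^i_x$ with $\sigma_{ji} < x \le \ell_{ji}$ there is an entrypoint in $H^i_x$ that $j$ reaches by a path avoiding all bottlenecks of $i$. Taking $x = \sigma_{ji}+1$, which lies in this range because $\sigma_{ji}+1 \le \ell_{ji}-1 < \ell_{ji}$, gives the required entrypoint. \Cref{lem:aux:inequality} then yields, for every $x$ with $\sigma_{ji}+1 \le x \le \ell_{ji}-1$, an entrypoint $k\in H^i_x$ with $\by_j \ge \by_k$.

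The only delicate point, and the one I would check first, is that the informal definition of $\sigma_{ji}$ really supplies a bottleneck-free path to an entrypoint of $L^i_{\sigma_{ji}+1}$, rather than merely to some job of that layer. If it only gave reachability of the layer, I would upgrade it as follows. Take the first vertex of $L^i_{\le \sigma_{ji}+1}$ on such a path. By the layer definition this vertex is entered via a $\delta$-edge, and \Cref{obs:active-entrypath} shows it is an entrypoint, because the path starts at $j\in A(t)$ and its prefix stays in higher layers.
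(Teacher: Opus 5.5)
Your proposal is correct and is exactly the paper's argument: the corollary is obtained by plugging $y = \sigma_{ji}+1$ into \Cref{lem:aux:inequality}, and the required bottleneck-free path to an entrypoint of $H^i_{\sigma_{ji}+1}$ comes directly from the definition of $\sigma_{ji}$. Your additional checks (the empty-range case, $j \notin L^i_0$, and the fallback upgrade to an entrypoint) are sound but go beyond what is needed, since the definition of $\sigma_{ji}$ already guarantees reachability of an entrypoint.
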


Having established~\Cref{lem:aux:inequality}, the proof of~\Cref{thm:yinequality:no:bottlenecks} becomes quite straightforward.

\begin{proof}[Proof of \Cref{thm:yinequality:no:bottlenecks}]
  We distinguish between the two cases (1) $\ell_{ji} = 0$ and (2) $\ell_{ji} > 0$.

  \textbf{Case (1):} If $j \in A(t) \cap L^i_0$, then $j$ can reach $i$ by only using edges in $E_N$. \Cref{lem:good-path-from-active} implies that 
  $\by_j \ge \by_i$.

  \textbf{Case (2):} If $\ell_{ji} \geq 1$, then \Cref{lem:aux:inequality} implies that there is an entrypoint $k$ of $L^i_0$ with $\by_j \ge \by_k$. 
  Furthermore, \Cref{lem:good-path-from-entrypoints} implies that 
  $\by_k \ge \by_i$.
  Combining both inequalities yields $\by_j \ge \by_i$. 
\end{proof}

\subsubsection{Proof of~\Cref{lem:multiple:entrypoints}}
\label{sec:key:lemma:proof}

The main goal of this section is to prove \Cref{lem:multiple:entrypoints}, which states that if a job $j \in (A(t) \cap L^i_{x+1}) \cup H^i_{x+1}$ can reach layer $L^i_x$ without using a bottleneck, then it satisfies $\by_j \ge \by_k$ for at least one entrypoint $k \in H^i_x$.

\lemkey*

To prove the lemma, we rely on several observations and auxiliary lemmas. The first observation implies the lemma for the special case where there is an interruption of $j$ by an entrypoint $k_2$ with some special properties and will be used in most of the subsequent proofs.

\begin{observation}
    \label{obs:ec:cut:aux:1}
    Let $k_1, k_2 $ be two entrypoints of any layer (possibly different ones) and let $t'$ denote a point in time such that $k_2$ is executed at $t'$, $k_2 \in C(t')$, and $p_{k_2}(t') \ge (1-\alpha) \cdot \min\{p_{k_1},p_{k_2}\}$. Then, every job $j \in A(t')$ satisfies $\by_j \ge \min\{\by_{k_1},\by_{k_2}\}$.
\end{observation}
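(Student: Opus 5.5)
The plan is a direct argument from the algorithm's decision rule at time $t'$, without using any of the borrow-graph structure. Let $m \coloneq \min\{p_{k_1},p_{k_2}\}$. The goal is to show $\by_j \ge \alpha m$ for every $j \in A(t')$. Since $\by_{k} \le \alpha p_{k}$ for every job $k$, this gives $\by_j \ge \alpha m \ge \min\{\by_{k_1},\by_{k_2}\}$.

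Fix $j \in A(t')$ and split by its status at $t'$.

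If $j \in C(t')$, then $y_j(t') > \alpha p_j$, so $\by_j = \alpha p_j$. Because $k_2 \in C(t')$ is the job executed at $t'$, \Cref{obs:srpt-like} gives $p_{k_2}(t') \le p_j(t') \le p_j$. Combined with the hypothesis, this yields $p_j \ge (1-\alpha) m$.
\begin{itemize}
\item Even so, $p_j$ might be smaller than $m$, which would break the bound $\by_j \ge \alpha m$. I therefore need a sharper estimate.
\item We have $p_j(t') \le (1-\alpha)p_j$ because $j$ is clairvoyant at $t'$. Hence $(1-\alpha)m \le p_{k_2}(t') \le p_j(t') \le (1-\alpha)p_j$, which gives $p_j \ge m$.
\item Therefore $\by_j = \alpha p_j \ge \alpha m$.
\end{itemize}

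If $j \in N(t')$, the algorithm chose a clairvoyant job at $t'$, so by Step~1 of its definition $p_{k_2}(t') \le \frac{1-\alpha}{\alpha} y_j(t')$. With the hypothesis this gives $y_j(t') \ge \frac{\alpha}{1-\alpha}(1-\alpha)m = \alpha m$. Now $\by_j = \min\{y_j(t), \alpha p_j\}$ with $y_j(t) \ge y_j(t')$, and $\alpha p_j \ge y_j(t') \ge \alpha m$ since $j \in N(t')$. So both terms are at least $\alpha m$, and hence $\by_j \ge \alpha m$.

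The main obstacle is the clairvoyant case. The naive bound $p_j \ge p_j(t')$ loses a factor of $1-\alpha$, and the fix is to use $p_j(t') \le (1-\alpha)p_j$. A minor point to check is that $t' \le t$ is implicitly assumed, so that $y_j(t) \ge y_j(t')$ holds. Otherwise the truncation at $\alpha p_j$ still handles the case, as long as $j$ is evaluated at the same reference time used in defining $\by_j$.
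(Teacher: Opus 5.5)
Your proof is correct and follows the paper's argument: the same case split on $j \in N(t')$ versus $j \in C(t')$, using the Step~1 threshold $p_{k_2}(t') \le \frac{1-\alpha}{\alpha} y_j(t')$ in the first case and the chain $(1-\alpha)\min\{p_{k_1},p_{k_2}\} \le p_{k_2}(t') \le p_j(t') \le (1-\alpha)p_j$ in the second. The only difference is the last step: you use the trivial bound $\by_k \le \alpha p_k$, whereas the paper argues that entrypoints have emitted by time $t$ (so $\by_{k_\ell} = \alpha p_{k_\ell}$), which shows the entrypoint hypothesis is not needed; your closing remark about $t' > t$ can be dropped, since the statement is only meant for $t' \le t$ (for $t' > t$, a job released in $(t,t']$ would have $\by_j = 0$).
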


\begin{proof}
    We distinguish between the two cases (i) $j \in N(t')$ and (ii) $j \in C(t')$.

    \paragraph*{Case (i):} If $j \in N(t')$, then the definition of our algorithm and our assumption that $k_2$ is executed at $t'$ imply:
    \begin{align*}
        \frac{1-\alpha}{\alpha} y_{j}(t') \ge p_{k_2}(t') \ge (1-\alpha) \cdot \min\{p_{k_1},p_{k_2}\} \ ,
    \end{align*}
    which yields $y_j(t') \ge \alpha \cdot \min\{p_{k_1},p_{k_2}\}$.
    Since $k_1$ and $k_2$ are entrypoints, we must have $k_1,k_2 \not\in N(t)$, and thus, $\by_{k_1} = \alpha \cdot p_{k_1}$ and $\by_{k_2} = \alpha \cdot p_{k_2}$. Thus, $\alpha \cdot \min\{p_{k_1},p_{k_2}\} = \min\{\by_{k_1},\by_{k_2}\}$. Furthermore, as $j \in N(t')$, we have $y_j(t') \le \alpha p_j$ and $y_j(t') \le y_j(t)$. Thus, $\by_j \ge y_j(t')$. We conclude with
    $$
    \by_j \ge y_j(t') \ge \alpha \cdot \min\{p_{k_1},p_{k_2}\} = \min\{\by_{k_1},\by_{k_2}\} \ .
    $$

    \paragraph*{Case (ii):} If $j \in C(t')$, then the definition of our algorithm and the assumption that $k_2$ is executed at $t'$ imply
    $$
    p_j(t') \ge p_{k_2}(t') \ge (1-\alpha) \cdot \min\{p_{k_1},p_{k_2}\} \ .
    $$
    By assumption that $j \in C(t')$, we have $(1- \alpha) \cdot p_j \ge p_j(t')$.
    By plugging this into our initial inequality, we get
    \begin{align*}
        (1-\alpha) \cdot p_j \ge p_j(t') \ge p_{k_2}(t') \ge (1-\alpha) \cdot \min\{p_{k_1},p_{k_2}\} \ , %
    \end{align*}
    which yields $\alpha p_j \ge \alpha \cdot \min\{p_{k_1},p_{k_2}\}$, and thus, $\by_j \ge \min\{\by_{k_1}, \by_{k_2}\}$. 
\end{proof}

\begin{figure}[t]
    \centering
    \begin{tikzpicture}
        
        \draw[thick,dashed,red,line width = 1.5pt] (0,0) -- (0,-2.5);
        \node at (0,0.5) {\large\textcolor{red}{$\delta^i_x$}};

        \node at (-1.5,0.5) {\large $L^i_{x+1}$};
        \node at (1.5,0.5) {\large $L^i_{x}$};
        
        \node[vertex,label=above:{$j$}] (v0) at (-1.5,-1.25) {};

        \node[vertex,label=above:{$k_1$}] (v1) at (1.5,-0.75) {};
        \node[vertex,label=below:{$k_2$}] (v2) at (1.5,-1.75) {};

        \path 
            (v0) edge[->,thick] (v1)
            (v0) edge[->,thick] (v2);

    \end{tikzpicture}
    \caption{The situation considered in \Cref{lem:multiple:entries:1}.}
    \label{fig:lem:multiple:entries:0}
\end{figure}
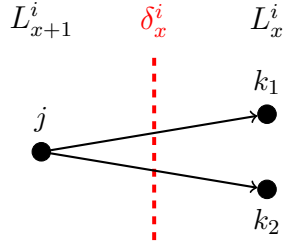

Having this observation in place, we are ready to show that if a job $j \in (A(t) \cap L^i_{x+1}) \cup H^i_{x+1}$ has edges to two distinct entrypoints $k_1,k_2 \in H_x^i$ of layer $L_x^i$ (cf.~\Cref{fig:lem:multiple:entries:0}), then it satisfies $\by_j \geq \by_{k_1}$ or $\by_j \geq \by_{k_2}$. On the one hand, this proves a special case of \Cref{lem:multiple:entrypoints} and, on the other hand, this will be useful as a kind of base case for the general proof.

\begin{lemma}
    \label{lem:multiple:entries:1}
    Let $j \in (A(t) \cap L^i_{x+1}) \cup H^i_{x+1}$ and $k_1,k_2 \in H^i_x$ such that $(j,k_1),(j,k_2) \in \delta^i_x$. Then, $\by_j \ge \min\{\by_{k_1}, \by_{k_2}\}$. 
\end{lemma}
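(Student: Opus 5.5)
The plan is to reduce the statement to \Cref{obs:ec:cut:aux:1}. That is, I want a time $t'$ at which one of the two entrypoints, say $k_2$, is executed while clairvoyant, with $p_{k_2}(t') \ge (1-\alpha)\min\{p_{k_1},p_{k_2}\}$ and $j \in A(t')$. \Cref{obs:ec:cut:aux:1} then gives $\by_j \ge \min\{\by_{k_1},\by_{k_2}\}$ directly.

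First I collect the structural facts from \Cref{lem:ec:cuts:prop:1} applied to both $(j,k_1)$ and $(j,k_2)$: $r_j \ge s_{k_1}, s_{k_2}$, $r_j \le C_{k_1}, C_{k_2} \le C_j$, and $y_j(C_{k_q}) \le \alpha p_j$. So at time $r_j$ both $k_1$ and $k_2$ are alive and clairvoyant, and both complete during $I_j$ while $j$ is still non-clairvoyant. Since both are clairvoyant from $r_j$ on, the algorithm runs SRPT among them when it runs clairvoyant jobs. Assume w.l.o.g.\ $C_{k_1} < C_{k_2}$; ties cannot occur on a single machine.

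The key step is to locate the right $t'$. Take $t'$ to be the first time at or after $C_{k_1}$ at which $k_2$ is executed. This time exists and lies in $[C_{k_1}, C_{k_2}] \subseteq I_j$, because $k_2$ completes after $k_1$ and within $I_j$, so $j \in A(t')$. At time $t'$ the job $k_2$ is clairvoyant. I then need $p_{k_2}(t') \ge (1-\alpha)\min\{p_{k_1},p_{k_2}\}$.

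To get this bound I split into cases. If $k_2$ is not executed during $[s_{k_1}, t']$, then $p_{k_2}(t') = p_{k_2}(s_{k_1})$. By \Cref{obs:clairvoyant-jobs-block-earlier-jobs} (or the SRPT/tie-breaking rule), whenever $k_1$ runs while both are clairvoyant we have $p_{k_1}(\cdot) \le p_{k_2}(\cdot)$. Comparing at $s_{k_1}$ gives $p_{k_2}(s_{k_1}) \ge p_{k_1}(s_{k_1}) = (1-\alpha)p_{k_1}$. The delicate sub-case is $s_{k_2} > s_{k_1}$. Here I would argue that $k_2$ emitting after $k_1$ with $C_{k_2} > C_{k_1}$ forces $k_2$ to have been non-clairvoyant at $s_{k_1}$; \Cref{obs:clairvoyant-jobs-block-earlier-jobs} then rules out $k_2$ being executed before $C_{k_1}$, so $k_2$ cannot emit before $C_{k_1} \ge r_j$. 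That contradicts $s_{k_2} \le r_j$. Hence $s_{k_2} \le s_{k_1}$, and both jobs are clairvoyant after $s_{k_1}$.

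If instead $k_2$ is executed at some time in $[s_{k_1}, C_{k_1})$ while $k_1$ is clairvoyant, then SRPT gives $p_{k_2} \le p_{k_1}$ at that moment. But $k_1$ completes first, so $k_1$ must later overtake. At the last switch before $C_{k_1}$, the remaining times satisfy $p_{k_2}(t') = p_{k_2}(\text{switch}) \ge p_{k_1}(\text{switch})$. Here I would use $p_{k_1}(\text{switch}) \ge$ something proportional only if the switch is at $s_{k_1}$, which is the expected obstacle. Instead, I expect \Cref{obs:clairvoyant-jobs-block-earlier-jobs} to show this case is impossible: once $k_2$ runs as a clairvoyant job, every job alive strictly before that time, including $k_1$, waits until $C_{k_2}$. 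That contradicts $C_{k_1} < C_{k_2}$ unless $r_{k_1}$ is later, which cannot happen since $k_1$ is already clairvoyant.

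So the main obstacle is this careful timeline argument showing that $k_2$ is frozen from $s_{k_1}$ (or from its own emission) until $C_{k_1}$. With that in place, $p_{k_2}(t') \ge (1-\alpha)\min\{p_{k_1},p_{k_2}\}$, either via $(1-\alpha)p_{k_1}$ or via $p_{k_2}(s_{k_2}) = (1-\alpha)p_{k_2}$ when $k_2$ emits later while frozen, and \Cref{obs:ec:cut:aux:1} concludes.
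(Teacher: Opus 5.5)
Your reduction to \Cref{obs:ec:cut:aux:1} at $t'$ is the right one; your $t'$ coincides with the paper's $t_2$. However, the step ruling out $s_{k_2}>s_{k_1}$ does not work as written. You apply \Cref{obs:clairvoyant-jobs-block-earlier-jobs} to $k_1$ at time $s_{k_1}$ to freeze $k_2$ until $C_{k_1}$, but that observation only constrains jobs released before $s_{k_1}$. If $r_{k_2}>s_{k_1}$, then $k_2$ is not ``non-clairvoyant at $s_{k_1}$'' at all. It can be released while $k_1$ is already clairvoyant, and since $y_{k_2}=0$, it immediately takes over via the SETF branch and runs until it emits at some $s_{k_2}\le r_j$. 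Nothing in your argument excludes this. The fix is to apply the same observation the other way round, at the emission time of $k_2$. Since $r_{k_1}<s_{k_1}<s_{k_2}\le r_j<C_{k_1}$ (the last inequality because $k_1$ is processed during $I_j$), job $k_1$ is alive when $k_2$ emits, so $C_{k_2}\le C_{k_1}$. This contradicts your normalization $C_{k_1}<C_{k_2}$. This one argument covers both sub-cases, and it is exactly the paper's exclusion (b) via the tie-breaking rule.

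With that patch, the rest goes through and tidily reorganizes the paper's three-way split on $s_{k_2}$:
\begin{itemize}
\item Any execution of $k_2$ in $[s_{k_1},C_{k_1})$ is now clairvoyant, so the blocking observation again forces $C_{k_2}\le C_{k_1}$. Hence $k_2$ is idle on $[s_{k_1},t')$.
\item The SRPT rule at the first execution of $k_1$ after $s_{k_1}$ gives $p_{k_2}(t')=p_{k_2}(s_{k_1})\ge(1-\alpha)p_{k_1}$, or $p_{k_2}(t')=(1-\alpha)p_{k_2}$ if $s_{k_2}=s_{k_1}$.
\item Unlike the paper, you do not need to exclude $s_{k_2}\in(r_{k_1},s_{k_1})$ separately.
\end{itemize}
Two smaller repairs are needed. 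First, $[C_{k_1},C_{k_2}]\subseteq I_j$ can fail, because $k_2$ may still be alive at $t$. Argue instead that the clairvoyant execution of $k_2$ inside $I_j$ (which exists since $(j,k_2)\in E_C$) happens after $r_j\ge s_{k_1}$. By the idleness just shown, it therefore happens at or after $t'$, so $t'\in I_j$. As $t'<C_{k_2}\le C_j$, also $j\in A(t')$. Second, drop the closing alternative ``$k_2$ emits later while frozen'': once $s_{k_2}\le s_{k_1}$ is established, the only other case is $s_{k_2}=s_{k_1}$.
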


\begin{proof}
    Let $t_d$ with $r_j \le t_d \le C_j$ be the earliest point in time such that $k_d$ is executed at $t_d$ and $k_d \in C(t_d)$, for $d \in \{1,2\}$. Since $(j,k_1),(j,k_2) \in \delta^i_x$, such points in time must exist due to \Cref{lem:ec:cuts:prop:1}.  
 
    As two clairvoyant jobs are never executed in parallel by definition of our algorithm, we have $t_1 \not= t_2$. Assume w.l.o.g.~that $t_1 < t_2$.
    Since $(j,k_1),(j,k_2) \in \delta^i_x$, \Cref{lem:ec:cuts:prop:1} implies $r_{j} \ge s_{k_1}$ and $C_j \ge C_{k_1}$ (see~\Cref{fig:multiple:entries:1} for an illustration).

    Next, consider the job $k_2$. We want to narrow down the possible values of $s_{k_2}$ in order to afterwards complete the proof via case distinction over these values. 
    To this end, we first show which values of $s_{k_2}$ are \emph{not} possible:

    \begin{enumerate}[(a)]
        \item  Since $(j,k_2)\in \delta_x$, we have $(j,k_2)\not\in E_N$. This directly implies that $s_{k_2} \not\in (r_j,C_j]$; otherwise $r_j \le s_{k_2}^- \le C_j$ which implies $(j,k_2) \in E_N$ as $k_2$ is non-clairvoyant and executed at $s_{k_2}^-$.
        \item Similarly, we also get $s_{k_2} \not\in (s_{k_1},r_j)$. If $k_2$ would emit during this interval, then $s_{k_1} < s_{k_2} < r_j < C_{k_1}$ and the definition of our algorithm (tiebreaking rule for clairvoyant jobs) would imply that $k_2$ finishes before $k_1$. This, however, means that $k_1$ is not executed during $[s_{k_2},C_{k_2}] \supseteq [r_j,C_{k_2}]$, which in turn implies $t_2 < t_1$; a contradiction to our assumption. 
        \item  Next, we claim $s_{k_2} \not\in (r_{k_1},s_{k_1})$. If $k_2$ would emit during this interval, then \Cref{obs:clairvoyant-jobs-block-earlier-jobs} would imply that $k_2$ needs to complete before $k_1$ can emit. However, then $k_2$ would finish before $r_j$, which contradicts the existence of the edge $(j,k_2) \in \delta^i_x$.
        \item Finally, observe that $s_{k_2} < C_j$, because $s_{k_2} \geq C_j$ would contradict the existence of $(j,k_2) \in E_C$.
    \end{enumerate}

    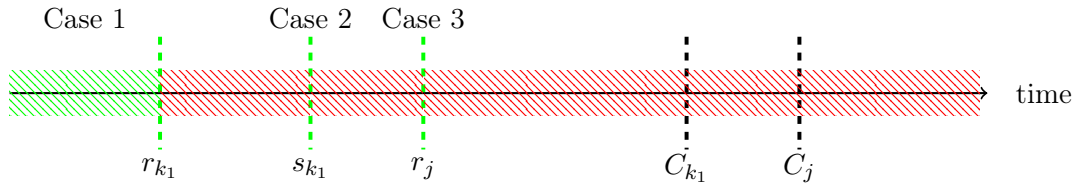
\begin{figure}[b]
        \centering
        \begin{tikzpicture}

            \draw[->,thick] (-1,0) -- (12,0);
            \node at (12.75,0) {time};

            \draw[thick,dashed,green,line width = 1.5pt] (1,0.75) -- (1,-0.75);
            \node at (1,-1) {$r_{k_1}$};

            \draw[thick,dashed,green,line width = 1.5pt] (3,0.75) -- (3,-0.75);
            \node at (3,-1) {$s_{k_1}$};

            \draw[thick,dashed,green,line width = 1.5pt] (4.5,0.75) -- (4.5,-0.75);
            \node at (4.5   ,-1) {$r_{j}$};

            \draw[thick,dashed,line width = 1.5pt] (8,0.75) -- (8,-0.75);
            \node at (8,-1) {$C_{k_1}$};

            \draw[thick,dashed,line width = 1.5pt] (9.5,0.75) -- (9.5,-0.75);
            \node at (9.5,-1) {$C_{j}$};

            \fill[pattern=north west lines, pattern color=green] (-1,-0.3) rectangle (1,0.3);
            \fill[pattern=north west lines, pattern color=red] (4.51,-0.3) rectangle (11.9,0.3);
            \fill[pattern=north west lines, pattern color=red] (3.01,-0.3) rectangle (4.49,0.3);
            \fill[pattern=north west lines, pattern color=red] (1.01,-0.3) rectangle (2.99,0.3);

            \node at (0,1) {Case 1};
            \node at (3,1) {Case 2};
            \node at (4.5,1) {Case 3};

        \end{tikzpicture}
        \caption{Illustrates the possible (green) and impossible (red) values of $s_{k_2}$ in the proof of \Cref{lem:multiple:entries:1}}
        \label{fig:multiple:entries:1}
    \end{figure}

    The red part of~\Cref{fig:multiple:entries:1} illustrates the values of $s_{k_2}$ that are \emph{not} possible by the arguments above. This leaves the following three cases (green part in~\Cref{fig:multiple:entries:1}):

    \begin{enumerate}
        \item $s_{k_2} = r_j$,
        \item $s_{k_2} = s_{k_1}$, or
        \item $s_{k_2} \le r_{k_1}$.
    \end{enumerate}
    We finish the proof by distinguishing between these three cases.

    \paragraph*{Case 1:} Assume $s_{k_2} = r_j$. 
    In this case, we have $y_{k_2}(r_j)= \alpha \cdot p_{k_2}$. Recall that $t_2 \ge r_j$ is the earliest point during $I_j$ at which $k_2$ is being processed. At this point in time, we still have $y_{k_2}(t_2)= \alpha \cdot p_{k_2}$, and thus, $p_{k_2}(t_2) = (1-\alpha) \cdot p_{k_2}$. This allows us to apply \Cref{obs:ec:cut:aux:1} to conclude with $\by_j\ge \min\{\by_{k_1},\by_{k_2}\}$.

    \paragraph*{Case 2:} Assume $s_{k_2} = s_{k_1}$. In this case, $k_1$ and $k_2$ must be executed in parallel immediately before point in time $s_{k_2} = s_{k_1}$. This implies $y_{k_1}(s_{k_1}) = y_{k_2}(s_{k_2})$, and thus, $\alpha \cdot p_{k_1} = \alpha \cdot p_{k_2}$. By our assumption that $k_1$ 
    is being processed before $k_2$ during $I_j$, 
    the algorithm must finish $k_1$ before $k_2$. This means that $k_2$ is not executed during $[s_{k_2},t_2]$ (recall that $t_2$ is the earliest point in time at which $k_2$ interrupts $j$). Thus, $p_{k_2}(t_2) = p_{k_2}(s_{k_2}) = (1-\alpha) \cdot p_{k_2}$.

    The fact that  $p_{k_2}(t_2) = (1-\alpha) \cdot p_{k_2}$ allows us to apply \Cref{obs:ec:cut:aux:1} to obtain $\by_j\ge \min\{\by_{k_1},\by_{k_2}\}$. 

    \paragraph*{Case 3:} Assume $s_{k_2} \le r_{k_1}$. By \Cref{lem:ec:cuts:prop:1}, we have $r_{j} \ge s_{k_1}$. This means that $k_2$ must still be alive at $s_{k_1}$, as otherwise cannot be executed during $I_j$, which would contradicts $(j,k_2) \in E_C$. By our assumption that $k_1$ is executed during $I_j$ before $k_2$, we must have $p_{k_2}(s_{k_1}) \ge p_{k_1}(s_{k_1}) = (1-\alpha) \cdot p_{k_1}$; otherwise $k_2$ would either complete before $r_j$ or have priority over $k_1$ during $I_j$, both options contradict our assumptions. For the same reason, $k_2$ cannot be executed between $s_{k_1}$ and $r_j$, which implies $p_{k_2}(s_{k_1}) = p_{k_2}(r_j) = p_{k_2}(t_2)$.

    Consider now the earliest point in time $t_2 \in I_j$ at which $k_2$ is being processed. At this point, we still have  $p_{k_2}(t_2) = p_{k_2}(s_{k_1}) \ge (1-\alpha) \cdot p_{k_1}$.
    This allows us to apply \Cref{obs:ec:cut:aux:1} to conclude with $\by_j\ge \min\{\by_{k_1},\by_{k_2}\}$.
\end{proof}

The next auxiliary lemma generalizes \Cref{lem:multiple:entries:1} by dropping the assumption that $j$ has edges to $k_1$ and $k_2$ and instead only demanding an $x$-entry path to some job $j'$ (in the lemma below either $j_1$ or $j_2$) that has edges to $k_1$ and $k_2$. For technical reasons that will be useful in the final proof of~\Cref{lem:multiple:entrypoints}, we phrase the lemma in a slightly different way. See~\Cref{fig:lem:multiple:entries:2} for an illustration of the situations considered in~\Cref{lem:multiple:entries:2}

\begin{figure}[b]

    \begin{subfigure}{0.325\textwidth}
        \centering
        \begin{tikzpicture}[scale=0.9,transform shape]
        
            \draw[thick,dashed,red,line width = 1.5pt] (0,0) -- (0,-2.5);
            \node at (0,0.5) {\large\textcolor{red}{$\delta^i_x$}};
    
            \node at (-2,0.5) {\large $L^i_{x+1}$};
            \node at (1.5,0.5) {\large $L^i_{x}$};
            
            \node[vertex,label=left:{$j$}] (v0) at (-3,-1.25) {};
    
            \node[vertex,label=above:{$j_1$}] (v01) at (-1,-0.75) {};
            \node[vertex,label=below:{$j_2$}] (v02) at (-1,-1.75) {};

            \node[vertex,label=above:{$k_1$}] (v1) at (1,-0.75) {};
            \node[vertex,label=below:{$k_2$}] (v2) at (1,-1.75) {};
    
            \path 
                (v01) edge[->,thick] (v1)
                (v02) edge[->,thick] (v2)
                
                (v01) edge[->,thick,orange] (v2);
    
            \draw[->,snake=zigzag,segment length=15pt] (v0) -- (v01); 
            \draw[->,snake=zigzag,segment length=15pt] (v0) -- (v02);
    
            \node at (-2.75,-1.95) {\small$P_2' \subseteq L^i_{\ge x+1}$};
    
            \node at (-2.75,-0.65) {\small$P_1' \subseteq E_N$};

        \end{tikzpicture}
        \caption{}
        \label{fig:lem:multiple:entries:2a}
    \end{subfigure}
    \begin{subfigure}{0.325\textwidth}
        \centering
        \begin{tikzpicture}[scale=0.9,transform shape]
        
            \draw[thick,dashed,red,line width = 1.5pt] (0,0) -- (0,-2.5);
            \node at (0,0.5) {\large\textcolor{red}{$\delta^i_x$}};
    
            \node at (-2,0.5) {\large $L^i_{x+1}$};
            \node at (1.5,0.5) {\large $L^i_{x}$};
            
            \node[vertex,label=left:{$j$}] (v0) at (-3,-1.25) {};
    
            \node[vertex,label=above:{$j_1$}] (v01) at (-1,-0.75) {};
            \node[vertex,label=below:{$j_2$}] (v02) at (-1,-1.75) {};

            \node[vertex,label=above:{$k_1$}] (v1) at (1,-0.75) {};
            \node[vertex,label=below:{$k_2$}] (v2) at (1,-1.75) {};
    
            \path 
                (v01) edge[->,thick] (v1)
                (v02) edge[->,thick] (v2)
                (v02) edge[->,thick,orange] (v1);

            \draw[->,snake=zigzag,segment length=15pt] (v0) -- (v01); 
            \draw[->,snake=zigzag,segment length=15pt] (v0) -- (v02);
    
            \node at (-2.75,-1.95) {\small$P_2' \subseteq L^i_{\ge x+1}$};
    
            \node at (-2.75,-0.65) {\small$P_1' \subseteq E_N$};
        \end{tikzpicture}
        \caption{}
        \label{fig:lem:multiple:entries:2b}
    \end{subfigure}
    \begin{subfigure}{0.325\textwidth}
        \centering
        \begin{tikzpicture}[scale=0.9,transform shape]
        
            \draw[thick,dashed,red,line width = 1.5pt] (0,0) -- (0,-2.5);
            \node at (0,0.5) {\large\textcolor{red}{$\delta^i_x$}};
    
            \node at (-2,0.5) {\large $L^i_{x+1}$};
            \node at (1.5,0.5) {\large $L^i_{x}$};
            
            \node[vertex,label=left:{$j$}] (v0) at (-3,-1.25) {};
    
            \node[vertex,label=above:{$j_1$}] (v01) at (-1,-0.75) {};
            \node[vertex,label=below:{$j_2$}] (v02) at (-1,-1.75) {};

            \node[vertex,label=above:{$k_1$}] (v1) at (1,-0.75) {};
            \node[vertex,label=below:{$k_2$}] (v2) at (1,-1.75) {};
    
            \path 
                (v01) edge[->,thick] (v1)
                (v02) edge[->,thick] (v2)
                (v02) edge[->,thick,orange] (v1)
                (v01) edge[->,thick,orange] (v2);

            \draw[->,snake=zigzag,segment length=15pt] (v0) -- (v01); 
            \draw[->,snake=zigzag,segment length=15pt] (v0) -- (v02);
    
            \node at (-2.75,-1.95) {\small$P_2' \subseteq L^i_{\ge x+1}$};
    
            \node at (-2.75,-0.65) {\small$P_1' \subseteq E_N$};

        \end{tikzpicture}
        \caption{}
        \label{fig:lem:multiple:entries:2c}
    \end{subfigure}
    \centering
   
    \caption{The three possible situations considered in~\Cref{lem:multiple:entries:2}, where $P_1' = P_1 \setminus \{(j_1,k_1)\}$ and $P_2' = P_2 \setminus \{(j_2,k_2)\}$.}
    \label{fig:lem:multiple:entries:2}
\end{figure}
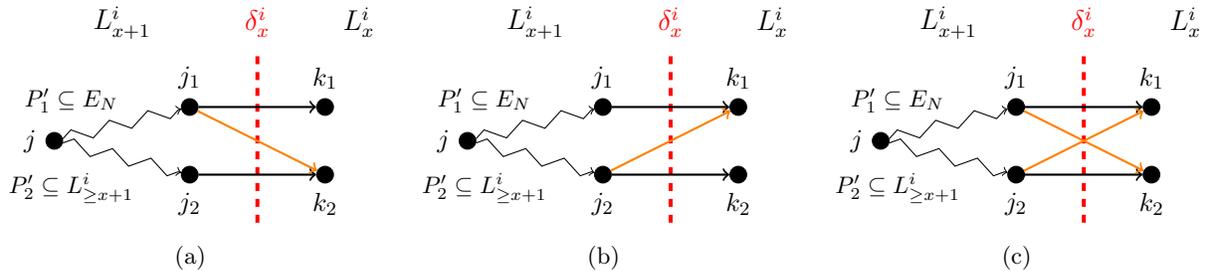

\begin{lemma}
    \label{lem:multiple:entries:2}
    Let $j \in (A(t) \cap L^i_{x+1}) \cup H_{x+1}$ be a job with $x$-entrypaths to two distinct entrypoints $k_1,k_2 \in H^i_x$.
    Let $P_1$ and $P_2$ denote the corresponding paths and let $(j_1,k_1),(j_2,k_2) \in \delta^i_x$ denote the last edge on $P_1$ and $P_2$, respectively. If $j$ can reach $j_1$ via an $E_N$-path and $\{(j_1,k_2),(j_2,k_1)\} \cap E_C \not= \emptyset$, then $\by_j \ge \min\{\by_{k_1},\by_{k_2}\}$. 
\end{lemma}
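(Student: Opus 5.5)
The plan is to reduce to \Cref{lem:multiple:entries:1} applied to whichever of $j_1,j_2$ has $E_C$-edges to both $k_1$ and $k_2$, and then to transport the resulting inequality back to $j$ along the $E_N$-path.

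\textbf{Step 1: a job with edges to both entrypoints.} Suppose first that $(j_1,k_2)\in E_C$. Since $k_2\in L^i_x$ and $j_1\in L^i_{x+1}$, this edge lies in $\delta^i_x$. The edge $(j_1,k_1)$ is in $\delta^i_x$ as well. The case $(j_2,k_1)\in E_C$ is symmetric with $j_2$ in the role of the job having both $\delta$-edges. I cannot quite apply \Cref{lem:multiple:entries:1} verbatim, because $j_1$ need not lie in $(A(t)\cap L^i_{x+1})\cup H^i_{x+1}$. Its proof, however, only used \Cref{lem:ec:cuts:prop:1} and the structure of the two edges $(j',k_1),(j',k_2)\in\delta^i_x$. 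I would therefore rerun that argument for $j'\in\{j_1,j_2\}$. Let $t_d$ be the first clairvoyant execution of $k_d$ during $I_{j'}$. The case analysis on $s_{k_2}$ then yields a time $t'\in I_{j'}$ with the following properties: some $k\in\{k_1,k_2\}$ is executed clairvoyantly at $t'$, and $p_k(t')\ge(1-\alpha)\min\{p_{k_1},p_{k_2}\}$.

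\textbf{Step 2: transfer from $t'$ to $j$.} If $j$ itself is alive at $t'$, then \Cref{obs:ec:cut:aux:1} gives $\by_j\ge\min\{\by_{k_1},\by_{k_2}\}$ directly. When $j'=j_1$, I would argue that $j$ is indeed alive at $t'$, or else reduce to that situation. The $E_N$-path from $j$ to $j_1$ stays inside $L^i_{\ge x+1}$, and together with \Cref{obs:lifetime:2} (if $j\in A(t)$) or the entry path into $j$ (if $j\in H^i_{x+1}$) its lifetime covers $[\min r,t]$. Hence some job $h$ on $P_1'$, or on the entry path, is alive at $t'$, and \Cref{obs:ec:cut:aux:1} gives $\by_h\ge\min\{\by_{k_1},\by_{k_2}\}$.

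To pass from $h$ to $j$, I would use the good-path machinery. The path from $j$ to $h$ is an $E_N$-path. By \Cref{lem:good-path-from-active} (for $j\in A(t)$), respectively the entrypoint analogue \Cref{lem:good-path-from-entrypoints} adapted to layer $L^i_{x+1}$ (for $j\in H^i_{x+1}$), we get $\by_j\ge\by_h$. Here the hypothesis that $j$ reaches $j_1$ via an $E_N$-path is exactly what is used; the prefix up to $h$ is again an $E_N$-path. When $j'=j_2$, the path $P_2'$ is not an $E_N$-path. I would then show that $t'$ lies in the lifetime of $P_1'$ as well, using \Cref{obs:entrypoint:paths} and \Cref{obs:entrypoint_blocking} to control when $k_1$ and $k_2$ can be executed relative to $I(P_1')$. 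This lets me choose $h$ on the $E_N$-prefix anyway.

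\textbf{Main obstacle.} I expect the main difficulty in the case $(j_2,k_1)\in E_C$. There the witnessing time $t'$ lives in $I_{j_2}$, and it is not obvious that a job on the $E_N$-path $P_1'$ is alive at $t'$. My first attempt is the ordering argument of \Cref{obs:entrypoint:order} combined with \Cref{lem:ec:cuts:prop:1}. This gives $r_{j_2}\ge s_{k_1}$ and $C_{k_1}\le C_{j_1}$, which should force $I_{j_1}$ and $I_{j_2}$ to overlap around the clairvoyant executions of $k_1$. If that fails, I would fall back on \Cref{obs:ec:cut:aux:1} applied to the last job of $P_1'$ that is alive at the first clairvoyant execution of $k_1$ inside $I_{j_2}$.
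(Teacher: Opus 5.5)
Your treatment of the case $(j_1,k_2)\in E_C$ is correct and matches the paper. The proof of \Cref{lem:multiple:entries:1} never uses $j'\in(A(t)\cap L^i_{x+1})\cup H^i_{x+1}$, and its witnessing time lies in $I_{j_1}$. So $j_1$ itself is the alive job, and the good-path lemmas carry $\by_{j_1}\ge\min\{\by_{k_1},\by_{k_2}\}$ back to $j$. Jobs on the entry path into $j$ would be useless here anyway, since they lie upstream of $j$.

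The gap is in the case $(j_2,k_1)\in E_C$. Both of your strategies need $j$ or some job of $P_1'$ to be alive at the witnessing time, and nothing guarantees this. Let $\tau_{1,2},\tau_{2,2}$ be the first clairvoyant executions of $k_1,k_2$ in $I_{j_2}$. If $\tau_{2,2}<\tau_{1,2}<r_{j_1}$, the witness produced by rerunning \Cref{lem:multiple:entries:1} on $j_2$ is $\tau_{1,2}$. If moreover $r_j>\tau_{1,2}$, nothing you cite prevents all of $P_1'$ from being released after $\tau_{1,2}$; for instance, $P_1'$ might consist only of $j$ and $j_1$. Then $\tau_{1,2}\notin I(P_1')$, and the job your fallback asks for does not exist. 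The overlap of $I_{j_1}$ and $I_{j_2}$ you aim for does not help either, because it does not produce a job of $P_1'$ alive at $\tau_{1,2}$. This is exactly the subcase the paper has to treat separately. In the other subcases (witness at or after $r_{j_1}$, or $r_j\le\tau_{1,2}$), your idea of placing the witness in $I(P_1')$ or $I_j$ does work, with \Cref{obs:entrypoint_blocking} bounding the witness by $C_j$ when $j\in H^i_{x+1}$.

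The missing idea is to take the alive job from $P_2'$ instead, and then show that $j$ reaches it by an $E_N$-path. The paper first handles the situation where $j_2$ is executed non-clairvoyantly in $[r_{j_1},t]$: then $j$ reaches $j_2$ by an $E_N$-path, and the first case's argument applies to $j_2$. Otherwise, $j_2$ is not executed at all after $\tau_{1,2}$. Hence, if $r_j>\tau_{1,2}$, the $j$-$j_2$-part $P_2'$ of $P_2$ must contain a job $d$ that is alive at $\tau_{1,2}$ and executed later. Since $d\in L^i_{x+1}$ and $(d,k_1)\in\delta^i_x$, \Cref{lem:ec:cuts:prop:1} gives $s_{k_1}\le r_d$. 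The tie-breaking rule then forces $d\in N(\tau_{1,2})$. By \Cref{obs:clairvoyant-jobs-block-earlier-jobs}, the next execution of $d$ is non-clairvoyant and happens after $C_{k_1}\ge r_{j_1}$, hence inside $I(P_1')$ (using \Cref{obs:entrypoint_blocking} if $j\in H^i_{x+1}$). So $j$ reaches $d$ by an $E_N$-path. Finally, \Cref{obs:ec:cut:aux:1} at $\tau_{1,2}$ together with \Cref{lem:good-path-from-active} or \Cref{lem:good-path-from-entrypoints} gives $\by_j\ge\by_d\ge\min\{\by_{k_1},\by_{k_2}\}$.
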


\begin{proof}
    Assume that $j$ can reach $j_1$ via an $E_N$-path and $\{(j_1,k_2),(j_2,k_1)\} \cap E \not= \emptyset$. Our goal is to show that then $\by_j \ge \min\{\by_{k_1},\by_{k_2}\}$. We distinguish between the two cases (1) $(j_1,k_2) \in E_C$ (see~\Cref{fig:lem:multiple:entries:2a,fig:lem:multiple:entries:2c})  and (2) $(j_2,k_1) \in E_C$ (see~\Cref{fig:lem:multiple:entries:2b,fig:lem:multiple:entries:2c}).

    \paragraph{Case (1)} First, assume $(j_1,k_2) \in E_C$. Since $j_1 \in L_{x+1}$ and $k_2 \in H_x$, we must have $(j_1,k_2) \in \delta^i_x$ by definition of the layers. This allows us to apply~\Cref{lem:multiple:entries:2} to job $j_1 \in L^i_{x+1}$ and the edges $(j_1,k_1),(j_1,k_2) \in \delta^i_x$ in order to obtain $\by_{j_1} \ge \min\{\by_{k_1},\by_{k_2}\}$. We continue by proving $\by_j \ge \by_{j_1}$, which then implies $\by_j \ge \min\{\by_{k_1},\by_{k_2}\}$.

    To see that $\by_j \ge \by_{j_1}$ holds, first recall that $j$ has an $E_N$-path to $j_1$ and either $j \in A(t) \cap L^i_{x+1}$ or $j \in H^i_{x+1}$. By~\Cref{lem:good-path-from-active} (if $j \in A(t) \cap L^i_{x+1}$) and~\Cref{lem:good-path-from-entrypoints} (if $j \in H^i_{x+1}$), this implies that $j$ can reach $j_1$ via a good $E_N$-path. According to the definition of good edges, this gives us $\by_j \ge \by_{j_1}$, and thus, $\by_j \ge \min\{\by_{k_1},\by_{k_2}\}$.
    
    \paragraph{Case (2)} Next, assume $(j_2,k_1)\in E_C$. As in the beginning of Case (1), we can use~\Cref{lem:multiple:entries:2} to obtain $\by_{j_2} \ge \min\{\by_{k_1},\by_{k_2}\}$. However, in contrast to Case (1), our assumptions do not directly give us an $E_N$-path from $j$ to $j_2$. Thus, we cannot finish this case in the same way as Case (1). Instead, we will show that whenever there is no $E_N$-path from $j$ to $j_2$, the schedule and borrow graph have certain structural properties that still allow us to prove the statement.
    
    To do so, we first identify a simple case where an $E_N$-path from $j$ to $j_2$ exists:
    If $j_2$ is executed at a point in time $t'$ with $t \ge t' \ge r_{j_1}$ while being non-clairvoyant, then the existence of an $E_N$-path from $j$ to $j_1$ and the fact that $j \in (A(t) \cap L_{x+1}) \cup H_{x+1}$ imply the existence of an $E_N$-path from $j$ to $j_2$.
    In this case, we can finish the proof in the same way as in Case (1), by just swapping $j_1$ and $j_2$. Thus, for the remainder of this proof, assume that $j_2$ is not executed during $[r_{j_1},t]$ while being non-clairvoyant. In particular, this gives us $r_{j_1} \ge r_{j_2}$.

    To complete the proof, we first establish the following claim, which states a useful implication of our assumption that $j_2$ is not executed during $[r_{j_1},t]$ while being non-clairvoyant.

    \textbf{Claim 1:} We claim that the assumption that $j_2$ is not executed during $[r_{j_1},t]$ while being non-clairvoyant implies that $j_2$ is not executed at all during $[\tau_{1,2},t]$ where $\tau_{1,2}$ is the earliest point in time at which $k_1$ interrupts $j_2$. Note that the point in time $\tau_{1,2}$ must exist since $(j_2,k_1) \in \delta_x$ by definition of Case (2). Furthermore, the interruption by $k_1$ at $\tau_{1,2}$ must be clairvoyant by the layer definition. To see that the claim holds, consider the following arguments:
    \begin{enumerate}
        \item Since $j_2$ is interrupted by the clairvoyant job $k_1$ at $\tau_{1,2}$, it will not be executed again before $k_1$ completes (by~\Cref{obs:clairvoyant-jobs-block-earlier-jobs}). Thus, the earliest point in time at which $j_2$ could be executed again is $C_{k_1}$.
        \item As $(j_2,k_1) \in \delta^i_x$, we get $r_{j_2} \ge s_{k_1}$ by~\Cref{lem:ec:cuts:prop:1}. This implies that $j_2$ must still be non-clairvoyant at $\tau_{1,2}$, i.e., $j_2 \in N(\tau_{1,2})$. Otherwise, the interruption of $j_2$ by $k_1$ at $\tau_{1,2}$ would contradict the tiebreaking rule for clairvoyant jobs of our algorithm. In combination with the previous argument, this means that the next execution of $j_2$ after $\tau_{1,2}$ must be non-clairvoyant and take place after $C_{k_1}$.
        \item As $(j_1,k_1) \in \delta^i_x$, we get $r_{j_1} \le C_{k_1} \le C_{j_1}$ by~\Cref{lem:ec:cuts:prop:1}. In particular, this implies $C_{k_1} \in [r_{j_1},t]$. By the previous argument, we get that the next execution of $j_2$ after $\tau_{1,2}$ (but before $t$) must be non-clairvoyant and take place during $[C_{k_1},t] \subseteq [r_{j_1},t]$. However, this would contradict our assumption that $j_2$ is not executed during $[r_{j_1},t]$ while being non-clairvoyant. Thus, $j_2$ cannot be executed at all during $[\tau_{1,2},t]$.
    \end{enumerate}

    Having the claim in place, we are ready to complete the proof of Case (2) via a case distinction over $\tau_{2,2}$, the earliest point during $I_{j_2}$ at which $k_2$ is being processed. Note that such a point in time must exist as $(j_2,k_2) \in \delta^i_x$. Furthermore, $k_2 \in C(\tau_{2,2})$.

    \paragraph*{Case (2a)} Assume $\tau_{2,2} \ge r_{j_1}$. First observe that if $\tau_{2,2} \in [r_{j_1},C_{j_1})$, then $(j_1,k_2)\in \delta^i_x$ as $k_2$ is executed at $\tau_{2,2}$. In this case, we have $(j_1,k_1),(j_1,k_2) \in \delta^i_x$ and arrive at Case (1) again. Thus, we may assume w.l.o.g.~that $\tau_{2,2} \ge C_{j_1}$. We finish the proof of Case (2a) by observing the following two arguments:
    \begin{enumerate}
        \item We first show that $p_{k_2}(\tau_{2,2}) \ge (1-\alpha) \cdot p_{k_1}$. To see this, note that we have $r_{j_1} \le \tau_{1,1} < C_{j_1} \le \tau_{2,2}$, where $\tau_{1,1}$ is the earliest point in time during $I_{j_1}$ at which $k_1$ is being processed. Furthermore, we have $s_{k_1},s_{k_2} \le r_{j_2}\le r_{j_1}$ by~\Cref{lem:ec:cuts:prop:1} as $(j_2,k_1),(j_2,k_2)\in \delta^i_x$. This means that both, $k_1$ and $k_2$, are alive and clairvoyant at point in time $\tau_{1,1}$. Since $k_1$ is executed over $k_2$ at point in time $\tau_{1,1}$, the definition of our algorithm (tiebreaking rule for clairvoyant jobs) implies $s_{k_2} \le s_{k_1}$ and $p_{k_2}(s_{k_1}) \ge p_{k_1}(s_{k_1}) = (1-\alpha)p_{k_1}$. Furthermore, as $k_1$ has priority over $k_2$, job $k_2$ will only be executed again after $C_{k_1} \ge \tau_{1,1} \ge r_{j_1} \ge r_{j_2}$. This implies that the earliest point in time $t' \ge s_{k_1}$ at which $k_2$ is executed is $\tau_{2,2}$. Thus, we can conclude with $p_{k_2}(\tau_{2,2}) = p_{k_2}(s_{k_1}) \ge (1-\alpha) \cdot p_{k_1}$.
        \item Next, consider the $E_N$-path $P_1'$ from $j$ to $j_1$. Note that such a path exists by assumption of the lemma. We show that there is a job $d \in P_1'$ that is alive at $\tau_{2,2}$.

        If $j \in A(t) \cap L^i_{x+1}$, then we get $I(P_1') \supseteq [r_{j_1},t]$ by~\Cref{lem:ec:cuts:prop:1}. This implies $\tau_{2,2} \in I(P)$, so there must be a job $d \in P_1'$ that is alive at $\tau_{2,2}$.

        If $j \in H^i_{x+1}$, then~\Cref{obs:entrypoint_blocking} implies that $k_2 \in L^i_x$ is not executed after $C_j$, which implies $C_j \ge \tau_{2,2} \ge C_{j_1}$. Since $I(P_1') \supseteq [r_{j_1},C_j]$, we get $\tau_{2,2} \in I(P)$, so there must be a job $d \in P$ that is alive at $\tau_{2,2}$.
    \end{enumerate}

    The second argument shows that there is a job $d$ on the $E_N$-path $P_1'$ that is alive at $\tau_{2,2}$. The first argument shows $p_{k_2}(\tau_{2,2}) \ge (1-\alpha) \cdot p_{k_1}$. Thus,~\Cref{obs:ec:cut:aux:1} implies $\by_d \ge \by_{k_2}$.
    Since there is an $E_N$-path $P' \subseteq P_1'$ from $j$ to $d$, we get $\by_j \ge \by_d \ge \by_{k_2}$ by either~\Cref{lem:good-path-from-active} (if $j \in A(t) \cap L^i_{x+1})$ or by~\Cref{lem:good-path-from-entrypoints} (if $j \in H^i_{x+1}$). 

    \paragraph*{Case (2b)} Assume $\tau_{2,2} < r_{j_1}$. Similar to the first argument in Case (2a), we first show that $p_{k_1}(\tau_{1,2}) \ge (1-\alpha)p_{k_2}$.
    
    To see this, first observe that $\tau_{2,2} < r_{j_1}$ implies $\tau_{2,2} < \tau_{1,1}$ as $r_{j_1} \le \tau_{1,1} \le C_{j_1}$ holds by definition of $\tau_{1,1}$ (the earliest point in time during $I_{j_1}$ at which $k_1$ is being processed).
    Since $s_{k_1}, s_{k_2} \le r_{j_2}$ holds by~\Cref{lem:ec:cuts:prop:1}, we have that $k_1$ and $k_2$ are both alive and clairvoyant at point in time $\tau_{2,2} \ge r_{j_2}$. As $k_2$ is executed over $k_1$ at point in time $\tau_{2,2}$, the definition of our algorithm (tiebreaking rule for clairvoyant jobs) implies $s_{k_1} \le s_{k_2}$ and $p_{k_1}(s_{k_2}) \ge p_{k_2}(s_{k_2}) = (1-\alpha)p_{k_2}$. Furthermore, as $k_2$ has priority over $k_1$, we have that $k_1$ is not executed between $s_{k_2}$ and $\tau_{1,2} \ge r_{j_2}$. Thus, we get $p_{k_1}(\tau_{1,2}) = p_{k_1}(s_{k_2}) \ge p_{k_2}(s_{k_2}) = (1-\alpha)p_{k_2}$.

    Having shown the inequality $p_{k_1}(\tau_{1,2}) \ge (1-\alpha)p_{k_2}$, we complete the proof by distinguishing between different possible values of $\tau_{1,2}$:

    \begin{enumerate}
        \item First, assume $\tau_{1,2} \ge r_{j_1}$. If $C_j < t$ (which by the assumption of the lemma can only be the case if $j \in H^i_{x+1}$), then~\Cref{obs:entrypoint_blocking} implies that $k_1$ is not executed during $[C_j,t]$. Thus, we must have $r_{j_1} \le \tau_{1,2} \le \min\{C_{j},t\}$. 
        
        Consider an $E_N$-path $P_1'$ from $j$ to $j_1$. Such a path must exist by the assumptions of the lemma. By~\Cref{obs:lifetime:1}, we have $I(P_1') \supseteq [r_{j_1}, \min\{C_{j},t\}]$, and thus, $\tau_{1,2} \in I(P_1')$.

        The fact that $\tau_{1,2} \in I(P_1')$ implies that there is a job $d \in P_1'$ that is alive at point in time $\tau_{1,2}$. Since we already argued that $p_{k_1}(\tau_{1,2}) \ge (1-\alpha)p_{k_2}$ and $k_1$ is executed at $\tau_{1,2}$, we can apply~\Cref{obs:ec:cut:aux:1} to get $\by_d \ge \by_{k_2}$. By choice of $d \in P_1'$, there is a $E_N$-path $P_1'' \subseteq P_1'$ from $j$ to $d$. Thus, we get $\by_j \ge \by_d \ge \by_{k_2}$ by either~\Cref{lem:good-path-from-active} (if $j \in A(t) \cap L^i_{x+1}$) or by~\Cref{lem:good-path-from-entrypoints} (if $j \in H^i_{x+1}$). 

        \item Next, assume $\tau_{1,2} < r_{j_1}$. Consider the $x$-entrypath $P_2$ from $j$ to $k_2$ that exists by assumption of the lemma and let $P_2'$ denote the $j$-$j_2$-subpath of $P_2$. Since $P_2$ is an $x$-entrypath, we have $P_2' \cap \bar{L}_x = \emptyset$.
        
        If $r_j \le \tau_{1,2}$, then we have $\tau_{1,2} \in I_j$, either by definition in case that $j \in A(t) \cap L^i_{x+1}$ or by~\Cref{obs:entrypoint_blocking} in case that $j \in H^i_{x+1}$. Having $\tau_{1,2} \in I_j$, we can use the fact that $p_{k_1}(\tau_{1,2}) \ge (1-\alpha)p_{k_2}$ and apply~\Cref{obs:ec:cut:aux:1} to $j$ in order to conclude with $\by_j \ge \by_{k_2}$. Thus, assume $r_j > \tau_{1,2}$ for the remainder of this proof.

        The assumption that $r_j > \tau_{1,2}$ and the fact that $j_2$ is not executed again after $\tau_{1,2}$ (Claim 1) implies that path $P_2'$ must contain a job $d$ that is alive at point in time $\tau_{1,2}$ and executed at some point in time $t' > \tau_{1,2}$. If such a job $d$ would not exist, then $P_2'$ would not be a $j$-$j_2$-path.

        Using again that the inequality $p_{k_1}(\tau_{1,2}) \ge (1-\alpha)p_{k_2}$ holds, we can apply~\Cref{obs:ec:cut:aux:1} to job $d$ to obtain $\by_d \ge \by_{k_2}$. We finish the proof by showing that there is an $E_N$-path $P$ from $j$ to $d$, which then implies $\by_j \ge \by_d \ge \by_{k_2}$ by either~\Cref{lem:good-path-from-active} (if $j \in A(t) \cap L^i_{x+1}$) or by~\Cref{lem:good-path-from-entrypoints} (if $j \in H^i_{x+1}$). 

        In order to prove that the path $P$ exists, we first characterize some properties of job $d$. Since $d \in P_2' \subseteq L^i_{\geq x+1}$, $k_1 \in L^i_x$ and $(d,k_1) \in E_C$ (as $k_1$ is executed $\tau_{1,2} \in I_d$), we get that $d \in L^i_{x+1}$ and $(d,k_1) \in \delta^i_x$ by the definition of layers. Using~\Cref{lem:ec:cuts:prop:1}, the latter gives us $s_{k_1} \le r_d$. This in turn implies $d \in N(\tau_{1,2})$, as otherwise the algorithm would execute $d$ over $k_1$ at point in time $\tau_{1,2}$ (cf.~tiebreaking rule for clairvoyant jobs). The fact that $k_1$ is executed over $d$ at $\tau_{1,2}$ means that $d$ is only executed again after $k_1$ completes (cf.~\Cref{obs:clairvoyant-jobs-block-earlier-jobs}). Thus, the earliest execution of $d$ after $\tau_{1,2}$ is non-clairvoyant and takes place after $C_{k_1} \ge r_{j_1}$.

        If $j \in A(t) \cap L^i_{x+1}$, then the earliest execution of $d$ after $\tau_{1,2}$ takes place during $I(P_1') \supseteq [r_{j_1},t]$ for the $j$-$j_1$-path $P_1'$ that only uses edges in $E_N$ (exists by assumption of the lemma). This directly implies the existence of an edge $(d',d)$ for some $d' \in P_1'$, which in turn implies the existence of a $j$-$d$-path $P$ that only uses edges in $E_N$.

        If $j \in H^i_{x+1}$, then~\Cref{obs:entrypoint_blocking} implies that the earliest execution of $d$ after $\tau_{1,2}$ (and thus after $C_{k_1} \ge r_{j_1}$) must take place during $[r_{j_1}, C_j]$. Thus, the earliest execution of $d$ after $\tau_{1,2}$ takes place during $I(P_1') \supseteq [r_{j_1},C_j]$ for the $j$-$j_1$-path $P_1'$ that only uses edges in $E_N$ (exists by assumption of the lemma). This directly implies the existence of an edge $(d',d)$ for some $d' \in P_1'$, which in turn implies the existence of a $j$-$d$-path $P$ that only uses edges in $E_N$.
    \end{enumerate}
    This finally concludes the proof of the lemma.
\end{proof}

The next auxiliary lemma considers the same situation as before, but this time assumes that $\{(j_1,k_2),(j_2,k_1)\} \cap E_C = \emptyset$ (cf.~\Cref{fig:lem:multiple:entries:3}). In the previous lemma, we used the existence of at least one of the edges $(j_1,k_2)$ and $(j_2,k_1)$ to show that $\by_j \ge \min\{\by_{k_1},\by_{k_2}\}$. In the following lemma, we will observe that the absence of the edges $(j_1,k_2)$ and $(j_2,k_1)$ enforces a certain structure in the borrow graph and schedule that can be exploited to show the same statement.

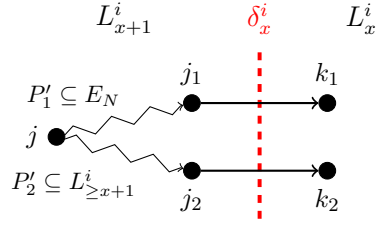
\begin{figure}[t]
        \centering
        \begin{tikzpicture}[scale=0.9,transform shape]
        
            \draw[thick,dashed,red,line width = 1.5pt] (0,0) -- (0,-2.5);
            \node at (0,0.5) {\large\textcolor{red}{$\delta^i_x$}};
    
            \node at (-2,0.5) {\large $L^i_{x+1}$};
            \node at (1.5,0.5) {\large $L^i_{x}$};
            
            \node[vertex,label=left:{$j$}] (v0) at (-3,-1.25) {};
    
            \node[vertex,label=above:{$j_1$}] (v01) at (-1,-0.75) {};
            \node[vertex,label=below:{$j_2$}] (v02) at (-1,-1.75) {};

            \node[vertex,label=above:{$k_1$}] (v1) at (1,-0.75) {};
            \node[vertex,label=below:{$k_2$}] (v2) at (1,-1.75) {};
    
            \path 
                (v01) edge[->,thick] (v1)
                (v02) edge[->,thick] (v2);                
    
            \draw[->,snake=zigzag,segment length=15pt] (v0) -- (v01); 
            \draw[->,snake=zigzag,segment length=15pt] (v0) -- (v02);
    
            \node at (-2.75,-1.95) {\small$P_2' \subseteq L^i_{\ge x+1}$};
    
            \node at (-2.75,-0.65) {\small$P_1' \subseteq E_N$};

        \end{tikzpicture}
    \caption{The three possible situations considered in~\Cref{lem:multiple:entries:3}, where $P_1' = P_1 \setminus \{(j_1,k_1)\}$ and $P_2' = P_2 \setminus \{(j_2,k_2)\}$.}
    \label{fig:lem:multiple:entries:3}
\end{figure}

\begin{lemma}
    \label{lem:multiple:entries:3}
    Let $j \in (A(t) \cap L^i_{x+1}) \cup H^i_{x+1}$ be a job with $x$-entrypaths to two distinct entrypoints $k_1,k_2 \in H^i_x$.
    Let $P_1$ and $P_2$ denote the corresponding paths and let $(j_1,k_1),(j_2,k_2) \in \delta^i_x$ denote the last edge on $P_1$ and $P_2$, respectively. If $j$ can reach $j_1$ via an $E_N$-path and $\{(j_1,k_2),(j_2,k_1)\} \cap E_C = \emptyset$, then $\by_j \ge \min\{\by_{k_1},\by_{k_2}\}$. 
\end{lemma}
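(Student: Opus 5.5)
Set $\tau_1$ to be the earliest time in $I_{j_1}$ at which $k_1$ is processed, and $\tau_2$ the earliest time in $I_{j_2}$ at which $k_2$ is processed. Both are clairvoyant executions, and $s_{k_1}\le r_{j_1}$, $s_{k_2}\le r_{j_2}$ by \Cref{lem:ec:cuts:prop:1}. Two clairvoyant jobs never run in parallel, so $\tau_1\neq\tau_2$. The plan has three steps. First, use the absence of the edges $(j_1,k_2)$ and $(j_2,k_1)$ to fix the relative order of $I_{j_1}$, $I_{j_2}$ and the executions of $k_1,k_2$. Second, from this order and the SRPT tie-breaking rule, derive an inequality of the form $p_{k_a}(\tau)\ge(1-\alpha)\min\{p_{k_1},p_{k_2}\}$, where $k_a$ is processed at $\tau$. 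Third, find a job $d$ that is alive at $\tau$ and is reachable from $j$ by an $E_N$-path. \Cref{obs:ec:cut:aux:1} then gives $\by_d\ge\min\{\by_{k_1},\by_{k_2}\}$, and \Cref{lem:good-path-from-active} (if $j\in A(t)$) or \Cref{lem:good-path-from-entrypoints} (if $j\in H^i_{x+1}$) gives $\by_j\ge\by_d$.

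\textbf{Case $\tau_1<\tau_2$.} Since $(j_2,k_1)\notin E_C$, $k_1$ is never processed clairvoyantly during $I_{j_2}$. By the layer definition, the edge $(j_2,k_1)$ cannot lie in $E_N$ either. Job $k_1$ is already clairvoyant at $r_{j_1}$, so it is still alive and unfinished while $j_1$ runs. I expect this forces $I_{j_2}$ to begin only after $C_{k_1}$, so $r_{j_2}\ge C_{k_1}\ge r_{j_1}$. In that case $k_2$ is alive and clairvoyant at $\tau_1$ but is not processed then. The tie-breaking rule in \Cref{obs:srpt-like} gives $p_{k_2}(\tau_1)\ge p_{k_1}(\tau_1)$. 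To obtain a bound of order $(1-\alpha)p_{k_1}$, I would compare at $s_{k_1}$ instead of $\tau_1$, as in Case (2a) of \Cref{lem:multiple:entries:2}. Because $k_2$ cannot run between $s_{k_1}$ and $C_{k_1}$ (\Cref{obs:clairvoyant-jobs-block-earlier-jobs}), this yields $p_{k_2}(\tau_2)\ge(1-\alpha)p_{k_1}$.

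The remaining task in this case is a job $d$ alive at $\tau_2$ on the $E_N$-path $P_1'$ from $j$ to $j_1$. \Cref{obs:lifetime:1} gives $I(P_1')\supseteq[r_{j_1},\min\{C_j,t\}]$. If $j\in H^i_{x+1}$, \Cref{obs:entrypoint_blocking} gives $\tau_2\le C_j$. Together with $\tau_2\ge r_{j_2}\ge r_{j_1}$, this places $\tau_2$ inside $I(P_1')$, as required.

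\textbf{Case $\tau_2<\tau_1$.} This case is symmetric at the level of the schedule, but $j$ only has an $E_N$-path to $j_1$, so the path part must be handled differently. Following Case (2b) of \Cref{lem:multiple:entries:2}, I first obtain $p_{k_1}(\tau)\ge(1-\alpha)p_{k_2}$ at the first time $\tau$ at which $k_1$ runs after $s_{k_2}$. If $\tau\ge r_{j_1}$, the argument from the first case applies with $P_1'$. If $r_j\le\tau$, apply \Cref{obs:ec:cut:aux:1} to $j$ directly.

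Otherwise I use the path $P_2'$ from $j$ to $j_2$. It contains a job $d$ alive at $\tau$ (because $(j_2,k_1)\notin E$ keeps $j_2$ itself away from $\tau$). As in Case (2b), $d$ gets a clairvoyant edge to $k_1$ and is therefore non-clairvoyant at $\tau$. Its next execution happens after $C_{k_1}\ge r_{j_1}$, which lies inside $I(P_1')$, so $d$ is reachable from $j$ by an $E_N$-path.

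\textbf{Main obstacle.} The hard step is the ordering claim: turning $\{(j_1,k_2),(j_2,k_1)\}\cap E_C=\emptyset$ into the disjointness $r_{j_2}\ge C_{k_1}$, or its mirror image, while correctly handling the possibility that one of the $j_a$ completes before the relevant $k_b$ is processed again. I would argue it by contradiction. If $r_{j_2}<C_{k_1}$, then some clairvoyant execution of $k_1$ lies in $[r_{j_2},C_{j_2}]$, using \Cref{lem:ec:cuts:prop:1} ($C_{k_2}\le C_{j_2}$) and the fact that $k_1$ must finish before or after $k_2$ according to SRPT priority. Such an execution would create the forbidden edge $(j_2,k_1)\in E_C$.
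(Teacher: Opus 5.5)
Your architecture matches the paper's. The two missing $E_C$-edges fix the order, a tie-breaking comparison at an emission time gives the $(1-\alpha)$-bound, and \Cref{obs:ec:cut:aux:1} is applied to a job alive at that time that $j$ reaches by an $E_N$-path, with the $P_2'$ detour when that time precedes $r_{j_1}$. (The paper instead normalizes to $r_{j_1}\ge r_{j_2}$, swapping $j_1,j_2$ after showing that otherwise $j$ also $E_N$-reaches $j_2$; then only your second case occurs.)

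\textbf{The step that fails.} In the case $\tau_1<\tau_2$, the bound $p_{k_2}(\tau_2)\ge(1-\alpha)p_{k_1}$ does not follow from ``$k_2$ cannot run between $s_{k_1}$ and $C_{k_1}$''. Even combined with $(j_1,k_2)\notin E_C$, this keeps $k_2$ idle only up to $C_{j_1}$. Nothing prevents $k_2$ from running clairvoyantly during $(C_{j_1},r_{j_2})$, after which $p_{k_2}(\tau_2)$ can be arbitrarily small. This is exactly why the paper uses $t_1'$, the first execution of $k_1$ after $C_{k_2}$, and needs its Case~2. In your first case the repair is cheap. Let $\tau'$ be the first time after $\max\{s_{k_1},s_{k_2}\}$ at which $k_2$ runs. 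Then $C_{k_1}\le\tau'\le\tau_2\le\min\{C_j,t\}$ and $p_{k_2}(\tau')\ge(1-\alpha)\min\{p_{k_1},p_{k_2}\}$ (justified below). So $\tau'\in I(P_1')$, and your path step goes through unchanged.

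\textbf{The unproven ordering.} Your obstacle sketch misses the decisive ingredient, which is $(j_1,k_2)\notin E_C$, not SRPT priority. Since $\tau_2>\tau_1\ge r_{j_1}$, the absence of this edge forces $\tau_2>C_{j_1}\ge C_{k_1}$ by \Cref{lem:ec:cuts:prop:1}, so $k_1$ completes before $k_2$. Then \Cref{obs:clairvoyant-jobs-block-earlier-jobs}, applied at $s_{k_2}$, excludes $s_{k_2}\in(s_{k_1},C_{k_1})$. Two possibilities remain:
\begin{itemize}
\item $s_{k_2}\le s_{k_1}$: then $k_2$ is alive and clairvoyant at $s_{k_1}$, so $p_{k_2}(s_{k_1})\ge(1-\alpha)p_{k_1}$ and $k_2$ is idle on $[s_{k_1},C_{k_1}]$.
\item $s_{k_2}\ge C_{k_1}$: then $p_{k_2}(\tau')=(1-\alpha)p_{k_2}$.
\end{itemize}
In particular, this case never needs $r_{j_2}\ge C_{k_1}$ or $(j_2,k_1)\notin E_C$.

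\textbf{The second case.} Your ``symmetric'' case silently uses the mirrored facts $C_{k_2}<\tau_1$, $\tau\ge C_{k_2}$ and $C_{j_2}<\tau$. These follow in the same way from $(j_2,k_1)\notin E_C$. There too, $\tau$ should be the first run of $k_1$ after $\max\{s_{k_1},s_{k_2}\}$.

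\textbf{The $P_2'$ step.} Take $d$ to be the first job on $P_2'$ with $r_d\le\tau$. Being alive at $\tau$ is not enough. You also need $d$ to run again after $\tau$ (and, for $j\in H^i_{x+1}$, before $C_j$ by \Cref{obs:entrypoint_blocking}) to get the $E_N$-edge from $P_1'$ into $d$.
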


\begin{proof}
    In the first part of the proof, we will establish structural properties of the schedule and borrow graph that corresponds to the situation of the lemma. Afterwards, we will use these properties to prove the lemma.

    To start the proof of the structural properties, we establish that we can assume w.l.o.g.~that $r_{j_1} \ge r_{j_2}$. Note that this does not trivially hold by renaming the jobs because $j_1$ and $j_2$ might have different properties: there is an $E_N$-path from $j$ to $j_1$, but there might not be an $E_N$-path from $j$ to $j_2$.

    \paragraph{Proof that the assumption $r_{j_1} \ge r_{j_2}$ is w.l.o.g.} First, observe that if $r_{j_1} < r_{j_2}$, then $j$ can also reach $j_2$ via an $E_N$-path. To see this, consider the $E_N$-path $P_1'$ from $j$ to $j_1$. If $r_{j_2} \in I(P_1')$, then there exists an edge $(d,j_2) \in E_N$ for some $d \in P_1'$, which implies the existence of an $E_N$-path from $j$ to $d$. 
    In case that $j \in A(t) \cap L^i_{x+1}$, we have $I(P_1') \supseteq [r_{j_1},t]$ and, thus, $r_{j_1} < r_{j_2}$ implies $r_{j_2} \in I(P_1')$. In case that $j \in H^i_{x+1}$, we could in principle have $r_{j_2} \not\in I(P_1')$ if $r_{j_2} \ge C_{j}$. However, this would contradict~\Cref{obs:entrypoint_blocking} and, thus, $r_{j_1} < r_{j_2}$ implies $r_{j_2} \in I(P_1')$.

    Having established this observation, we can assume without loss of generality that $r_{j_1} \ge r_{j_2}$: If $r_{j_1} \ge r_{j_2}$ does not hold, then $r_{j_1} < r_{j_2}$ and, therefore, $j$ can reach $j_2$ via an $E_N$-path, which allows us to just swap the roles of $j_1$ and $j_2$ without changing the assumptions of the lemma. Thus, for the remainder of this proof, we assume $r_{j_1} \ge r_{j_2}$.  

    \medskip

    Next, consider the edges $(j_1,k_1),(j_2,k_2) \in \delta^i_x$ and let $t_\ell$ with $\ell \in \{1,2\}$ denote the earliest point in time at which $k_\ell$ is processed during $I_{j_\ell}$. Note that the points in time $t_{\ell}$ must exist and that $k_\ell$ must be clairvoyant at $t_\ell$ by definition of $\delta^i_x$. 
    We continue by showing several properties of the schedule.
 
    \begin{enumerate}
        \item[\textbf{Property 1:} $t_1 > t_2$.] To see why this property holds, assume otherwise, i.e., $t_2 > t_1$; note that $t_1 = t_2$ is not possible since no two clairvoyant jobs are ever executed in parallel by definition of our algorithm. By definition of $t_1$, we must have $t_1 \ge r_{j_1}$ and, therefore, $t_2 > t_1 \ge r_{j_1} \ge r_{j_2}$. Since $(j_2,k_2) \in \delta^i_x$, we also get $C_{j_2} \ge C_{k_2} \ge t_2 > t_1$ by~\Cref{lem:ec:cuts:prop:1}. Combining these inequalities gives us $C_{j_2} > t_1 \ge r_{j_2}$. However, this implies $(j_2,k_1) \in E_C$, which contradicts the assumption of the lemma that $\{(j_1,k_2),(j_2,k_1)\} \cap E_C = \emptyset$.
        \item[\textbf{Property 2:} $r_{j_1} \ge C_{k_2}$.]   Recall that we already assume $r_{j_1} \ge r_{j_2}$, so it remains to prove that $j_1$ cannot be released during $[r_{j_2}, C_{j_2})$. We complete the proof by separately arguing over subintervals of $[r_{j_2}, C_{j_2})$:
        \begin{enumerate}
            \item We first argue that $r_{j_1} \not\in [r_{j_2},t_2)$. To see this, assume otherwise, i.e., $r_{j_1} < t_2$ . Then we must have $C_{j_1} \le t_2$; otherwise $r_{j_1} \le t_2 < C_{j_1}$ would imply $(j_1,k_2) \in E_C$ which contradicts the assumption of the lemma.
            However, $C_{j_1} \le t_2$ implies $t_1 \le C_{j_1} \le t_2$; a contradiction to the claim $t_1 > t_2$.
    
            \item Next, we show that $r_{j_1} \not\in [t_2,C_{k_2})$. For the sake of contradiction assume $t_2 \le r_{j_1} < C_{k_2}$. Then, we must also have $C_{j_1} < C_{k_2}$; otherwise the edge $(j_1,k_2) \in E_C$ would exist and contradict the assumption of the lemma. However, this implies $r_{j_2} \le r_{j_1} \le t_1 \le C_{j_1} \le C_{k_2} \le C_{j_2}$, where the last inequality holds by~\Cref{lem:ec:cuts:prop:1}. The inequality $r_{j_2} \le t_1 \le C_{j_2}$ implies $(j_2,k_1) \in E_C$, which contradicts the assumption of the lemma.
        \end{enumerate}
        The facts that $r_{j_1} \ge r_{j_2}$,  $r_{j_1} \not\in [r_{j_2},t_2)$  and $r_{j_1} \not\in [t_2,C_{k_2})$ imply $r_{j_1} \ge C_{k_2}$. 
        \item[\textbf{Property 3:} $s_{k_1} \le s_{k_2}$.] 
        We start by showing the slightly weaker inequality $s_{k_1} \le r_{j_2}$. To this end, consider the $x$-entry path $P_2$ from $j$ to $k_2$ and, in particular, the $j$-$j_2$-subpath $P_2'$ of $P_2$. We distinguish between $j \in A(t) \cap L^i_{x+1}$ and $j \in H^i_{x+1}$:
        \begin{enumerate}
            \item If $j \in A(t) \cap L^i_{x+1}$, then we have $I(P_2') \supseteq [r_{j_2},t]$ by~\Cref{obs:lifetime:2}. Thus, if $s_{k_1} \in (r_{j_2},t]$, then $s_{k_1}^- \in [r_{j_2},t] \subseteq I(P_2')$. The fact that $s_{k_1}^- \in I(P_2')$ implies that there is a job $d \in P_2'$ with $s_{k_1}^- \in I_d$. Since $P_2$ is an $x$-entry path, we have $P_2' \subseteq L^i_{\ge x+1}$. However, $s_{k_1}^- \in I_d$ and $d \in L^i_{\ge x+1}$ imply that there is an edge $(d,k_2) \in E_N$ with $d \in L^i_{\ge x+1}$ and $k_2 \in L^i_{x}$; a contradiction to the layer definition.
            \item If $j \in H^i_{x+1}$, then  $s_{k_1}^- \in [r_{j_2},t]$ does not necessarily imply $s_{k_1}^- \in I(P_2')$ since we cannot apply~\Cref{obs:lifetime:2} if $j \not\in A(t) \cap L^i_{x+1}$. However, using $j \in H^i_{x+1}$, we can apply~\Cref{obs:entrypoint_blocking} to conclude that $k_2 \in L^i_x$ is not executed during $[C_j,t]$. Thus,  $s_{k_1}^- \in [r_{j_2},t]$ implies $s_{k_1}^-  \in [r_{j_2},C_j] \subseteq I(P_2')$, where $[r_{j_2},C_j] \subseteq I(P_2')$ holds as $j_2,j \in P_2'$. Having established $s_{k_1}^- \in  I(P_2')$, we can finish the argument as in the previous case $j \in A(t) \cap L^i_{x+1}$.
        \end{enumerate}
        It remains to show that $s_{k_1} \le s_{k_2}$. To see this, assume otherwise, i.e., $s_{k_1} > s_{k_2}$. Using $(j_1,k_1),(j_2,k_2) \in \delta^i_x$,~\Cref{lem:ec:cuts:prop:1} and our assumption that $r_{j_1} \ge r_{j_2}$, we obtain $C_{k_1} \ge r_{j_1} \ge r_{j_2}$ and $C_{k_2} \ge r_{j_2}$. Furthermore, we have $s_{k_2} \le r_{j_2}$ by~\Cref{lem:ec:cuts:prop:1} and $s_{k_1} \le r_{j_2}$ as established above. 
        Together, the inequalities $s_{k_2} < s_{k_1} \le  r_{j_2}$ and $C_{k_2},C_{k_1} \ge r_{j_2}$ imply that $k_2$ is alive and clairvoyant at point in time $s_{k_1}$. By the definition of our algorithm (tiebreaking rule for clairvoyant jobs), this implies $C_{k_2} \le C_{k_1}$. 
        Recall that we already established $r_{j_1} \ge C_{k_2}$ with the previous claim. This gives us $r_{j_1} \ge C_{k_1}$. However, $r_{j_1} \ge C_{k_1}$ is a contradiction to $(j_1,k_1) \in \delta^i_x$ and, in particular, to~\Cref{lem:ec:cuts:prop:1}. Thus, we can conclude with the claim $s_{k_1} \le s_{k_2}$.
        \item[\textbf{Property 4:} $r_{j_1} \ge C_{j_2}$.] We show the claim via proof by contradiction. To this end, assume $r_{j_1} < C_{j_2}$. In combination with our assumption that $r_{j_2} \le r_{j_1}$ this gives us $I_{j_1} \subseteq I_{j_2}$. In particular, this implies $t_1 \in I_{j_2}$. However, this contradicts the assumption of the lemma that $(j_2,k_1) \not\in E_C$. 
        \item[\textbf{Property 5:}  $C_{k_1} \ge C_{k_2}$.] The existence of $(j_2,k_2) \in \delta^i_x$ and~\Cref{lem:ec:cuts:prop:1} imply $C_{k_2} \le C_{j_2}$. The previous property $r_{j_1} \ge C_{j_2}$, the edge $(j_1,k_1) \in \delta^i_x$ and~\Cref{lem:ec:cuts:prop:1} together imply $C_{k_1} \ge r_{j_1} \ge C_{j_2}$. Putting both inequalities together yields $C_{k_1} \ge C_{k_2}$. 
    \end{enumerate}

    For the next property, let $t_1'$ denote the earliest point in time with $t_1' \ge C_{k_2}$ at which $k_1$ is executed. Note that such a point in time must exist as $(j_1,k_1) \in E_C$ and $r_{j_1} \ge C_{k_2}$. 

    \begin{enumerate}
        \item[\textbf{Property 6:}  $p_{k_1}(t_1') \ge (1- \alpha) \cdot p_{k_2}$.] To see the claim, first recall that $k_1$ is alive and clairvoyant at point in time $s_{k_2}$ since we showed that $s_{k_1} \le s_{k_2} \le r_{j_1} \le C_{k_1}$.
        As we also established $C_{k_1} \ge C_{k_2}$, this implies $p_{k_1}(s_{k_2}) \ge p_{k_2}(s_{k_2}) = (1-\alpha) \cdot p_{k_2}$ by the definition of the algorithm (tiebreaking rule for clairvoyant jobs). Furthermore, since $k_2$ is clairvoyant and executed at point in time $t_2$ while $k_1 \in C(t_2)$ %
        implies that $k_1$ is not executed at all during $[s_{k_2},C_{k_2})$. By definition of $t_1'$, this means that $k_1$ is not executed during at all during $[s_{k_2},t_1')$. Thus, we can conclude with $p_{k_1}(t_1') = p_{k_1}(s_{k_2}) \ge (1- \alpha) \cdot p_{k_2}$. 
    \end{enumerate}

    These six properties and the assumption that $r_{j_1} \ge r_{j_2}$ finally allow us to prove the lemma. %

    \paragraph*{Proof of the lemma.} %
     The schedule of the algorithm has the following properties:
    \begin{enumerate}[(a)]
        \item $s_{k_1} \le s_{k_2} \le r_{j_2} \le t_2 \le C_{k_2} \le C_{j_2} \le r_{j_1} \le t_1$; see~\Cref{fig:8er:lemma:2} for an illustration. The chain of these inequalities is a consequence of the first five properties and of the assumption that $r_{j_1} \ge r_{j_2}$.
        \item $p_{k_1}(t_1') \ge (1-\alpha) \cdot p_{k_2}$ for the earliest point in time $t_1' \ge C_{k_2}$ at which $k_1$ is executed (Property~6).
    \end{enumerate}

    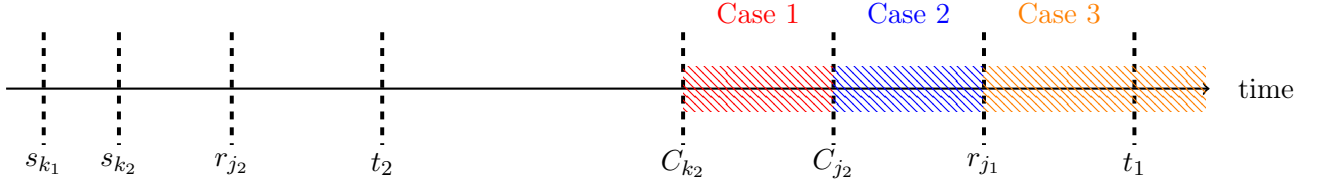
\begin{figure}[t]
        \centering
        \begin{tikzpicture}

            \draw[->,thick] (-4,0) -- (12,0);
            \node at (12.75,0) {time};

            \draw[thick,dashed,line width = 1.5pt] (-3.5,0.75) -- (-3.5,-0.75);
            \node at (-3.5,-1) {$s_{k_1}$};

            \draw[thick,dashed,line width = 1.5pt] (-2.5,0.75) -- (-2.5,-0.75);
            \node at (-2.5,-1) {$s_{k_2}$};

            \draw[thick,dashed,line width = 1.5pt] (-1,0.75) -- (-1,-0.75);
            \node at (-1,-1) {$r_{j_2}$};

            \draw[thick,dashed,line width = 1.5pt] (1,0.75) -- (1,-0.75);
            \node at (1,-1) {$t_2$};

            \draw[thick,dashed,line width = 1.5pt] (5,0.75) -- (5,-0.75);
            \node at (5,-1) {$C_{k_2}$};

            \draw[thick,dashed,line width = 1.5pt] (7,0.75) -- (7,-0.75);
            \node at (7,-1) {$C_{j_2}$};

            \draw[thick,dashed,line width = 1.5pt] (9,0.75) -- (9,-0.75);
            \node at (9,-1) {$r_{j_1}$};

            \draw[thick,dashed,line width = 1.5pt] (11,0.75) -- (11,-0.75);
            \node at (11,-1) {$t_{1}$};

            \fill[pattern=north west lines, pattern color=red] (5,-0.3) rectangle (7,0.3);
            \node at (6,1) {\textcolor{red}{Case 1}};

            \fill[pattern=north west lines, pattern color=blue] (7,-0.3) rectangle (9,0.3);
            \node at (8,1) {\textcolor{blue}{Case 2}};

            \fill[pattern=north west lines, pattern color=orange] (9,-0.3) rectangle (11.95,0.3);
            \node at (10,1) {\textcolor{orange}{Case 3}};

        \end{tikzpicture}
        \caption{Illustrates the order of relevant points in time in the proof of~\Cref{lem:multiple:entries:3} and the three cases for the value of $t_1'$ (colored areas).}
        \label{fig:8er:lemma:2}
    \end{figure}

    To complete the proof, we consider $t_1'$ and distinguish between the three cases (1) $t_1' \in [C_{k_2},C_{j_2})$ (red area in~\Cref{fig:8er:lemma:2}), (2) $t_1' \in  [C_{j_2},r_{j_1})$ (blue area in~\Cref{fig:8er:lemma:2}) and (3) $t_1' \in  [r_{j_1},t]$ (orange area in~\Cref{fig:8er:lemma:2}). Note that $t_1' < C_{k_2}$ is not possible by definition of $t_1'$ and that $t_1' \le t_1$ follows from $t_1 \ge C_{k_2}$.

    \paragraph{Case 1:} Assume that $t_1' \in [C_{k_2},C_{j_2})$. Since we have $r_{j_2} \le C_{k_2}$, $t_1' \in [C_{k_2},C_{j_2})$ would imply $t_1' \in I_{j_1}$ and, thus, the existence of edge $(j_2,k_1) \in E_C$. However, the edge $(j_2,k_1)$ cannot exist due to the assumption of the lemma. Therefore, Case 1 cannot actually occur.

    \paragraph*{Case 2:} Assume that $t_1' \ge C_{j_2}$. Consider the $j$-$j_2$-subpath $P'_2$ of $P_2$. We first argue that we must have $t_1' \le \min\{C_j,t\}$, which then (in combination with $t_1' \ge C_{k_2} \ge C_{j_2} \ge r_{j_2}$) implies $t_1' \in [r_{j_2},C_j] \subseteq I(P'_2)$. If $j \in A(t) \cap L^i_{x+1}$, then $\min\{C_j,t\} = t$ and $t_1' \le \min\{C_j,t\}$ holds because of $t_1' \le t_1 \le t$. If $j \not\in A(t) \cap L^i_{x+1}$, then $j \in H^i_{x+1}$ and $t_1' \le \min\{C_j,t\}$ follows from \Cref{obs:entrypoint_blocking}.

    Having shown that $t_1' \in I(P'_2)$, there must be a job $d \in P'_2$ with $t_1' \in I_d$. If $d = j$, then $p_{k_1}(t_1') \ge (1-\alpha) \cdot p_{k_2}$  and~\Cref{obs:ec:cut:aux:1} imply $\by_{j} \ge \by_{k_2}$. Thus, assume for the remainder of the proof that $d \not= j$ and, in particular, that $j$ is not alive at $t_1'$.

    Since~\Cref{obs:entrypoint_blocking} implies $C_{j}> t_1'$, $j$ not being alive at $t_1'$ means that we must have $r_j > t_1'$. Thus, we have $C_{j_2} < t_1'$ \emph{and} $r_j > t_1'$. As $P_2'$ is a $j$-$j_2$-path, this means that there must not only be a job $d \in P_2'$ with $t_1' \in I_d$, but job $d$ must also be executed again at some point in time $t' \ge t_1'$; if such a job $d$ would not exist, then $P_2'$ would not be a $j$-$j_2$-path. This job $d$ additionally has the following properties:
    \begin{enumerate}
        \item $d \in L^i_{x+1}$ and $(d,k_1) \in \delta^i_x$. Since $P_2 \supseteq P_2'$ is an $x$-entry path, $d \in P_2'$ implies $d \in L^i_{\ge x+1}$. Furthermore, $t_1' \in I_d$ implies that the edge $(d,k_1) \in E_C$ exists. Thus, $k_1 \in L^i_x$ implies $d \in L^i_{\le x+1}$. In conclusion, we get $d \in L^i_{x+1}$ and $(d,k_1) \in \delta^i_x$.
        \item $d \in N(t_1')$. Since $(d,k_1) \in \delta^i_x$, we can apply~\Cref{lem:ec:cuts:prop:1} to get $s_{k_1} \le r_d$. If $d$ was not in $N(t_1')$, then it would need to emit after $s_{k_1}$ and before $t_1' \le C_{k_1}$. However, then the definition of our algorithm (tiebreaking rule for clairvoyant jobs) would imply that $k_1$ is only executed again (after $s_d$) once $d$ is completed. This is a contradiction to $t_1' \in I_d$, and we can conclude with  $d \in N(t_1')$.
        \item The earliest point in time $t_d' \ge t_1'$ at which $d$ is executed satisfies $t_d' \ge C_{k_1}$. This is because the clairvoyant job $k_1$ is executed at time $t_1'$ while $d$ is also alive and clairvoyant. By~\Cref{obs:clairvoyant-jobs-block-earlier-jobs}, this directly implies $t_d' \ge C_{k_1}$.
        \item $\by_d \ge \by_{k_1}$. This is directly implied by $t_1' \in I_d$ and~\Cref{obs:ec:cut:aux:1}.
    \end{enumerate}

    The second and third property of $d$ imply that $d$ is executed after $C_{k_1}$ (at point in time $t_d'$) while being non-clairvoyant. Using~\Cref{obs:entrypoint_blocking} and~\Cref{lem:ec:cuts:prop:1}, we get that $r_{j_1} \le C_{k_1} \le t_d' \le \min\{C_j,t\}$ must hold. In particular, this implies $t_d' \in I(P_1')$, where $P_1'$ is the $E_N$-path from $j$ to $j_1$ that exists by assumption of the lemma. The fact that $t_d' \in I(P_1')$ in turn implies that there is an $E_N$-path from $j$ to $d$, and thus, $\by_j \ge \by_d$. Using the fourth property above, we can conclude with $\by_j \ge \by_d \ge \by_{k_1}$.

    \paragraph{Case 3:} Assume that $t_1' \in [r_{j_1},t)$. Since $t_1$ is the earliest point in time at or after $r_{j_1}$ at which $k_1$ is executed, we get $t_1 = t_1'$. By Property 6, this implies $p_{k_1}(t_1) \ge (1- \alpha) \cdot p_{k_2}$. Since $t_1 \in I_{j_1}$, \Cref{obs:ec:cut:aux:1} implies $\by_{j_1} \ge \by_{k_2}$. We can conclude with $\by_j \ge \by_{k_2}$ by observing that the $E_N$-path from $j$ to $j_1$ implies $\by_j \ge \by_{j_1} \ge \by_{k_2}$ (using~\Cref{lem:good-path-from-entrypoints,lem:good-path-from-active}).
\end{proof}

Exploiting all these auxiliary lemmas, we are finally ready to prove~\Cref{lem:multiple:entrypoints}.

\begin{proof}[Proof of~\Cref{lem:multiple:entrypoints}]
     Let $j \in (A(t) \cap L^i_{x+1}) \cup H^i_{x+1}$ be a job that can reach at least two distinct entrypoints of $L^i_x$ via $x$-entry paths.  

     By definition of the layers, there is at least one $k_1 \in H^i_x$ that $j$ can reach via an $x$-entry path $P_1$ that, apart from the final edge, only uses edges in $E_N$.

     By the assumption that $j$ can reach two distinct entrypoints of $L^i_x$ via $x$-entry paths, there must be a second entrypoint $k_2 \in H^i_x$ with $k_1 \not= k_2$ that $j$ can reach via an $x$-entry path $P_2$.

     Let $(j_1,k_1),(j_2,k_2) \in \delta^i_x$ denote the last edges on $P_1$ and $P_2$, respectively. If the edges $(j_1,k_2)$ and $(j_2,k_1)$ do both not exist, then~\Cref{lem:multiple:entries:2} implies $\by_j \ge \min\{\by_{k_1},\by_{k_2}\}$.
     Otherwise, \Cref{lem:multiple:entries:3} implies $\by_j \ge \min\{\by_{k_1},\by_{k_2}\}$. In both cases, we can conclude the proof.
\end{proof}

\subsection{Borrowing via Bottlenecks}

\Cref{thm:yinequality:no:bottlenecks} allows us to compare the truncated progress of $i$ with the truncated progress of jobs~$j$ that can reach $i$ without visiting a bottleneck of $i$. In this section, we want to consider jobs $j$ that need to visit at least one bottleneck of $i$ in order to reach $j$. In contrast to \Cref{thm:yinequality:no:bottlenecks}, we will not be able to show $\by_j \ge \by_i$ for such jobs $j$ as the inequality simply might not hold, as we argued in \Cref{sec:matching:arguments} for the example given in \Cref{fig:example-delta-edge}.

Instead, we want to exploit that each path from $j$ to $i$ must visit the bottleneck $b_{ji}$.
Let $\Delta= \{d_1,\ldots,d_r\} = \{d \in J \mid (d,b_{ji}) \in \delta^i_{\sigma_{ji}}\}$. By definition of the layers, each path from $j$ to $i$ starts with a $\sigma_{ji}$-entry path that ends with an edge $(d,b_{ji}) \in \delta^i_{\sigma_{ji}}$. Thus, the maximum amount of processing that $j$ can borrow from $i$, that is, the maximum flow from $j$ to $i$ in $G_F$, is limited by the maximum amount of flow that $j$ can send to $b_{ji}$ via edges $(d,b_{ji}) \in \delta^i_{\sigma_{ji}}$ for jobs $d$ that $j$ can actually reach via an $L^i_{\geq \sigma_{ji}+1}$-path (recall that such a path only uses vertices of the set $L^i_{\geq \sigma_{ji}+1}$). We are interested in analyzing this amount of processing.

For each $d_q \in \Delta$, let $\tau_q$ denote the earliest point in time in $I_{d_q}$ at which $b_{ji}$ is executed. Since $(d_q,b_{ji}) \in \delta^i_{\sigma_{ji}}$, such a point in time must exist. By definition of the flow network $G_F$, the maximum amount of flow that can be sent from $d_q$ to $b_{ji}$ is at most $p_{b_{ji}}(\tau_{d_q})$ (we formalize this in the proof of \Cref{lem:segment:bottleneck:capacity}). Instead of lower bounding $\by_j$ in terms of $\by_i$ as in the previous section, our goal is now to lower bound $\by_j$ in terms of $p_{b_{ji}}(\tau_{d_q})$. In particular, we want to do so for every edge  $(d_q,b_{ji}) \in \delta^i_{\sigma_{ji}}$ such that $j$ can reach $d_q$ via an $L^i_{\geq \sigma_{ji}+1}$-path. The main result of this section is the following lemma, which can be viewed as a first step towards proving~\Cref{lem:segment:bottleneck:capacity}.

\stealingViaBottlenecks*

To prove the lemma, we first look at the edges $(d,b_{ji}) \in \delta^i_{\sigma_{ji}}$ and observe the following, which essentially proves the statement of \Cref{lem:stealing:via:bottlenecks:2} for the special case $d_q = j$, and will be useful for proving the full lemma.

\begin{observation}\label{lem:aux:ec:edges:2}
  If $(d_q,b_{ji}) \in \delta^i_{\sigma_{ji}}$, then $\frac{1-\alpha}{\alpha} \by_{d_q} \ge \frac{1-\alpha}{\alpha} y_{d_q}(\tau_q) \ge p_{b_{ji}}(\tau_q)$. 
\end{observation}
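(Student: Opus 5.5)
The plan is to unpack what the $\delta$-edge $(d_q,b_{ji})$ says about the schedule at time $\tau_q$, and then read off the inequality from the algorithm's decision rule. Write $b \coloneq b_{ji}$ and $d \coloneq d_q$. Since $(d,b)\in\delta^i_{\sigma_{ji}}$, \Cref{lem:ec:cuts:prop:1} gives $r_d \ge s_b$, $r_d\le C_b \le C_d$, and $y_d(C_b)\le \alpha p_d$. In particular, $d$ is alive at $\tau_q$, because $\tau_q \in I_d$ and $\tau_q \le C_b\le C_d$. Also $b\in C(\tau_q)$, since $\tau_q\ge r_d\ge s_b$. Finally, $d\in N(\tau_q)$: we have $y_d(\tau_q)\le y_d(C_b)\le \alpha p_d$, and the only way $d$ could fail to be non-clairvoyant is $y_d(\tau_q)=\alpha p_d$ with emission at or before $\tau_q$. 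That boundary case I would handle separately, as described below.

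With $d\in N(\tau_q)$ established, the main inequality is immediate. At time $\tau_q$ the algorithm executes the clairvoyant job $b$, so it must be in the first rule, which means $p_b(\tau_q)=\min_{k\in C(\tau_q)}p_k(\tau_q)\le \frac{1-\alpha}{\alpha}\min_{k\in N(\tau_q)}y_k(\tau_q)\le\frac{1-\alpha}{\alpha}y_d(\tau_q)$. That is the second inequality of the observation.

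The first inequality, $\by_d\ge y_d(\tau_q)$, follows from monotonicity. Here $\by_d=\min\{y_d(t),\alpha p_d\}$, $y_d(t)\ge y_d(\tau_q)$ because $\tau_q\le t$, and $\alpha p_d\ge y_d(\tau_q)$ by the bound $y_d(\tau_q)\le y_d(C_b)\le\alpha p_d$.

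The main obstacle is the boundary case where $d$ emits exactly at or before $\tau_q$, so that $d\in C(\tau_q)$. I would rule it out using the third property of \Cref{lem:ec:cuts:prop:1} together with the SRPT tiebreak. If $s_d\le \tau_q<C_b$, then at $\tau_q$ both $b$ and $d$ are clairvoyant and $s_d\ge r_d\ge s_b$. The algorithm prefers the job that emitted last among equal remaining times, and otherwise the smaller remaining time, so it would run $d$ rather than $b$ from $s_d$ on until $d$ completes, unless $p_b(\tau_q)<p_d(\tau_q)$. In the latter case $d$ waits until $C_b$ without being processed, which keeps $y_d(C_b)=\alpha p_d$ consistent. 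To close the gap, I would note that in this case $p_b(\tau_q)\le p_d(\tau_q)=(1-\alpha)p_d=\frac{1-\alpha}{\alpha}y_d(\tau_q)$, so the desired inequality still holds. The first inequality also still holds, since $y_d(\tau_q)=\alpha p_d$. Thus both cases give the claim, with the clairvoyant case using \Cref{obs:srpt-like} in place of the threshold rule.
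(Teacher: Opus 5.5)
Your proof is correct and follows the paper's argument. You establish $b_{ji}\in C(\tau_q)$ and $d_q\in N(\tau_q)$, so the rule-1 threshold at $\tau_q$ gives $p_{b_{ji}}(\tau_q)\le\frac{1-\alpha}{\alpha}y_{d_q}(\tau_q)$, and property 3 of \Cref{lem:ec:cuts:prop:1} gives the truncation bound. The only deviation is that you get $d_q\in N(\tau_q)$ from $y_{d_q}(\tau_q)\le y_{d_q}(C_{b_{ji}})\le\alpha p_{d_q}$, whereas the paper appeals to the algorithm's priority rule. Because $N(t)$ is defined with the non-strict inequality $y_j(t)\le\alpha p_j$, that bound already gives membership, so your separate boundary case is vacuous (your SRPT-based treatment of it is nonetheless valid).
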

 
\begin{proof}
 Recall that $\tau_q$ is the earliest point in time during $I_{d_q}$ at which $b_{ji}$ is processed. Since  $(d_q,b_{ji}) \in \delta^i_{\sigma_{ji}}$, such a point in time exists. Furthermore, by~\Cref{lem:ec:cuts:prop:1}, we have $b_{ji} \in C(\tau_q)$, i.e., job $b_{ji}$ is already clairvoyant at the earliest time during $I_{d_q}$ when it is being processed. Moreover, it must hold $d_q \in N(\tau_q)$, as otherwise, that is, $d_q \in C(\tau_q)$ would contradict the definition of the algorithm and the definition of time $\tau_d$.
 
 Since we have $b_{ji} \in C(\tau_q)$, $d_q \in N(\tau_q)$ and $b_{ji}$ is processed %
 at time $\tau_q$, the definition of our algorithm implies $\frac{1-\alpha}{\alpha} y_{d_q}(\tau_q) \ge p_{b_{ji}}(\tau_q)$.
 By \Cref{lem:ec:cuts:prop:1}, we have $y_{d_q}(\tau_q) \le y_{d_q}(C_{b_{ji}}) \le \alpha \cdot p_{d_q}$. Thus, $y_{d_q}(\tau_q) \le \min\{y_{d_q}(t), \alpha \cdot p_{d_q} \} = \by_{d_q}$, and we can conclude with $\frac{1-\alpha}{\alpha} \by_{d_q} \ge \frac{1-\alpha}{\alpha} y_{d_q}(\tau_q) \ge p_{b_{ji}}(\tau_q)$
\end{proof}

Next, we show that there exists at least one job $d_q \in \Delta$ with $\frac{1-\alpha}{\alpha} \by_j \ge p_{b_{ji}}(\tau_q)$. The existence of this one job will later help us to show that the inequality actually holds for \emph{all} jobs $d_q \in \Delta$ that job~$j$ can reach via an $L_{>\sigma_{ji}}$-path.

\begin{lemma}
  \label{lem:bottleneck:inequality:1}
  Let $j \in A(t)$ be a job such that every path from $j$ to $i$ contains at least one bottleneck. Then, there is at least one job $d_q \in \Delta = \{d_1,\ldots,d_r\} = \{d \in J \mid (d,b_{ji}) \in \delta^i_{\sigma_{ji}}\}$ such that $\frac{1-\alpha}{\alpha} \by_j \ge \frac{1-\alpha}{\alpha} y_{d_q}(\tau_q) \ge p_{b_{ji}}(\tau_q)$. Furthermore, there exists a job $j' \in (A(t) \cap L^i_{\sigma_{ji}+1}) \cup H^i_{\sigma_{ji}+1}$ with $\by_{j} \ge \by_{j'}$ that can reach $d_q$ via an $E_N$-path.
\end{lemma}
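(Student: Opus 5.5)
The plan is to walk down from $j$'s layer $L^i_{\ell_{ji}}$ to layer $L^i_{\sigma_{ji}+1}$ while keeping a job whose truncated progress is at most $\by_j$. We then descend one final step to a $\delta$-edge into $b_{ji}$ and apply \Cref{lem:aux:ec:edges:2}.

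\textbf{First step: find $j'$.} If $\ell_{ji} = \sigma_{ji}+1$, we take $j' \coloneq j$, which lies in $A(t) \cap L^i_{\sigma_{ji}+1}$. Otherwise, $\sigma_{ji}+1 \le \ell_{ji}-1$, and \Cref{lem:aux:inequality:1} gives an entrypoint $j' \in H^i_{\sigma_{ji}+1}$ with $\by_j \ge \by_{j'}$. In both cases $j' \in (A(t) \cap L^i_{\sigma_{ji}+1}) \cup H^i_{\sigma_{ji}+1}$ and $\by_j \ge \by_{j'}$.

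\textbf{Second step: find $d_q$.} Since $j' \in L^i_{\sigma_{ji}+1}$, the layer definition gives a path from $j'$ that uses only $E_N$-edges and then one $E_C$-edge into $L^i_{\sigma_{ji}}$. More precisely, we want an $E_N$-path from $j'$ to some $d$ followed by an edge $(d,k)\in\delta^i_{\sigma_{ji}}$. The intermediate jobs of the $E_N$-part lie in $L^i_{\ge \sigma_{ji}+1}$, since otherwise $j'$ would lie in a lower layer; hence this is a $\sigma_{ji}$-entry path and $k$ is an entrypoint. We need $k = b_{ji}$. By the choice of $\sigma_{ji}$, every path from $j$ (and from $j'$, which $j$ reaches bottleneck-free) into $L^i_{\le \sigma_{ji}}$ passes through a bottleneck. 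So the entrypoint $k$ reached on this path must be a bottleneck of $L^i_{\sigma_{ji}}$, and by \Cref{thm:bottleneck:uniqueness} it equals $b_{ji}$. Here I would use \Cref{obs:entrypoint:bottleneck} when $j' \in H^i_{\sigma_{ji}+1}$, and the bottleneck definition together with \Cref{lem:unique:bottleneck:1} when $j' \in A(t)$. Thus $d = d_q \in \Delta$, and $j'$ reaches $d_q$ via an $E_N$-path.

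\textbf{Chaining the inequalities.} By \Cref{lem:good-path-from-active} if $j' \in A(t)$, or by \Cref{lem:good-path-from-entrypoints} if $j'$ is an entrypoint, the $E_N$-path from $j'$ to $d_q$ yields $\by_{j'} \ge \by_{d_q}$. \Cref{lem:good-path-from-entrypoints} is stated for targets in the same layer; $d_q \in L^i_{\sigma_{ji}+1}$, so it applies. Combining everything with \Cref{lem:aux:ec:edges:2} gives
\[
\tfrac{1-\alpha}{\alpha}\by_j \ \ge\ \tfrac{1-\alpha}{\alpha}\by_{d_q} \ \ge\ \tfrac{1-\alpha}{\alpha} y_{d_q}(\tau_q) \ \ge\ p_{b_{ji}}(\tau_q).
\]

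\textbf{Main obstacle.} The hard part is the identification $k=b_{ji}$. One must make sure the "reached without bottleneck" property transfers from $j$ to $j'$, which holds by the definition of $\sigma_{ji}$: $j$ reaches an entrypoint of $L^i_{\sigma_{ji}+1}$ bottleneck-free. One must also make sure the entrypoint reached by the pure $E_N$-then-$E_C$ path is forced to be the bottleneck rather than some other entrypoint. If this forcing fails, I would fall back on \Cref{lem:multiple:entrypoints}: it shows that if $j'$ reaches two distinct entrypoints, then $\by_{j'} \ge \by_k$ for an entrypoint $k$ of $L^i_{\sigma_{ji}}$, which would contradict the bottleneck structure defining $\sigma_{ji}$.
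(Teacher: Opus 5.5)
Your proposal is correct and follows the paper's proof essentially step for step. It uses the same case split: if $j \in L^i_{\sigma_{ji}+1}$ you take $j'=j$, and otherwise you take the entrypoint given by \Cref{lem:aux:inequality:1}. It forces the entered entrypoint to be $b_{ji}$ via \Cref{obs:entrypoint:bottleneck} and the bottleneck definition, and it closes with the same chain through \Cref{lem:good-path-from-active}/\Cref{lem:good-path-from-entrypoints} and \Cref{lem:aux:ec:edges:2}. Your fallback via \Cref{lem:multiple:entrypoints} is unnecessary, and on its own it would not yield the claimed contradiction, because the forcing argument already goes through.
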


\begin{proof}
  We distinguish between the two cases (1) $j \in L_{\sigma_{ji}+1}^i$ and (2) $j \in L_{\geq \sigma_{ji} + 2}^i$. Note that $j \in L_{\le \sigma_{ji}}^i$ is not possible by definition of $\sigma_{ji}$.

  \paragraph*{Case (1):} Assume $j \in L_{\sigma_{ji}+1}^i$. By definition of $\sigma_{ji}$, the job $j$ cannot reach $i$ without visiting the bottleneck $b_{ji}$ of layer $L^i_{\sigma_{ji}}$. This means that $j$ cannot reach any other entrypoint of layer $L^i_{\sigma_{ji}}$ via a $\sigma_{ji}$-entry path.  
  However, by the definition of layers, job $j$ can reach at least one entrypoint of $L^i_{\sigma_{ji}}$ via a path composed of an $E_N$-path and a single edge in $\delta^i_{\sigma{ji}}$. Since $b_{ji}$ is the only entrypoint that $j$ can reach via such paths, we must have that $j$ can reach a job $d_q \in \Delta$ with $(d_q,b_{ji}) \in \delta^i_{\sigma_{ji}}$ via an $E_N$-path.
  
  The $E_N$-path from $j$ to $d_q$ implies $\by_j \ge \by_{d_q}$ using \Cref{lem:good-path-from-active}. Then, \Cref{lem:aux:ec:edges:2} implies 
  \[
  \frac{1-\alpha}{\alpha} \by_j \ge \frac{1-\alpha}{\alpha} \by_{d_q} \ge \frac{1-\alpha}{\alpha} y_{d_q}(\tau_q) \ge p_{b_{ji}}(\tau_q) \ .
  \] 

  For the second part of the lemma, observe that $j \in  A(t) \cap L^i_{\sigma_{ji}+1}$ can reach $d_q$ via an $E_N$-path and trivially satisfies $\by_j \ge \by_j$.

  \paragraph*{Case (2):}  Assume $j \in L_{\geq \sigma_{ji} + 2}^i$.
  By \Cref{lem:aux:inequality:1}, there exists an entrypoint $k$ of layer $L^i_{\sigma_{ji}+1}$ with $\by_j \ge \by_k$. Since layer $L^i_{\sigma_{ji}}$ has a bottleneck, \Cref{obs:entrypoint:bottleneck} implies that $b_{ji}$ is the \emph{only} entrypoint of layer $L^i_{\sigma_{ji}}$ that $k$ can reach via a $\sigma_{ji}$-entry path. 
  Since the definition of layers implies that $k$ can reach at least one entrypoint via a $\sigma_{ji}$-entry path that apart from the last edge uses only edges in $E_N$, job $k$ must be able to reach some $d_q \in \Delta$ with $(d_q,b_{ji}) \in \delta^i_{\sigma_{ji}}$ via an $E_N$-path.

  The $E_N$-path from $k$ to $d_q$ implies $\by_j \ge \by_k \ge \by_{d_q}$ using \Cref{lem:good-path-from-entrypoints}. Then, \Cref{lem:aux:ec:edges:2} implies
  \[
  \frac{1-\alpha}{\alpha} \by_j \ge \frac{1-\alpha}{\alpha} \by_{k} \ge \frac{1-\alpha}{\alpha} \by_{d_q} \ge \frac{1-\alpha}{\alpha} y_{d_q}(\tau_q) \ge p_{b_{ji}}(\tau_q) \ .
  \]

  For the second part of the lemma, observe that $k \in  H^i_{\sigma_{ji}+1}$ can reach $d_q$ via an $E_N$-path and satisfies $\by_j \ge \by_k$.
\end{proof}

\Cref{lem:bottleneck:inequality:1} essentially shows that the inequality of~\Cref{lem:stealing:via:bottlenecks:2} is satisfied for \emph{one} $d_q \in \Delta$. In particular, the second part of the lemma gives us a job $j' \in (A(t) \cap L^i_{\sigma_{ji}+1}) \cup H^i_{\sigma_{ji}+1}$ with $\by_{j} \ge \by_{j'}$ that can reach $d_q$ via an $E_N$-path. If this job $j'$ can also reach a different $d_{q'} \in \Delta$ via an $E_N$-path, then~\Cref{lem:good-path-from-active,lem:good-path-from-entrypoints} imply that the inequality~\Cref{lem:stealing:via:bottlenecks:2} also holds for $d_{q'}$. The next auxiliary lemma shows that all jobs  $d_{q'} \in \Delta$ with a certain property can be reached by $j'$ via an $E_N$-path and, thus, satisfy the inequality of~\Cref{lem:stealing:via:bottlenecks:2}.

\begin{lemma}
    \label{lem:stealing:via:bottleneck:1}
    Let $j' \in (A(t) \cap L^i_{\sigma_{ji}+1}) \cup H^i_{\sigma_{ji}+1}$ be a job that can reach a job $d_q \in \Delta = \{d_1,\ldots,d_r\} = \{d \in J \mid (d,b_{ji}) \in \delta^i_{\sigma_{ji}}\}$ via an $E_N$-path. Let $d_{q'} \in \Delta$ with $d' \not= d$ be a job that is executed at some point in time $t' \ge C_{b_{ji}}$ while being non-clairvoyant. Then $j'$ can reach $d_{q'}$ via an $E_N$-path.
\end{lemma}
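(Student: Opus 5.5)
Write $b \coloneq b_{ji}$ and $\sigma \coloneq \sigma_{ji}$, and let $P$ be the $E_N$-path from $j'$ to $d_q$. The plan is to show that the non-clairvoyant execution of $d_{q'}$ at time $t' \ge C_b$ lies inside the lifetime $I(P)$ of this path. Once this holds, some job $h \in P$ is alive at $t'$. Since $d_{q'} \in N(t')$ is processed at $t'$, we get $(h,d_{q'}) \in E_N$. Prepending the $j'$-$h$ prefix of $P$ then gives the desired $E_N$-path. So everything reduces to locating $t'$ in $I(P)$.

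First I would pin down $t'$ from above and below.
\begin{itemize}
\item \emph{Lower bound.} Since $(d_q,b) \in \delta^i_\sigma$, \Cref{lem:ec:cuts:prop:1} gives $r_{d_q} \le C_b \le C_{d_q}$. Hence $t' \ge C_b \ge r_{d_q}$.
\item \emph{Upper bound, case $j' \in A(t) \cap L^i_{\sigma+1}$.} Then $t' \le t = \max I(P)$ trivially.
\item \emph{Upper bound, case $j' \in H^i_{\sigma+1}$.} I would use \Cref{obs:entrypoint_blocking} for the entrypoint $j'$ of layer $L^i_{\sigma+1}$. It says no job of $L^i_{\sigma+1}$ is executed non-clairvoyantly during $[C_{j'},t]$. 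The job $d_{q'}$ lies in $L^i_{\sigma+1}$, because it has a $\delta^i_\sigma$-edge into $b$ and layers are defined by minimal counts. Hence $t' < C_{j'}$.
\end{itemize}

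With these bounds, \Cref{obs:lifetime:1} or \Cref{obs:lifetime:2} gives $I(P) \supseteq [\min(r_{j'},r_{d_q}), \min\{\max(C_{j'},C_{d_q}),t\}]$. In both cases this interval contains $[r_{d_q}, t']$, and in particular $t'$.

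The main obstacle is the degenerate boundary and ordering issues. If $r_{j'} > t'$, the path's lifetime might start after $t'$. Here $I(P)$ still covers $[r_{d_q},\max]$ because it is a gapless union containing both $I_{j'}$ and $I_{d_q}$. So I only need $t' \le \max I(P)$, which follows from the upper bounds above.

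A second subtlety is that $t' = C_b$ might coincide with an endpoint. I would handle this by taking $t'$ slightly later within the same execution interval.

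Finally, I would double-check that $d_{q'} \ne d_q$ is only needed to rule out the trivial case, where $P$ itself already serves as the path.
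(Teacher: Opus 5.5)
Your proof is correct and follows the paper's argument. For $j' \in A(t)\cap L^i_{\sigma_{ji}+1}$ it is identical (using $r_{d_q}\le C_{b_{ji}}\le t'$ and \Cref{obs:lifetime:2}); in the entrypoint case you cite \Cref{obs:entrypoint_blocking} to get $t'<C_{j'}$, whereas the paper inlines the same entry-path/layer-contradiction reasoning via $I(P)\cup I(P')\supseteq[\min\{r_{j'},r_{d_q}\},t]$, and the boundary ``subtleties'' you flag are harmless since $t'\ge r_{d_q}\ge \min I(P)$ and lifetimes are closed intervals.
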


\begin{proof}

We distinguish between the two cases (1) $j' \in A(t) \cap L_{\sigma_{ji}+1}^i$ and (2) $j' \in H_{\sigma_{ji}+1}^i$.

\textbf{Case (1):} Assume $j' \in A(t) \cap L_{\sigma_{ji}+1}^i$. Let $P$ be the $E_N$-path from $j'$ to $d_q$ that exists by assumption of the lemma. 
Since $j' \in A(t)$, we have $[\min\{r_{j'},r_{d_q}\},t] \subseteq I(P)$ by \Cref{obs:lifetime:2}. If $d_{q'}$ is executed during $[\min\{r_{j'},r_d\},t]$ while being non-clairvoyant, then this implies the existence of an $E_N$-path from $j'$ to $d_{q'}$, and we are done. Since $(d_q,b_{ji}) \in \delta_{\sigma_{ji}}^i$, we get $r_{d_q} \le C_{b_{ji}}$ by \Cref{lem:ec:cuts:prop:1}. By assumption, $d_{q'}$ is executed after $C_{b_{ji}}$ while being non-clairvoyant. Thus, $d_{q'}$ is executed during $[\min\{r_{j'},r_{d_q}\},t]$ while being non-clairvoyant.

\textbf{Case (2):} Assume $j' \in H_{\sigma_{ji}+1}^i$ and let $P$ be an $E_N$-path from $j'$ to $d_q$. Note that such a path exists by assumption of the lemma.
Furthermore, let $P'$ denote a $(\sigma_{ji}+1)$-entry path from some $h \in A(t) \cap L^i_{\geq \sigma_{ji}+2}$ to $j'$, which must exist because $j' \in H_{\sigma_{ji}+1}^i$. Using \Cref{obs:lifetime:2}, we get $I(P) \cup I(P') \supseteq [\min\{r_{j'},r_d\},t]$.

If $d_{q'}$ is executed during $I(P)$ while being non-clairvoyant, then this implies the existence of an $E_N$-path from $j'$ to $d_{q'}$, and we are done.
As argued in Case (1), $d_{q'}$ is executed during $[\min\{r_{j'},r_{d_q}\},t]$ while being non-clairvoyant. If this execution is not during $I(P)$, then it must be during  $I(P')\setminus I(P)$. However, this implies the existence of an edge $(h',d_{d'}) \in E_N$ with $h' \in L_{\geq x+2}^i$ and $d' \in L_{x+1}^i$; a contradiction.
\end{proof}

Having established all these auxiliary lemmas, we are ready to prove~\Cref{lem:stealing:via:bottlenecks:2}.

\begin{proof}[Proof of~\Cref{lem:stealing:via:bottlenecks:2}]
    Assume that the elements of $\Delta= \{d_1,\ldots,d_r\} = \{d \in J \mid (d,b_{ji}) \in \delta^i_{\sigma_{ji}}\}$ are indexed by non-decreasing $\tau_h$, i.e., $h < h'$ implies $\tau_h \le \tau_{h'}$. In particular, this also means that $h < h'$ implies $p_{b_{ji}}(\tau_h) \ge p_{b_{ji}}(\tau_{h'})$. So if $\frac{1-\alpha}{\alpha} \by_{j} \ge p_{b_{ji}}(\tau_h)$ holds, then also $\frac{1-\alpha}{\alpha} \by_{j} \ge p_{b_{ji}}(\tau_{h'})$ for all $h' > h$.

    Let $h^*$ denote the maximum index that satisfies  $\frac{1-\alpha}{\alpha} \by_j < p_{b_{ji}}(\tau_{h^*})$. If no such index exists, then we are done as the first property of the lemma then holds for every $d_h \in \Delta$. Thus, assume in the following that $h^*$ exists.
    
    By choice of $h^*$, all $d_h$ with $h > h^*$ satisfy $\frac{1-\alpha}{\alpha} \by_{j} \ge p_{b_{ji}}(\tau_h)$. Thus, it only remains to show that~$j$ cannot reach any $d_h$ with $h \le h^*$ via an $L^i_{\sigma_{ji}+1}$-path.

    To this end, we show that the following three properties hold, and argue that they imply that~$j$ cannot reach any $d_h$ with $h \le h^*$ via an $L^i_{\ge \sigma_{ji}+1}$-path.

    \begin{enumerate}
      \item[\textbf{Property 1}.] $r_j > \tau_{h^*}$.
      \item[\textbf{Property 2}.] No $d_h \in \Delta$ with $h \le h^*$ is executed after $\tau_{h^*}$. In particular, $r_{d_h} < \tau_{h^*}$.
      \item[\textbf{Property 3}.] No $j' \in L^i_{\ge \sigma_{ji}+1} \cap A(\tau_{h^*})$ is executed after $\tau_{h^*}$. 
    \end{enumerate}

    Before we prove that the properties indeed hold, we argue that they imply that $j$ cannot reach any $d_h$ with $h \le h^*$ via an $L^i_{\ge \sigma_{ji}+1}$-path. 
    To see this, assume otherwise, i.e., that there exists an $L^i_{\ge \sigma_{ji}+1}$-path $P$ from $j$ to a $d_h \in \Delta$ with $h \le h^*$. Since $r_j > \tau_{h^*}$ (Property 1) but $d_h$ with $r_{d_h} < \tau_{h^*}$ is not executed after $\tau_{h^*}$ (Property 2), the path $P$ must contain a job $j' \in A(\tau_{h^*})$ that is executed after $\tau_{h^*}$. However, as $P$ is an $L^i_{\ge \sigma_{ji}+1}$-path, the job $j'$ would be in $L^i_{\ge \sigma_{ji}+1} \cap A(\tau_{h^*})$ and executed after $\tau^*$. 
    The existence of such a job $j'$ is a contradiction to Property 3. Hence, we can conclude that $j$ cannot reach any $d_h$ with $h \le h^*$ via an $L^i_{\ge \sigma_{ji}+1}$-path. 

    To complete the proof, it remains to show that the three properties indeed hold.

    \paragraph{Proof of Property 1.}
    By choice of $h^*$, we have $\frac{1-\alpha}{\alpha} \by_{j} < p_{b_{ji}}(\tau_{h^*})$. If we had $r_j < \tau_{h^*}$, then we would have $\tau_{h^*} \in I_j$ as $j$ is in $A(t)$. This would imply $(j,b_{ji}) \in \delta_{\sigma_{ji}}$ and, thus, $\frac{1-\alpha}{\alpha} \by_j \ge p_{b_{ji}}(\tau_j)$ for the earliest point in time $\tau_j \in I_j$ at which $b_{ji}$ is executed (cf.~\Cref{lem:aux:ec:edges:2}). Since $\tau_{h^*} \in I_j$, we have $\tau_j \le \tau_{h^*}$ and, thus, $\frac{1-\alpha}{\alpha} \by_j \ge p_{b_{ji}}(\tau_j) \ge p_{b_{ji}}(\tau_{h^*})$. However, this contradicts $\frac{1-\alpha}{\alpha} \by_{j} < p_{b_{ji}}(\tau_{h^*})$.

    \paragraph{Proof of Property 2.}
    By \Cref{lem:stealing:via:bottleneck:1,lem:bottleneck:inequality:1}, we have  $\frac{1-\alpha}{\alpha} \by_j \ge p_{b_{ji}}(\tau_{h})$ for each $d_{h} \in \Delta$ that is executed after $C_{b_{ji}}$ while being non-clairvoyant. 
    Thus, by the choice of $h^*$, each $d_{h} \in \Delta$ with $h \le h^*$ cannot be executed after $C_{b_{ji}}$ while being non-clairvoyant; otherwise we would have $\frac{1-\alpha}{\alpha} \by_j \ge p_{b_{ji}}(\tau_{h})$ for an $h \le h^*$ which contradicts the definition of $h^*$. Furthermore, by~\Cref{lem:ec:cuts:prop:1}, each $d_{h} \in \Delta$ is non-clairvoyant at $C_{b_{ji}}$, i.e., $d_{h} \in N(C_{b_{ji}})$. This implies that each $d_{h} \in \Delta$ with $h \le h^*$ cannot be executed after $C_{b_{ji}}$ at all, because such an execution after $C_{b_{ji}}$ would be non-clairvoyant.
    
    \paragraph{Proof of Property 3.} Let $j'$ be a job in $L_{\ge \sigma_{ji}+1} \cap A(\tau_{h^*})$. Since $j'$ is alive at point in time $\tau_{h^*}$, we have $(j', b_{ji}) \in E_C$ because $b_{ji}$ is executed at $\tau_{h^*}$ while being clairvoyant. Using $j' \in L_{\ge \sigma_{ji}+1}$, this implies $(j', b_{ji}) \in \delta_{\sigma_{ji}}$. This allows us to apply~\Cref{lem:ec:cuts:prop:1} to obtain $j' \in N(C_{b_{ji}})$. Furthermore, \Cref{obs:clairvoyant-jobs-block-earlier-jobs} implies that $j'$ is not executed during $[\tau_{h^*},C_{b_{ji}}]$. Thus, if $j'$ is executed after $\tau_{h^*}$, then this execution must take place at a point in time $t'$ with $t' \ge C_{b_{ji}}$ and $j' \in N(t')$. However, then \Cref{lem:stealing:via:bottleneck:1,lem:bottleneck:inequality:1} imply $\frac{1-\alpha}{\alpha} \by_j \ge p_{b_{ji}}(\tau_{j'})$ for the earliest point in time $\tau_{j'} \in I_{j'}$ at which $b_{ji}$ is executed. Since $\tau_{h^*} \in I_{j'}$, this implies $\tau_{j'} \le \tau_{h^*}$ and, thus  $\frac{1-\alpha}{\alpha} \by_j \ge p_{b_{ji}}(\tau_{j'}) \ge p_{b_{ji}}(\tau_{h^*})$. This is a contradiction to the choice of $h^*$ according to which we have $\frac{1-\alpha}{\alpha} \by_j < p_{b_{ji}}(\tau_{h^*})$. Thus, we can conclude that $j'$ cannot be executed after $C_{b_{ji}}$.
\end{proof}

\section{Clairvoyant Jobs and the Proof of \Cref{thm:local:clairvoyant}}\label{sec:clairvoyant}

For every job $j$, let $t_j$ be the latest point in time before $t$ at which job $j$ is being processed.

\begin{definition}\label{def:clairvoyant:sets}
    For every job $i \in J$ with $r_i \leq t$, we define 
    \begin{itemize}
        \item 
        $H_i \coloneq \{j \in C(t) \setminus \{i\} \mid t_j \leq r_i \text{ and } i \in R_j \}$ and
        \item 
        $L_i \coloneq \{j \in C(t) \setminus \{i\} \mid t_j > r_i \text{ and } i \in R_j \}$.
    \end{itemize}
\end{definition}

This definition immediately implies the following proposition. 

\begin{proposition}\label{lemma:possible-borrows}
    For every $j \in C(t)$, it holds that $R_j \setminus \{j\} = \{ i \in J \mid j \in H_i \cup L_i \}$.
\end{proposition}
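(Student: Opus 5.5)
The plan is to prove the two inclusions directly from \Cref{def:clairvoyant:sets}. The argument is a pure unfolding of definitions. The only facts needed are the definition of $R_j$ as the set of jobs reachable from $j$ in $G_B$ (including $j$), and the fact that, for $j \in C(t)$, the time $t_j$ is a well-defined real number. The latter holds because $j \in C(t)$ has $y_j(t) > \alpha p_j \ge 0$, so $j$ has been processed at some time before $t$.

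\emph{Inclusion $\supseteq$.} Let $i$ satisfy $j \in H_i \cup L_i$. Both sets are, by definition, contained in $\{j \in C(t) \setminus \{i\} \mid i \in R_j\}$. Hence $i \in R_j$ and $j \neq i$, so $i \in R_j \setminus \{j\}$.

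\emph{Inclusion $\subseteq$.} Let $i \in R_j \setminus \{j\}$ with $j \in C(t)$. Then $i \neq j$, so $j \in C(t)\setminus\{i\}$, and $i \in R_j$. The definition of $H_i$ and $L_i$ requires $r_i \le t$, so we check this next. Since $i$ is reachable from $j$, there is a path $j = k_0, k_1, \ldots, k_m = i$ in $G_B$. Every edge $(k_{\ell-1},k_\ell)$ means that $k_\ell$ is executed at some time in $I_{k_{\ell-1}} \subseteq [0,t]$. In particular, $i$ is executed at some time $\le t$, so $r_i \le t$.

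Finally, the two conditions $t_j \le r_i$ and $t_j > r_i$ are complementary, so exactly one of them holds. Therefore $j \in H_i$ or $j \in L_i$, which establishes the claimed equality.

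I do not expect a genuine obstacle here. The only subtle points are that $t_j$ is well defined for clairvoyant jobs and that reachable jobs satisfy $r_i \le t$, and both are handled above.
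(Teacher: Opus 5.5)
Your proof is correct and follows the paper's approach: the paper says only that the proposition is immediate from \Cref{def:clairvoyant:sets}, and you unfold that definition explicitly. Your two checks — that every $i \in R_j \setminus \{j\}$ satisfies $r_i \le t$, so $H_i$ and $L_i$ are defined, and that $t_j$ is well defined for $j \in C(t)$ — supply details the paper leaves implicit.
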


\begin{lemma}\label{lemma:partial-rem-lb}
    The following properties hold:
    \begin{enumerate}[(a)]
        \item For every $i \in N(t)$ and for every $j \in H_i$, it holds that $(1-\alpha) y_i(t) \leq p_j(t)$.
        \item For every $i \in J \setminus N(t)$ and for every $j \in H_i$, it holds that $(1-\alpha)p_i \leq p_j(t)$.
    \end{enumerate}
\end{lemma}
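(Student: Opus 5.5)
The plan is to exploit that any $j \in H_i$ is a clairvoyant job (at time $t$) that is never processed after $r_i$, while $i$ is released before $t$. Fix $i$ and $j \in H_i$, so $j \in C(t)$, $t_j \le r_i$, and $j$ is alive throughout $[r_i, t]$. Hence $p_j(t) = p_j(t')$ for every $t' \in [r_i,t]$. Moreover, $j$ has already emitted by time $t_j \le r_i$: since $j \in C(t)$ and $j$ receives no processing after $t_j$, we get $y_j(t_j) = y_j(t) > \alpha p_j$. So $j \in C(t')$ for all $t' \in [r_i,t]$, and $p_j(t') = p_j(t)$ there.

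Next, consider any time $t' \in [r_i, t]$ at which $i$ is processed. Such a time exists, since $i \in R_j$ forces $i$ to be processed at some point, and we may take the relevant execution after $r_i$.

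\emph{Part (a).} Let $i \in N(t)$. If $y_i(t) = 0$ the claim is trivial. Otherwise let $t'$ be the last time before $t$ at which $i$ is processed; then $i \in N(t')$ and $y_i(t') = y_i(t)$. Because the algorithm runs the SETF branch at $t'$ rather than the SRPT branch, we have
\[
\min_{k \in C(t')} p_k(t') > \tfrac{1-\alpha}{\alpha} \min_{k \in N(t')} y_k(t') = \tfrac{1-\alpha}{\alpha}\, y_i(t'),
\]
where the last equality holds because $i$ is among the processed jobs, which have minimal progress. Since $j \in C(t')$, this gives $p_j(t) = p_j(t') > \tfrac{1-\alpha}{\alpha} y_i(t) \ge (1-\alpha) y_i(t)$ for $\alpha \le 1$. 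The edge case $\alpha = 0$ is handled directly, since then $N(t)$ contains only unprocessed jobs.

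\emph{Part (b).} Let $i \notin N(t)$, so $i$ emitted at some $s_i \le t$ with $s_i > r_i$ unless $\alpha = 0$. At the time $s_i$, or the first time $t' \ge s_i$ at which $i$ is processed as a clairvoyant job, both $i$ and $j$ are in $C(t')$. The algorithm chooses $i$, so by Observation~\ref{obs:srpt-like} we get $p_i(t') \le p_j(t') = p_j(t)$. At the first such time, $p_i(t') = (1-\alpha)p_i$, which yields the claim.

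There are two cases to handle. If $i$ is completed without any clairvoyant processing, then $\alpha = 1$, which is excluded. If $i$ emits exactly while $j$ is idle, the argument above still applies. The main subtle point is verifying that such a processing time $t'$ with $t' \ge r_i$ and $i \in C(t')$ actually occurs before $t$, or, if $i$ is still alive, at some point in $[s_i, t]$. This follows because $y_i(t) > \alpha p_i = y_i(s_i)$ whenever $i \in C(t)$, and because $i$ must be processed clairvoyantly to complete whenever $i \in D(t)$ and $\alpha < 1$.
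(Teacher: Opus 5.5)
Your proof is correct. Part (a) is the paper's argument: at the last execution of $i$ before $t$ the SETF branch is active, $j\in C$ at that time, and the branch condition gives $\frac{1-\alpha}{\alpha}y_i(t)< p_j(t)$.

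Part (b) takes a different route.
\begin{itemize}
\item \emph{Your argument.} You look at the first time $t'\in[s_i,t]$ at which $i$ runs as a clairvoyant job. The SRPT rule then gives $(1-\alpha)p_i=p_i(t')\le p_j(t')=p_j(t)$.
\item \emph{The paper's argument.} It looks at the emission time $s_i$ itself. There $i$ is still being processed in the SETF branch with $y_i(s_i)=\alpha p_i$, so the branch condition gives $p_j(s_i)\ge \frac{1-\alpha}{\alpha}\cdot\alpha p_i=(1-\alpha)p_i$.
\end{itemize}

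The paper's route makes (b) the same argument as (a), with the SETF threshold evaluated at $s_i$ instead of at $t_i$. It needs no existence step, because $i$ is always being processed just before it emits.

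Your route needs one extra check, which you correctly supply: $i$ must receive post-emission processing inside $[s_i,t]$. This uses $\alpha<1$ together with $i\in C(t)\cup D(t)$. In exchange, your route never divides by $\alpha$, so it also covers $\alpha=0$ uniformly. The paper's inequality $\frac{1-\alpha}{\alpha}y_i(s_i)<p_j(s_i)$ is degenerate there.

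One small caveat applies to both proofs equally. At the very first post-emission execution, $y_i=\alpha p_i$ exactly, so $i$ is formally still in $N$. Your inequality should therefore be read as a limit from the right, just as the paper works with $s_i^{+}$.
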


\begin{proof}
    For (a), let $i \in N(t)$ and $j \in H_i$. 
    Since $j \in H_i$, we have that $t_j \leq r_i \leq t_i$ and $j \in C(t_i)$, where the latter holds as $j$ is clairvoyant at point in time $t$, and thus, must have become clairvoyant the latest at time $t_j \le t_i$.
    The algorithm guarantees that at time $t_i$ it holds that $\frac{1-\alpha}{\alpha} y_i(t_i) < p_j(t_i)$ as $j \in C(t_i)$ and $i \in N(t_i)$. Therefore, $(1-\alpha) y_i(t) = (1-\alpha) y_i(t_i^+) \leq \frac{1-\alpha}{\alpha} y_i(t_i^+) \leq p_j(t_i) = p_j(t)$. 

    For (b), let $i \in J \setminus N(t)$ and $j \in H_i$. Let $s_i$ be the time when $i$ emits.
    Since $j \in H_i$, we have that $t_j \leq r_i \leq s_i$. Furthermore, we also have $j \in C(s_i)$ because $i \in J \setminus N(t)$ and $t_j \le r_i$ imply that $t_j \le s_i \le t \le C_j$ so $j$ is still clairvoyant when job $i$ emits the signal.
    Since at time $s_i$, job $i$ is being processed, it must hold that $\frac{1-\alpha}{\alpha} y_i(s_i) < p_j(s_i)$ by the definition of the algorithm.
    Thus, $(1-\alpha)p_i = \frac{1-\alpha}{\alpha} y_i(s_i^+) \leq p_j(s_i^+) = p_j(t)$.
\end{proof}

The following statement, which we will use multiple times in the following, is an immediate corollary of \Cref{obs:clairvoyant-jobs-block-earlier-jobs}.

\begin{corollary}\label{lemma:partial-clairvoyant-separation}
  For every $j \in C(t)$, there is no job $i$ with $r_i < t_j$ being processed during $[t_j, t]$.
\end{corollary}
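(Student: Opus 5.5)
The claim follows directly from \Cref{obs:clairvoyant-jobs-block-earlier-jobs}, applied at the time $t' = t_j$. So the plan is to check that observation's hypotheses at $t_j$ and then read off the conclusion.

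Fix $j \in C(t)$ and any job $i$ with $r_i < t_j$. I first record the facts about $j$ at time $t_j$:
\begin{itemize}
\item $j$ is executed at $t_j$, by definition of $t_j$.
\item $j \in C(t_j)$. Since $j$ is clairvoyant at $t$, it has already emitted by $t$. Its progress only changes while it runs, and $t_j$ is its last execution before $t$. Hence $y_j(t_j^+) = y_j(t) > \alpha p_j$. Adopting the convention that execution at $t_j$ means $j$ runs on a small interval right after (or around) $t_j$, the job is clairvoyant while being executed there. If this is delicate, I replace $t_j$ by points slightly earlier in the last execution interval of $j$, which already lie after $s_j$.
\end{itemize}

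Now take $i$ with $r_i < t_j$. If $C_i \le t_j$, then $i$ is not processed during $[t_j,t]$ and there is nothing to show. Otherwise $r_i < t_j < C_i$. Then \Cref{obs:clairvoyant-jobs-block-earlier-jobs}, with $k = j$ and $t' = t_j$, says that $i$ is not executed during $[t_j, C_j]$. Since $j \in C(t) \subseteq A(t)$, we have $C_j > t$, so $[t_j,t] \subseteq [t_j,C_j]$, which gives the claim.

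The only step I expect to need care is the boundary issue at $t_j$: whether $j$ counts as clairvoyant exactly at its last execution time. If $j$ emitted exactly at $t_j$, so that $s_j = t_j$, \Cref{obs:clairvoyant-jobs-block-earlier-jobs} explicitly covers this case ("in particular, this applies to a job $k$ that emits at time $t'$"). So either way the hypotheses are met.
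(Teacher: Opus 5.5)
Your proposal is correct and follows the paper's route exactly: the paper states this as an immediate corollary of \Cref{obs:clairvoyant-jobs-block-earlier-jobs}, applied with $k=j$ and $t'=t_j$, and uses $C_j > t$ for $j \in C(t)$. Your extra checks fill in details the paper leaves implicit, namely that $j \in C(t_j)$ because $y_j$ is constant on $[t_j,t]$, and that the case $C_i \le t_j$ is trivial.
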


\begin{lemma}\label{lemma:set-L-small}
    For every job $i$, it holds that $|L_i| \leq 1$.
\end{lemma}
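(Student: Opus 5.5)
I argue by contradiction: suppose $L_i$ contains two distinct jobs $j_1, j_2$. Both lie in $C(t)$, and both satisfy $t_{j_1}, t_{j_2} > r_i$. The basic tool is \Cref{lemma:partial-clairvoyant-separation}: for every $j \in C(t)$, no job released before $t_j$ is processed during $[t_j,t]$.

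First I order the two jobs. Since they are distinct and each is processed at its own time, I may assume without loss of generality that $t_{j_1} < t_{j_2}$; equality is impossible because clairvoyant jobs are never run in parallel. Job $j_2$ is processed at time $t_{j_2} \in [t_{j_1}, t]$. By \Cref{lemma:partial-clairvoyant-separation} applied to $j_1$, this forces $r_{j_2} \ge t_{j_1}$. So $j_2$ is released after the last time $j_1$ runs, and $j_2$ is alive during $[r_{j_2}, t] \subseteq [t_{j_1}, t]$.

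Next I use that $i \in R_{j_2}$. Take a path from $j_2$ to $i$ in $G_B$. By \Cref{obs:lifetime:2}, its lifetime is $[\min r, t]$. I would like to show that this lifetime starts before $r_{j_2}$, or more precisely that some job on the path is processed before $t_{j_1}$.
\begin{itemize}
\item \textbf{Case $C_i \le t_{j_1}$ or $i$ is not processed after $t_{j_1}$.} The path has to reach a job whose processing happens at or before $t_{j_1}$. Consider the first job $d$ on the path with $r_d < t_{j_1}$; it exists because $r_i < t_{j_1}$. Its predecessor $d'$ has $r_{d'} \ge t_{j_1}$, and $d$ is processed during $I_{d'}$. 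That processing happens at a time $\ge t_{j_1}$ by a job released before $t_{j_1}$, contradicting \Cref{lemma:partial-clairvoyant-separation} for $j_1$. The only escape is $d = j_1$ itself, so the argument needs care there.
\item \textbf{Case $d = j_1$.} This is the same lemma applied to the job released before $t_{j_1}$: it says jobs with $r < t_{j_1}$ are not processed during $[t_{j_1},t]$, and $j_1$ itself is processed exactly at $t_{j_1}$.
\end{itemize}
I will make this precise by applying the lemma at $t_{j_1}^+$, or by using $(t_{j_1}, t]$ together with the fact that $j_1$ runs at $t_{j_1}$ only if $r_{d'} \le t_{j_1}$.

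The main obstacle is exactly this boundary case, where the path from $j_2$ reaches $i$ through $j_1$ at the instant $t_{j_1}$. Suppose the path goes $\dots \to d' \to j_1 \to \dots$ with $d'$ alive at $t_{j_1}$ and $r_{d'} \ge t_{j_1}$. Then $r_{d'} = t_{j_1}$. I then continue past $j_1$: the successor of $j_1$ on the path is processed during $I_{j_1}$. Iterating the same "first job released before $t_{j_1}$" argument on the remaining suffix, I want to show that eventually some job released before $t_{j_1}$ is processed strictly after $t_{j_1}$, which is a contradiction. Alternatively, I can use $j_2 \in C(t)$ symmetrically: no job released before $t_{j_2}$ runs in $[t_{j_2},t]$, and $r_i < t_{j_1} < t_{j_2}$. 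I expect to need both applications of \Cref{lemma:partial-clairvoyant-separation}, together with continuity of processing, to close this gap.
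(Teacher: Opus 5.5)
Most of your argument is correct, and it is more direct than the paper's. You get $r_{j_2}\ge t_{j_1}$ straight from \Cref{lemma:partial-clairvoyant-separation}, which is the paper's Case~1b. You then take the first job $d$ with $r_d<t_{j_1}$ on a $j_2$--$i$ path. The paper instead first derives $j_1\in R_{j_2}$ from \Cref{obs:flow:cut-closure} and then analyses a $j_2$--$j_1$ path. Your argument also covers the paper's Case~2 (the case $d=i$), so your heading ``$C_i\le t_{j_1}$ or \dots'' is unnecessary.

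The gap is that you leave open the only case that remains, $d=j_1$, and neither repair you sketch can work. Continuing along the suffix after $j_1$ gives no contradiction: that suffix is just some $j_1$--$i$ path, and $i\in R_{j_1}$ is part of the hypothesis $j_1\in L_i$, so the suffix may legitimately go back in time. Applying the separation property to $j_2$ does not help either, because the problematic edge is $(d',j_1)$, the one entering $j_1$.

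What you need is that this edge cannot exist. An edge $(d',j_1)$ of $G_B$ must mean that $j_1$ receives a positive amount of processing during $I_{d'}$. This is the reading under which $G_B$ matches the flow network (cf.\ \Cref{lemma:flow:positive-paths}), and it is all the $\beta$-values see. Since $I_{d'}\subseteq[t_{j_1},t]$ and $j_1$ receives no processing after $t_{j_1}$, no such edge exists. Hence $d\neq j_1$ and your contradiction goes through. The paper's Case~1a tacitly uses the same fact for the last edge of its $j_2$--$j_1$ path.

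This point is substantive, not cosmetic: if the single instant at which $j_1$ is last run counted as processing, the lemma would be false. Take $\alpha=\frac12$, $r_i=r_{j_1}=0$, $p_i=10$, $p_{j_1}=2$, $r_{j_2}=2.5$, $p_{j_2}=0.2$ and $t=2.65$. The algorithm runs $j_1$ on $[2,2.5)$ and $j_2$ afterwards. Both jobs lie in $C(t)$, with $t_{j_1}=2.5=r_{j_2}$. An instant edge $(j_2,j_1)$ would then give $i\in R_{j_2}$ and $L_i=\{j_1,j_2\}$. So the case $d=j_1$ has to be excluded through this positive-processing reading of edges, not by a combinatorial argument on the rest of the path.
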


\begin{proof}
    For sake of contradiction, suppose that $|L_i| \geq 2$. Let $j_1,j_2 \in L_i$ with $j_1 \neq j_2$ and $t_{j_1} < t_{j_2}$. Note that $t_{j_1} = t_{j_2}$ is not possible as $j_1$ and $j_2$ are clairvoyant at time $t$. 
    We distinguish two cases.

    \textbf{Case 1:} Job $i$ is not being processed in $[t_{j_1},t]$. In particular, this implies $t_i < t_{j_1}$.
    We again distinguish two cases. %

    \textbf{Case 1a:} $t_{j_1} \leq r_{j_2}$. Since $i \in R_{j_2}$ and $t_i < t_{j_1} \leq r_{j_2}$, we conclude $t_{j_1}$ is in the interior of $I(R_{j_2})$. By~\Cref{obs:flow:cut-closure}, this implies $j_1 \in R_{j_2}$. Thus, there must be a path $P$ in the borrow graph from $j_2$ to $j_1$. Since $t_{j_1} \leq r_{j_2}$, we have $t_{j_1} < r_{j_2}$ as $j_1 \in C(t_{j_1})$ holds by definition of $L_i$. The fact that $t_{j_1} < r_{j_2}$ implies that the path $P$ must contain a job $j'$ with $r_{j'} < t_{j_1}$ and $t_{j'} > t_{j_1}$; otherwise $P$ would not be a $j_2$-$j_1$-path.
    However, this is a contradiction to \Cref{lemma:partial-clairvoyant-separation}.
    
    \textbf{Case 1b:} $r_{j_2} < t_{j_1}$.
    In this case, we again have a contradiction to \Cref{lemma:partial-clairvoyant-separation} because $j_2$ is being processed after $t_{j_1}$.

    \textbf{Case 2:} Job $i$ is being processed in $[t_{j_1},t]$. This case is also an immediate contradiction via \Cref{lemma:partial-clairvoyant-separation}, because it requires $r_i < t_{j_1}$.
\end{proof}

\begin{proof}[Proof of \Cref{thm:local:clairvoyant}]
  Using \Cref{thm:main:borrowing}, we have
\begin{align*}
  |C(t) \setminus O(t)| = \sum_{j \in C(t) \setminus O(t)} \frac{p_j(t)}{p_j(t)} 
  &= \sum_{j \in C(t) \setminus O(t)}  \sum_{i \in O(t) \cap R_j} \frac{\beta(j,i)}{p_j(t)} \\
  &= \sum_{j \in C(t) \setminus O(t)}  \sum_{\substack{i \in O(t) \cap R_j \\ \text{s.t.~}j \in H_i}} \frac{\beta(j,i)}{p_j(t)} + \sum_{j \in C(t) \setminus O(t)}  \sum_{\substack{i \in O(t) \cap R_j \\ \text{s.t.~} j \in L_i}} \frac{\beta(j,i)}{p_j(t)} \ .
\end{align*}

The final equality holds because for every $j \in C(t) \setminus O(t)$, we have $j \notin O(t) \cap R_j$, and thus, \Cref{lemma:possible-borrows} implies that either $j \in H_i$ or $j \in L_i$.
Now, note that for every $j \in C(t) \setminus O(t)$ and $i \in O(t)$ we have $\beta(j,i) \leq p_j(t)$. Moreover, \Cref{lemma:set-L-small} gives that  
and $|L_i| \leq 1$, hence we conclude that 
\[
  \sum_{j \in C(t) \setminus O(t)}  \sum_{\substack{i \in O(t) \cap R_j \\ \text{s.t. }j \in L_i}} \frac{\beta(j,i)}{p_j(t)} 
  \leq \sum_{j \in C(t) \setminus O(t)}  \sum_{\substack{i \in O(t) \cap R_j \\ \text{s.t. } j \in L_i}} 1 
  \leq \sum_{i \in O(t)} \sum_{\substack{j \in C(t) \setminus O(t) \\ \text{s.t. } j \in L_i}} 1 
  \leq |O(t)| \ .
\]

For the other part, \Cref{lemma:partial-rem-lb} yields
\begin{align*}
  \sum_{j \in C(t) \setminus O(t)}  \sum_{\substack{i \in O(t) \cap R_j \\ \text{s.t. } j \in H_i}} \frac{\beta(j,i)}{p_j(t)} 
  &\leq \sum_{j \in C(t) \setminus O(t)} \bigg( \bigg(  \sum_{\substack{i \in O(t) \cap N(t) \\ \text{s.t. } j \in H_i}} \frac{\beta(j,i)}{p_j(t)} \bigg) +  \bigg(  \sum_{\substack{i \in O(t) \setminus N(t) \\ \text{s.t. } j \in H_i}} \frac{\beta(j,i)}{p_j(t)} \bigg) \bigg) \\
  &\leq \frac{1}{1-\alpha} \sum_{j \in C(t) \setminus O(t)} \bigg( \bigg(  \sum_{\substack{i \in O(t) \cap N(t) \\ \text{s.t. } j \in H_i}} \frac{\beta(j,i)}{y_i(t)} \bigg) +  \bigg(  \sum_{\substack{i \in O(t) \setminus N(t) \\ \text{s.t. } j \in H_i}} \frac{\beta(j,i)}{p_i} \bigg) \bigg) \\
  &\leq \frac{1}{1-\alpha}  \bigg(  \sum_{i \in O(t) \cap N(t)} \sum_{j \in C(t) \setminus O(t)} \frac{\beta(j,i)}{y_i(t)}  +  \sum_{i \in O(t) \setminus N(t)} \sum_{j \in C(t) \setminus O(t)} \frac{\beta(j,i)}{p_i} \bigg) \ .
\end{align*}
Since $y_i(t) = p_i$ for every $i \in O(t) \setminus N(t)$, \Cref{thm:main:borrowing} implies that the above is at most
\begin{align*}
\frac{1}{1-\alpha}  \sum_{i \in O(t) \cap N(t)} 1  +  \frac{1}{1-\alpha} \sum_{i \in O(t) \setminus N(t)} 1  \leq \frac{1}{1-\alpha} |O(t)| \ .
\end{align*}

This concludes the proof of the theorem.
\end{proof}

\section{Lower Bounds}
\label{app:lowerbounds}

For a fixed algorithm and a fixed instance, we denote by $\delta(t)$ and $\delta(t,1)$ the number of unfinished jobs and the number of unfinished jobs with a remaining  processing time of at least $1$ at time $t$ in the algorithm's schedule, respectively, and by $\delta^*(t)$ the number of unfinished jobs at time $t$ in an optimal solution. The following technique is well-known in literature~\cite{MotwaniPT94,AzarLT21,AzarLT22}.

\begin{proposition}[Denial of Service]\label{prop:dos}
    If there exists an instance such that $\delta(t,1) \geq \rho \cdot \delta^*(t)$ for some constant $\rho$ at some time $t$, then the algorithm has a competitive ratio of at least $\Omega(\rho)$. Moreover, if for every sufficiently large integer $k$ there exists an instance such that $\delta^*(t) = k$ and $\delta(t,1) \geq \rho \cdot \delta^*(t)$ for some constant $\rho$ at some time $t$, then the algorithm has a competitive ratio of at least $\rho$.
\end{proposition}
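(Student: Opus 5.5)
The plan is to convert the pointwise gap between $\delta(t,1)$ and $\delta^*(t)$ into a gap in total flow time via the standard denial-of-service construction. First I fix the instance $\mathcal{I}$ and the time $t$ given by the hypothesis. Then I append to $\mathcal{I}$ a long stream of unit-size jobs: one job of processing time $1$ released at every integer time $t, t+1, \ldots, t+M-1$, where $M$ is a large parameter chosen at the end. Since the algorithm is online, its behaviour up to time $t$ is unchanged; for a randomized algorithm the hypothesis is read in expectation, and the same steps go through by linearity.

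For the optimum I take the schedule that matches the optimal solution of $\mathcal{I}$ up to time $t$. From then on it processes each new unit job immediately in its unit slot, which costs $1$ per job. Afterwards it finishes the $\delta^*(t)$ leftover jobs. Its total flow time is at most $\mathrm{OPT}(\mathcal{I}) + M + \delta^*(t)\,(M + P)$, where $P$ bounds the remaining work. For the algorithm I use the argument that the unit stream is exactly critically loaded. If the algorithm spends $w$ units of time on the old jobs during $[t, t+M]$, then at least $w$ unit jobs are queued at each later moment. Conversely, every old job with remaining time at least $1$ that it does not touch stays alive throughout. Counting both contributions shows that, for all but a bounded portion of the window, the number of alive jobs is at least roughly $\delta(t,1)$. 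So the algorithm's flow time is at least $\approx \delta(t,1)\cdot M - O(\text{const})$.

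The main obstacle is making this counting rigorous. I would handle it by a charging step: each old job with remaining time $\ge 1$ can only leave the system if the algorithm invests at least one unit of work in it, and that unit permanently creates a backlog of one unit job, because the arrival rate equals the machine rate. Hence the quantity (alive old jobs with remaining time $\ge 1$) $+$ (queued unit jobs) never drops below $\delta(t,1) - 1$ for any times in the window. Letting $M \to \infty$, the ratio tends to $\delta(t,1)/(\delta^*(t)+1)$, which is $\Omega(\rho)$ when $\delta^*(t) \ge 1$; the case $\delta^*(t) = 0$ is trivial. For the second statement, with $\delta^*(t) = k$, the ratio tends to $\rho k/(k+1)$. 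Letting $k \to \infty$ gives a lower bound of exactly $\rho$.
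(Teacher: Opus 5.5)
Your proposal is correct and follows the paper's proof. Both append a unit job at each of the next $M$ integer times after $t$, note that the algorithm keeps at least about $\delta(t,1)$ jobs alive throughout $[t,t+M]$ while the optimum keeps at most $\delta^*(t)+1$, and let $M\to\infty$ (then $k\to\infty$ for the second part); your critical-load charging argument makes rigorous the step the paper only asserts, and done carefully it gives the paper's count $\delta(t,1)+1$ rather than $\delta(t,1)-1$. One correction: the case $\delta^*(t)=0$ is not trivial but must be excluded, because the hypothesis then holds for every $\rho$ while the construction yields only a ratio of $\delta(t,1)+1$; the paper's inequality $\frac{\delta(t)+1}{\delta^*(t)+1}\ge\frac{\rho}{2}$ tacitly assumes $\delta^*(t)\ge 1$ as well.
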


\begin{proof}
  After time $t$, the adversary releases at each of the next $M$ integer times a job of length $1$. Thus, the best strategy for the algorithm is to run such jobs to completion before any other job with remaining processing time at least $1$ at time $t$. 
  Therefore, the algorithm has between time $t$ and $t+M$ at least $\delta(t) + 1$ unfinished jobs. Similarly, we can assume that an optimal solution has at most $\delta^*(t) + 1$ pending jobs during this time. Therefore, as $M$ tends to $\infty$, the competitive ratio of the algorithm is at least $\frac{\delta(t) + 1}{\delta^*(t) + 1} \geq \frac{\rho}{2}$, proving the first part of the statement.

  For the second part, we can analogously conclude that the algorithm's competitive ratio for every instance of the family is at least $\frac{\delta(t) + 1}{\delta^*(t) + 1} \geq \frac{\rho k + 1}{k + 1}$, which tends to $\rho$ as $k \to \infty$. 
\end{proof}

\subsection{Deterministic Lower Bound}

\begin{theorem}
For every $\alpha \in (0,1)$, every $\alpha$-clairvoyant deterministic algorithm has a competitive ratio of at least $\Omega(\frac{1}{1-\alpha})$ for minimizing the total flow time on a single machine.    
\end{theorem}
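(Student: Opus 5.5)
Set $m \coloneq \lfloor \frac{1}{1-\alpha} \rfloor$, which we may assume is at least some constant. We build an adversarial instance against a fixed deterministic $\alpha$-clairvoyant algorithm and finish with the denial-of-service technique of \Cref{prop:dos}. The goal is a time $t$ at which the algorithm has $\Omega(m)$ unfinished jobs with remaining processing time at least $1$, while some schedule has only $O(1)$ unfinished jobs. The adversary uses the fact that a job's length is hidden until an $\alpha$-fraction of it has been processed.

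\textbf{Construction.} At time $0$, release $n \coloneq m$ jobs, each with a priori unknown length. The adversary fixes lengths adaptively. Whenever a job has received processing $y$ while still non-clairvoyant, the adversary may declare its length to be any $p \ge y/\alpha$, with the signal emitted exactly when $y = \alpha p$. Scale time so that the horizon is $T \coloneq n \cdot L$ for a large length unit $L$, say $L \ge 2/(1-\alpha)$.

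The adversary watches the algorithm until time $T_0 \coloneq n \alpha L$. Since the machine has processed $n\alpha L$ units in total, some job has received at most $\alpha L$ units. Let the adversary decide:
\begin{itemize}
\item the $n-1$ jobs with the most processing at time $T_0$ get length $L$;
\item one job gets length $L' \gg L$.
\end{itemize}
A job with length $L$ emits exactly when it reaches $\alpha L$ units. This is consistent because no job reached $\alpha L$ units before any signal, so the adversary never had to commit earlier.

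\textbf{Comparison with OPT.} OPT processes the short jobs to completion one after another, each in time $L$. Until it finishes them, it has a constant number of unfinished jobs ``in flight'' plus the remaining ones. To get an $\Omega(m)$ \emph{ratio} at one time $t$, the adversary needs a sharper separation. It chooses the short lengths so that the algorithm has spread roughly $\alpha L$ units over many jobs, while each short job's total length is only about $\alpha L/(1-\alpha)\cdot(1-\alpha)$.

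Concretely, set each short length to $\ell \coloneq y/\alpha$ for that job's current progress $y$. Then at time $T_0$ each short job has remaining work $(1-\alpha)\ell$. The algorithm has spent $n\alpha\ell$ units in total, whereas OPT could have fully completed about $n\alpha$ of these jobs, that is, $\alpha m \approx m - 1$ of them. This leaves OPT with about $n(1-\alpha) = O(1)$ unfinished jobs, while the algorithm still has $n = m$ unfinished jobs. Each of the algorithm's jobs has remaining work $(1-\alpha)\ell \ge 1$ once $\ell \ge 1/(1-\alpha)$. Applying \Cref{prop:dos} at $t = T_0$ then gives ratio $\Omega(m) = \Omega(\frac{1}{1-\alpha})$.

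\textbf{Main obstacle.} The hard step is making the adaptive length assignment rigorous. If the algorithm is unbalanced, say it runs one job far ahead, then that job's declared length must be consistent with it not having emitted earlier. The remaining jobs may also have received nearly zero processing, in which case the algorithm is not wasteful on them.

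To handle this, I would argue by choosing $T_0$ as the first time the total processing reaches $n\alpha$. For each job, declare $p_j = \max\{1/\alpha \cdot y_j(T_0),\, 1\}$ for the ones that received processing, with ``no signal before $T_0$'' enforced by the $\max$. I would then charge jobs with small $y_j$ to OPT as also unfinished. This gives OPT at most $O(n(1-\alpha)) + $ (number of barely-processed jobs) unfinished jobs, against the algorithm's $n$.

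If this accounting fails, the fallback is to release the jobs in $m$ groups. Applying the averaging argument per group yields the $\Omega(m)$ gap on at least one group.
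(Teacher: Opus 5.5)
Your core idea in the ``Concretely'' paragraph is the same as the paper's. You hide all lengths until a time $T_0$, then declare $p_j \approx y_j(T_0)/\alpha$ so every job is just short of emitting, and OPT could have completed roughly an $\alpha$-fraction of the jobs. However, that paragraph silently assumes all jobs have equal progress (a single $\ell$, total work $n\alpha\ell$). The repair you propose for the unbalanced case does not work, for two reasons:
\begin{itemize}
\item The charging is backwards. Barely-processed jobs are the cheapest ones and exactly those OPT should finish. Counting them as OPT-unfinished gives a vacuous $\Theta(n)$ bound, e.g.\ for an algorithm that works on one job at a time.
\item A slack of $1$ per job is of the same order as the per-job budget $T_0/n=\alpha$, so it is not negligible. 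Suppose the algorithm round-robins over half of the jobs until $T_0=n\alpha$. Those jobs get $y_j=2\alpha$, hence $p_j=2$, and the others get $p_j=1$. By time $n\alpha\le n$ any schedule completes at most $n/2+n/4$ jobs, so OPT keeps at least $n/4$ unfinished jobs and you only obtain a constant ratio.
\end{itemize}
Your first construction has a similar consistency problem. The algorithm may already have pushed a job beyond $\alpha L$ before $T_0$, so declaring length $L$ for it contradicts the absence of a signal.

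The missing step is short, and it makes both the ``obstacle'' and the group fallback unnecessary. Declare $p_j = y_j(T_0)/\alpha + \epsilon$ with $\epsilon \le T_0/n^2$; the paper uses $k$ jobs, $T_0=k$, $\epsilon=1/k$. This guarantees no signal before $T_0$ and positive remaining work for the algorithm. Sort the jobs so that $y_1\le\dots\le y_n$, and let OPT complete the $\ell=\lfloor\alpha n\rfloor-1$ least-processed jobs. The $\ell$ smallest values are at most the overall average, so $\sum_{i\le\ell}y_i\le\frac{\ell}{n}T_0$, and hence
\[
\sum_{i\le\ell}p_i\le\frac{\ell}{\alpha n}T_0+\ell\epsilon\le T_0-\frac{T_0}{\alpha n}+\frac{T_0}{n}\le T_0 .
\]
This handles arbitrarily unbalanced schedules at once. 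OPT is left with at most $(1-\alpha)n+2$ unfinished jobs, while the algorithm has all $n$ unfinished. After scaling so that all remaining times are at least $1$, \Cref{prop:dos} yields the bound $\Omega(\frac{1}{1-\alpha})$.
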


\begin{proof}
    Fix any deterministic algorithm. Initially, there are $k$ jobs $J_1=\{1,\ldots,k\}$ released at time $0$ for some large constant $k$. We ensure that no job emits its signal in the algorithm's schedule until time $k$. Thus, the algorithm has $k$ unfinished jobs at time $k$.
    For every $i \in J_1$, let $y_i$ be the total work to job $i$ in the algorithm's schedule until time $k$, and assume w.l.o.g.\ that $y_1 \leq \ldots \leq y_k$. 
    We set for every job $i \in J_1$ its processing time to $p_i =  \frac{1}{\alpha}y_i + \frac{1}{k}$ and observe that $y_i < \alpha p_i$.
    Note that every unfinished job at time $k$ has a constant remaining processing time, and $\delta(k,1) \geq k$ by scaling.

    We now present an alternative strategy until time $k$.
    Let $\ell = \floor{\alpha k} - 1$. 
    We distinguish two cases.
    If $y_{\ell+1} \leq 1$, we can easily conclude that $\sum_{i=1}^{\ell} y_i \leq \ell$. Thus, we can complete jobs $1,\ldots,\ell$ until time $k$, because 
    \[
        \sum_{i=1}^\ell p_i =  \sum_{i=1}^\ell \left( \frac{1}{\alpha} y_i + \frac{1}{k} \right) \leq \frac{1}{\alpha}\ell + 1 \leq k - \frac{1}{\alpha} + 1 \leq k \ .
    \]
    Otherwise, that is, $y_{\ell+1} \geq 1$, our assumption on the jobs indices and the choice of $\ell$ gives that $\sum_{i=1}^{\ell+1} y_i \leq \alpha k$.
    We can again complete jobs $1,\ldots,\ell$ until time $k$, because
    \[
        \sum_{i=1}^\ell p_i =  \sum_{i=1}^\ell \left( \frac{1}{\alpha} y_i + \frac{1}{k} \right) \leq k - \frac{1}{\alpha}y_{\ell+1} + 1 \leq k \ .   
    \]
    Therefore, in any case, there are at most $\delta^*(k) \leq k - \ell = \cO((1-\alpha)k)$ jobs unfinished at time $k$ in this strategy.
    Then, the lemma follows via \Cref{prop:dos}.
\end{proof}

\begin{theorem}
	\label{thm:lb:2}
    For every $\alpha \in (0,1)$, every $\alpha$-clairvoyant deterministic algorithm has a competitive ratio of at least $2$.
\end{theorem}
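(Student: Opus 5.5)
We will use the Denial of Service technique (\Cref{prop:dos}) in its second form: for every large integer $k$ we build an instance and a time $t$ with $\delta^*(t) = k$ and $\delta(t,1) \geq 2k - o(k)$. The instance is adaptive to the deterministic algorithm, and it exploits that the algorithm learns nothing about a job before it has received an $\alpha$-fraction of its processing.

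\textbf{Construction.} At time $0$ we release $2k$ jobs. The adversary commits to processing times only when forced, namely when a job emits. We run the algorithm until time $T$, which we choose large relative to $k$ (scaled so that every unfinished job at the end has remaining processing time at least $1$). During this period the adversary answers as follows. As long as fewer than $k$ jobs have emitted, every job $i$ is treated as \enquote{long}: once $y_i$ would reach a threshold, we declare $p_i$ so large that $i$ has not emitted, so the algorithm still knows nothing. The invariant we maintain is that the algorithm finishes no job before time $T$. At time $T$ each job has progress $y_i$, and we fix $p_i = \frac{1}{\alpha} y_i + \varepsilon$, as in the previous proof. Then no job has emitted, so $\delta(T,1) = 2k$ after scaling. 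The offline solution instead concentrates all work on the $k$ jobs of smallest $y_i$, whose total progress is at most $T/2$ by averaging. Their total processing time is then roughly $\frac{1}{\alpha}\cdot\frac{T}{2}$, and it must fit into $T$ for the offline solution to finish them.

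\textbf{Main obstacle.} The averaging bound yields $\delta^*(T) \le k$ only when $\alpha \ge 1/2$, which is too weak for small $\alpha$. The hard part is therefore to make the bound independent of $\alpha$. To handle this, we would first try a two-phase instance that uses emitted jobs. In phase 1 we release $k$ jobs and choose lengths so that each of them emits, at which point the algorithm learns its remaining time. We pick the lengths so that emission happens late: each job has remaining $(1-\alpha)p$ after receiving $\alpha p$, and the adversary chooses which job emits first based on the algorithm's allocation. The goal is to force a situation like SRPT with a bad tie, where the algorithm has spread work over many jobs while OPT has run jobs sequentially and completed half of them.

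The concrete plan is as follows. First, release two identical-looking jobs and, because the algorithm is deterministic, make the job it works on more have the larger $p$. This yields a ratio-2 gadget: at a suitable time, the algorithm has two unfinished jobs, each with constant remaining processing time, while OPT has completed the short job and has one unfinished. Second, replicate this gadget $k$ times in sequence, or in parallel with scaling, so that $\delta(t,1) \ge 2k - O(1)$ while $\delta^*(t) \le k + O(1)$. Third, apply the second part of \Cref{prop:dos} to obtain a ratio tending to $2$.

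Checking that the single gadget works for every $\alpha\in(0,1)$ is the crux. The first job to reach progress $\alpha p$ reveals its length, and the adversary must set that length so that finishing it still leaves the other job with constant remaining time. This requires the second job's length to be chosen after the first job's signal, which is possible because the second job has not yet emitted.
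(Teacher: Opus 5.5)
Your high-level outline is the right one: a two-job gadget in which the job that received more processing is declared the longer one, replicated $k$ times, then the second part of \Cref{prop:dos}. But the proposal stops exactly where the proof has to happen, and both concrete steps are missing.

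\textbf{The gadget.} You never specify or verify it, and the mechanism you sketch (let the first job emit, then choose the second job's length) is not what makes it work. The clean construction fixes both lengths simultaneously, before anything emits.
\begin{itemize}
\item Release two jobs at the start $s$ of a phase of length $\lambda$. Keep both lengths undetermined until time $s+\alpha\lambda$.
\item At that time, give the job with more progress length $2\lambda$ and the other job length $\lambda$.
\item Consistency: the two jobs together received at most $\alpha\lambda$ by then. So the lighter job has progress at most $\frac{\alpha}{2}\lambda < \alpha\lambda$, and the heavier at most $\alpha\lambda < 2\alpha\lambda$. Neither has emitted, so the choice is consistent with the $\alpha$-clairvoyant model.
\item Over the whole phase, the short job receives at most $\frac{\alpha}{2}\lambda + (1-\alpha)\lambda$ and the long job at most $\lambda$. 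Hence both keep at least $\frac{\alpha}{2}\lambda$ remaining work, while an optimal schedule finishes the short job inside the phase.
\end{itemize}

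\textbf{The replication.} This is the more serious gap. Replicating in sequence with comparable phase lengths does not give $\delta(t,1) \geq 2k - O(1)$. The algorithm can spend time in later phases finishing the leftovers of earlier gadgets, whose remaining lengths it may even know by then, and nothing in your plan prevents this. The missing idea is geometrically shrinking phases:
\begin{itemize}
\item Phase $i$ has length $\lambda^i$, phases run in the order $i = k, \ldots, 1$, and $\lambda = \frac{4+\alpha}{\alpha}$.
\item All phases after phase $i$ together last at most $\lambda^i/(\lambda-1) = \frac{\alpha}{4}\lambda^i$. This is less than the $\frac{\alpha}{2}\lambda^i$ that each phase-$i$ job still needs at the end of its phase.
\item Hence each of the $2k$ jobs still has remaining time at least $\frac{\alpha}{4}\lambda^i \geq \frac{4+\alpha}{4} > 1$ at the final time $t$, while the optimum has exactly $k$ unfinished jobs.
\end{itemize}
Your alternative \enquote{in parallel with scaling} is essentially your first construction, which you already observed fails for $\alpha < \frac{1}{2}$.
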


\begin{proof}
    Let $k$ be an integer and $\lambda = \frac{4 + \alpha}{\alpha}$. Fix any deterministic algorithm. We construct an instance of $k$ consecutive phases indexed by $k,\ldots,1$, where phase $i$ has length $\lambda^i$. Let $t$ denote the end of the last phase.

    At the start of every phase $i$ at time $s_i$, the adversary releases two jobs $j^i_1$ and $j^i_2$. Then, it waits until time $t_i = s_i + \alpha \lambda^i$ while ensuring that none of both jobs emits its signal. Assume w.l.o.g.\ that $y_{j_{1}^i}(t_i) \geq y_{j_{2}^i}(t_i)$. Thus, $y_{j_{1}^i}(t_i) \leq \alpha \lambda^i$ and $y_{j_{2}^i}(t_i) \leq \frac{\alpha}{2}\lambda^i$. At time $t_i$ the adversary sets $p_{j_{1}^i} = 2\lambda^i$ and $p_{j_{2}^i} = \lambda^i$. Since $\alpha p_{j_{1}^i} = 2 \alpha \lambda^i > y_{j_{1}^i}(t_i)$ and  $\alpha p_{j_{2}^i} = \alpha \lambda^i > y_{j_{2}^i}(t_i)$, this choice is consistent with our assumption that none of both jobs emitted its signal at time $t_i$.
    During phase~$i$, job $j_{1}^i$ can be processed by at most $\lambda^i$ units and job $j_{2}^i$ can be processed by at most $\lambda^i - \frac{\alpha}{2}\lambda^i$ units.
    Thus, at the end of phase $i$ (at time $s_{i-1}$) the remaining processing time of each of $j_{1}^i$ and $j_{2}^i$ is at least $\frac{\alpha}{2}\lambda^i$.
    Now observe that after phase $i$, no algorithm can finish any of $j_{1}^i$ and $j_{2}^i$ until time $t$, and moreover each of $j_{1}^i$ and $j_{2}^i$ has a remaining processing time of at least $\frac{\alpha}{4}\lambda^i = \frac{\alpha}{2}\lambda^i - \frac{\alpha}{4}\lambda^i$ at time $t$.
    To see this, note that
    \[
        t - s_{i-1} \leq \sum_{i'=0}^{\infty} \lambda^{i-1-i'} = \lambda^{i-1} \sum_{i'=0}^{\infty} \lambda^{-i'} = \lambda^{i-1} \cdot \frac{1}{1 - \lambda^{-1}} = \lambda^{i-1} \cdot \frac{4+\alpha}{4} = \frac{\alpha}{4}\lambda^i \ . 
    \]

    Therefore, the algorithm has $2k$ unfinished jobs at time $t$, each with a remaining processing time of at least $\frac{\alpha}{4}\lambda^1 > 1$. Thus, $\delta(t,1) = 2k$. An optimal solution has only $\delta^*(t) = k$ unfinished jobs at time $t$, as it can complete in every phase $i$ job $j_2^i$.
    Then, the lemma follows via \Cref{prop:dos}.
\end{proof}

\subsection{Randomized Lower Bound}

\begin{theorem}\label{theorem:randomized-lb}
    For every $\alpha \in (\frac{1}{2},1)$, every $\alpha$-clairvoyant randomized algorithm has a competitive ratio of at least $\Omega(\frac{1}{1-\alpha})$ for minimizing the total flow time on a single machine.    
\end{theorem}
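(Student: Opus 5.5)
We use Yao's minimax principle: we build a probability distribution over instances and show that every deterministic $\alpha$-clairvoyant algorithm has expected total flow time $\Omega(\frac{1}{1-\alpha})$ times the expected optimum. This turns the deterministic construction above into a randomized one. The first half is the same: at time $0$ we release $k$ jobs, and we make sure no job emits its signal before time $T=k$. The difference is that the processing times are no longer chosen adaptively from the algorithm's progress $y_i$. Instead, each job independently gets processing time $p_i = \frac{1}{\alpha}\cdot X_i$, where $X_i$ is drawn from a fixed distribution with support in $[\frac{1}{2}\alpha\cdot\text{const}, k]$. The point is that a job can only be known once it has received work $X_i$. We then finish with the denial-of-service technique (\Cref{prop:dos}), applied in expectation: after time $T$ we release a long stream of unit jobs. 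This reduces the task to comparing $\EX[\delta(T,1)]$ with $\delta^*(T)$.

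The key steps, in order, are as follows. (1) Choose the distribution. Natural candidates are two-point or geometric distributions, with mass chosen so that a $(1-\alpha)$-fraction is ``short'' (total $p$ small enough that the optimum, which knows the $p_i$, can finish all but $\cO((1-\alpha)k)$ jobs by time $T$). The assumption $\alpha>\frac12$ should make this work: with $p_i\in[\frac{1}{\alpha}X_i,\cdot]$ and $\alpha>\frac12$, the remaining part $(1-\alpha)p_i$ of a job is strictly less than its pre-signal part. So the total work the optimum needs is at most about $\frac{1}{\alpha}\sum X_i<2\sum X_i$, while the signals leak little. (2) Bound the optimum. Exactly as in the deterministic proof, sort by $p_i$ and show via concentration (Chernoff over the i.i.d.\ $X_i$) that with high probability the optimum completes the $\floor{\alpha k}-1$ shortest jobs by time $T$. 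This gives $\delta^*(T)=\cO((1-\alpha)k)$. (3) Bound the algorithm. In $T=k$ units of time, a deterministic algorithm has spread its work over the jobs without any information, except for signals from jobs whose $X_i$ it has already exceeded. We show that in expectation only $\cO((1-\alpha)k)$ jobs can be brought past their signal, and that each of the remaining jobs still has remaining work $\ge(1-\alpha)p_i=\Omega(1)$ after scaling. Hence $\EX[\delta(T,1)]=\Omega(k)$. (4) Combine with \Cref{prop:dos} and Yao's principle to get a ratio of $\Omega(\frac{1}{1-\alpha})$.

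The main obstacle is step (3) together with the choice of distribution in step (1). These two must be tuned jointly. We need a distribution under which the signal reveals too late to help, so that to finish a job the algorithm must essentially invest the full $p_i$ blindly. At the same time, ``blind'' investment must cost the algorithm a factor $\frac{1}{1-\alpha}$ more than the informed optimum. Our first attempt would take $X_i$ uniform on a geometric grid $\{2^{-m},\dots,1\}$, scaled so that the expected budget matches $T$. We would then argue via a charging argument. Fix any deterministic allocation of work by time $T$. Any job receiving work $w$ emits with probability $\pr[X_i\le w]$. Summing over jobs and using $\sum w\le T$ should bound the expected number of jobs completed or emitted by $\cO((1-\alpha)k)$, with the tail $(1-\alpha)p_i$ guaranteeing the $\delta(T,1)$ bound. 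If the geometric grid gives only a logarithmic loss, we would fall back to a two-point distribution with the short value having probability $\alpha$.
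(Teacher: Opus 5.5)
There is a genuine gap, and it is at the step you flag yourself. With i.i.d.\ processing times released at time $0$, your two targets, $\delta^*(T)=\cO((1-\alpha)k)$ and $\EX[\delta(T,1)]=\Omega(k)$, cannot hold together once $\alpha$ is close to $1$, whatever distribution you choose.

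To see this, suppose the optimum leaves at most $\epsilon k$ jobs at time $T$. Let $q$ be the $(1-\epsilon)$-quantile of $p$ and $M=\EX[p\cdot\ind[p\le q]]$; up to lower-order terms, $T\ge kM$. Take the non-clairvoyant (hence admissible) algorithm that processes jobs one at a time and abandons a job once it has received $u=\min\{q,M/\sqrt{\epsilon}\}$ units.
\begin{itemize}
\item Its expected cost per probed job is $\EX[\min\{p,u\}]\le M+u\epsilon\le(1+\sqrt{\epsilon})M$, since $u\cdot\ind[u<p\le q]\le p$.
\item Each probed job completes with probability $\Pr[p\le u]\ge 1-\epsilon-\sqrt{\epsilon}$, by Markov applied to $p\cdot\ind[u<p\le q]$.
\item So by time $T$ it has probed about $k/(1+\sqrt{\epsilon})$ jobs and leaves at most about $3\sqrt{\epsilon}\,k$ unfinished.
\end{itemize}
With $\epsilon=\Theta(1-\alpha)$ this is $o(k)$. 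The best ratio your template could certify is therefore $\cO(1/\sqrt{1-\alpha})$.

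Your two-point fallback is an explicit instance of the failure. Run each job until it has received $\alpha$ times the short length, at which point a short job has emitted, and abandon it otherwise. This leaves only $\cO((1-\alpha)k)$ jobs, so the ratio is constant. Separately, under Yao's principle the instance is fixed before the algorithm runs, so you cannot ``make sure'' that no job emits before $T$. The algorithm can force signals by concentrating work; that part of the deterministic proof relied on adaptivity.

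The missing idea is that, against an oblivious adversary, the gap has to be a lower-order effect produced by memorylessness, which forces $k$ to be exponential in $\frac{1}{1-\alpha}$. The paper's construction is as follows.
\begin{itemize}
\item Take $p_j=1+Y_j$ with $Y_j$ geometric of mean $2$, and $k=\lfloor 2^{1/(2-2\alpha)}\rfloor$.
\item With probability $1-\cO(1/k)$ every $p_j\le\frac{1}{1-\alpha}$, i.e.\ $p_j-1\le\alpha p_j$. So w.l.o.g.\ the signal reveals only the last unit, and the algorithm is effectively non-clairvoyant.
\item By memorylessness, each unit of work on a non-emitted job triggers a signal with probability $\frac12$, whichever job is chosen. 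So by $t=\lfloor 3(k-k^{3/4})\rfloor$ the expected number of signals is at most $t/3+1$, leaving $\Omega(k^{3/4})$ jobs with residual at least $1$.
\item The optimum, knowing the sizes, only leaves jobs longer than $\frac{\log k}{4}$. Chebyshev's inequality on the total volume and on the number of such long jobs shows it leaves $\cO(k^{3/4}/\log k)$ of them.
\end{itemize}
This gives a ratio of $\Omega(\log k)=\Omega(\frac{1}{1-\alpha})$. The hypothesis $\alpha>\frac12$ plays no structural role there: it only makes $k\ge 2$, and for $\alpha\le\frac12$ the bound is trivial. Your heuristic comparing pre-signal and post-signal work is not what drives the bound.
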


We present a randomized instance on which every deterministic algorithm has an expected competitive ratio of at least $\Omega(\frac{1}{1-\alpha})$. The theorem then follows using Yao's principle. 
    
Fix any deterministic algorithm. We consider a distribution of instances where at time $0$ we release $k =  \floor{2^{1/(2-2\alpha)}}$ jobs with integer processing times $p_j = y_j + 1$ with $y_j$'s drawn independently from the geometric distribution with mean $2$. Note that $\EX[p_j] = 3$. Let $t = \floor{3(k-k^{3/4})}$ and let $L$ denote the event that for every job $j$ it holds that $p_j \leq \frac{1}{1-\alpha}$.

We first show that conditioned on $L$, the algorithm has many unfinished jobs at time $t$ in expectation.

\begin{lemma}\label{lemma:randomized-lb-alg}
    It holds that $\EX[\delta(t,1) \mid L] \geq \Omega(k^{3/4})$. 
\end{lemma}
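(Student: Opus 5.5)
We need to show that, conditioned on $L$, at time $t=\lfloor 3(k-k^{3/4})\rfloor$ the deterministic algorithm has $\Omega(k^{3/4})$ unfinished jobs with remaining processing time at least $1$ in expectation. Under $L$, every $p_j\le \frac{1}{1-\alpha}$, so $\alpha p_j \ge p_j-1$. This means a job emits only once it has received at least $p_j-1=y_j$ units. At that moment its remaining time is at most $1$, but the algorithm has already invested the full geometric part $y_j$. So the algorithm effectively learns nothing useful before spending $y_j$ units on a job.

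The key step is to exploit memorylessness of the geometric distribution. Think of processing in integer units. Before job $j$ emits, each additional unit spent on $j$ reveals only whether $y_j$ has been reached yet, and conditioned on not yet emitting, the residual $y_j$ is again geometric. I would argue (as in the classical non-clairvoyant lower bound of Motwani et al.) that any deterministic schedule is equivalent to processing jobs in some adaptive order. The work needed to push a job to its emission point is its full $y_j$ plus its final unit. Under $L$ we must be slightly careful, since conditioning truncates the geometric at $\frac{1}{1-\alpha}-1$. By the choice $k=\lfloor 2^{1/(2-2\alpha)}\rfloor$, the bound $\Pr[L]\ge (1-2^{-(1/(1-\alpha)-1)})^k$ is at least a constant. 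So I would either handle the truncated distribution directly, or bound $\EX[X\mid L]\ge \EX[X]-\EX[X\mathbf 1_{\bar L}]/\Pr[L]$ with a small error term.

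Next, bound the number of jobs the algorithm can complete by time $t$. The total work on completed jobs is $\sum_{\text{completed}} p_j$, where the $p_j$ are i.i.d.\ with mean $3$, and the adaptive order cannot bias them because of memorylessness. Using a martingale or optional-stopping argument, together with concentration (Chernoff for sums of geometrics, deviation of order $\sqrt{k}\ll k^{3/4}$), at most $k-\Omega(k^{3/4})$ jobs finish by time $t$ with constant probability. The remaining unfinished jobs either have not emitted, or have emitted with a remaining time that could be below $1$. Since all jobs are integer and work is integral, I would show that with a constant fraction the remaining time is at least $1$. Either the algorithm has spent at most $y_j-1$ units on such a job (so at least $2$ units remain), or the job is partially processed. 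Only jobs in the final partial unit, at most a constant number at any time under a unit-granularity normalization, can have remaining time below $1$. Hence $\delta(t,1)\ge \delta(t)-O(1)$, which gives the bound.

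The main obstacle is making the adaptivity/memorylessness argument rigorous for fractional processing under the conditioning on $L$. I would first reduce to schedules that process in whole units, claiming that fractional sharing never helps the count at time $t$. I would then apply Wald's identity to the stopping time given by the number of jobs completed, and handle the conditioning via the constant lower bound on $\Pr[L]$.
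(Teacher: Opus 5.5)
\textbf{The statement cannot be proved as written; your final step fails, and no repair is possible.}

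The failing step is $\delta(t,1)\ge\delta(t)-O(1)$, i.e.\ your claim that after a whole-unit normalization only $O(1)$ unfinished jobs can have remaining time below $1$. You observe yourself that under $L$ a job's remaining time at its signal is $(1-\alpha)p_j\le 1$. It is strictly below $1$ unless $p_j=\frac{1}{1-\alpha}$. The signal arrives after $\alpha p_j$ units, which is typically not an integer. An algorithm can stop there, so a job leaves $\delta(t,1)$ after only $\alpha p_j$ units rather than $p_j$. Rounding to integral preemptions preserves the number of completions, but not $\delta(t,1)$.

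This breaks the lemma itself. Take the deterministic algorithm that runs a fresh job until it emits and then switches to the next one. By time $\alpha P$, with $P=\sum_j p_j= 3k\pm O(\sqrt{k})$, every job has remaining time at most $1$. Since $k=\lfloor 2^{1/(2-2\alpha)}\rfloor$ gives $1-\alpha\approx\frac{1}{2\log_2 k}$, we get $t-3\alpha k\approx\frac{3k}{2\log_2 k}-3k^{3/4}=\Theta(k/\log k)$. This dwarfs the fluctuations of $P$, so with overwhelming probability $\alpha P<t$. Hence $\EX[\delta(t,1)\mid L]=o(1)$: only jobs with $p_j=\frac{1}{1-\alpha}$ can sit at exactly $1$, and there are $O(1/k)$ of them in expectation. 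So for $\alpha$ close to $1$, the only regime where the lemma matters, the statement is false. Your completion count, which is correct in spirit, controls $\delta(t)$, not $\delta(t,1)$.

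The paper's proof has the same hole, so you are not missing an idea it has. It moves the signal to $y_j$, forces the final unit right after emission, and rounds to integer preemptions. It justifies each step only as not increasing $\delta(t)$, and then equates ``not emitted'' with ``remaining time at least $1$''. The emit-and-switch algorithm shows these reductions are invalid for $\delta(t,1)$. Your handling of the conditioning on $L$ via $\EX[X\mid L]\ge\EX[X]-\EX[X\mathbf{1}_{\bar L}]/\Pr[L]$ is actually more careful than the paper's, which uses the unconditioned hazard rate $\frac12$.

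Separately, your justification of the completion bound is wrong as stated. With preemption, the completed jobs are not an unbiased i.i.d.\ sample: processor sharing completes the jobs with $p_j=2$ first. So Wald's identity cannot be applied to $\sum_{\text{completed}}p_j$. The correct accounting, which is what the paper's recursion $\mu_{t'}=\frac12(1-\mu_{t'-1})$ does, charges every unit of work spent on a job that has not yet received $y_j$ units. By memorylessness, each such unit ends that job's geometric phase with probability $\frac12$ given the past (for the unconditioned distribution). This pays for the work wasted on jobs left unfinished.
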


\begin{proof}
To prove this statement, we make the following two assumptions w.l.o.g.:
\begin{itemize}
    \item Every job emits its signal after $y_j$ units have been processed. This does not weaken the algorithm, because $y_j = p_j - 1 \leq \alpha p_j$ whenever event $L$ occurs. Moreover, whenever a job emits, we assume that in the next time unit the algorithm also completes it by processing the final unit of processing requirement. Note that this assumption cannot increase $\delta(t)$.
    \item All preemptions until time $t$ occur at integer times. Since $t$ is an integer, all processing times are integers and, by the previous assumption, all signals emit after integer processing amounts, we can convert any schedule with fractional time allocations to an integral schedule without increasing $\delta(t)$.
\end{itemize}

Now consider an integer time unit $t'$ between $1$ and $t$. If at time $t'$ a non-clairvoyant job $j$ is being processed, observe that $j$ emits its signal with probability $\frac{1}{2}$ if the job $j'$ that ran at time $t'-1$
did not emit its signal, because otherwise, the algorithm would finish $j'$ at time $t'$.
This is in particular independent of the choice of $j$, because the algorithm cannot distinguish between non-clairvoyant jobs. 
Therefore, we have
\[
\mu_{t'} := \pr \left[\text{ the job that runs at time } t' \text{ emits } \right] =  \frac{1}{2}\left( 1 - \mu_{t'-1} \right) \ ,  
\]
and $\mu_0 := \frac{1}{2}$.

Thus, the expected number of jobs that have emitted their signal until time $t$ is at most
\begin{align*}    
    \sum_{t'=0}^{t-1} \mu_{t'} &= \sum_{t'=0}^{t-1} \sum_{\ell = 0}^{t'} (-1)^\ell \left(\frac{1}{2}\right)^{\ell+1}
    = \sum_{t'=0}^{t-1} \frac{1}{2} \cdot \frac{1 - (-1/2)^{t'+1}}{1-(-1/2)} \\
    &= \sum_{t'=0}^{t-1} \frac{1}{3} + \frac{1}{6} \left(-\frac{1}{2}\right)^{t'} 
    \leq \frac{1}{3}t +  \frac{1}{9} - \frac{1}{9} \left(-\frac{1}{2} \right)^t \leq \frac{1}{3}t + 1 = k - k^{3/4} + 1 \ . 
\end{align*}

Since every job which has not emitted until time $t$ has a remaining processing time of at least~$1$, we conclude that $\EX[\delta(t,1) \mid L] \geq k - (k - k^{3/4} + 1) = \Omega(k^{3/4})$.
\end{proof}

\begin{lemma}\label{lemma:randomized-lb-opt}
    It holds that $\EX[\delta^*(t)] \leq \cO(\frac{k^{3/4}}{\log k})$.
\end{lemma}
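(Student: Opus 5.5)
The optimum is \emph{allowed to see the realized processing times}, so the plan is to exhibit a concrete offline schedule: process jobs in non-decreasing order of $p_j$ (SPT) from time $0$. Then $\delta^*(t)$ is at most $k$ minus the largest $m$ such that the $m$ smallest processing times sum to at most $t = \lfloor 3(k-k^{3/4})\rfloor$. The total $\sum_j p_j$ has mean $3k$ and variance $\Theta(k)$ (geometric with mean $2$ plus one). So the deficit $D \coloneq \sum_j p_j - t$ is concentrated around $3k^{3/4}$ with fluctuations of order $k^{1/2} = o(k^{3/4})$. It therefore suffices to show that, with high probability, removing the largest jobs until their total reaches $D$ costs only $\cO(k^{3/4}/\log k)$ jobs. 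The failure probability is then multiplied by the trivial bound $\delta^*(t)\le k$ and must contribute $o(k^{3/4}/\log k)$.

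The key step is that the largest jobs are of size about $\log_2 k$. Fix a threshold $\theta = \lfloor c\log_2 k\rfloor$ with a small constant $c<1$, say $c=1/8$. For $y_j$ geometric with mean $2$, we have $\Pr[y_j \ge \theta] = 2^{-\theta}$ (up to a constant). The number $N_\theta$ of jobs with $p_j > \theta$ is therefore binomial with mean about $k^{1-c} = k^{7/8}$, and by a Chernoff bound it is at least $k^{7/8}/2$ with probability $1-e^{-\Omega(k^{7/8})}$. Each such job contributes more than $\theta = \Omega(\log k)$ to the total. Dropping the $\lceil D/\theta\rceil = \cO(k^{3/4}/\log k)$ largest jobs removes at least $D$ units of work whenever $N_\theta \ge D/\theta$, which holds since $k^{7/8} \gg k^{3/4}$. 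On this event, SPT completes all remaining jobs by time $t$, which gives $\delta^*(t) \le \lceil D/\theta\rceil$.

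Combining the pieces, I would use Chebyshev, or Bernstein since geometric variables are sub-exponential, to bound $\Pr[D > 4k^{3/4}] \le \cO(k/k^{3/2}) = \cO(k^{-1/2})$. On the good event, $\delta^*(t) = \cO(k^{3/4}/\log k)$. On the bad events, $\delta^*(t)\le k$, contributing $k\cdot\cO(k^{-1/2}) = \cO(k^{1/2})$, which is negligible. The Chebyshev bound is just barely sufficient. If it were not, Bernstein gives $\exp(-\Omega(k^{1/2}))$, and I would use that instead.

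The main obstacle is that the lemma's expectation is unconditional, while Lemma~\ref{lemma:randomized-lb-alg} conditions on $L$. The present statement requires only the unconditional bound, which the argument above gives. One should also double-check that the relation between $k$ and $\alpha$ ($\log k \approx \frac{1}{2-2\alpha}$) plays no role here; it does not, since only $k\to\infty$ is used. The constant $c$ must be chosen so that $k^{1-c}$ dominates $k^{3/4}$, and any $c<1/4$ works.
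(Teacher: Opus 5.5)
Your proof is correct and follows essentially the same route as the paper. Both use Chebyshev on the total volume to bound the deficit $P-t$ by $\mathcal{O}(k^{3/4})$, count the jobs longer than a threshold of order $\log k$ so the optimum leaves only $\mathcal{O}(k^{3/4}/\log k)$ long jobs unfinished, and apply the trivial bound $\delta^*(t)\le k$ on the $\mathcal{O}(k^{-1/2})$-probability failure event; the differences are cosmetic (your threshold $\tfrac18\log_2 k$ yields about $k^{7/8}$ long jobs and more slack than the paper's $\tfrac14\log k$, you use Chernoff instead of Chebyshev for that count, and you name SPT as the schedule explicitly).
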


\begin{proof}
First, note that the total processing time $P$ of all jobs is a sum of $k$ independent random variables with mean $3k$ and variance $2k$. Thus, Chebyshev's inequality gives 
\[
    \pr \left[P \geq 3k + k^{3/4} \right] = \pr \left[ P \geq 3k + k^{1/2} \cdot k^{1/4} \right] \leq \cO \left( \frac{1}{k^{1/2}} \right) \ .
\]

Moreover, let $b = \frac{\log k}{4}$ and let $B$ be the number of jobs with processing time more than $b$. 
Note that $B$ is binomially distributed, because it can be written as the sum of $k$ binary random variables $Y_j = \ind[p_j > b]$ with equal success probabilities $\pr[p_j > b] = \pr[y_j > b - 1] = (\frac{1}{2})^{b-1} = 2k^{-1/4}$.
Therefore, $\EX[B] = k^{-1/4} \cdot k = k^{3/4}$ and $\VAR[B] = k^{-1/4} (1-k^{-1/4}) k = O(k^{3/4})$.
Thus, Chebyshev's inequality gives 
\[
    \pr \left[ B \leq k^{3/4} \right] = \pr \left[B \leq 2 k^{3/8} \cdot k^{3/8} - k^{3/4} \right] \leq \cO \left( \frac{1}{k^{3/4}} \right) \ .
\]
    
Therefore, with probability of at least $1-\cO(\frac{1}{k^{1/2}})$ it holds that $P < 3k + k^{3/4}$ and $B > k^{3/4}$.
In this case, we compute the maximum number of jobs of size at least $b$ that an optimum solution cannot complete until time $t$ as follows
\begin{align*}    
    \frac{1}{b} \left(P - t \right) \leq \frac{1}{b} \left(3k + k^{3/4} - 3(k-k^{3/4}) + 1 \right) 
    \leq \cO\left( \frac{k^{3/4}}{\log k} \right) \ . 
\end{align*}
Note that $\cO(\frac{k^{3/4}}{\log k}) \leq k^{3/4} \leq B$, i.e., there are such many long jobs.

In total, the expected number of alive jobs in an optimum solution at time $t$ is at most
\[
    \EX[\delta^*(t)] \leq \left(1-\cO\left(\frac{1}{k^{1/2}} \right) \right) \cdot \cO\left( \frac{k^{3/4}}{\log k} \right) + \cO \left( \frac{1}{k^{1/2}} \right) \cdot k = \cO\left( \frac{k^{3/4}}{\log k} \right) \ ,
\]
which concludes the proof of the lemma.
\end{proof}

Let $\bL$ denote the complementary event to $L$.

\begin{proposition}\label{prop:compl-prob}
    It holds that $\pr[\bL] \leq \cO(\frac{1}{k})$.
\end{proposition}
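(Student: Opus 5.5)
We bound $\pr[\bL]$ with a union bound over the $k$ jobs, using the tail of the geometric distribution with mean $2$ together with the choice of $k$ in terms of $\alpha$.

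First, we compute the per-job tail. The event $\bL$ occurs iff some job $j$ has $p_j > \frac{1}{1-\alpha}$, i.e.\ $y_j > \frac{1}{1-\alpha} - 1$. As in the proof of \Cref{lemma:randomized-lb-opt}, we use the convention that $y_j$ is geometric with mean $2$ on $\{0,1,2,\ldots\}$, so $\pr[y_j \geq m] = (\frac{1}{2})^m$ for integers $m \ge 0$. We assumed w.l.o.g.\ that $\frac{1}{1-\alpha}$ is an integer (rounding changes only constants). Then
\[
\pr\Big[p_j > \tfrac{1}{1-\alpha}\Big] = \pr\Big[y_j \geq \tfrac{1}{1-\alpha}\Big] = 2^{-1/(1-\alpha)} .
\]

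Next, we compare this with $k$. Since $k = \floor{2^{1/(2-2\alpha)}}$, we have $k \le 2^{1/(2(1-\alpha))}$, and hence $2^{-1/(1-\alpha)} = \big(2^{1/(2(1-\alpha))}\big)^{-2} \le k^{-2}$.

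Finally, a union bound over the $k$ jobs gives $\pr[\bL] \le k \cdot k^{-2} = \frac{1}{k}$, which is $\cO(\frac{1}{k})$ as claimed.

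The only step needing care is the geometric-distribution convention. If $y_j$ is instead supported on $\{1,2,\ldots\}$, or $\frac{1}{1-\alpha}$ is not an integer, the tail becomes $2^{-\lceil\frac{1}{1-\alpha}\rceil + c}$ for a small constant $c$. This is still at most a constant times $k^{-2}$, so the final $\cO(\frac{1}{k})$ bound is unaffected.
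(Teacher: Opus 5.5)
Your argument is correct and matches the paper's proof: a union bound over the $k$ jobs, the geometric tail of order $2^{-1/(1-\alpha)}$, and $k \le 2^{1/(2(1-\alpha))}$, so that $k$ times the tail is at most a constant over $k$. The one slip is your stated convention, since a mean-$2$ geometric on $\{0,1,\ldots\}$ does not satisfy $\pr[y_j \ge m] = 2^{-m}$; the paper's distribution has $\pr[y_j > m] = 2^{-m}$ on $\{1,2,\ldots\}$, which gives a per-job tail of $2 \cdot 2^{-1/(1-\alpha)}$ and the bound $2/k$, and your closing remark already absorbs that factor.
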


\begin{proof}
    Event $\bL$ occurs if for at least one job $j$ it holds that $p_j > \frac{1}{1-\alpha}$. Since this happens for job $j$ with probability $\pr[y_j > \frac{1}{1-\alpha} - 1] = 2 \cdot 2^{- 1 / (1-\alpha)}$, we can apply the union bound and conclude that
    \[
        \pr(\bL) \leq k \cdot 2 \cdot 2^{-\frac{1}{1-\alpha}} \leq 2 \cdot 2^{\frac{1}{2(1-\alpha)}} \cdot 2^{-\frac{1}{1-\alpha}} = \cO\left(\frac{1}{k} \right) \ ,
    \]
    which implies the stated bound.
\end{proof}

Finally, we can prove \Cref{theorem:randomized-lb}.

\begin{proof}[Proof of \Cref{theorem:randomized-lb}]
    \Cref{lemma:randomized-lb-alg,lemma:randomized-lb-opt} and the law of total probability imply
    \begin{align*}
        \frac{\EX[\delta(t,1)]}{\EX[\delta^*(t)]} = \frac{\pr[L] \cdot \EX[\delta(t,1) \mid L]}{\pr[L] \cdot \EX[\delta^*(t) \mid L]} &\geq \frac{\EX[\delta(t,1) \mid L] - \pr[\bL] \cdot \EX[\delta(t,1) \mid \bL]}{\EX[\delta^*(t)]} \\
        &\geq \Omega(\log k) - \frac{\pr[\bL] \cdot \EX[\delta(t,1) \mid \bL]}{\EX[\delta^*(t)]} \ .
    \end{align*}
    
    Now, observe that $\EX[\delta(t,1) \mid \bL] \leq k$ and $\EX[\delta^*(t)] \geq \frac{1}{2}$. The latter holds because with probability at least $\frac{1}{2}$, the total volume $P$ of the $k$ jobs is more than $t$, in which case no solution can finish all jobs. This can be seen via Chebyshev's inequality because $\pr[P \leq t] \leq \frac{1}{2k^{1/2}} \leq \frac{1}{2}$. Thus, using \Cref{prop:compl-prob}, we conclude that the above is at least
    \[
        \Omega(\log k) - \frac{\cO(1/k) \cdot k}{1/2} = \Omega(\log k) = \Omega \left(\frac{1}{1-\alpha} \right) \ .
    \]
    The theorem now follows by applying \Cref{prop:dos} to every realization of our distribution and finally using Yao's principle.
\end{proof}

\begin{theorem}
	For every $\alpha \in (0,1)$, every $\alpha$-clairvoyant randomized algorithm has a competitive ratio of at least $\frac{3}{2}$.
\end{theorem}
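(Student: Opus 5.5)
Since $\alpha>0$ is arbitrary here, I will use Yao's principle: construct a random instance on which every deterministic $\alpha$-clairvoyant algorithm has expected total flow time at least roughly $\frac{3}{2}$ times the expected optimum. The instance mirrors the phase construction of \Cref{thm:lb:2}, with the adversary's choice replaced by a fair coin. I then amplify with \Cref{prop:dos} so that additive losses from short intervals vanish.

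The core gadget is a phase $i$ of length $\lambda^i$, with $\lambda$ a large constant depending on $\alpha$ (e.g.\ $\lambda=\frac{4+\alpha}{\alpha}$ as before). At the phase start, two jobs $a,b$ are released. With probability $\frac12$ each, $(p_a,p_b)=(2\lambda^i,\lambda^i)$ or $(\lambda^i,2\lambda^i)$. Since $\alpha p\ge \alpha\lambda^i$ for both jobs, neither emits before the algorithm has given it $\alpha\lambda^i$ units of work. During the first $\alpha\lambda^i$ time of the phase, the deterministic algorithm's allocation $(y_a,y_b)$ is therefore a fixed function of the history, independent of the coin. Afterwards it may learn the sizes, but only once it has committed.

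The key step is to show that, with probability at least $\frac12$, neither job of the phase can be finished by the end of the phase. The short job needs $\lambda^i$ total. In the phase the algorithm can give at most $\lambda^i$ to both jobs combined, and it spends work on both before learning which is short. The long job can never be finished within the phase. The short one can be finished only if the algorithm commits essentially all $\lambda^i$ units to it, which, up to the $\alpha\lambda^i$ prefix, requires guessing correctly. More precisely, I would argue that if $y_a+y_b>0$ and the algorithm finishes the short job, then its work on the long job must be small. Whichever job receives at least half of the early work is the long one with probability $\frac12$. In that event the short job has remaining work at least $\frac{\alpha}{2}\lambda^i$ at the phase end, as in \Cref{thm:lb:2}.

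By the same geometric-sum calculation as in \Cref{thm:lb:2}, later phases fit within $\frac{\alpha}{4}\lambda^i$, so such a job is still alive at the final time $t$ with remaining time $\ge 1$. Over $k$ phases the algorithm thus has in expectation at least $k+\frac{k}{2}$ alive jobs at $t$, while the optimum, which completes the short job of every phase, has exactly $k$. Applying \Cref{prop:dos}, with the ratio of expectations passed through Yao's principle as in the proof of \Cref{theorem:randomized-lb}, gives ratio $\frac{3k/2+1}{k+1}\to\frac32$.

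The main obstacle is the guessing step. The algorithm could split work so that $y_a=y_b$ before emission and then, after learning the sizes, devote the rest of the phase to the short job. That would need $\lambda^i-\frac{\alpha}{2}\lambda^i$ more units, which is infeasible since $\alpha\lambda^i/2$ units already went to the long job. In general I would show that finishing the short job within the phase forces work $\ge \lambda^i$ on it alone, so the algorithm can afford this only if its pre-emission work went almost entirely to one job. That job is the short one with probability exactly $\frac12$.
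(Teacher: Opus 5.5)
Your proposal is correct and takes the paper's approach. You randomize which job of each phase of \Cref{thm:lb:2} is short, and note that the allocation during the first $\alpha\lambda^i$ time units of the phase cannot depend on the coin, so with probability at least $\frac12$ the short job gets at most $\frac{\alpha}{2}\lambda^i$ there and keeps remaining work at least $\frac{\alpha}{2}\lambda^i$ (hence more than $1$ at $t$), which gives $\frac32 k$ versus $k$ alive jobs and, via \Cref{prop:dos} and Yao's principle, the bound $\frac32$. Keep in mind that the step doing the work is this quantitative remaining-work bound from your second paragraph; the ``cannot be finished within the phase'' framing of your last paragraph would not suffice on its own, since a short job left with tiny remaining work at the phase end could still be completed before $t$.
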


\begin{proof}
	Exploiting Yao's Theorem, we show the statement by giving a randomized instance for which each deterministic algorithm has an expected competitive ratio of at least $\frac{3}{2}$.
	
	The randomized instance is based on the deterministic lower bound of Theorem~\ref{thm:lb:2}. We randomize the instance by, at the beginning of each phase $i$, drawing one of the jobs $\{j_1^i,j_2^i\}$ uniformly at random. The drawn job gets assigned the processing time of $\lambda_i$ while the other job gets a processing time of $2\lambda_i$. Thus, each job in $\{j_1^i,j_2^i\}$ has processing time $\lambda_i$ with probability $\frac{1}{2}$ and processing time $2\lambda_i$ with probability $\frac{1}{2}$. The rest of the instance remains deterministic as defined in the proof of Theorem~\ref{thm:lb:2}.
	
	Consider a fixed deterministic algorithm. At any point in time during the first $\alpha\lambda_i$ time units of a phase $i$, the probability that the algorithm executes the shorter job of $\{j_1^i,j_2^i\}$ is at most $\frac{1}{2}$. This is because none of the jobs can emit during the first $\alpha\lambda_i$ time units of the phase, so the algorithm cannot distinguish the jobs.
	Thus, the probability that the shorter job is processed for at most $\frac{\alpha}{2}\lambda_i$ time units within the first $\alpha\lambda_i$ time units of phase $i$ is a least $\frac{1}{2}$. Thus, with probability at least $\frac{1}{2}$, both jobs in $\{j_1^i,j_2^i\}$ have a remaining processing time of at least $\frac{\alpha}{2} \lambda_i$ at the end of phase $i$. As argued in the proof of Theorem~\ref{thm:lb:2}, each job that has a remaining processing time of at least $\frac{\alpha}{2} \lambda_i$ at the end of its phase has a remaining processing time of at least $\frac{\alpha}{4} \lambda_i$ at point in time $t$. 
	Thus, for each phase $i$, the probability that both jobs $j \in \{j_1^i,j_2^i\}$ satisfy $p_j(t) \ge \frac{\alpha}{4} \lambda_1 > 1$ is at least $\frac{1}{2}$. With the remaining probability, at least the longer job of $\{j_1^i,j_2^i\}$ has a remaining processing time of at least $1$ at time $t$.
	Thus, the expected number of jobs released in phase $i$ that have a remaining processing time of at least $1$ at time $t$ is at least $\frac{3}{2}$.
	This implies that the expected number of jobs released during the $k$ phases that have a remaining processing time of at least $1$ at time $t$ is at least~$\frac{3}{2}k$.
	
	At the same time, the same clairvoyant optimal solution as in the proof of Theorem~\ref{thm:lb:2} has only $k$ jobs alive at time $t$.
	By~\Cref{prop:dos}, this implies an expected competitive ratio of at least $\frac{3}{2}$, and thus, the statement.
\end{proof}

\end{document}